\documentclass[11pt,a4paper]{article}

\usepackage{subfiles} 
\usepackage[utf8]{inputenc}
\usepackage[T1]{fontenc}
\usepackage{amsmath, amssymb, amsthm, amsfonts, mathrsfs}
\usepackage{mathrsfs}
\usepackage{bm}
\usepackage{braket}
\usepackage{geometry}
\usepackage{graphicx}
\usepackage{subcaption}
\usepackage{booktabs}
\usepackage{multirow}
\usepackage{adjustbox}
\usepackage{xcolor}
\usepackage[colorlinks=true, linkcolor=blue!50!black, citecolor=blue!50!black, urlcolor=blue!50!black]{hyperref}
\usepackage{natbib}
\usepackage{enumitem}
\usepackage{tikz}
\usepackage{standalone}
\usetikzlibrary{positioning}

\theoremstyle{plain}
\newtheorem{theorem}{Theorem}
\newtheorem{proposition}{Proposition}
\newtheorem{corollary}{Corollary}
\newtheorem{lemma}{Lemma}
\theoremstyle{definition}
\newtheorem{definition}{Definition}
\theoremstyle{remark}
\newtheorem{remark}{Remark}

\newcommand{\R}{\mathbb{R}}
\newcommand{\Z}{\mathbb{Z}}

\newcommand{\bO}{\mathcal{O}}

\newcommand{\TV}{\mathrm{TV}}
\DeclareMathOperator{\diam}{diam}

\allowdisplaybreaks

\usepackage{mathtools}
\renewcommand{\braket}[2]{\langle #1|#2\rangle}
\newcommand{\norm}[1]{\|#1\|}
\newcommand{\abs}[1]{|#1|}
\newcommand{\op}{\mathrm{op}}
\newcommand{\tr}{\mathrm{tr}}
\newcommand{\Real}{\mathbb{R}}
\newcommand{\Complex}{\mathbb{C}}
\newtheorem{assumption}[theorem]{Assumption}
\newtheorem{example}[theorem]{Example}

\begin{document}

\begin{center}
    {\LARGE \textbf{Q-Edge: Symmetry-Reduced Quantum Simulation of Structured Extreme Dependence}}\\[1.2em]
    {\large Hongrui Zhang,\ Paolo Recchia,\ Ying Chen$^{\ast}$}\\[0.4em]
    {\small National University of Singapore}\\[0.3em]
    {\small $^{\ast}$Corresponding author: matcheny@nus.edu.sg}
\end{center}

\vspace{1em}
\vspace{1em}

\begin{abstract}
\noindent
High-dimensional simulation of multivariate extremes is fundamentally limited by the combinatorial complexity of dependence, often more than by the scarcity of extreme observations. We show that symmetry admits a lossless orbit-space representation that preserves structured extreme dependence while replacing an exponentially large dependence space with a compact set of symmetry classes. Based on this principle, we develop Q-Edge (Quantum Extreme Dependence Engine), a symmetry-reduced quantum framework that operates directly in orbit space, enabling scalable simulation and digital twins of structured extreme systems. By transferring symmetry into the data representation rather than the quantum circuit, Q-Edge allows unconstrained quantum generative models to exploit dramatically reduced state spaces. For a 30-dimensional problem, approximately 1.6 million angular states collapse to 256 orbit states, reducing the required quantum representation from about 21 qubits to 8. Our results establish a general computational principle for scalable quantum simulation of structured extreme dependence.
\end{abstract}

\vspace{0.5em}
\noindent\textbf{Keywords:} Multivariate extremes; Structured dependence; Quantum simulation; Symmetry reduction; Generative modelling; Digital twins.

\vspace{1.5em}

\section{Introduction}\label{sec:intro}

High-dimensional stochastic simulation underpins scientific discovery and decision making across finance, climate science, biology and engineering. In many applications, the events of greatest interest are not typical observations but rare, coordinated extremes, such as systemic financial crises, compound climate hazards, cascading infrastructure failures and simultaneous biological responses. Accurate simulation of these events requires reproducing not only their marginal behaviour but, more importantly, their dependence structure, which governs how extreme outcomes occur jointly across many interacting components. Despite remarkable advances in generative artificial intelligence, scalable simulation of multivariate extremes remains a fundamental computational challenge.

The primary obstacle is not the scarcity of extreme observations. Rather, it is the combinatorial growth of dependence complexity. In multivariate extreme value theory, dependence is represented by the angular distribution, whose discretized state space increases exponentially with dimension. As a consequence, both classical and quantum generative models face the same challenges, making faithful simulation progressively more difficult as dimensionality grows. Existing approaches therefore struggle to model nonlinear, heterogeneous and geometrically structured dependence in high dimensions.

Yet many real-world dependence structures are not arbitrary. Financial institutions, climate regions, biological populations and engineered systems frequently contain repeated components, modular organisation or statistically exchangeable groups. These structures imply that many nominally different dependence configurations are mathematically equivalent. Consequently, the effective complexity of the problem is determined not solely by the size of the ambient state space but also by the symmetry of the underlying dependence distribution. This observation suggests a fundamental question: \emph{can symmetry transform the complexity of simulating multivariate extremes?}

Here, in the context of synthetic data generation through a quantum machine, we answer this question affirmatively. We show that symmetry admits a lossless orbit-space representation that preserves structured extreme dependence while replacing an exponentially large dependence space with a compact set of symmetry classes. Rather than treating symmetry as an architectural constraint on a learning model, we establish it as a representation principle that fundamentally changes the complexity of the simulation problem itself. The reduction is exact within the corresponding symmetry class and applies before any learning algorithm is introduced.

Building on this principle, we develop Q-Edge (Quantum Extreme Dependence Engine), a symmetry-reduced quantum framework for simulating structured extreme dependence directly in orbit space. Unlike existing symmetry-aware quantum machine learning methods, which encode symmetry within quantum circuits through equivariant architectures or parameter constraints, Q-Edge transfers symmetry into the data representation itself. The resulting quantum model operates on a dramatically smaller state space while retaining the expressive power of unconstrained quantum generative models. For a representative 30-dimensional problem, approximately 1.6 million angular states reduce to only 256 orbit states, decreasing the required quantum representation from approximately 21 qubits to 8.

Across diverse synthetic and high-dimensional benchmarks, Q-Edge consistently reconstructs complex nonlinear, heterogeneous, multimodal and geometrically structured dependence more accurately than state-of-the-art classical and quantum generative models. More broadly, our work demonstrates that scalable simulation of multivariate extremes is fundamentally a problem of representation. By exploiting symmetry before learning, Q-Edge establishes a general computational framework for quantum simulation, generative modelling and digital twins of structured extreme systems.

\subsection{Related literature}

Our work lies at the intersection of four strands of research: generative simulation of complex distributions, multivariate extreme-value theory, symmetry-aware quantum machine learning, and quantum generative modelling. These strands can be organized around three unresolved questions concerning the scalable simulation of structured multivariate extremes.

\subsubsection{Can modern generative models faithfully simulate structured extreme dependence?} Variational autoencoders, generative adversarial networks, normalizing flows, and diffusion models have substantially expanded the ability to learn and simulate high-dimensional probability distributions \citep{kingma2013auto,goodfellow2020generative,
rezende2015variational,yang2023diffusion}.
These advances provide important computational foundations for scientific simulators and data-driven digital twins. Their strongest performance, however, generally concerns the bulk of a distribution, whereas multivariate extremes present a qualitatively different learning problem. Extreme observations are scarce, joint-tail probabilities are highly sensitive to errors in the dependence structure, and the relevant probability mass may be asymmetric, multimodal, or concentrated near lower-dimensional geometric
structures.

A growing literature adapts deep generative models specifically to extreme events. Copula-based normalizing flows have been developed to capture heavy-tailed marginals and asymmetric tail dependence
\citep{mcdonald2022comet}; extreme-value-aware variational autoencoders
separate radial and angular generation
\citep{lafon2026vae}; and recent approaches combine extreme-value theory with normalizing flows, geometric representations, generative adversarial networks, or deep learning
\citep{murphybarltrop2024deep,demonte2025generative,
lhaut2026simulation,hu2025gpdflow,wessel2025generative}. These methods significantly improve the flexibility of extreme-event simulation. Nevertheless, they primarily seek more expressive models on the original dependence space. Additionally, they do not directly resolve the growth of that space itself as the dimension increases. This leaves a central question: 
\emph{can structured extreme dependence be simulated scalably without first learning the full high-dimensional dependence space?}

\subsubsection{Can the intrinsic representation complexity of multivariate extremes be reduced?}

Multivariate extreme-value theory provides a rigorous framework for describing rare events and their joint dependence
\citep{rootzen2006multivariate,resnick2007heavy}.
Under multivariate regular variation, an extreme observation can be decomposed
into radial magnitude and angular direction, with the angular measure characterizing how extreme events occur jointly across dimensions.
This decomposition separates marginal tail intensity from extremal dependence
and provides the theoretical basis for many parametric and semiparametric
models.

Classical constructions include logistic, asymmetric logistic, Dirichlet,
H\"usler--Reiss, and related models. More recently, conditional-independence,
sparsity, and graphical structures have been used to construct parsimonious
high-dimensional extreme-value models
\citep{engelke2020graphical}. Deep-learning approaches have further increased
flexibility by learning angular distributions or geometric limit sets directly
\citep{lafon2026vae,murphybarltrop2024deep,
demonte2025generative, wessel2025generative, lhaut2026simulation}.
These developments address statistical and computational tractability through
parametric assumptions, graphical factorization, sparsity, or flexible
function approximation.

However, a more fundamental representational question remains largely open.
The angular state space contains many configurations that may be statistically
equivalent whenever the underlying system has repeated components,
exchangeable populations, modular organization, or other group symmetries.
Existing approaches generally model these configurations separately unless a
specific parametric or graphical restriction is imposed. It is therefore
unclear whether the dependence space itself can be reduced, prior to model
estimation, without discarding information relevant to the structured target
distribution. In other words, \emph{does structured extreme dependence admit a
smaller, exact representation whose complexity is determined by intrinsic
symmetry rather than ambient dimension?}

\subsubsection{Can symmetry reduce quantum learning complexity?}

Quantum machine learning has pursued two complementary directions for
improving the representation of complex probability distributions.
The first exploits symmetry as an inductive bias through geometric and
equivariant quantum machine learning
\citep{larocca2022group,nguyen2024theory}. In this paradigm,
parameterized quantum circuits are designed so that data encodings,
trainable gates, or observables transform consistently under a prescribed
group action. Such symmetry-aware architectures can improve
generalization, parameter efficiency, and trainability. For permutation
symmetry, equivariant quantum circuits have been shown to reach
overparameterized regimes efficiently and, under suitable conditions, to
avoid barren plateaus
\citep{meyer2023exploiting,schatzki2024theoretical}. Representation-theoretic
and dynamical-Lie-algebra analyses further clarify how symmetry
constraints influence circuit expressivity and optimization
\citep{larocca2022diagnosing,ragone2024lie}.

The second direction seeks more expressive quantum generative models.
Quantum circuit Born machines \citep{liu2018differentiable}, quantum
generative adversarial networks \citep{zoufal2019quantum}, and
variational quantum circuits \citep{mitarai2018quantum} exploit the Born
rule to represent and sample complex probability distributions directly
from quantum states. These models provide flexible implicit
representations and have demonstrated considerable promise for
high-dimensional generative learning. Nevertheless, they typically learn
over the original exponentially large sample space, where trainability,
measurement complexity, and model expressivity remain increasingly
challenging as the dimension grows.

Together, these two directions highlight a fundamental trade-off.
Symmetry-aware quantum learning improves efficiency by constraining the
learning architecture, whereas unconstrained quantum generative models
maximize expressive power while retaining the full complexity of the
original state space. An important question therefore remains unresolved:
\emph{can symmetry reduce quantum learning complexity without
constraining the quantum model itself?} More specifically, \emph{can symmetry
be exploited before learning begins to transform the target probability
distribution into a smaller, information-preserving representation,
allowing an unconstrained quantum generative model to operate directly on
the reduced state space?}

\subsubsection{A unified answer through symmetry-reduced representation}

Q-Edge provides a unified answer to the three questions posed above. The central distinction of Q-Edge is that it exploits symmetry before learning, replacing the original angular sample space with a lossless orbit-space representation. Rather than treating symmetry as an inductive bias encoded in the learning architecture, Q-Edge establishes symmetry as a
representation principle that fundamentally changes the complexity of the
simulation problem. This contrasts with statistical approaches that improve
tractability through parametric assumptions, graphical factorization, or
sparsity, and with quantum approaches that impose symmetry through
equivariant circuit architectures.

The theoretical foundation of Q-Edge is that a group action partitions the
discretized angular space into symmetry orbits, yielding an exact
representation of structured extreme dependence on the corresponding orbit
space. Consequently, symmetry-equivalent configurations need not be learned
individually. Instead, Q-Edge maps each orbit to a single
computational-basis state and trains a quantum circuit Born machine \citep{liu2018differentiable} directly
on the reduced probability space. The quantum circuit therefore remains
completely unconstrained, retaining the expressive power of a generic
quantum generative model while operating on a substantially smaller state
space.

This representation-first strategy simultaneously addresses the limitations
identified in the preceding literature. It enables scalable simulation of
structured extreme dependence without separately learning every
symmetry-equivalent configuration; it reduces representation complexity
exactly rather than through additional modelling assumptions; and it lowers
the quantum resources required for simulation by reducing the effective size
of the target probability space. The resulting conceptual framework is depicted in Figure~\ref{fig:placeholder}

\begin{figure}
    \centering
    \includegraphics[width=\linewidth]{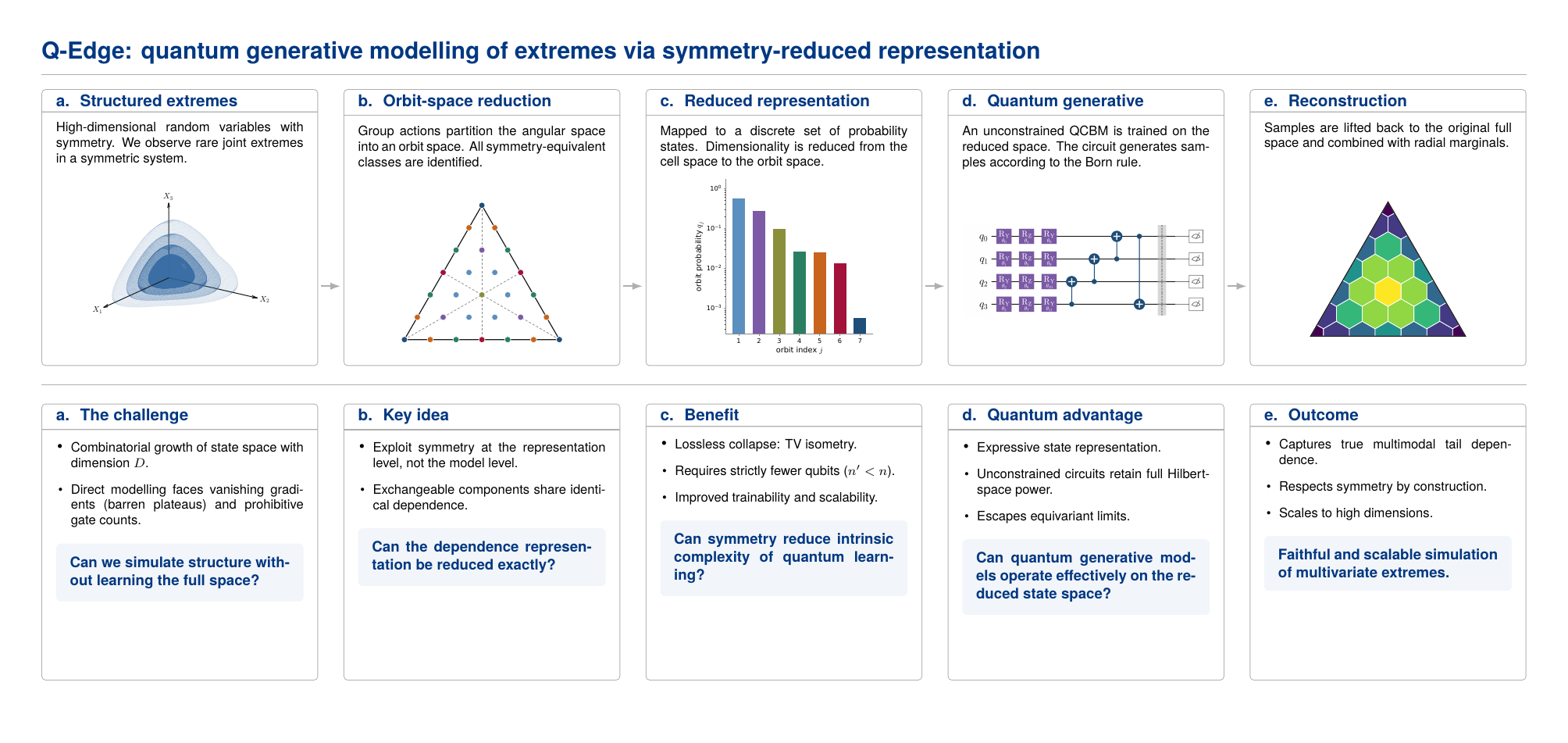}
    \caption{Conceptual framework of Q-Edge. Structured symmetry induces a lossless orbit-space representation, reducing the effective state space before quantum learning. An unconstrained quantum generative model is trained on the reduced representation to enable scalable simulation of structured multivariate extremes.}
    \label{fig:placeholder}
\end{figure}

Q-Edge therefore connects multivariate extreme-value theory, symmetry-aware representation, and quantum generative modelling through a common
computational principle: when dependence possesses exploitable structure, the appropriate object to learn is not the original high-dimensional distribution, but its exact representation over symmetry classes.

\subsection{Structure of the Paper}
The remainder of the paper is organized as follows. Section \ref{sec:method} introduces the Q-Edge framework and presents the symmetry-preserving orbit-space representation for multivariate extreme dependence. Section \ref{sec:model} develops the quantum generative learning architecture and discusses its implementation. Section \ref{sec:results} presents empirical results across a range of dependence structures, including nonlinear, heterogeneous, multimodal, geometrically constrained and high-dimensional settings. Section \ref{sec:discussion} concludes with implications for quantum digital twins, symmetry-aware learning and future research directions.

\section{A Representation Principle for Structured Extreme Dependence}
\label{sec:method}

This section develops the theoretical foundation of Q-Edge. We show that exchangeability induces an exact orbit-space representation of
multivariate extreme dependence, allowing the original dependence
distribution to be represented losslessly on a substantially smaller
state space. This symmetry-preserving representation changes the
intrinsic complexity of the learning problem while preserving the
dependence structure exactly. The resulting representation principle
provides the basis for the quantum learning framework developed in the
following section.

\subsection{The Representation Bottleneck}
\label{subsec:angular}

The representation of multivariate extremes begins with the classical
polar decomposition of multivariate extreme value theory. For a $D$-dimensional $(D\geq2)$ random
non-zero vector
$\mathbf{X}\in\mathbb{R}_{\ge0}^{D}$,
an extreme event can be decomposed into its overall magnitude and its
dependence structure through

\[
\mathbf{X}
=
R\,\boldsymbol{\Theta},
\quad
R=\|\mathbf X\|_1,
\quad
\boldsymbol{\Theta}
=
\frac{\mathbf X}{\|\mathbf X\|_1}.
\]

The radial variable $R$ describes the severity of the extreme event,
whereas the angular variable $\boldsymbol{\Theta}$ characterizes how the extremal magnitude is distributed across the $D$ variables. Since, under the regularly varying assumption, the radial component can be modelled independently using standard
univariate extreme-value techniques, our interest lies entirely in the
angular component, which completely specifies the dependence structure
of multivariate extremes. The angular variable takes values on the probability simplex
\[
\Delta^{D-1}
=
\left\{
\boldsymbol{\theta}\in\mathbb R_{\ge0}^{D}:
\sum_{i=1}^{D}\theta_i=1
\right\},
\]whose dimension is $D-1$. Each point on the simplex represents one
possible allocation of the extremal magnitude among the variables,
thereby providing a geometric representation of multivariate extreme
dependence.

To enable statistical learning and quantum computation, the continuous
simplex are usually approximated by a finite state space. Following the
standard lattice discretization, we partition $\Delta^{D-1}$ using the
natural lattice

\begin{equation}
\mathcal X_N
=
\left\{
\mathbf k=(k_1,\ldots,k_D)\in\mathbb Z_{\ge0}^{D}:
\sum_{i=1}^{D}k_i=N
\right\},
\label{eq:lattice}
\end{equation}where $N$ denotes the discretization resolution. Normalizing each
lattice point by $N$ establishes a one-to-one correspondence between
$\mathcal X_N$ and the rational lattice points on
$\Delta^{D-1}$. Consequently, every lattice point represents one Euclidean Voronoi cell of a partition of the simplex, and a probability distribution over
cells provides a finite-dimensional approximation of the
underlying angular measure. The consistency of this discretization is
established in Appendix~\ref{sec:consistency}.

The complexity of this representation is determined by the cardinality
of the natural lattice,
\begin{equation}
|\mathcal X_N|
=
\binom{N+D-1}{D-1},
\label{eq:lattice_size}
\end{equation}
which grows combinatorially with both the dimension $D$ and the
discretization resolution $N$. Even moderate dimensions produce an extremely large configuration space, making direct statistical learning increasingly difficult. Consequently, improving the fidelity of the approximation rapidly enlarges the number of admissible dependence configurations that must be represented. This growth occurs before any learning or optimization is performed, indicating that the principal challenge is fundamentally one of representation rather than
computation. 

\subsection{The Combinatorial Barrier}
\label{subsec:barrier}

The natural lattice provides a finite representation of multivariate extreme dependence. The remaining
question is whether this representation can be learned efficiently. As we show below, the answer is generally negative. The combinatorial growth of the discretized dependence space propagates directly to the
quantum representation, resulting in exponential state-preparation costs and severe optimization difficulties. These computational bottlenecks all stem from a common source: the assumption that every dependence
configuration must be represented as an independent quantum state.

To enable quantum learning, the discretized dependence space is embedded
into the computational basis of a quantum register. An $n$-qubit system
possesses exactly $2^n$ computational basis states, and therefore the
smallest compatible register satisfies

\begin{equation}
n
=
\left\lceil
\log_2 |\mathcal X_N|
\right\rceil.
\label{eq:nqubits}
\end{equation}
Since the lattice cardinality is generally not a power of two, we
augment the natural lattice with additional representative centroids $\mathcal{P}_N$,
obtaining an augmented configuration space.
\begin{equation}
\mathfrak Q_N=\mathcal X_N \cup \mathcal P_N, 
\quad
|\mathfrak Q_N|
=
2^n.
\label{eq:augmented_grid}
\end{equation}
The augmentation establishes a one-to-one correspondence between lattice configurations and computational basis states while preserving the geometric structure of the discretized simplex.

This embedding exposes the computational consequences of the
representation. A generic probability distribution over
$\mathfrak Q_N$ occupies a $2^n$-dimensional Hilbert space. Preparing an
arbitrary quantum state in such a space requires
$O(2^n)$ elementary quantum
gates~\citep{grover2000synthesis,plesch2011quantum,sanders2019black},
reflecting the exponential number of independent amplitudes that must be
specified. Consequently, state preparation alone rapidly becomes the
dominant computational cost as the representation grows.

The challenge extends beyond state preparation to quantum optimization.
Highly expressive parameterized quantum circuits approach Haar-random
unitaries, for which the variance of the training gradient decreases
exponentially with the number of qubits, giving rise to the
\emph{barren plateau}
phenomenon~\citep{mcclean2018barren,cerezo2021cost}. As the
representation size increases, optimization becomes progressively more
difficult, while finite coherence times and gate noise further restrict
the depth of trainable quantum circuits on noisy intermediate-scale
quantum (NISQ) devices.

Importantly, these computational bottlenecks share a common origin.
They do not arise because quantum circuits are inherently inefficient,
but because every lattice configuration is represented as an independent
quantum state. In structured multivariate extremes, however, many
configurations encode identical dependence patterns under permutations
of statistically equivalent variables. Treating all such configurations
separately introduces substantial representational redundancy, which is
then inherited by the quantum model.

The implication is fundamental. Scalability cannot be achieved merely by designing more expressive quantum circuits, because the computational complexity is inherited from the representation itself rather than the learning architecture. This naturally raises the following question: are all dependence configurations genuinely distinct? As we show in the next subsection, for multivariate data exhibiting exchangeability, statistically equivalent dependence configurations can be identified
and represented by a much smaller orbit space without sacrificing any statistical information.

\subsection{Exchangeability and the Orbit-Space Representation}
\label{subsec:orbit}

For many multivariate, treating every discretized dependence
configuration as distinct is unnecessary. Financial assets within the
same sector
\citep{poon2004extreme,engelke2020graphical}, geographically or
physically similar sites in environmental and hydrological studies
\citep{tawn1990modelling,colestawn1991modelling}, and interchangeable
components in compositional data arising in geology, biology and the
geosciences \citep{aitchison1982statistical} may play statistically
equivalent roles. In such settings, permuting the variable labels
changes the representation but not the underlying dependence structure.
This symmetry is captured by \emph{exchangeability}.

A simple example illustrates the resulting redundancy. At resolution
$N=6$ in dimension $D=3$, the six configurations
\[
(1,2,3),\ (1,3,2),\ \ldots,\ (3,2,1)
\]
differ only by permutations of their coordinates and therefore describe
the same allocation pattern under the action of the symmetric group
$S_3$. Treating them as six independent learning targets enlarges the
representation without adding statistical information. The same
permutation symmetry arises in standard dependence models, including
the symmetric Dirichlet family, the symmetric logistic model for
multivariate extremes
\citep{gumbel1960bivariate,tawn1990modelling}, and symmetric forms of
the logistic-normal distribution
\citep{aitchison1980logistic,aitchison1982statistical}.

To formalize this idea, let $S_D$ denote the symmetric group acting on
the $D$ coordinates by permutation. A distribution $P$ on the action closed augmented
configuration space $\mathcal Q_N$ is exchangeable when its probability
assignment is unchanged by this action.

\begin{definition}[$S_D$-symmetric distribution]
\label{def:symmetric_dist_main}
A probability distribution $P$ on $\mathcal Q_N$ is
\emph{$S_D$-symmetric}, or equivalently $S_D$-exchangeable, if
\begin{equation}
P(\sigma\cdot\mathbf{x})
=
P(\mathbf{x})
\qquad
\text{for all }
\sigma\in S_D
\text{ and }
\mathbf{x}\in\mathcal Q_N.
\label{eq:symmetric_distribution}
\end{equation}
The set of all such distributions is denoted
$\Delta_{\mathrm{sym}}$.
\end{definition}

Exchangeability allows configurations related by permutations to be
grouped together. The \emph{orbit} of
$\mathbf{x}\in\mathcal Q_N$ is
\begin{equation}
\mathrm{Orb}(\mathbf{x})
=
\{
\sigma\cdot\mathbf{x}:\sigma\in S_D
\}.
\label{eq:orbit_main}
\end{equation}
The distinct orbits partition $\mathcal Q_N$ into equivalence classes,
and each orbit represents one dependence pattern together with all of
its possible coordinate labellings.

The resulting collection of symmetry classes is the orbit space
\[
\mathcal Q_N/S_D.
\]
We denote its cardinality by \(
K
=
|\mathcal Q_N/S_D|
=
\mathcal O_D(\mathcal Q_N).
\)
Explicit orbit counts and the extension to subgroups
$G\leq S_D$, including block-symmetric groups associated with partial
exchangeability, are provided in the Appendix
(Theorem~\ref{thm:general_burnside},
Theorem~\ref{thm:augmented_orbits}, and
Proposition~\ref{prop:augmented_orbit_general}).

Because an exchangeable distribution is constant within each orbit, it
is completely determined by one probability value per orbit rather than
one value per configuration. Symmetry therefore changes the intrinsic
dimension of the statistical model.

\begin{proposition}[Dimension under exchangeability]
\label{prop:sym_dim_main}
The space of $S_D$-symmetric probability distributions on
$\mathcal Q_N$ has dimension
\begin{equation}
\dim(\mathrm{aff} \ \Delta_{\mathrm{sym}})
=
K-1
=
\mathcal O_D(\mathcal Q_N)-1.
\label{eq:sym_dim_main}
\end{equation}
\end{proposition}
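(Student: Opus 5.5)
We prove the statement by identifying $\Delta_{\mathrm{sym}}$ with the standard probability simplex on the orbit space $\mathcal Q_N/S_D$, through an explicit affine bijection. Enumerate the orbits as $O_1,\dots,O_K$. They partition $\mathcal Q_N$, because $\mathcal Q_N$ is closed under the action and orbits of a group action are equivalence classes. Define
\[
\Phi:\Delta_{\mathrm{sym}}\to\Delta^{K-1},\qquad \Phi(P)=\bigl(P(O_1),\dots,P(O_K)\bigr),
\]
where $P(O_j)=\sum_{\mathbf x\in O_j}P(\mathbf x)$. Since the orbits partition $\mathcal Q_N$, the coordinates are nonnegative and sum to one, so $\Phi(P)$ indeed lies in the $(K-1)$-simplex. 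Moreover $\Phi$ is the restriction of a linear map $\R^{\mathcal Q_N}\to\R^K$.

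The key step is to show that $\Phi$ is a bijection.

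\emph{Injectivity.} By Definition~\ref{def:symmetric_dist_main}, $P$ is constant on each orbit, since any $\mathbf y\in O_j$ has the form $\sigma\cdot\mathbf x$. Hence $P(\mathbf x)=P(O_j)/|O_j|$ for every $\mathbf x\in O_j$, so $P$ is recovered from $\Phi(P)$.

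\emph{Surjectivity.} Given $(q_1,\dots,q_K)\in\Delta^{K-1}$, set $P(\mathbf x)=q_j/|O_j|$ for $\mathbf x\in O_j$. This $P$ is nonnegative and sums to one. It is $S_D$-invariant because $\sigma\cdot\mathbf x\in\mathrm{Orb}(\mathbf x)$. The inverse map $\Psi$ is also the restriction of a linear map, so $\Phi$ is an affine isomorphism between $\Delta_{\mathrm{sym}}$ and $\Delta^{K-1}$.

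It follows that $\mathrm{aff}\,\Delta_{\mathrm{sym}}$ is carried isomorphically onto $\mathrm{aff}\,\Delta^{K-1}=\{q\in\R^K:\sum_j q_j=1\}$. That hyperplane has dimension $K-1$, because the $K$ vertices $e_1,\dots,e_K$ are affinely independent. Injective affine maps preserve affine dimension, so $\dim(\mathrm{aff}\,\Delta_{\mathrm{sym}})=K-1=\mathcal O_D(\mathcal Q_N)-1$.

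Alternatively, one can argue directly inside $\R^{\mathcal Q_N}$. The orbit-uniform distributions $u_j=\mathbf 1_{O_j}/|O_j|$ have disjoint supports, so they are linearly independent and hence affinely independent. Every $P\in\Delta_{\mathrm{sym}}$ can be written as $P=\sum_j P(O_j)\,u_j$, a convex combination of the $u_j$. Therefore $\Delta_{\mathrm{sym}}$ is a $(K-1)$-simplex with vertices $u_1,\dots,u_K$.

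The only point that needs care is the well-definedness of the setting. The action of $S_D$ must map $\mathcal Q_N$ into itself, which is the action-closedness assumption placed on the augmented space (the added centroids $\mathcal P_N$ must be permutation-stable). Without this, neither the orbit partition nor invariance makes sense. Everything else is routine linear algebra.
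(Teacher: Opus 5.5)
Your proof is correct and follows the paper's own argument: the paper notes that a symmetric distribution is constant on orbits and is therefore determined by $\mathcal O_D(\mathcal Q_N)$ nonnegative numbers summing to one. Your explicit affine bijection with $\Delta^{K-1}$, or equivalently the vertices $u_j=\mathbf 1_{O_j}/|O_j|$, makes rigorous the surjectivity and affine-independence steps that the paper's ``free parameters'' count takes for granted. One cosmetic point is that your $\Phi$ runs opposite to the paper's orbit map in \eqref{eq:phi_def}, which is the lift $\Delta^{K-1}\to\Delta_{\mathrm{sym}}$, so renaming yours would avoid a clash.
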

A proof is given in the Appendix  (Proposition~\ref{prop:sym_dim}). Proposition~\ref{prop:sym_dim_main}
shows that the relevant representation complexity is governed by the
number of distinct symmetry classes, rather than by the total number of
lattice configurations. When the orbits are large, this produces a
substantial reduction in dimension before any learning algorithm is
applied.

\subsection{Lossless Orbit-Space Representation}
\label{subsec:orbitrepresentation}

The orbit construction reduces the number of configurations that must be
represented. The remaining question is whether this reduction discards
statistical information. For exchangeable distributions, it does not:
the orbit space provides an exact reformulation of the original learning
problem.

Let $\mathrm{Orb}_1,\ldots,\mathrm{Orb}_K$ denote the distinct orbits in $\mathcal Q_N$, and let$s_j
=|\mathrm{Orb}_j|$ be the size of the $j$th orbit. The probability simplex over the orbit
space is
\begin{equation}
\Delta^{K-1}
=
\left\{
\mathbf q=(q_1,\ldots,q_K):
q_j\geq 0,\ 
\sum_{j=1}^{K}q_j=1
\right\}.
\label{eq:orbit_simplex}
\end{equation}

Here, $q_j$ denotes the total probability mass assigned to the entire
orbit $\mathrm{Orb}_j$. To recover a distribution on the original
configuration space, this mass is distributed uniformly among the
configurations belonging to the orbit. Let $j(\mathbf{x})$ denote the
index of the orbit containing $\mathbf{x}$. We define the orbit map
\begin{equation}
\Phi:
\Delta^{K-1}
\longrightarrow
\Delta_{\mathrm{sym}},
\quad
[\Phi(\mathbf q)](\mathbf{x})
=
\frac{q_{j(\mathbf{x})}}
{s_{j(\mathbf{x})}}.
\label{eq:phi_def}
\end{equation}
The map $\Phi$ converts an orbit-level probability distribution into an
exchangeable distribution on $\mathcal Q_N$. The following result shows
that this representation is lossless.

\begin{theorem}[Lossless orbit-space representation]
\label{thm:phi_isometry}
The orbit map
\[
\Phi:
\Delta^{K-1}
\longrightarrow
\Delta_{\mathrm{sym}}
\]
is a bijection. Thus, every $S_D$-symmetric distribution on
$\mathcal Q_N$ has a unique representation in the orbit simplex.

Moreover, $\Phi$ preserves total variation distance:
\begin{equation}
\mathrm{TV}
\bigl(
\Phi(\mathbf q),
\Phi(\mathbf q')
\bigr)
=
\mathrm{TV}
\bigl(
\mathbf q,
\mathbf q'
\bigr)
\label{eq:tv_isometry}
\end{equation}
for all
$\mathbf q,\mathbf q'\in\Delta^{K-1}$.
It also preserves every $f$-divergence for which the divergence is
well defined, including the symmetric Kullback--Leibler divergence used during
training.
\end{theorem}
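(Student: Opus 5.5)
The plan is to prove the three claims in turn: well-definedness and bijectivity of $\Phi$, then the total variation identity, then the $f$-divergence identity. Throughout I use only that the orbits $\mathrm{Orb}_1,\ldots,\mathrm{Orb}_K$ partition $\mathcal Q_N$, a fact recorded right after \eqref{eq:orbit_main}.

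\emph{Well-definedness and bijectivity.} First I check that $\Phi(\mathbf q)$ lies in $\Delta_{\mathrm{sym}}$. It is nonnegative. Its total mass is
\[
\sum_{j}\sum_{\mathbf x\in\mathrm{Orb}_j} q_j/s_j=\sum_j q_j=1 .
\]
It is $S_D$-invariant because $j(\sigma\cdot\mathbf x)=j(\mathbf x)$, since $\sigma\cdot\mathbf x$ lies in the orbit of $\mathbf x$. For the inverse, define $\Psi:\Delta_{\mathrm{sym}}\to\Delta^{K-1}$ by
\[
\Psi(P)_j=P(\mathrm{Orb}_j)=\sum_{\mathbf x\in\mathrm{Orb}_j}P(\mathbf x).
\]
Then $\Psi\circ\Phi=\mathrm{id}$ is immediate. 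For $\Phi\circ\Psi=\mathrm{id}$, the one point needing care is that a symmetric $P$ is constant on each orbit: any $\mathbf y\in\mathrm{Orb}(\mathbf x)$ equals $\sigma\cdot\mathbf x$ for some $\sigma$, so \eqref{eq:symmetric_distribution} gives $P(\mathbf y)=P(\mathbf x)$. Hence $P(\mathbf x)=P(\mathrm{Orb}_{j(\mathbf x)})/s_{j(\mathbf x)}=[\Phi(\Psi(P))](\mathbf x)$. This gives bijectivity and uniqueness of the representation.

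\emph{Total variation.} I write $\mathrm{TV}$ as half the $\ell_1$ distance and split the sum over $\mathcal Q_N$ by orbits. On $\mathrm{Orb}_j$ every term equals $|q_j-q'_j|/s_j$, and there are $s_j$ such terms. Summing therefore gives $\tfrac12\sum_j|q_j-q'_j|$, which is \eqref{eq:tv_isometry}.

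\emph{$f$-divergences.} The same orbit decomposition applies to $D_f(P\|P')=\sum_{\mathbf x}P'(\mathbf x)f\bigl(P(\mathbf x)/P'(\mathbf x)\bigr)$. On $\mathrm{Orb}_j$ the likelihood ratio is constant and equal to
\[
\frac{q_j/s_j}{q'_j/s_j}=\frac{q_j}{q'_j},
\]
so the orbit contributes $s_j\cdot(q'_j/s_j)f(q_j/q'_j)=q'_jf(q_j/q'_j)$. Summing over $j$ gives $D_f(\mathbf q\|\mathbf q')$. The symmetric KL divergence is a sum of two $f$-divergences, namely those with $f(t)=t\log t$ and $f(t)=-\log t$, so it is preserved as well.

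The only delicate step is the boundary convention when $q'_j=0$ or $q_j=0$. The usual conventions are $0\cdot f(0/0)=0$ and $0\cdot f(a/0)=a\lim_{t\to\infty}f(t)/t$. Since $P'(\mathbf x)=0$ exactly when $q'_{j(\mathbf x)}=0$, and likewise for $P$, each boundary term on $\mathrm{Orb}_j$ carries the factor $1/s_j$ on every one of its $s_j$ points. The boundary term is linear in the mass, so these again sum to the orbit-level term. The divergence is therefore finite on one side exactly when it is finite on the other, which covers the qualifier ``well defined''. I expect this bookkeeping to be the main, though minor, obstacle. The rest is routine partition-of-sum arguments.
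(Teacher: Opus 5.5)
Your proof is correct and follows the paper's own argument in Theorem~\ref{thm:orbit_symmetric}: a symmetric distribution is constant on each orbit, which yields the inverse $q_j = s_j P_j$ (your $\Psi$), and regrouping the $\ell_1$ sum by orbits gives the total variation identity. Your orbit-by-orbit computation for $f$-divergences, including the boundary conventions, is also correct, and it supplies a step the paper only asserts: the appendix proof establishes bijectivity and the TV isometry but never treats $f$-divergences or the symmetric KL divergence.
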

The proof, together with the corresponding weighted-inner-product
isometry, is given in the Appendix
(Theorem~\ref{thm:orbit_symmetric}).

Theorem~\ref{thm:phi_isometry} establishes that orbit-space
representation is not an approximation or a lossy compression. Every
exchangeable distribution corresponds to exactly one orbit-space
distribution, and the statistical discrepancies used for learning are
preserved. Optimizing total variation or KL divergence in the orbit
space is therefore equivalent to optimizing the same objective in the
full configuration space.

The reduction is purely representational. A general distribution on
$\mathcal Q_N$ requires
\[
|\mathcal Q_N|-1
\]
degrees of freedom, whereas an exchangeable distribution requires only
\[
K-1
=
\mathcal O_D(\mathcal Q_N)-1.
\]
The corresponding representation ratio is
\begin{equation}
\rho_N
=
\frac{
\mathcal O_D(\mathcal Q_N)
}{
|\mathcal Q_N|
}.
\label{eq:reductionratio}
\end{equation}
When many configurations belong to the same symmetry orbit,
$\rho_N< 1$, yielding a substantial reduction in both classical and
quantum representation complexity.

This result identifies exchangeability as a representation principle,
rather than merely an inductive bias imposed on a learning architecture.
Q-Edge removes symmetry-induced redundancy before learning begins and
operates directly on the smaller orbit simplex. The next section
develops the quantum realization of this lossless representation.

\section{Q-Edge: A Symmetry-Aware Quantum Representation}
\label{sec:model}

The previous section established the representation principle of
Q-Edge. Exchangeability identifies statistically equivalent dependence
configurations, the orbit-space representation removes this redundancy
without loss of information, and the resulting orbit simplex provides a
compact representation of multivariate extreme dependence. We now show
how this representation is realized on a quantum computer.

Q-Edge consists of four stages (Fig.~\ref{fig:flowchart}). First, extreme observations
are represented through the angular component of multivariate extreme
value theory. Second, the angular simplex is discretized into a quantum
representation compatible with finite qubit registers. Third, we show
that representing arbitrary dependence distributions is fundamentally
impossible for hardware-efficient quantum circuits because of an
intrinsic representation barrier. Finally, Q-Edge overcomes this barrier
by learning directly in the orbit space, where exchangeability reduces
the statistical complexity before any quantum resources are used.

\begin{figure}[htbp]
    \centering
    \begin{tikzpicture}[
        scale=0.85, transform shape,
        node distance=0.6cm,
        box/.style={rectangle, draw, thick, align=center,
            minimum height=1cm, text width=6.8cm, fill=blue!5,
            rounded corners},
        emph/.style={rectangle, draw, very thick, align=center,
            minimum height=1cm, text width=6.8cm, fill=orange!12,
            rounded corners},
        arrow/.style={->, thick, >=stealth}
    ]
        \node[box] (input) {\textbf{1. Extreme observation}\\
            Polar split $\mathbf{X} = r\cdot\boldsymbol{\Theta}$:
            radius $r$ (severity), angle $\boldsymbol{\Theta}$ (dependence)};
        \node[box, below=of input] (compress) {\textbf{2. Symmetry compression}\\
            Exchangeability collapses the $2^n$ angular cells
            into $K=2^{n'}$ orbit classes ($n'\!<\! n$)};
        \node[box, below=of compress] (circuit) {\textbf{3. Orbit-space quantum circuit}\\
            An unconstrained QCBM on $n'$ qubits, driven by radius $r$};
        \node[box, below=of circuit] (probs) {\textbf{4. Measurement}\\
            Born-rule probabilities over the $K$ orbit classes};
        \node[box, below=of probs] (lift) {\textbf{5. Lossless lift} $\Phi$\\
            Orbit distribution mapped back to the simplex,
            exactly $S_D$-symmetric};
        \node[box, below=of lift] (output) {\textbf{6. Angular distribution}\\
            Digital twin of the dependence structure on $\Delta^{D-1}$};

        \draw[arrow] (input) -- (compress);
        \draw[arrow] (compress) -- (circuit);
        \draw[arrow] (circuit) -- (probs);
        \draw[arrow] (probs) -- (lift);
        \draw[arrow] (lift) -- (output);

        \draw[arrow, dashed] (input.east) -- ++(0.8,0)
            |- (circuit.east)
            node[pos=0.25, right, align=center]
            {\small radius\\[-0.4ex]\small drives\\[-0.4ex]\small $\bm\theta(r)$};
        \draw[arrow, dashed] (output.west) -- ++(-0.8,0)
            |- (input.west)
            node[pos=0.25, left, align=center]
            {\small recombine\\[-0.4ex]\small $\mathbf{X}=r\cdot\boldsymbol{\Theta}$};
    \end{tikzpicture}
    \caption{The Q-Edge pipeline. An extreme observation is split by
    polar decomposition into a radial severity $r$ and an angular
    dependence component $\boldsymbol{\Theta}$. Exchangeability compresses
    the angular configuration space into a much smaller set of orbit
    classes, on which an unconstrained quantum circuit
    Born machine---driven by the radius---is trained; its output is
    lifted back to the simplex by the total-variation isometry $\Phi$,
    yielding an exactly symmetric digital twin of the dependence
    structure.}
    \label{fig:flowchart}
\end{figure}
\subsection{Representing Extremes}
\label{subsec:extreme}

Following the polar decomposition of multivariate extreme value theory,
Q-Edge models only the angular component
$\Theta\in\Delta^{D-1}$ describing extremal dependence.
The radial magnitude $R$ is asymptotically independent of $\Theta$ and
is therefore modelled separately. The complete extreme sample is
recovered as
\[
\mathbf X = R\Theta.
\]

The angular distribution is represented by a quantum circuit Born
machine (QCBM),
\[
p_{\boldsymbol\theta}(x)
=
\left|
\langle x|
U(\boldsymbol\theta(r))
|0\rangle^{\otimes n}
\right|^2,
\]
where the circuit parameters depend smoothly on the radial magnitude
through
\[
\boldsymbol\theta(r)
=
\sum_{i=0}^{T}
\boldsymbol\theta^{(i)}g_i(r),
\qquad
g_i(r)=\frac1{(1+r)^i}.
\]

This parameterization ensures that the learned angular distribution 
approaches a radius-independent limit as
$r\rightarrow\infty$, consistent with extreme value theory.
As a design choice we therefore restrict to a
hardware-efficient ansatz of gate complexity $O(n^2)$.

\begin{theorem}[Lipschitz continuity]
\label{thm:lipschitz}
Let $P_r$ be the Born distribution of our parameterized circuit. The Born distribution varies continuously with the radial magnitude.
Specifically,
\[
\mathrm{TV}(P_{r_1},P_{r_2})
\le
\frac12\Lambda_T|r_1-r_2|,
\]
where $\Lambda_T=\sum_{i=1}^{T} i\,\|\bm\theta^{(i)}\|_1$ depends only on the parameterization coefficients.
A refined bound and proof are given in the Appendix.
\end{theorem}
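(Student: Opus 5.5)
We bound the total variation by the trace distance between pure states, then bound the state distance by the parameter displacement, and finally bound the parameter displacement in terms of $|r_1-r_2|$. Write $|\psi(\boldsymbol\vartheta)\rangle=U(\boldsymbol\vartheta)|0\rangle^{\otimes n}$, so $P_r$ is the computational-basis measurement distribution of $|\psi(\boldsymbol\theta(r))\rangle$. Three facts are needed.

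\emph{Measurement contraction.} Measuring in the computational basis is a quantum channel, so by data processing
\[
\mathrm{TV}(P_{r_1},P_{r_2})\le \tfrac12\bigl\|\,|\psi_1\rangle\langle\psi_1|-|\psi_2\rangle\langle\psi_2|\,\bigr\|_1
=\sqrt{1-|\langle\psi_1|\psi_2\rangle|^2}\le \bigl\||\psi_1\rangle-|\psi_2\rangle\bigr\|_2 .
\]

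\emph{Parameter Lipschitz bound.} We use that the hardware-efficient ansatz is a product of gates $e^{-i\vartheta_k G_k/2}$ with involutory (Pauli-type) generators, $\|G_k\|_{\op}=1$, interleaved with fixed entanglers. By telescoping over gates and using unitary invariance of the norm, changing one angle by $\delta$ moves the state by at most $\|e^{-i\delta G_k/2}-I\|_{\op}\le|\delta|/2$. Hence
\[
\bigl\||\psi(\boldsymbol\vartheta)\rangle-|\psi(\boldsymbol\vartheta')\rangle\bigr\|_2\le\tfrac12\|\boldsymbol\vartheta-\boldsymbol\vartheta'\|_1 .
\]
Since each parameter enters a single gate, this is where the $\ell_1$ norm and the factor $\tfrac12$ originate.

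\emph{Radial dependence.} The $i=0$ term of $\boldsymbol\theta(r)$ is constant and cancels. For $i\ge1$,
\[
|g_i'(r)|=\frac{i}{(1+r)^{i+1}}\le i \quad\text{for } r\ge0,
\]
so by the mean value theorem $|g_i(r_1)-g_i(r_2)|\le i|r_1-r_2|$. Therefore
\[
\|\boldsymbol\theta(r_1)-\boldsymbol\theta(r_2)\|_1\le\sum_{i=1}^T\|\boldsymbol\theta^{(i)}\|_1\, i\,|r_1-r_2|=\Lambda_T|r_1-r_2| .
\]
Chaining the three bounds gives $\mathrm{TV}(P_{r_1},P_{r_2})\le\tfrac12\Lambda_T|r_1-r_2|$. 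Restricting to $r\ge0$ is harmless since $R=\|\mathbf X\|_1\ge 0$. A refinement can be obtained by keeping the factor $(1+\min r)^{-(i+1)}$.

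\textbf{Main obstacle.} The delicate step is the Lipschitz bound with constant exactly $\tfrac12$. It depends on the ansatz convention, namely rotations $e^{-i\vartheta P/2}$ with Pauli generators, each parameter used once, and fixed entanglers independent of the parameters. If parameters were shared across gates, or generators had larger norm, the constant would pick up a multiplicity factor. I will state this convention explicitly.

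A second point to check is the chaining in the first step. The inequality $\sqrt{1-|\langle\psi_1|\psi_2\rangle|^2}\le\||\psi_1\rangle-|\psi_2\rangle\|_2$ is what lets the $\tfrac12$ from the parameter bound pass through unchanged, without losing an additional factor of $2$. The route through the trace norm bound $\tfrac12\|\rho_1-\rho_2\|_1\le\|\psi_1-\psi_2\|$ has to be verified carefully so that the constants compose exactly as claimed.
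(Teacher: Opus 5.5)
Your proposal is correct and follows essentially the same route as the paper: measurement contraction to the Hilbert-space distance between the pure states, then a hybrid/telescoping argument over single-parameter Pauli rotations that gives the constant $\tfrac12$ in $\ell_1$, then a bound on the parameter displacement. The only difference is the last step, where you apply the mean value theorem to $g_i$ while the paper uses the exact factorization $t_1^i-t_2^i=(t_1-t_2)\sum_{k=0}^{i-1}t_1^k t_2^{i-1-k}$ with $t=1/(1+r)$; that factorization also yields the paper's refined bound with the exact factor $|r_1-r_2|/((1+r_1)(1+r_2))$, which is at least as sharp as your $(1+\min r)^{-(i+1)}$ refinement and is what the paper's adaptive-binning construction relies on.
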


The theorem provides a theoretical guarantee that nearby extreme events
produce nearby angular distributions, enabling stable conditioning on
event.

\subsection{Representing the Angular Simplex}
\label{subsec:simplex}

To represent angular dependence on a quantum computer, the continuous
simplex is discretized into a finite collection of Voronoi cells, each corresponding to one computational basis state.

A natural lattice of resolution $N$ contains
$\binom{N+D-1}{D-1}$ cells, which in general is not a power of two,
satisfying
$
2^{n-1}<\binom{N+D-1}{D-1}\le2^n.
$
Consequently, a direct encoding either discards surplus basis states or
introduces penalty terms, both of which reduce representation
efficiency.

Q-Edge resolves this mismatch by augmenting the lattice with
geometrically canonical centroid points. For a base lattice point
$\mathbf{k}\in\mathbb{Z}_{\ge0}^D$ and a type-$m$ elementary simplex,
the corresponding centroid is

\begin{equation}
\label{eq:centroid_main}
G_m(\mathbf{k})
=
\mathbf{k}
+
\frac{m}{D}\mathbf1, \quad m\in\{1,\dots,D-1\},
\end{equation}
where the derivation is given in Appendix~\ref{sec:equivariance}. Selecting an appropriate
subset of these centroids produces an augmented grid
$\mathcal Q_N$ satisfying
$
|\mathcal Q_N|=2^n,
$
thereby establishing a bijection
$
\phi:\mathcal Q_N\leftrightarrow\{0,1\}^n.
$
Every computational basis state is therefore assigned to exactly one
simplex cell, eliminating both unused basis states and artificial
penalty terms. The resulting discretization is statistically consistent.

\begin{theorem}[$L^1$ convergence of the discretization]
\label{thm:l1}
Let $\mu\ll\lambda_{D-1}$ with density
$f\in L^1(\Delta^{D-1})$, and let $f_N$ denote the piecewise-constant
projection induced by $\mathcal Q_N$. Then
\[
\|f_N-f\|_{L^1(\Delta^{D-1})}\rightarrow0,
\qquad N\rightarrow\infty.
\]
\end{theorem}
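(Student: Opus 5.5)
The plan is the standard Lebesgue-differentiation / martingale-style argument for piecewise-constant projections. Let $\{C_{N,j}\}_j$ be the Voronoi cells of the (rescaled) augmented grid $\mathcal Q_N/N\subset\Delta^{D-1}$. We define
\[
f_N=\sum_j \Bigl(\frac{1}{\lambda_{D-1}(C_{N,j})}\int_{C_{N,j}} f\,d\lambda_{D-1}\Bigr)\mathbf 1_{C_{N,j}} .
\]
The operator $\Pi_N:f\mapsto f_N$ is a conditional expectation onto the $\sigma$-algebra generated by the finite partition. Hence it is linear, positive, and an $L^1$ contraction: $\|\Pi_N g\|_{L^1}\le\|g\|_{L^1}$ for every $g\in L^1(\Delta^{D-1})$, by Jensen applied cell by cell. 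Cells of measure zero, i.e.\ boundary ties, are discarded since they do not affect $L^1$ norms.

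The first step is a geometric lemma: the mesh $\delta_N=\max_j\diam(C_{N,j})\to 0$. The lattice $\mathcal X_N/N$ alone is a $(\sqrt2/N)$-net of $\Delta^{D-1}$, because every $\boldsymbol\theta\in\Delta^{D-1}$ is within $O(D/N)$ in $\ell^1$ of some $\mathbf k/N$ (round down coordinates, then redistribute the deficit). Adding the centroid points $G_m(\mathbf k)/N$ from \eqref{eq:centroid_main} only refines the net. A Voronoi cell of a $\varepsilon$-net inside a bounded convex set has diameter at most $2\varepsilon$. This is because any point of $C_{N,j}$ is at distance at most $\varepsilon$ from its nearest site, and that site is the $j$th one. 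So $\delta_N\le C_D/N$.

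The second step is density. For $g$ continuous on the compact simplex, hence uniformly continuous, we have $|g_N(\boldsymbol\theta)-g(\boldsymbol\theta)|\le\omega_g(\delta_N)$ pointwise. Here $\omega_g$ is the modulus of continuity. So $\|g_N-g\|_{L^1}\le\lambda_{D-1}(\Delta^{D-1})\,\omega_g(\delta_N)\to0$.

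The third step is an $\varepsilon/3$ argument. Given $\varepsilon>0$, choose continuous $g$ with $\|f-g\|_{L^1}<\varepsilon$, which is possible since $C(\Delta^{D-1})$ is dense in $L^1$. Then
\[
\|f_N-f\|_{L^1}\le\|\Pi_N(f-g)\|_{L^1}+\|g_N-g\|_{L^1}+\|g-f\|_{L^1}\le 2\varepsilon+\|g_N-g\|_{L^1},
\]
and the last term vanishes as $N\to\infty$.

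The main obstacle I expect is the first step. We must make sure the augmentation by the selected subset of centroids, and any cell-merging needed to reach exactly $2^n$ states, never produces a cell that fails to shrink. The net argument handles added points cleanly. If instead some lattice points were removed or cells merged, I would still bound each cell by the diameter of a union of $O(1)$ adjacent elementary simplices of side $1/N$. That again gives $\delta_N=O(1/N)$, which is all the proof needs.
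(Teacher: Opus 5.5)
Your proposal is correct and follows essentially the same route as the paper (Theorem~\ref{thm:L1}(ii) together with Lemma~\ref{lem:voronoi_props}(iv)): the cell-averaging operator is an $L^1$ contraction, you use uniform continuity together with density of $C(\Delta^{D-1})$ in $L^1$, and you bound the mesh via the covering radius of $\mathcal X_N/N$ (round down, redistribute the deficit) with each Voronoi cell inside a covering-radius ball. One harmless slip: that rounding gives coordinatewise error at most $1/N$, hence a $\sqrt{D}/N$-net as in the paper, not a $\sqrt{2}/N$-net (for $D\ge 9$ the centroid $G_m(\mathbf k)/N$ with $m=\lfloor D/2\rfloor$ lies at distance $\sqrt{m(D-m)/D}/N>\sqrt2/N$ from $\mathcal X_N/N$), but since you only use $\delta_N\le C_D/N$, nothing breaks.
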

The proof is given in Appendix~\ref{sec:consistency}. Consequently, the total approximation error separates naturally into

\begin{equation}
\label{eq:tvdecomp}
d_{\mathrm{TV}}
(\tilde\mu_N^{\mathrm{QCBM}},\mu)
\le
\underbrace{
d_{\mathrm{TV}}
(\tilde\mu_N^{\mathrm{QCBM}},\tilde\mu_N)
}_{\text{representation}}
+
\underbrace{
d_{\mathrm{TV}}
(\tilde\mu_N,\mu)
}_{\text{discretization}}.
\end{equation}

By Theorem~\ref{thm:l1}, the discretization error vanishes
asymptotically. The remaining limitation therefore lies not in the
representation of the simplex, but in the expressive power of the
quantum model.
\subsection{The Fundamental Representation Barrier}
\label{subsec:barrier}

The augmented discretization developed in the previous subsection
eliminates approximation error in the representation of the angular
simplex. A natural question therefore arises: can a hardware-efficient
quantum circuit accurately represent an arbitrary dependence
distribution?

The following theorem shows that the answer is fundamentally negative.
The limitation is not a consequence of a particular circuit design or
optimization algorithm, but of the intrinsic dimension of the
probability simplex itself.

\begin{theorem}[Fundamental representation barrier]
\label{thm:inapprox}

Let $\mathcal{B}$ denote the family of probability distributions
reachable by a parameterized quantum circuit with $N_p$ trainable
parameters acting on $n$ qubits, and let $M=2^n$. If
$N_p<M-1$, then there exists a probability distribution
$\mu\in\Delta^{M-1}$ satisfying
\[
d_{\mathrm{TV}}(\mu,\mathcal B)
\ge
\varepsilon^*,
\]
where \(
\varepsilon^*
=
(1/8)^{1/(1-\alpha)}
(\pi N_p+1)^{-\alpha/(1-\alpha)},\;
\alpha=\frac{N_p}{M-1}<1.
\)

\end{theorem}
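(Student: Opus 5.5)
This is a covering/volume argument: the reachable set $\mathcal B$ is the image of a Lipschitz map from a low-dimensional parameter torus, so it can be covered by few small balls, while the $(M-1)$-dimensional simplex cannot. The form of $\varepsilon^*$, with $\pi N_p$ and the exponent $\alpha/(1-\alpha)$, suggests exactly this balance between a parameter-space covering number and a simplex packing number.

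\textbf{Step 1 (Lipschitz parameterization).} Write the Born map as $F:[0,2\pi)^{N_p}\to\Delta^{M-1}$, $F(\boldsymbol\theta)=(|\langle x|U(\boldsymbol\theta)|0\rangle|^2)_x$. Assume each trainable gate has the form $e^{-i\theta_k P_k/2}$ with $P_k$ a Pauli string. Then
\[
\|\partial_{\theta_k}U\|_{\op}\le \tfrac12 .
\]
This gives $\|U(\boldsymbol\theta)-U(\boldsymbol\theta')\|_{\op}\le \tfrac12\|\boldsymbol\theta-\boldsymbol\theta'\|_1$. The standard bound of the TV distance between Born distributions by the distance between the pure states (as in the proof of Theorem~\ref{thm:lipschitz}) then yields
\[
\TV(F(\boldsymbol\theta),F(\boldsymbol\theta'))\le \tfrac12\|\boldsymbol\theta-\boldsymbol\theta'\|_\infty N_p \le \tfrac12 N_p\,\|\boldsymbol\theta-\boldsymbol\theta'\|_\infty .
\]

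\textbf{Step 2 (cover $\mathcal B$).} Grid the torus with $\ell^\infty$-spacing $2\pi/m$. This uses $m^{N_p}$ cells, and each cell maps into a TV-ball of radius at most $\pi N_p/m$. Choose $m=\lceil (\pi N_p)/\varepsilon\rceil\le(\pi N_p+1)/\varepsilon$ for $\varepsilon\le1$. Then $\mathcal B$ is covered by at most $((\pi N_p+1)/\varepsilon)^{N_p}$ TV-balls of radius $\varepsilon$. Consequently, its $2\varepsilon$-neighbourhood is covered by the same number of balls of radius $3\varepsilon$ (or $2\varepsilon$, with the constant absorbed).

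\textbf{Step 3 (packing the simplex).} Lower bound the number of disjoint TV-balls of radius $\varepsilon$ inside $\Delta^{M-1}$ by a volume comparison in the $(M-1)$-dimensional affine hull with the $\ell^1$ norm. The ratio of the simplex volume to the volume of an $\ell^1$-ball of radius $2\varepsilon$ intersected with the hyperplane is at least $(c/\varepsilon)^{M-1}$ for an absolute constant $c$. One concrete route is to pack at least $(1/(8\varepsilon))^{M-1}$ points with pairwise TV distance $>2\varepsilon$. Alternatively, one can use a Gilbert–Varshamov style lattice $\{\mathbf k/L\}$ with $L\approx 1/(8\varepsilon)$, restricted to a sublattice with separation.

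\textbf{Step 4 (conclusion).} If every $\mu$ were within $\varepsilon$ of $\mathcal B$, the packing points would all lie in the neighbourhood covered in Step 2. Each covering ball contains at most one packing point, so
\[
(1/(8\varepsilon))^{M-1}\le ((\pi N_p+1)/\varepsilon)^{N_p}.
\]
Rearranging gives $\varepsilon^{-(M-1)(1-\alpha)}\le 8^{M-1}(\pi N_p+1)^{N_p}$, that is,
\[
\varepsilon\ge(1/8)^{1/(1-\alpha)}(\pi N_p+1)^{-\alpha/(1-\alpha)}.
\]
Taking $\varepsilon$ just below this threshold gives a contradiction, so some $\mu$ satisfies $d_{\TV}(\mu,\mathcal B)\ge\varepsilon^*$. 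The condition $\alpha<1$ is exactly what makes the exponent $1-\alpha$ positive.

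\textbf{Main obstacle.} The delicate step is Step 3. We need a clean packing bound with constant exactly $1/8$ and exponent $M-1$ that is valid uniformly in $\varepsilon$. The first thing to try is an explicit construction: take distributions $\mu_{\mathbf k}=\mathbf k/L$ on $\mathcal X$ with $\sum k_i=L$, and count them via $\binom{L+M-1}{M-1}\ge (L/(M-1))^{M-1}$. That count loses a factor of $M$, so it likely has to be replaced by a volume argument on the simplex. There, the $(M-1)$-volume of an $\ell^1$-ball slice is compared to $\mathrm{vol}(\Delta^{M-1})$, and the resulting constants are tracked so that they match $1/8$. A second point is matching the Lipschitz constant in Step 1 to $\pi N_p$, which depends on the gate normalization. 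I would fix the Pauli-rotation convention explicitly and absorb any slack into the $+1$.
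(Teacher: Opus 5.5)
Your Steps 1--2 are sound. They reproduce the paper's covering bound $\mathcal N(\mathcal B,\varepsilon,\TV)\le(\pi N_p/\varepsilon+1)^{N_p}$. The paper obtains it from the $\ell^2$ Lipschitz constant $\sqrt{N_p}/2$ and a Euclidean volumetric bound; your $\ell^\infty$ grid gives the same count more directly.

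\textbf{The gap is in Steps 3--4.} Suppose every $\mu$ lies within $\varepsilon$ of $\mathcal B$. Then the simplex is covered by $K$ TV-balls of radius $2\varepsilon$ centred at an $\varepsilon$-cover of $\mathcal B$. Such a ball has diameter $4\varepsilon$, so it can contain several of your packing points, whose pairwise distances need only exceed $2\varepsilon$. The claim ``each covering ball contains at most one packing point'' is therefore false as stated, and radius $3\varepsilon$ makes it worse.

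To make the injection valid you need separation $>4\varepsilon$. The standard count for such a set (a maximal $4\varepsilon$-separated set is a $4\varepsilon$-cover, then volume) gives only $(16\varepsilon)^{-(M-1)}$ points. Your Step 4 algebra then yields $(1/16)^{1/(1-\alpha)}(\pi N_p+1)^{-\alpha/(1-\alpha)}$, which is strictly below the stated $\varepsilon^*$. Re-balancing with an auxiliary cover of $\mathcal B$ at a different scale does not recover $1/8$ either. Because the theorem has an explicit constant, the factor cannot be ``absorbed.'' Your lattice and Gilbert--Varshamov alternatives lose even more, as you note.

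\textbf{The missing idea is to drop packing and apply the volume comparison to the cover itself,} as the paper does. Assume $\sup_\mu d_{\mathrm{TV}}(\mu,\mathcal B)\le\varepsilon$ and let $\{b_1,\dots,b_K\}$ be an $\varepsilon$-cover of $\mathcal B$. The triangle inequality gives $\Delta^{M-1}\subseteq\bigcup_j B^{\TV}_{2\varepsilon}(b_j)$, hence $\mathrm{vol}(\Delta^{M-1})\le\sum_j\mathrm{vol}(B^{\TV}_{2\varepsilon}(b_j)\cap\Delta^{M-1})$.

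In reduced coordinates $y=(P_1,\dots,P_{M-1})$ one has $\|P-Q\|_1\ge\sum_{i<M}|y_i-q_i|$. So a TV-ball of radius $\delta$ sits inside an $\ell^1$ cross-polytope of radius $2\delta$, which has volume $(4\delta)^{M-1}/(M-1)!$, while the simplex has volume $1/(M-1)!$. Taking $\delta=2\varepsilon$ gives $K\ge(8\varepsilon)^{-(M-1)}$ directly. The $1/8$ is the factor $4$ from the cross-polytope times the factor $2$ from the triangle inequality; a packing step adds another factor of $2$. Combined with your Step 2, this is exactly the inequality you rearrange in Step 4.

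Finally, ``taking $\varepsilon$ just below the threshold'' only shows $\sup_\mu d_{\mathrm{TV}}(\mu,\mathcal B)\ge\varepsilon^*$. To obtain a single $\mu$, you still need that $\mu\mapsto d_{\mathrm{TV}}(\mu,\mathcal B)$ is $1$-Lipschitz and that $\Delta^{M-1}$ is compact, so the supremum is attained.
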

The proof is given in Appendix~\ref{sec:reachable}.

Theorem~\ref{thm:inapprox} establishes a fundamental representation
barrier for quantum generative models. For the hardware-efficient scaling $N_p = O(\mathrm{poly}(n))$, $\varepsilon^{*}\to 1/8$ as $n\to\infty$. Any circuit with only
polynomially many trainable parameters is necessarily unable to
represent arbitrary probability distributions on the full
$2^n$-state simplex. This limitation is independent of the choice of
ansatz, optimizer or training algorithm; it arises solely from the
intrinsic dimension of the statistical model.

Consequently, improving the quantum circuit alone cannot overcome this
barrier. The representation itself must change. The next subsection
shows that exchangeability provides exactly such a change: by replacing
the original configuration space with its orbit-space representation,
the effective dimension is reduced from $2^n-1$ to the orbit count,
making accurate learning compatible with hardware-efficient quantum
circuits.
\subsection{Escaping the Barrier through Symmetry}
\label{subsec:escape}

Q-Edge overcomes the representation barrier by exploiting the
exchangeability developed in Section~2.
Rather than learning over the original configuration space,
the quantum circuit operates directly on the orbit simplex.

The orbit-space representation replaces the original
$2^n$
configuration states by
\[
K
=
\mathcal O_D(\mathcal Q_N)
\]
orbit representatives, requiring only
$n'= \log_2K$ qubits.
Each computational basis state now represents an entire symmetry class
rather than a single lattice configuration.

This leads to a fundamentally different learning problem.
Instead of approximating distributions on the full simplex, the circuit
learns only the orbit-space distribution.
By Theorem~\ref{thm:phi_isometry}, this representation is lossless, so
optimization in orbit space is mathematically equivalent to learning the
original exchangeable distribution.

The resulting architecture differs fundamentally from existing
equivariant quantum circuits.
Rather than constraining the circuit to preserve symmetry,
Q-Edge removes symmetry-induced redundancy before learning begins.
The quantum circuit therefore remains completely unconstrained while
producing an exactly exchangeable distribution through the orbit-space
representation.

\begin{theorem}[Orbit-space dominance]
\label{thm:dominance}

Every distribution reachable by an equivariant quantum circuit is also
reachable by the orbit-space construction, while requiring no more
parameters and, whenever
$K<2^n$,
strictly fewer qubits.

\end{theorem}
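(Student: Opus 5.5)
The plan is to make the statement precise and then build an explicit simulation. An \emph{equivariant quantum circuit} will be a parameterized family $U_{\mathrm{eq}}(\boldsymbol\vartheta)$, $\boldsymbol\vartheta\in\mathbb R^{N_p}$, on $n$ qubits such that the Born distribution $P_{\boldsymbol\vartheta}(\mathbf x)=|\langle\phi(\mathbf x)|U_{\mathrm{eq}}(\boldsymbol\vartheta)|0\rangle^{\otimes n}|^2$ is $S_D$-symmetric for every $\boldsymbol\vartheta$. This is the property equivariance is designed to deliver: $U_{\mathrm{eq}}$ commutes with the permutation representation on $\{0,1\}^n$ induced by $\phi$, and the input state is invariant. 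Hence $\mathcal B_{\mathrm{eq}}=\{P_{\boldsymbol\vartheta}\}\subseteq\Delta_{\mathrm{sym}}$. The \emph{orbit-space construction} will be the family $\Phi(\mathcal B')$, where $\mathcal B'\subseteq\Delta^{K-1}$ is the set of Born distributions of a circuit on $n'=\lceil\log_2 K\rceil$ qubits. When $K$ is not a power of two, I will pad the orbit register with zero-probability states, or alternatively merge them as the augmentation does, and say so explicitly.

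\textbf{Step 1 (reduction to orbit marginals).} By Theorem~\ref{thm:phi_isometry}, $\Phi$ is a bijection onto $\Delta_{\mathrm{sym}}$. So each $P_{\boldsymbol\vartheta}$ equals $\Phi(\mathbf q(\boldsymbol\vartheta))$, where $q_j(\boldsymbol\vartheta)=\sum_{\mathbf x\in\mathrm{Orb}_j}P_{\boldsymbol\vartheta}(\mathbf x)$. It therefore suffices to show that the map $\boldsymbol\vartheta\mapsto\mathbf q(\boldsymbol\vartheta)$ can be realized by an $n'$-qubit circuit with at most $N_p$ parameters.

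\textbf{Step 2 (construction).} I will use the $n'$-qubit circuit $V(\boldsymbol\vartheta)$ that has the \emph{same} parameter vector $\boldsymbol\vartheta$ and is chosen so that $|\langle j|V(\boldsymbol\vartheta)|0\rangle^{\otimes n'}|^2=q_j(\boldsymbol\vartheta)$. Such a $V$ exists for each fixed $\boldsymbol\vartheta$ because any distribution on $2^{n'}\ge K$ states is the Born distribution of some state. Concretely, I take $V(\boldsymbol\vartheta)$ to be the amplitude-encoding (state-preparation) unitary of $\sum_j\sqrt{q_j(\boldsymbol\vartheta)}\,|j\rangle$. This is a fixed, parameter-free gate template whose rotation angles are deterministic functions of $\boldsymbol\vartheta$. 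The trainable parameters are therefore exactly the $N_p$ entries of $\boldsymbol\vartheta$, giving \enquote{no more parameters}. Then $\Phi(\mathrm{Born}(V(\boldsymbol\vartheta)))=\Phi(\mathbf q(\boldsymbol\vartheta))=P_{\boldsymbol\vartheta}$, which yields the inclusion $\mathcal B_{\mathrm{eq}}\subseteq\Phi(\mathcal B')$.

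\textbf{Step 3 (qubit count).} Since $K\le|\mathcal Q_N|=2^n$, we have $n'\le n$. If $K<2^n$, then $K\le 2^n-1$, so $\lceil\log_2K\rceil\le n-1<n$ and the inequality is strict. The condition $K<2^n$ holds as soon as a single orbit has size at least two, for instance the orbit of any non-constant lattice point when $D\ge2$.

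\textbf{Main obstacle.} The delicate point is the meaning of \enquote{parameters} in Step 2. A skeptical reader may object that the angles of $V$ are nonlinear functions of $\boldsymbol\vartheta$ rather than free gate angles. I would address this in one of two ways. The first is to define the parameter count of a generative model as the dimension of its parameter domain; then the composite map is a legitimate $N_p$-parameter family. The second is to argue at the level of reachable sets. Every $\mathbf q\in\Delta^{K-1}$ is reachable by a universal $n'$-qubit circuit, and the closure of $\Phi(\mathcal B')$ then equals all of $\Delta_{\mathrm{sym}}\supseteq\mathcal B_{\mathrm{eq}}$. Its intrinsic dimension is $K-1$ by Proposition~\ref{prop:sym_dim_main}. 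This dimension is at most the number of parameters any equivariant circuit needs to reach all of $\Delta_{\mathrm{sym}}$, since a smooth image of an $N_p$-dimensional domain has dimension at most $N_p$. I would present the first argument as the proof and the second as a remark justifying the comparison.
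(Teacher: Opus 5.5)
Your Step 1 reduction is fine. The inclusion of reachable sets is also fine. The gap is the ``no more parameters'' claim: Step 2 establishes it only in a degenerate sense.

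\textbf{The parameter claim.} The angles of your state-preparation template $V(\vartheta)$ are functions of $\mathbf q(\vartheta)$, the orbit marginals of the $n$-qubit output. Evaluating them means running, or classically simulating, the very $2^n$-dimensional equivariant circuit you are trying to replace. Under your convention ``parameter count $=$ dimension of the parameter domain'', the conclusion holds for \emph{any} $N_p$-parameter family of symmetric distributions. Equivariance then enters only through $\mathcal B_{\mathrm{eq}}\subseteq\Delta_{\mathrm{sym}}$, and the statement loses its content.

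Your fallback does not repair this. An $n'$-qubit circuit that covers $\Delta^{K-1}$ needs at least $K-1$ parameters (Theorem~\ref{thm:inapprox} applied on $n'$ qubits). So it matches a given equivariant ansatz only when $N_p\ge K-1$. That universality argument is essentially the deep-circuit inclusion of Corollary~\ref{cor:dominance}, which says nothing about parameter counts.

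\textbf{The missing idea.} It is the one in Proposition~\ref{prop:dominance}: use full equivariance, not just symmetry of the Born distribution.
\begin{itemize}
\item $U_{\mathrm{eq}}(\vartheta)$ commutes with every $\Pi(\sigma)$, and $|0\rangle^{\otimes n}=|\phi(\mathbf x_0)\rangle$ is invariant, where $\mathbf x_0$ is an $S_D$-fixed grid point (otherwise see Remark~\ref{rem:no_fixed_point}).
\item Hence the output stays in $V_{\mathrm{sym}}=\mathrm{Im}(\mathrm{Sym})$. This space has orthonormal basis $|\Omega_j\rangle=s_j^{-1/2}\sum_{\mathbf x\in\mathcal O_j}|\phi(\mathbf x)\rangle$ and dimension exactly $K=2^{n'}$.
\item Let $\iota$ be the unitary $|\Omega_j\rangle\mapsto|\psi(j)\rangle$. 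The $n'$-qubit circuit $U^{(\mathrm B)}(\vartheta)=\iota\,U_{\mathrm{eq}}(\vartheta)|_{V_{\mathrm{sym}}}\,\iota^{-1}$ uses the same parameters as gate angles. For a layerwise-equivariant ansatz, each rotation $e^{-i\vartheta_kX_k}$ becomes $e^{-i\vartheta_k\tilde X_k}$ with fixed generator $\tilde X_k=\iota X_k|_{V_{\mathrm{sym}}}\iota^{-1}$.
\item Writing the output as $\sum_j\alpha_j|\Omega_j\rangle$ gives $P(\mathbf x)=|\alpha_{j(\mathbf x)}|^2/s_{j(\mathbf x)}$ and $q_j=|\alpha_j|^2$. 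So $\Phi$ of the new Born distribution is exactly $P$.
\end{itemize}
Your weaker definition of ``equivariant'' (symmetric Born distributions only) would not support this. Orbit-dependent phases can take the state outside $V_{\mathrm{sym}}$.

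\textbf{Step 3.} There is an arithmetic error here. $K\le 2^n-1$ does not give $\lceil\log_2K\rceil\le n-1$: for $K=2^n-1$ with $n\ge2$, $\lceil\log_2K\rceil=n$. Strictness needs $K\le 2^{n-1}$. The paper gets this because the orbit encoding takes $K=2^{n'}$ to be a power of two, so $K<2^n$ forces $n'\le n-1$. Once you allow padding to $\lceil\log_2K\rceil$ qubits, the strict qubit saving can fail unless you separately show $K\le 2^{n-1}$.
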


The proof is given in the Appendix~\ref{sec:equivariance} Corollary~\ref{cor:dominance}.

The degree of compression depends only on the underlying symmetry.
Fully symmetric dependence models achieve more than forty-fold
reductions in representation size, whereas asymmetric models recover the
original representation (Table~\ref{tab:compression}).

The key innovation of Q-Edge is therefore not a more expressive quantum
circuit but a different representation on which the circuit operates.
By exploiting exchangeability before learning begins, Q-Edge transforms
an intractable representation problem into one compatible with
near-term quantum hardware while preserving the statistical structure of
multivariate extremes exactly.

\subsection{Symmetry as an escape: the orbit-space construction}\label{sec:orbit}

The representation barrier established in
Theorem~\ref{thm:inapprox} applies to arbitrary probability
distributions on the full simplex. Many dependence distributions arising in practice, however, exhibit exchangeability or partial
exchangeability. The orbit-space representation
introduced in Section~\ref{sec:method} changes the statistical model itself by
replacing the full simplex with the symmetric simplex
$\Delta_{\rm sym}$, whose dimension is

\begin{equation}
\label{eq:sym_dim_G}
\dim(\mathrm{aff} \ \Delta_{\rm sym})
=
\mathcal O_G(\mathcal Q_N)-1,
\end{equation}
where $\mathcal O_G(\mathcal Q_N)$ denotes the number of symmetry
orbits. Since
$\mathcal O_G(\mathcal Q_N)<2^n$
for highly symmetric problems, the intrinsic dimension of the target
distribution is dramatically reduced.

The consequence is immediate. The representation barrier of
Theorem~\ref{thm:inapprox} applies to the intrinsic dimension of the
target simplex. Replacing the original simplex by
$\Delta_{\rm sym}$ changes the necessary parameter requirement from
\(
N_p\ge2^n-1
\)
to
\(
N_p
\ge
\mathcal O_G(\mathcal Q_N)-1.
\)
Thus the escape from the representation barrier comes not from a more
expressive quantum circuit, but from representing a lower-dimensional
statistical model.

The amount of compression depends only on the orbit count. For the
Dirichlet family, the orbit count can be computed explicitly using
Burnside's lemma (Appendix~\ref{sec:equivariance} Theorem~\ref{thm:dirichlet_orbits}). Table~\ref{tab:compression} illustrates the
continuous transition from fully asymmetric to fully symmetric targets.
For example, when
$D=30$
and
$N=6$
with block symmetry
$G=S_{10}^3$,
the original discretization contains approximately
$1.6\times10^6$
cells (about 21 qubits), whereas the orbit-space representation
contains only
$256=2^8$
states.

\begin{table}[t]
\centering
\caption{Orbit-space compression across the symmetry spectrum
($D=6$, $N=6$, $|\mathcal{X}_6|=462$).}
\label{tab:compression}
\small
\begin{tabular}{lcccc}
\toprule
Target & Symmetry $G$ & $|G|$ & $\bO_G$ & Compression \\
\midrule
$\mathrm{Dir}(\alpha_1,\alpha_1,\alpha_1,\alpha_1,\alpha_1,\alpha_1)$ & $S_6$ & $720$ & $11$ & $\mathbf{42\times}$ \\
$\mathrm{Dir}(\alpha_1,\alpha_1,\alpha_1,\alpha_2,\alpha_2,\alpha_2)$ & $S_3\times S_3$ & $36$ & $49$ & $9.4\times$ \\
$\mathrm{Dir}(\alpha_1,\alpha_1,\alpha_2,\alpha_2,\alpha_3,\alpha_3)$ & $S_2^3$ & $8$ & $119$ & $3.9\times$ \\
$\mathrm{Dir}(\alpha_1,\alpha_2,\alpha_3,\alpha_3,\alpha_4,\alpha_4)$ & $S_2^2$ & $4$ & $188$ & $2.5\times$ \\
$\mathrm{Dir}(\alpha_1,\alpha_2,\alpha_3,\alpha_4,\alpha_5,\alpha_6)$ & $\{e\}$ & $1$ & $462$ & $1\times$ \\
\bottomrule
\end{tabular}
\end{table}

To realize this reduced representation on a quantum computer, we encode
the orbit indices directly rather than the original simplex cells.
Choosing
$K=\mathcal O_G(\mathcal Q_N)=2^{n'}$,
we introduce a bijection $
\psi:\{1,\ldots,K\}
\leftrightarrow
\{0,1\}^{n'},$ and represent the Born distribution over orbit indices,
$
q_{\boldsymbol\theta}(j)
=
\left|
\langle
\psi(j)
|
U(\boldsymbol\theta)
|0\rangle^{\otimes n'}
\right|^2.
$ The probability assigned to each orbit is then distributed uniformly
over its members,

\[
P_{\boldsymbol\theta}(\mathbf x)
=
\frac{
q_{\boldsymbol\theta}(j(\mathbf x))
}
{
|\mathcal O_{j(\mathbf x)}|
}.
\]
Unlike equivariant quantum circuits, this construction imposes no
symmetry constraint on the circuit itself. Symmetry is enforced entirely
through the statistical representation.

\section{Results}
\label{sec:results}

We evaluate Q-Edge through three questions that correspond directly to
the theoretical developments. First, does the orbit-space
representation itself overcome the representation barrier? Second, does
Q-Edge outperform existing classical generative models for structured
extreme dependence? Finally, does the representation remain effective in
high-dimensional settings where direct quantum representations become
impractical? To answer these questions, we evaluate Q-Edge on five
synthetic $S_3$-symmetric dependence distributions in three dimensions
and on a challenging 30-dimensional
$S_{10}^3$-symmetric Dirichlet distribution. Throughout, Q-Edge is
compared against a direct-cell quantum circuit Born machine using the
same circuit architecture and training procedure and, where
appropriate, a parameter-matched conditional variational autoencoder
(VAE). Performance is assessed using three complementary metrics:
root mean squared error (RMSE), symmetric Kullback--Leibler (KL)
divergence, and total variation distance (TVD).  RMSE measures the pointwise accuracy of the reconstructed density, symmetric KL divergence quantifies distributional fidelity, and TVD measures the global probability mass discrepancy. Together, these complementary metrics show that the improvement is systematic rather than metric-specific. Implementation details, objective function, model architectures, optimization
procedures, hyperparameter settings, and hardware implementation are deferred to
Appendix~\ref{sec:methods}.

\subsection{Does Orbit-Space Representation Overcome the Representation Barrier?}
\label{sec:ablation}

The central claim of Q-Edge is that the primary limitation is not the quantum circuit itself, but the statistical representation on which it operates. To isolate this effect, we compare two quantum circuit Born machines that share exactly the same circuit architecture, optimization procedure and target distributions. The only difference is the representation: Direct - one circuit learns directly on the discretized simplex,
whereas Q-Edge learns on the symmetry-reduced orbit simplex.

We consider five radius-conditioned,
$S_3$-symmetric dependence distributions defined on the 2-simplex (Appendix \ref{sec:exe_setup}),
covering a broad spectrum of extremal dependence geometries, including
unimodal, multimodal, vertex-concentrated,
barycentre-concentrated and curved-support distributions. We focus on three-dimensional dependence structures because they provide the simplest non-trivial setting in which exchangeability, multimodality, and complex extremal dependence can all be systematically evaluated while remaining fully visualizable. Both models employ the same
$R_yR_zR_y$-CNOT ansatz and identical radius-conditioned training. At
resolution $N=6$, the augmented simplex contains $64$ cells, which the
$S_3$ symmetry reduces to $16$ orbit classes. Consequently, the
orbit-space model operates on only four qubits, whereas the direct-cell
representation requires six.

Q-Edge consistently improves reconstruction accuracy across all five target distributions, providing lower
reconstruction error than the direct-cell QCBM across all three
evaluation metrics (Table~\ref{tab:orbit-ablation}). Even for the Dirichlet distribution, where both models achieve highly accurate reconstructions, Q-Edge reduces RMSE by 31\%, symmetric KL divergence by 50\%, and TVD by 35\%. The improvement becomes even more pronounced for geometrically complex dependence structures. For the logistic-normal and ring distributions, symmetric KL divergence is reduced by 73\% and 75\%, respectively, accompanied by substantial reductions in both RMSE and TVD. Averaged across all five
targets, Q-Edge reduces RMSE by 30\%, symmetric KL divergence by 53\%, and TVD by 36\%. Q-Edge more faithfully
preserves multimodal and curved-support dependence structures while the
direct-cell model exhibits visible symmetry-breaking artefacts.

These improvements cannot be attributed to increased model capacity.
Indeed, the direct-cell QCBM employs a larger Hilbert space, two
additional qubits and a correspondingly larger parameter space. The only
difference between the two models is the statistical representation on
which optimization is performed. Q-Edge operates directly in the
exchangeable probability simplex, where every feasible point satisfies
the required permutation symmetry. In contrast, the direct-cell model
searches the full probability simplex, whose vast majority of candidate
distributions violate the symmetry of the target. Consequently, a
substantial fraction of the optimization effort is spent exploring
regions that are statistically irrelevant.

\begin{table}[t]
\centering
\caption{Orbit-space representation overcomes the representation barrier.
Comparison between Q-Edge and Direct-QCBM using identical quantum
circuits, optimization procedures and target distributions. Entries are
reported as \textbf{Q-Edge / Direct-QCBM}, followed by the relative error
reduction ($\downarrow$, lower is better). RMSE quantifies pointwise
reconstruction error, symmetric KL divergence measures distributional
fidelity, and TVD measures the overall probability mass discrepancy.
}
\label{tab:orbit-ablation}

\small
\begin{adjustbox}{max width=\textwidth}
\begin{tabular}{lccc}
\toprule
&\multicolumn{3}{c}{\textbf{Q-Edge / Direct-QCBM (relative improvement)}}\\
&
\textbf{RMSE ($\times10^{-3}$)}
&
\textbf{Sym. KL ($\times10^{-2}$)}
&
\textbf{TVD ($\times10^{-2}$)}
\\

\midrule

Dirichlet &
$\mathbf{0.09}/0.13$ ($\downarrow31\%$) &
$\mathbf{0.002}/0.004$ ($\downarrow50\%$) &
$\mathbf{0.215}/0.333$ ($\downarrow35\%$) \\

Logistic spectral &
$\mathbf{5.16}/6.30$ ($\downarrow18\%$) &
$\mathbf{1.54}/2.55$ ($\downarrow40\%$) &
$\mathbf{5.70}/7.58$ ($\downarrow25\%$) \\

Logistic-normal &
$\mathbf{0.39}/0.67$ ($\downarrow42\%$) &
$\mathbf{0.029}/0.109$ ($\downarrow73\%$) &
$\mathbf{0.584}/1.276$ ($\downarrow54\%$) \\

Three-mode &
$\mathbf{0.24}/0.31$ ($\downarrow23\%$) &
$\mathbf{0.015}/0.021$ ($\downarrow29\%$) &
$\mathbf{0.563}/0.702$ ($\downarrow20\%$) \\

Ring &
$\mathbf{4.26}/6.78$ ($\downarrow37\%$) &
$\mathbf{1.54}/6.12$ ($\downarrow75\%$) &
$\mathbf{4.44}/8.17$ ($\downarrow46\%$) \\
\bottomrule
\end{tabular}
\end{adjustbox}
\end{table}

These results provide direct empirical evidence for the
representation principle. The observed
performance gain does not arise from a more expressive quantum circuit,
but from performing optimization in the correct statistical space.
Orbit-space compression simultaneously reduces the computational
dimension, enforces exchangeability exactly, and improves reconstruction
accuracy, thereby overcoming the representation barrier predicted by
Theorem~\ref{thm:inapprox}.

\subsection{Can Orbit-Space Quantum Learning Outperform Classical Generative Models?}
\label{sec:benchmark}

Having established that orbit-space representation overcomes the
representation barrier independently of circuit architecture, we next compare Q-Edge with a classical deep generative model. The objective is not to compare quantum and classical hardware, but to determine whether the combination of symmetry-aware representation and quantum generative learning provides a statistical advantage over a parameter-matched
classical baseline.

We evaluate both models on the same five radius-conditioned,
$S_3$-symmetric dependence distributions defined on the 2-simplex. At resolution
$N=14$, the augmented partition contains $316$ cells, which collapse to
$64$ orbit states under the $S_3$ symmetry, allowing the entire problem
to be represented on six qubits. Q-Edge learns directly on the orbit
simplex, whereas the baseline is a conditional variational autoencoder
(VAE) with a Dirichlet decoder and a comparable parameter budget. Both
models are trained using the same adaptive partition and evaluated on
eleven held-out radii spanning
$r\in[0.01,10^{3}]$.

\begin{table}[t]
\centering
\caption{Q-Edge versus a parameter-matched conditional VAE on
five $S_3$-symmetric, radius-conditioned target distributions.
Entries are reported as \textbf{Q-Edge / VAE}, followed by the relative
error reduction achieved by Q-Edge
($\downarrow$, lower is better). RMSE measures pointwise reconstruction
accuracy, symmetric KL divergence measures distributional fidelity, and
TVD quantifies the overall probability mass discrepancy. Values are
averaged over eleven held-out radii.}
\label{tab:orbit-bench}

\small
\begin{adjustbox}{max width=\textwidth}
\begin{tabular}{lccc}
\toprule
&\multicolumn{3}{c}{\textbf{Q-Edge / VAE (relative improvement)}}\\
&
\textbf{RMSE ($\times10^{-3}$)}
&
\textbf{Sym. KL ($\times10^{-1}$)}
&
\textbf{TVD ($\times10^{-1}$)}
\\

\midrule

Dirichlet &
$0.05/\mathbf{0.01}$ ($\uparrow 400\%$)&
$0.001/\mathbf{0.001}$ ($= $) &
$0.053/\mathbf{0.017}$ ($\uparrow 212\%$)\\

Logistic spectral &
$\mathbf{1.40}/2.35$ ($\downarrow40\%$) &
$\mathbf{0.709}/1.190$ ($\downarrow40\%$) &
$\mathbf{0.821}/1.534$ ($\downarrow46\%$) \\

Logistic-normal &
$\mathbf{0.27}/0.79$ ($\downarrow66\%$) &
$\mathbf{0.032}/0.159$ ($\downarrow80\%$) &
$\mathbf{0.230}/0.614$ ($\downarrow63\%$) \\

Three-mode &
$\mathbf{0.14}/2.58$ ($\downarrow95\%$) &
$\mathbf{0.009}/5.976$ ($\downarrow99\%$) &
$\mathbf{0.126}/2.921$ ($\downarrow96\%$) \\

Ring &
$\mathbf{0.93}/4.17$ ($\downarrow78\%$) &
$\mathbf{1.309}/11.762$ ($\downarrow89\%$) &
$\mathbf{0.549}/2.701$ ($\downarrow80\%$) \\
\bottomrule
\end{tabular}
\end{adjustbox}
\end{table}

The Dirichlet distribution represents a special case because it belongs to the native decoder family of the VAE. Consequently, the classical model achieves better reconstruction. Nevertheless, both methods already attain near-perfect accuracy, with all three error metrics remaining close to zero, indicating that the practical difference between the two approaches is negligible. Outside this favourable setting, Q-Edge
consistently outperforms the classical baseline. The largest gains are
observed for the logistic-normal, ring and three-mode distributions,
whose complex geometric structures deviate substantially from the
Dirichlet family. For these challenging targets, Q-Edge reduces
distributional errors by up to $89\%$ and improves pointwise
reconstruction accuracy by up to $95\%$.

These improvements arise from fundamentally different inductive biases. The VAE inherits a strong preference for Dirichlet-like distributions through its decoder, explaining its excellent performance on the Dirichlet benchmark but limiting its ability to represent multimodal and curved-support dependence. In contrast, Q-Edge imposes no parametric assumption on the target distribution. Instead, it learns directly in the exchangeable orbit space, allowing the quantum circuit to represent the full class of symmetric probability distributions.

The reconstructed densities (Fig.~\ref{fig:density-panels}) reveal qualitative differences that are not fully captured by scalar error metrics. While both methods accurately recover the simple Dirichlet distribution, the VAE progressively smooths multimodal and curved-support dependence structures towards simpler unimodal distributions. In contrast, Q-Edge faithfully reconstructs multiple peaks, ring-shaped support and sharp geometric features, reflecting its ability to learn directly within the space of exchangeable probability distributions without imposing a restrictive parametric decoder.

Figure~\ref{fig:metrics-vs-r} explains why these qualitative differences arise. As the radius increases, the three-mode distribution develops increasingly separated modes, while the ring distribution concentrates probability mass along a curved manifold. These geometric changes have little effect on Q-Edge, whose reconstruction errors remain consistently low across the full range of radii. By contrast, the VAE errors increase substantially once the target departs from the Dirichlet family, indicating that its inductive bias becomes increasingly restrictive as the dependence geometry grows more complex.

\begin{figure}[htbp]
    \centering
    \begin{subfigure}[b]{0.195\textwidth}
        \centering
        \includegraphics[width=\linewidth]{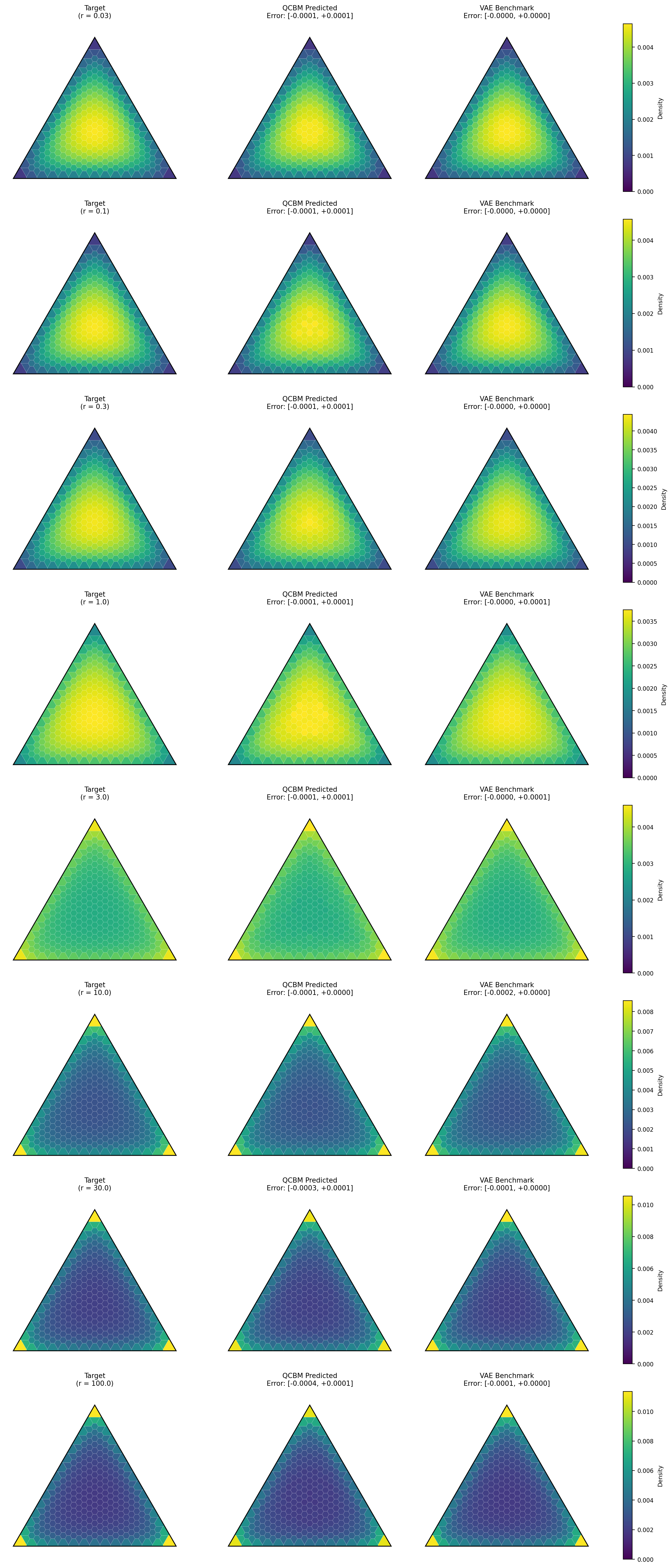}
        \caption{Dirichlet}
        \label{fig:3a}
    \end{subfigure}\hfill
    \begin{subfigure}[b]{0.195\textwidth}
        \centering
        \includegraphics[width=\linewidth]{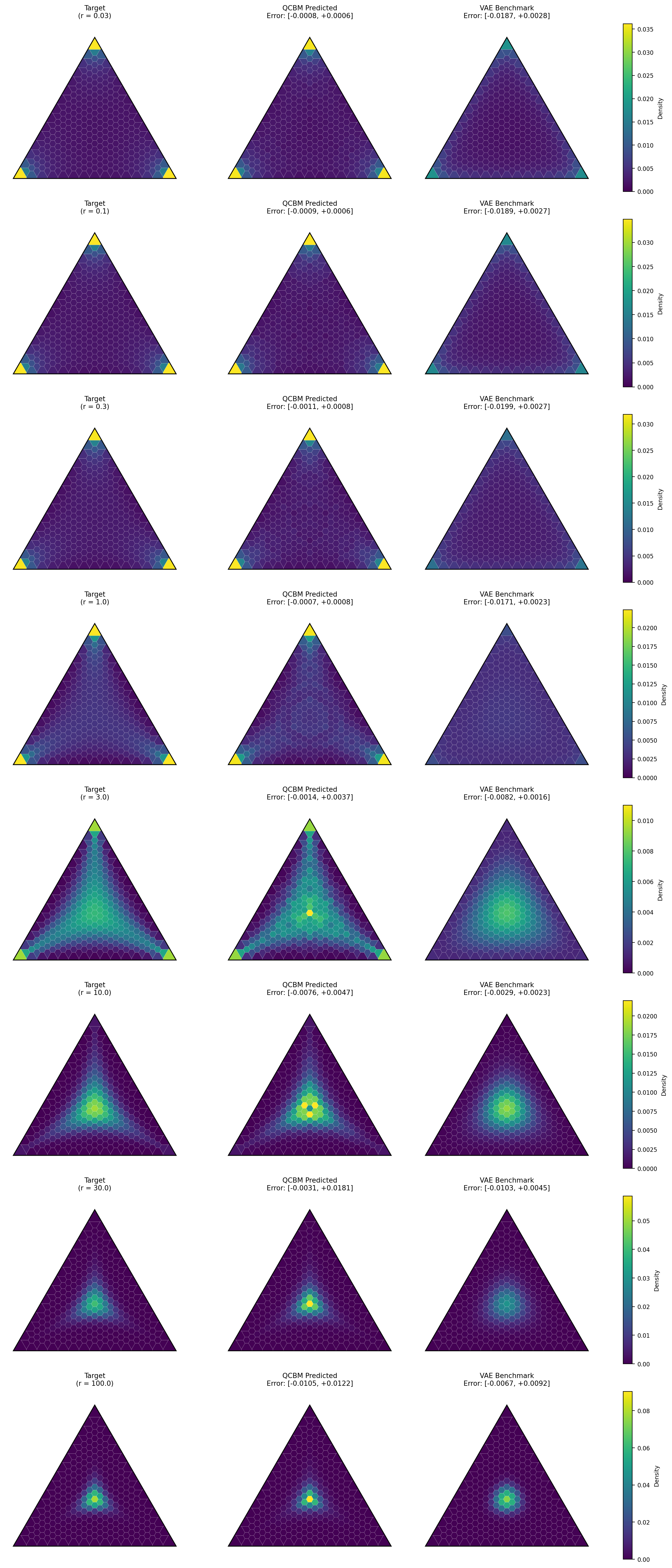}
        \caption{Logistic}
        \label{fig:3b}
    \end{subfigure}\hfill
    \begin{subfigure}[b]{0.195\textwidth}
        \centering
        \includegraphics[width=\linewidth]{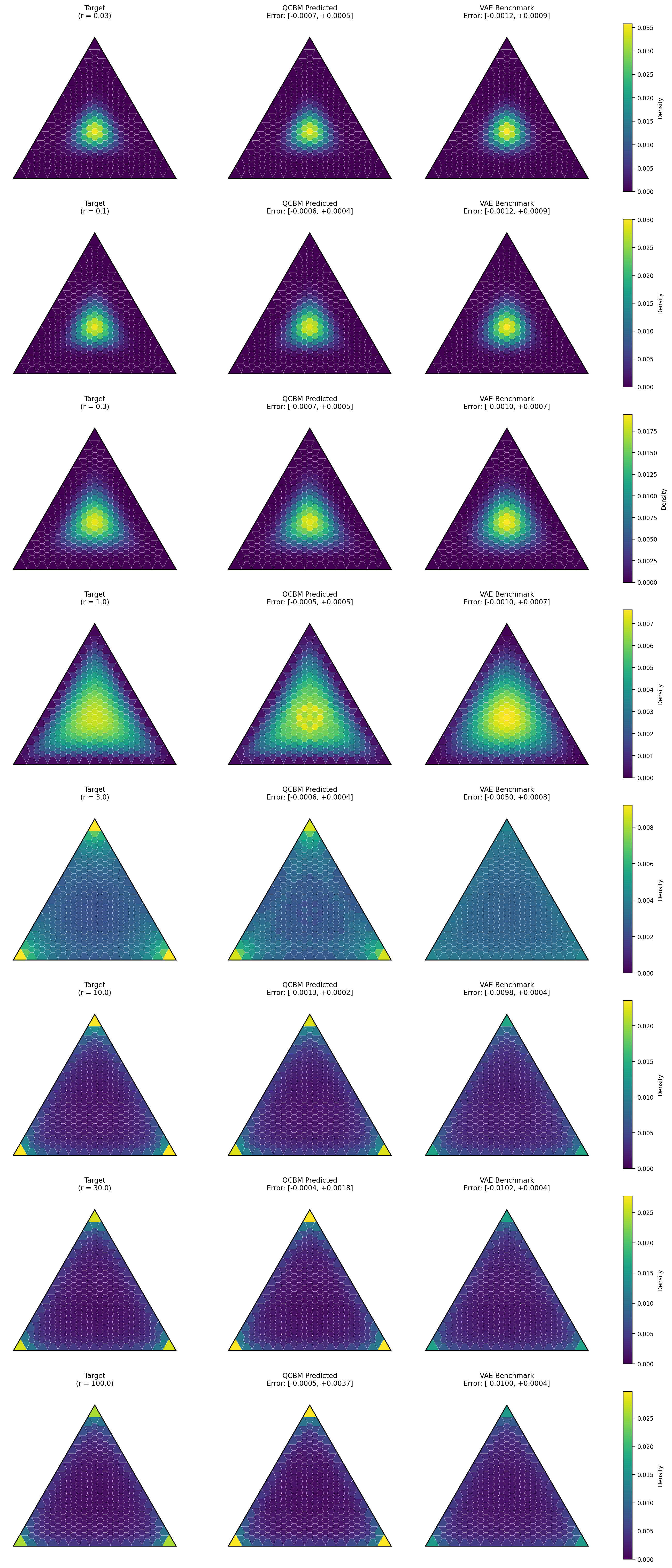}
        \caption{Logistic-normal}
        \label{fig:3c}
    \end{subfigure}\hfill
    \begin{subfigure}[b]{0.195\textwidth}
        \centering
        \includegraphics[width=\linewidth]{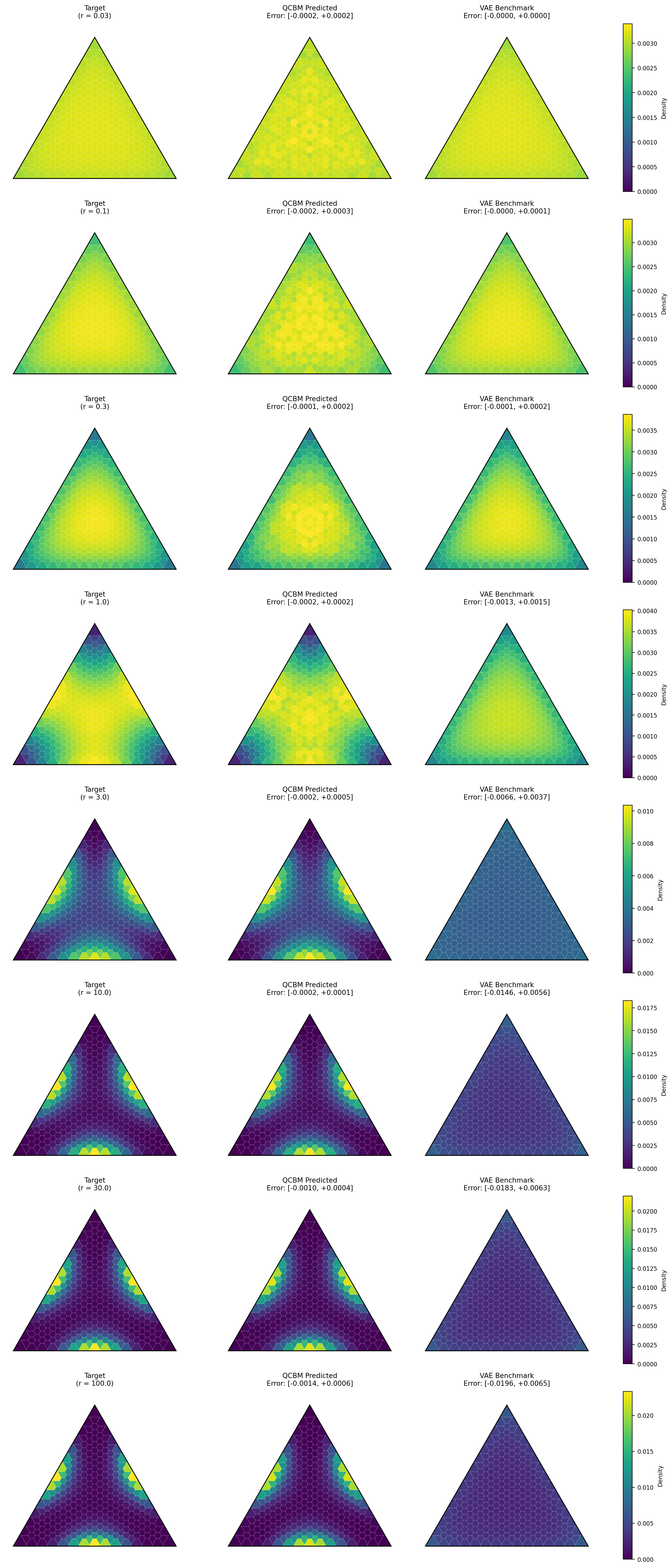}
        \caption{Three-mode}
        \label{fig:3d}
    \end{subfigure}\hfill
    \begin{subfigure}[b]{0.195\textwidth}
        \centering
        \includegraphics[width=\linewidth]{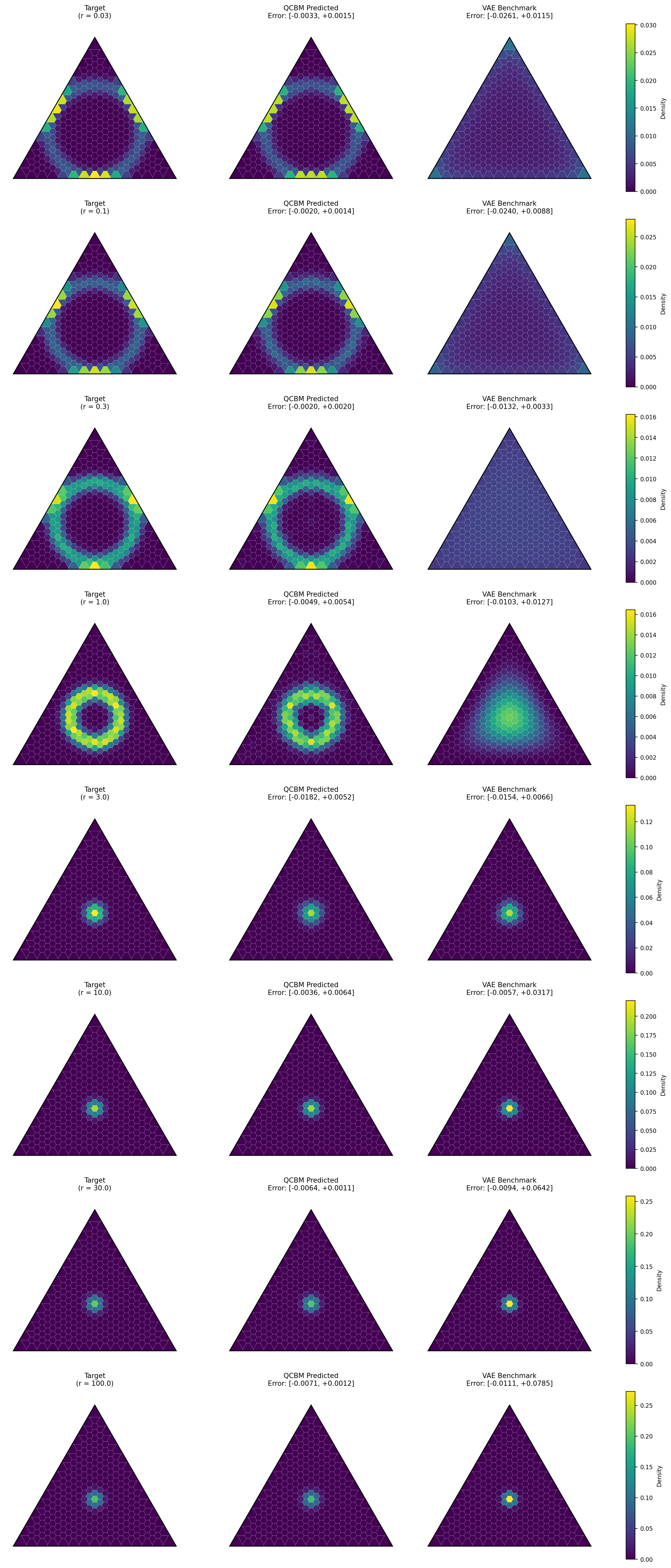}
        \caption{Ring}
        \label{fig:3e}
    \end{subfigure}
    \caption{Qualitative comparison of reconstructed dependence structures.
Reconstructed angular densities on the adaptive partition of the
2-simplex. While both methods accurately recover the Dirichlet
distribution, the VAE progressively smooths multimodal and curved-support
targets towards simpler distributions. Q-Edge faithfully reconstructs
multiple modes and ring-shaped dependence by learning directly in the
exchangeable orbit space.
}
    \label{fig:density-panels}
\end{figure}

\begin{figure}[htbp]
\centering
\includegraphics[width=\textwidth]{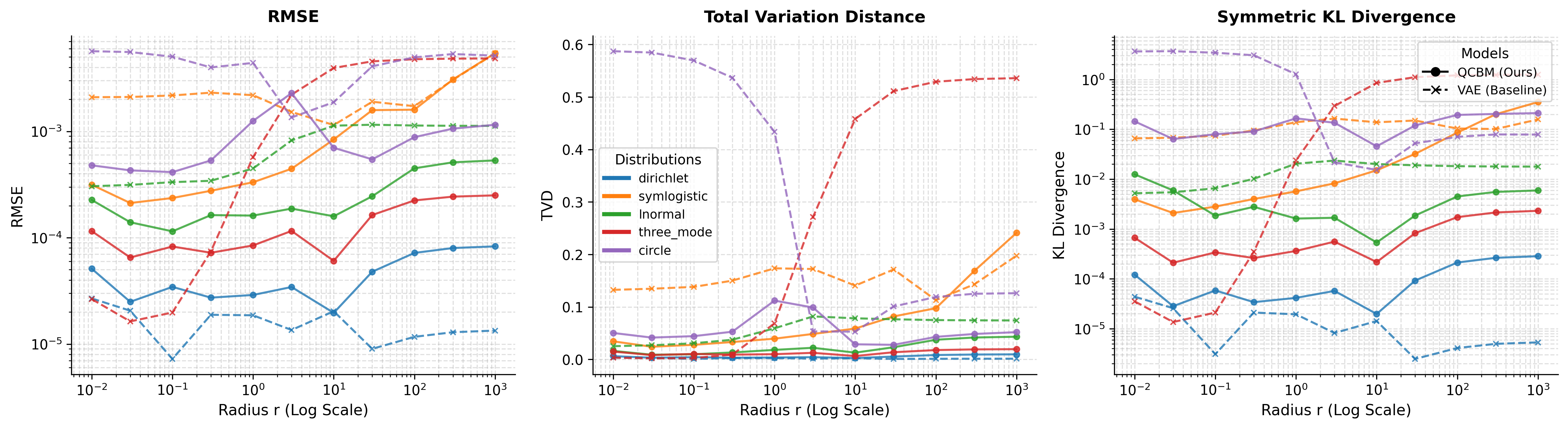}
\caption{Radius-resolved accuracy of the orbit-space QCBM (solid) and the conditional VAE (dashed) on the five $S_3$-symmetric targets ($D=3$, $N=14$). Q-Edge maintains consistently low errors across the full range of
dependence geometries, whereas the VAE deteriorates as the target
distribution departs from the Dirichlet family, particularly for
multimodal and curved-support distributions. Mean errors are summarized
in Table~\ref{tab:orbit-bench}}
\label{fig:metrics-vs-r}
\end{figure}

\subsection{Does Orbit-Space Compression Enable Scalable Quantum Learning?}
\label{sec:exp30d}

Having demonstrated the statistical advantages of orbit-space learning
in low-dimensional settings, we finally investigate whether these
advantages persist at a practically relevant scale. The theoretical
analysis predicts that orbit-space compression transforms an otherwise
combinatorial representation into a polynomially growing one. We test
this prediction on a 30-dimensional dependence distribution whose full
partition is far beyond the capacity of a direct quantum encoding.

The target is a radius-conditioned Dirichlet distribution on
$\Delta^{29}$ with three concentration parameters assigned to three
blocks of ten exchangeable variables, giving symmetry group
$G=S_{10}^3$. At partition resolution $N=6$, the augmented grid
contains $1\,628\,616$ cells, requiring approximately 21 qubits under a
direct encoding. Exploiting permutation symmetry reduces the
representation to
\[
\bO_G(\mathcal{Q}_6)=221+22+9+3+1=256=2^8
\]
orbit states, corresponding to a $6\,362\times$ compression of the
state space. Consequently, the entire learning problem can be solved
using only eight qubits, exactly as predicted by the orbit-space
construction.

The orbit-space QCBM is trained using symmetric KL divergence with
3\,456 trainable parameters (35 circuit layers and radial order
$T=3$), comfortably satisfying the necessary condition
$N_p\ge \bO_G-1=255$. As a classical baseline, we employ a
parameter-matched conditional VAE whose Dirichlet decoder is explicitly
constrained to satisfy the same $S_{10}^3$ symmetry. Both models
therefore learn within the identical exchangeable probability space,
allowing the comparison to isolate differences in learning capability
rather than differences in expressive capacity.

Unlike the three-dimensional examples, a 30-dimensional dependence
distribution cannot be visualized directly on the simplex. We therefore
compare the reconstructed per-point densities over the 256 orbit states.
Each histogram reports the probability assigned to every
symmetry-equivalent orbit, providing a direct visualization of how
accurately each model reconstructs the high-dimensional probability
geometry while preserving exchangeability.

The comparison is intentionally stringent because the target
distribution belongs to the native decoder family of the VAE
(Table~\ref{tab:scale30}). Consequently, the VAE achieves a lower
symmetric KL divergence, consistent with the three-dimensional
Dirichlet benchmark. This advantage is expected, as the decoder is
parameterized by the correct distribution family.

In contrast, the orbit-space QCBM makes no assumption about the
functional form of the target distribution. Nevertheless, it reduces the
per-point RMSE by approximately $60\%$ and the total variation distance
by $27\%$ relative to the VAE, indicating a more accurate reconstruction
of the local probability geometry while remaining competitive in overall
distributional fidelity.

Both models accurately recover the target distribution across all eleven held-out radii, while the orbit-space QCBM more faithfully captures fine-scale probability variations among neighbouring orbit states. These qualitative differences are consistent with the lower RMSE and TVD achieved by Q-Edge, whereas the VAE retains a slight advantage in symmetric KL because the target lies exactly within its assumed Dirichlet decoder family.

More importantly, these results demonstrate that orbit-space compression
preserves statistical accuracy while fundamentally changing the
computational scale of the learning problem. A direct encoding would
require representing more than $1.6$ million partition cells
(approximately $21$ qubits), whereas the symmetry-reduced
representation requires only $256$ orbit states (eight qubits), a
$6\,362\times$ reduction in state-space size. Orbit-space compression
therefore transforms an otherwise intractable high-dimensional quantum
learning problem into one that is compatible with near-term quantum
hardware without sacrificing reconstruction quality.

\begin{table}[t]
\centering
\caption{\textbf{Scaling to a 30-dimensional dependence distribution.}
Comparison between the orbit-space QCBM (8 qubits) and a parameter-matched
$S_{10}^3$-symmetric VAE. The original partition contains
$1\,628\,616$ cells (approximately 21 qubits), which are compressed into
256 orbit states, corresponding to a $6\,362\times$ reduction. RMSE
measures pointwise reconstruction accuracy, symmetric KL divergence
measures distributional fidelity, and TVD measures the overall
probability mass discrepancy. Entries are reported as
\textbf{Q-Edge / VAE}, followed by the relative difference
($\downarrow$, lower is better).}
\label{tab:scale30}

\small
\begin{tabular}{lccc}
\toprule
&
\textbf{RMSE $\times 10^{-5}$}
&
\textbf{Sym. KL $\times 10^{-1}$}
&
\textbf{TVD $\times 10^{-1}$}
\\
\midrule

Dirichlet &
$\mathbf{0.101}/0.253$ ($\downarrow 60\%$) &
$0.386/\mathbf{0.272}$ ($\uparrow 42\%$) &
$\mathbf{0.567}/0.775$ ($\downarrow 27\%$)
\\

Logistic normal &
$\mathbf{0.0289}/7.56$ ($\downarrow 99\%$) &
$\mathbf{0.00433}/3.44$ ($\downarrow 99\%$) &
$\mathbf{0.0744}/3.10$ ($\downarrow 98\%$)
\\

Three-mode mixture &
$\mathbf{0.0243}/0.574$ ($\downarrow 96\%$) &
$\mathbf{0.00914}/9.23$ ($\downarrow 99\%$) &
$\mathbf{0.133}/3.35$ ($\downarrow 96\%$)
\\

\bottomrule
\end{tabular}
\end{table}

\section{Discussion}\label{sec:discussion}

The central contribution of this work is conceptual rather than
algorithmic. We show that the principal obstacle to quantum generative
modelling of multivariate extremes is not limited circuit expressivity,
but an inappropriate statistical representation. Classical and quantum
models alike are typically trained in the full probability simplex,
whereas the target distributions occupy a much smaller
symmetry-constrained manifold. By learning directly in the orbit space,
Q-Edge aligns the model representation with the statistical structure of
exchangeable extremes, transforming an exponentially large learning
problem into an equivalent low-dimensional one.

This perspective establishes a different route towards scalable quantum
machine learning. Rather than overcoming expressivity limitations by
increasing circuit size or depth, orbit-space compression exploits
representation-theoretic structure before optimization begins. The
resulting reduction preserves the target distribution exactly while
substantially decreasing the effective dimension of the learning
problem. Our theoretical analysis identifies the conditions under which
such compression is lossless, and the numerical experiments demonstrate
that it leads to improved statistical accuracy despite requiring fewer
qubits and fewer effective degrees of freedom.

The present work also clarifies the role of quantum computing in
learning extreme dependence. The quantum advantage does not arise from
representing arbitrary high-dimensional distributions, but from
efficiently modelling structured probability spaces whose symmetry makes
them intrinsically compressible. Although demonstrated here for
exchangeable multivariate extremes, the underlying principle extends
naturally to other generative modelling problems possessing group
symmetries.

Several limitations remain. The experiments are performed using exact
statevector simulation, and the robustness of the approach under
hardware noise and finite sampling remains to be established. In
addition, the largest experiments remain within the classically
simulable regime, demonstrating the correctness and scalability of the
orbit-space construction rather than a computational quantum advantage.
Finally, the current framework assumes that the underlying symmetry is
known exactly; extending orbit-space learning to approximately symmetric
or data-driven group structures is an important direction for future
research.

More broadly, Q-Edge suggests that exploiting statistical symmetry may
be as important as improving quantum hardware for scalable quantum
machine learning. Orbit-space compression provides a general mechanism
for matching quantum representations to structured probability spaces,
opening opportunities well beyond extreme-value modelling, including
scientific machine learning, probabilistic digital twins, and
high-dimensional stochastic simulation.

\bibliographystyle{apalike}
\bibliography{ref}

\clearpage
\appendix
\section*{Appendix}

\section{Implementation}\label{sec:methods}
\subsection{Experiment Setup}\label{sec:exe_setup}
\paragraph{Training and the orbit-space encoding.}
The ansatz and its parameter count $N_p=3n(L+1)$ are specified under the rational radial parameterisation of
Eq.~\eqref{eq:unitary_layers}. The orbit-space route places no symmetry
constraint on the entangler, and the simulations use the circular-CNOT layer;
for the hardware runs we use the equivalent cyclic-CZ layer, whose gates
pairwise commute and therefore compile to half the depth on devices whose
native two-qubit gate is CZ. The equivariant baseline, by contrast, requires an
entangler invariant under the induced qubit permutations, which is what
motivates the full-CZ layer there (Appendix,
Definition~\ref{def:full_entangle}). The circuit is trained by minimizing the symmetric KL divergence (Jeffreys divergence) between the Born
distribution and the target cell masses,
\begin{equation}\label{eq:jeffreys}
D_{\mathrm{KL}}^{\mathrm{sym}}(p,\pi) \;=\; \tfrac12 \sum_{k}\Bigl[
p(k)\log\frac{p(k)}{\pi(k)} + \pi(k)\log\frac{\pi(k)}{p(k)}\Bigr].
\end{equation}
Training minimises the summation of $D_{\mathrm{KL}}^{\mathrm{sym}}$ over the representative radii
$r_1,\dots,r_{K_r}$ of the adaptive bins of Eq.~\eqref{eq:bin_formula}
($K_r = 41$),
\begin{equation}\label{eq:training-loss}
\mathcal{L}(\boldsymbol{\theta})
\;=\; \sum_{j=1}^{K_r}
D_{\mathrm{KL}}^{\mathrm{sym}}\bigl(p_{\boldsymbol{\theta}(r_j)},\,\pi_{r_j}\bigr),
\end{equation}
both computed in closed form from the statevector. The choice of divergence is principled: for the bounded Gaussian-mixture kernels standard in Born-machine training~\citep{gretton2012kernel,liu2018differentiable}, with
$\sup_x k(x,x)\leq 1$, one has
$\mathrm{MMD}(p_{\boldsymbol{\theta}},\pi)\leq
2\,d_{\mathrm{TV}}(p_{\boldsymbol{\theta}},\pi)\leq
\sqrt{2\,\mathcal{L}(\boldsymbol{\theta})}$ by the reproducing property and
Pinsker's inequality, so driving the symmetric KL to zero forces the MMD (and
the total variation) to zero. The converse fails: a bounded kernel is
insensitive to likelihood ratios, so the MMD can be arbitrarily small while the
KL divergence remains large---or infinite, when the model assigns vanishing
mass where the target does not. Exact statevector simulation makes the KL
computable without sampling, removing the estimation advantage that usually
motivates the MMD. For the orbit-space model, $K=\bO_G(\mathcal{Q}_N)$ is
chosen to be a power of two, $n'=\log_2 K$ qubits are used, and $\psi$ is fixed
by barycentre anchoring, $(L_1,L_2,m,\mathrm{partition})$ sorting and Gray
coding (Section~\ref{sec:orbit}). The Dirichlet specialisation of Burnside's
lemma reads
$\bO_G(\mathcal{X}_N)=\sum_{j_1+\cdots+j_b=N}\prod_{k=1}^{b}p_{n_k}(j_k)$,
where $p_n(j)$ is the number of partitions of $j$ into at most $n$ parts
(Appendix, Theorem~\ref{thm:dirichlet_orbits}).

\paragraph{Target distributions (2-simplex).}
All five radius-conditioned targets are $S_3$-invariant on $\Delta^2$. The
\emph{Dirichlet} target is symmetric $\mathrm{Dir}(\alpha(r)\mathbf{1})$ with
$\alpha(r)=3/2-(1/2)\tanh(\log r-2)$, which decays through the uniform law
($\alpha=1$ at $r=2$) toward vertex-favouring concentration. The
\emph{logistic spectral measure} has density
$h(\boldsymbol\theta)\propto(\prod_i\theta_i^{-1/\alpha})
(\sum_i\theta_i^{-1/\alpha})^{\alpha-D}$ with $\alpha(r)=0.1+0.8/(1+r)$, moving
from vertex-concentrated near-independence at $r\to 0$ to
barycentre-concentrated near-complete dependence as $r\to\infty$. The
\emph{logistic-normal} is an Aitchison density with exchangeable log-ratio
covariance $\sigma(r)^2\bigl[\begin{smallmatrix}2&1\\1&2\end{smallmatrix}\bigr]$
and $\sigma(r)=2.5-2.2\,e^{-0.2r}$. The \emph{three-mode mixture} is
$\tfrac13[\mathrm{Dir}(M/2,M/2,1)+\mathrm{Dir}(1,M/2,M/2)
+\mathrm{Dir}(M/2,1,M/2)]$ with $M(r)=10\tanh(0.3r)+2$, ranging from
near-uniform at $r\to 0$ to three sharp peaks at the edge midpoints as
$r\to\infty$. The \emph{circle} is an annulus of width $0.02$ centred at
$(0.5,\sqrt3/6)$ with radius
$R_{\mathrm{circ}}(r)=(\sqrt3/6)\cdot 2/(1+e^{r})$, projected to barycentric
coordinates.

\paragraph{Implementation and metrics.}
The orbit-space QCBM ($n'=6$ qubits, $L=12$, $T=2$) is implemented in
PennyLane (\texttt{default.qubit}, backpropagation). Training uses Adam with a
cosine-annealed learning rate and gradient clipping, minimising the exact
symmetric-KL objective; targets are cached on the adaptive radial boundaries of
Eq.~\eqref{eq:bin_formula} ($K_r=41$, $a_{K_r}=200$). For the $D=30$ experiment
($n'=8$, $L=35$, $T=2$) the same protocol is used.
The classical baseline on the 2-simplex is a conditional VAE with a Dirichlet
decoder (latent dimension $2$, parameter count matched to the quantum model),
trained by standard ELBO on samples drawn at the same radial boundaries and
evaluated by Monte-Carlo marginalisation of its decoder at the grid points,
normalised to the point-density semantics used for the QCBM and the analytic
target. In thirty dimensions the baseline is itself constrained to be $S_{10}^3$ invariant, at $6{,}596$
parameters against $3{,}456$ for the quantum model; the baseline is therefore
favoured on both counts, searching the $\Delta_{\mathrm{sym}}$ and doing so with more parameters. Its latent variable is active on every target (final $D_{\mathrm{KL}}^{\mathrm{sym}}$ between $0.96$ and $2.21$ nats over a three-dimensional latent), so the comparison is not confounded by
posterior collapse.
Reported metrics---RMSE, symmetric KL divergence, and total-variation distance.

\subsection{Execution on quantum hardware}
\label{sec:hardware}

The orbit-space construction compresses the $D=3$, $N=6$ angular grid from
$|\mathcal{Q}_6| = 64$ points to $16$ exchangeability orbits under the full
symmetric group $S_3$, so the Born machine requires $n = \log_2 16 = 4$ qubits. To ensure the model possesses adequate capacity to learn arbitrary distributions, we conservatively set the quantum circuit depth to $L=4$.
We verify that this compression makes the model executable on present-day
hardware by running the trained Dirichlet model on two IBM processors of
different coupling topology: \texttt{ibm\_miami}, a $120$-qubit Nighthawk
device with a square lattice, and \texttt{ibm\_pittsburgh}, a $156$-qubit Heron
device with a heavy-hexagonal lattice. Training is performed classically; the
devices are used only for inference, i.e.\ to sample
$p_{\boldsymbol{\theta}(r)}$ at eleven held-out values of $r$. Both runs
execute \emph{the same logical circuit at the same angles}: the angle matrix
$\boldsymbol{\theta}(r)$ obtained from the classical optimization is stored and
replayed on the second device, so the two executions differ only in how that
circuit is compiled onto the hardware. The noiseless reference is recomputed
from those same angles and is therefore identical for both, which is what makes
the comparison controlled.

The two topologies compile this circuit differently, and that difference is the
object of the comparison. The Nighthawk lattice is a square grid and so
contains four-cycles; the cyclic entangler
$\mathrm{CZ}(q, q\!+\!1 \bmod 4)$ therefore maps onto physical couplings with
no routing overhead, and the compiled circuit contains exactly $4L = 16$
two-qubit gates at depth $56$. The heavy-hexagonal lattice contains no
four-cycles, so the same ring requires inserted \textsc{swap} operations and
compiles to $27$ two-qubit gates at depth $71$---a $1.7\times$ overhead in gate
count. This is a property of the coupling graph rather than of the
construction, and we report it so that the gate counts can be attributed
correctly, not as an advantage of the method.

\begin{table}[t]
\centering
\caption{\textbf{Compiled circuit and execution settings.}
Physical qubits were selected by enumerating the four-cycles of the coupling
graph, where they exist, and scoring candidates by two-qubit and readout
fidelities, reported static $ZZ$ crosstalk, and decoherence over the circuit
duration; on the heavy-hexagonal device, which admits no four-cycle, the layout
was chosen by seeded \textsc{sabre} routing under the same score.}
\label{tab:hw-config}
\small
\begin{tabular}{@{}lcc@{}}
\toprule
& \texttt{ibm\_miami} & \texttt{ibm\_pittsburgh}\\
& (Nighthawk, square) & (Heron, heavy-hex)\\
\midrule
Qubits / orbits / grid points & \multicolumn{2}{c}{$4$ / $16$ / $64$ (compression $4.00\times$)}\\
Layers $L$ / polynomial order $T$ & \multicolumn{2}{c}{$4$ / $2$}\\
Effective parameters $(L{+}1)n$ & \multicolumn{2}{c}{$20 \geq \bO_{S_3}(\mathcal{Q}_6) - 1 = 15$}\\
Angle matrix & \multicolumn{2}{c}{identical (replayed from the first run)}\\
\midrule
Two-qubit gates (logical / compiled) & $16$ / $16$ & $16$ / $27$\\
Inserted \textsc{swap} & none & yes\\
ISA depth & $56$ & $71$\\
\midrule
Shots per $r$ & \multicolumn{2}{c}{$144{,}153$}\\
Mirror / readout calibration & \multicolumn{2}{c}{$8{,}192$ / $16{,}384$ shots $\times\,16$ circuits}\\
Dynamical decoupling / twirling & \multicolumn{2}{c}{\texttt{XpXm} / gates and measurement}\\
\bottomrule
\end{tabular}
\end{table}

\paragraph{Result.}
Comparing a finitely sampled hardware distribution against an analytically
evaluated simulator distribution is not like-for-like: at
$N_{\mathrm{shots}} = 144{,}153$ and $K = 16$ outcomes, multinomial sampling
alone contributes $\sqrt{K/2\pi N_{\mathrm{shots}}} = 4.2\times10^{-3}$ in
total variation, comparable to the model error itself. We therefore quote all
ratios against a \emph{shot-matched} baseline obtained by drawing
$N_{\mathrm{shots}}$ samples from the noiseless model distribution.

Table~\ref{tab:hw-main} gives the outcome. Two standard corrections are applied on each device, in sequence and in the reverse order of the channels they undo: readout first, then depolarizing.
Neither uses the target, and both are calibrated from separate circuits. The readout step is confusion-matrix unfolding~\citep{maciejewski2020mitigation,
nation2021scalable}; with only four qubits we measure the full $16\times16$ matrix from sixteen basis-state preparations rather than the tensor-product approximation used at larger registers, and solve for the corrected distribution by constrained least squares, which returns a valid probability vector where direct inversion would not. The depolarizing step follows the noise-estimation-circuit method of~\citet{urbanek2021mitigating}, with the mirror circuit $U(\boldsymbol{\theta})U^{\dagger}(\boldsymbol{\theta})$ as the estimator; because the mirror carries twice the entangling gates, the factor for a single execution is taken as the square root of the measured one, and it is estimated separately at each radius rather than once for the run. Gate and measurement twirling are enabled throughout, which is what makes a single-parameter depolarizing model a reasonable ansatz. After both, the learned angular measure is reproduced to within $3.9\times$ the shot-matched baseline in total variation on the square lattice and $4.7\times$ on the heavy-hexagonal one, with absolute total variation distances of $1.8\times10^{-2}$ and $2.2\times10^{-2}$ over the $64$-point grid.

\begin{table}[t]
\centering
\caption{\textbf{Distributional fidelity on hardware.}
Means over the eleven held-out values of $r$, computed on the reconstructed
$64$-point grid distribution against the target angular density. Both devices
execute the same angle matrix, so the noiseless rows are common to them.}
\label{tab:hw-main}
\small
\begin{tabular}{@{}lcccc@{}}
\toprule
& RMSE & $D_{\mathrm{KL}}^{\mathrm{sym}}$ & TVD & $W_1$ \\
\midrule
Noiseless simulator, exact        & $1.12\times10^{-4}$ & $3.8\times10^{-5}$ & $0.00226$ & $4.60\times10^{-4}$\\
Noiseless simulator, shot-matched & $2.05\times10^{-4}$ & $9.0\times10^{-5}$ & $0.00477$ & $8.7\times10^{-4}$\\
\midrule
\multicolumn{5}{@{}l}{\emph{\texttt{ibm\_miami} --- square lattice, $16$ two-qubit gates}}\\
\quad raw                         & $1.15\times10^{-3}$ & $2.82\times10^{-3}$ & $0.02737$ & $5.42\times10^{-3}$\\
\quad $+$ readout                 & $7.86\times10^{-4}$ & $1.31\times10^{-3}$ & $0.01974$ & $3.71\times10^{-3}$\\
\qquad $+$ depolarizing            & $\mathbf{6.71\times10^{-4}}$ & $\mathbf{9.4\times10^{-4}}$ & $\mathbf{0.01837}$ & $\mathbf{2.98\times10^{-3}}$\\
\quad \textit{ratio to shot-matched} & $3.27\times$ & $10.4\times$ & $3.85\times$ & $3.43\times$\\
\midrule
\multicolumn{5}{@{}l}{\emph{\texttt{ibm\_pittsburgh} --- heavy-hex lattice, $27$ two-qubit gates}}\\
\quad raw                         & $1.31\times10^{-3}$ & $3.35\times10^{-3}$ & $0.03106$ & $6.14\times10^{-3}$\\
\quad $+$ readout                 & $1.09\times10^{-3}$ & $2.19\times10^{-3}$ & $0.02704$ & $5.02\times10^{-3}$\\
\qquad $+$ depolarizing            & $\mathbf{8.42\times10^{-4}}$ & $\mathbf{1.32\times10^{-3}}$ & $\mathbf{0.02242}$ & $\mathbf{3.86\times10^{-3}}$\\
\quad \textit{ratio to shot-matched} & $4.11\times$ & $14.7\times$ & $4.70\times$ & $4.44\times$\\
\bottomrule
\end{tabular}
\end{table}

\paragraph{Error budget and the effect of topology.}
Because the target is $S_3$-invariant it is constant within each orbit, and the
decoder $\Phi$ spreads orbit mass uniformly; grid-space and orbit-space total
variation distances therefore coincide, allowing the three error sources to be
separated. Referencing hardware output to the noiseless simulator rather than
to the target isolates the device contribution, giving post-mitigation device
residuals of $1.81\times10^{-2}$ and $2.27\times10^{-2}$ against a common model
approximation error of $2.26\times10^{-3}$ and a common sampling floor of
$4.2\times10^{-3}$. Combining these in quadrature---an approximation, since
total variation is not an $L^2$ norm, but one that reproduces the measured
residual to within $1\%$---the device accounts for $93\%$ and $96\%$ of the
total error budget, sampling for $5\%$ and $3\%$, and the model for $1\%$ on
both: the fidelity reported above is limited by the processors, not by the
ansatz or the shot budget.

The topological difference enters as predicted, but is partly offset. The
mirror-circuit depolarizing factor falls from $\lambda = 0.911$ on the square
lattice to $0.832$ on the heavy-hexagonal one, consistent with the $1.7\times$
larger gate count; solving $\lambda \approx (1-\epsilon)^{n_{2q}}$ for
$\epsilon$ gives per-gate depolarizing rates of $5.8\times10^{-3}$ and
$6.8\times10^{-3}$, close enough that the difference in $\lambda$ is
attributable to gate count rather than gate quality. The final fidelities,
however, differ by only $1.22\times$ in total variation, because the
heavy-hexagonal device has markedly better readout: readout mitigation removes
$29\%$ of the raw residual there against $14\%$ on the square lattice,
offsetting much of the routing overhead.

A second difference is structural rather than aggregate. On the square lattice
the post-mitigation residual varies by a factor of $2.3$ across $r$ and is
strongly correlated with the angle-schedule variable $t = 1/(1+|r|)$
(Pearson $-0.94$), with the implied error strength
$1 - \lambda_{\mathrm{res}}$ spanning $198\%$; on the heavy-hexagonal device
the same residual varies by only $1.1\times$ (correlation $-0.48$, span
$30\%$). Since the gate count is identical for every $r$ and only the rotation
angles differ, a stochastic channel of fixed strength cannot produce such a
dependence: an angle-dependent, coherent component remains identifiable on the
shallower circuit, whereas on the deeper one the larger stochastic error masks
it. Removing that component would require learning the device's
Pauli--Lindblad generator, which we do not attempt here; the purpose of this
section is to establish executability rather than to optimise hardware
performance.

We note two limitations. The two executions are separated in time and therefore
differ in calibration state as well as in topology, so drift cannot be excluded
as a contributor to the differences reported above; and each configuration was
run once, so the residual structure---in particular the correlation of
$-0.94$---would benefit from replication.

\begin{figure}[htbp]
\centering
\includegraphics[width=\linewidth]{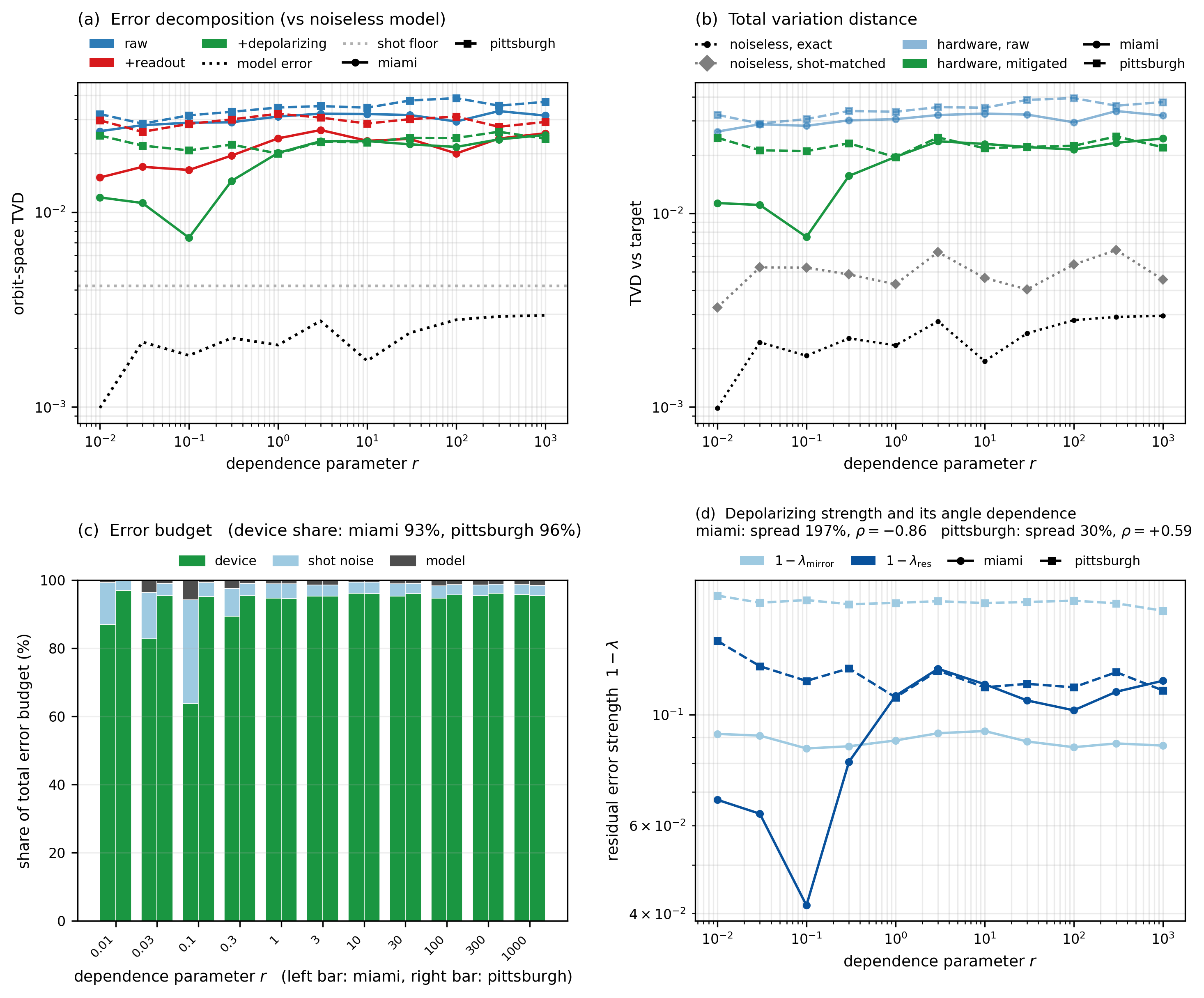}
\caption{\textbf{Execution of the orbit-space Born machine on two coupling
topologies.} Both devices execute the same angle matrix, so the noiseless
references are common to them. Throughout, line style identifies the device
(solid with circles, \texttt{ibm\_miami}; dashed with squares,
\texttt{ibm\_pittsburgh}), while colour identifies the quantity, as given in
each panel's legend.
\textbf{a}, Orbit-space error decomposition against the dependence parameter
$r$, referenced to the noiseless simulator, at the three mitigation stages. The
dotted lines are the model approximation error and the multinomial sampling
floor $\sqrt{K/2\pi N_{\mathrm{shots}}}$.
\textbf{b}, Total variation distance against the target, before and after
mitigation, together with the exact and shot-matched noiseless baselines.
\textbf{c}, Decomposition of the post-mitigation error budget into model,
sampling and device contributions at each $r$ (left bar, square lattice; right
bar, heavy-hex); the device dominates throughout on both.
\textbf{d}, Residual error strength $1-\lambda$, as estimated from mirror
circuits and as implied by the post-mitigation residual. The much larger spread
of the latter on the square lattice is the signature of an angle-dependent
component that survives the isotropic correction.}
\label{fig:hw}
\end{figure}

\clearpage

\section{The Parameterized Quantum Circuit}

\subsection{Setup and Notation}

Consider an $n$-qubit parameterized quantum circuit (PQC) with $L$
entangling layers.

\begin{definition}[Circuit architecture]\label{def:circuit}
The circuit $U(\bm{\theta})$ acts on $n$ qubits with $L$ entangling
layers, defined by:
\begin{enumerate}
    \item \emph{Initial layer:} Hadamard gates on all qubits, $H^{\otimes n}$.
    \item \emph{Parameterized blocks} ($l = 0, 1, \ldots, L$): For each qubit
    $q \in \{0,\ldots,n{-}1\}$, apply the $YZY$ decomposition:
    \[
        W_q^{(l)}(\bm{\theta})
        = R_y\!\bigl(\theta_{3q+2}^{(l)}\bigr)\,
          R_z\!\bigl(\theta_{3q+1}^{(l)}\bigr)\,
          R_y\!\bigl(\theta_{3q}^{(l)}\bigr)
    \]
    where $R_y(\alpha) = e^{-i\alpha Y/2}$ and
    $R_z(\alpha) = e^{-i\alpha Z/2}$.
    \item \emph{Entangling layers} ($l = 0,1,\ldots,L{-}1$): After each
    parameterized block (except the last), apply a circular CNOT ladder $E$:
    \[
        E = \mathrm{CNOT}_{n-1\to 0}\prod_{q=0}^{n-2}\mathrm{CNOT}_{q\to q+1}.
    \]
\end{enumerate}
The total number of variational parameters is $N_p = 3n(L+1)$.
\end{definition}

In the \emph{layer-level} representation, the full unitary reads:
\begin{equation}\label{eq:unitary_layers}
    U(\bm{\theta})
    = \biggl[\bigotimes_{q} W_q^{(L)}\biggr]
      \prod_{l=0}^{L-1}\biggl(E\cdot\bigotimes_{q} W_q^{(l)}\biggr)
      \cdot H^{\otimes n}.
\end{equation}

Since gates acting on different qubits commute, each tensor product
$\bigotimes_q W_q^{(l)}$ can be expanded into a sequential product in
any order. Assigning a global index $j=1,2,\ldots,N_p$ to the
parameterized rotation gates---ordered so that $j=1$ is applied first
(immediately after $H^{\otimes n}$) and $j=N_p$ is applied last (closest
to the measurement end)---and absorbing all fixed gates (CNOT layers and
identity operators between commuting single-qubit blocks) into fixed
(parameter-independent) unitaries~$C_j$, we obtain the \emph{gate-level} representation:
\begin{equation}\label{eq:unitary_gates}
    U(\bm{\theta})
    = \prod_{j=N_p}^{1}\bigl[C_j\cdot e^{-i\theta_j G_j/2}\bigr]
      \cdot H^{\otimes n}
\end{equation}
where the product is ordered so that $j=1$ acts first and $j=N_p$ acts
last. Each $G_j\in\{Y^{(q)},Z^{(q)}\}$ is a single-qubit Pauli generator
(tensored with identity on the remaining qubits), and each $C_j$ is a
fixed (parameter-independent) unitary: $C_j = E$ when gate~$j$
is the first parameterized gate of a new layer, and $C_j = I$ otherwise.

For notational convenience, we write~\eqref{eq:unitary_gates} as:
\begin{equation}\label{eq:shorthand}
    U(\bm{\theta})
    = G_{N_p}(\theta_{N_p})\cdots G_2(\theta_2)\cdot G_1(\theta_1)
      \cdot H^{\otimes n},
    \qquad
    G_j(\theta_j)\coloneqq C_j\cdot e^{-i\theta_j G_j/2}.
\end{equation}

\subsection{Polynomial Radius Parameterization}

\begin{definition}[Polynomial asymptotic parameterization]
\label{def:param}
Given $T+1$ learned parameter vectors
$\bm{\theta}^{(0)}, \bm{\theta}^{(1)}, \ldots, \bm{\theta}^{(T)}
\in \Real^{N_p}$,
the radius-dependent parameters are:
\begin{equation}\label{eq:theta_r}
    \bm{\theta}(r)
    = \bm{\theta}^{(0)}
      + \sum_{i=1}^{T} \bm{\theta}^{(i)} \cdot g_i(r),
    \qquad
    g_i(r) = \frac{1}{(1+r)^i},
    \qquad r \geq 0.
\end{equation}
Equivalently, with the substitution $t = 1/(1+r) \in (0,1]$:
\begin{equation}\label{eq:theta_t}
    \bm{\theta}(t) = \bm{\theta}^{(0)}
    + \sum_{i=1}^{T} \bm{\theta}^{(i)} \cdot t^i.
\end{equation}
The total number of trainable real parameters is $(T+1) \cdot N_p$.
\end{definition}

\begin{remark}[Asymptotic behavior]
As $r \to \infty$ (i.e., $t \to 0$):
$\bm{\theta}(r) \to \bm{\theta}^{(0)}$, with
convergence rate $O(1/(1+r))$ dominated by the $i = 1$ term.
As $r \to 0$ (i.e., $t \to 1$):
$\bm{\theta}(0) = \bm{\theta}^{(0)} + \sum_{i=1}^{T}\bm{\theta}^{(i)}$.
\end{remark}

\subsection{Quantum States and Born Distribution}

The output state is
$\ket{\psi(r)}=U(\bm{\theta}(r))\ket{0}^{\otimes n}$,
corresponding to the density operator
$\rho(r)=\ket{\psi(r)}\!\bra{\psi(r)}$.
The Born distribution over computational basis states is:
\begin{equation}\label{eq:born}
    P_r(x) = \tr(\ket{x}\bra{x}\rho(r))=\abs{\braket{x}{\psi(r)}}^2,
    \qquad x\in\{0,1\}^n.
\end{equation}

\subsection{Distance Measures}

We work with two levels of distance: at the quantum state level and at
the classical distribution level.

\begin{definition}[Quantum distances]\label{def:quantum_dist}
For two density operators $\rho,\sigma$ on
$\mathcal{H}=(\Complex^2)^{\otimes n}$:
\begin{itemize}
    \item \emph{Trace distance:}
    $T(\rho,\sigma)=\frac{1}{2}\tr\abs{\rho-\sigma}$,
    where $\abs{A}=\sqrt{A^\dagger A}$.
    \item \emph{Quantum fidelity:}
    $F(\rho,\sigma)
    =\bigl(\tr\sqrt{\sqrt{\rho}\,\sigma\,\sqrt{\rho}}\,\bigr)^2$.
\end{itemize}
For pure states $\rho=\ket{\psi_1}\!\bra{\psi_1}$ and
$\sigma=\ket{\psi_2}\!\bra{\psi_2}$, these simplify to:
\begin{equation}\label{eq:pure_state_dist}
    T(\rho,\sigma)=\sqrt{1-\abs{\braket{\psi_1}{\psi_2}}^2},
    \qquad
    F(\rho,\sigma)=\abs{\braket{\psi_1}{\psi_2}}^2.
\end{equation}
They are related by the Fuchs--van de Graaf inequalities \cite{fuchs1999cryptographic}:
\begin{equation}\label{eq:fvdg_quantum}
    1-\sqrt{F(\rho,\sigma)}
    \leq T(\rho,\sigma)
    \leq\sqrt{1-F(\rho,\sigma)}.
\end{equation}
\end{definition}

\begin{definition}[Classical distance]\label{def:classical_dist}
For two probability measures $P,Q$ on a measurable space
$(\mathcal{X},\mathcal{F})$, the \emph{total variation distance} is:
\begin{equation}\label{eq:tv_def}
    \TV(P,Q) = \sup_{A\in\mathcal{F}}\abs{P(A)-Q(A)}
    = \frac{1}{2}\int_{\mathcal{X}}\abs{p(x)-q(x)}\,d\mu(x)
\end{equation}
where $p,q$ are densities of $P,Q$ with respect to a common dominating
measure~$\mu$. When $\mathcal{X}$ is finite and $\mu$ is counting
measure, the integral reduces to $\frac{1}{2}\sum_x\abs{P(x)-Q(x)}$.
\end{definition}

The two levels are connected by the following fundamental fact.

\begin{proposition}[Measurement contraction]\label{prop:contraction}
Let $P_j$ denote the probability distribution obtained by measuring the
state~$\rho_j$ in any fixed orthonormal basis. Then:
\begin{equation}\label{eq:contraction}
    \TV(P_1,P_2)\leq T(\rho_1,\rho_2).
\end{equation}
\end{proposition}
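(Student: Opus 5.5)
We prove the measurement contraction of Proposition~\ref{prop:contraction} by writing total variation as a supremum over events and showing that each event corresponds to a projector, whose expectation difference is in turn bounded by the trace distance. Fix the orthonormal basis $\{\ket{x}\}_{x}$ of $\mathcal H$ and write $P_j(x)=\tr(\ket{x}\bra{x}\rho_j)$, as in~\eqref{eq:born}. By Definition~\ref{def:classical_dist}, on the finite outcome set
\[
\TV(P_1,P_2)=\max_{A}\abs{P_1(A)-P_2(A)},
\]
where $A$ ranges over subsets of outcomes. For any such $A$ define the projector $\Pi_A=\sum_{x\in A}\ket{x}\bra{x}$. Then $P_1(A)-P_2(A)=\tr\bigl(\Pi_A(\rho_1-\rho_2)\bigr)$. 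It therefore suffices to show that
\[
\abs{\tr(\Pi(\rho_1-\rho_2))}\le T(\rho_1,\rho_2)
\]
for every projector $\Pi$, or more generally for every operator $0\le \Pi\le I$.

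The key step is the standard Helstrom-type bound. Let $\Delta=\rho_1-\rho_2$. This operator is Hermitian and traceless, and it has a Jordan decomposition $\Delta=\Delta_+-\Delta_-$ with $\Delta_\pm\ge0$ supported on orthogonal subspaces. Then $\tr\abs{\Delta}=\tr\Delta_++\tr\Delta_-$. Tracelessness gives $\tr\Delta_+=\tr\Delta_-$, so each equals $\tfrac12\tr\abs{\Delta}=T(\rho_1,\rho_2)$. For $0\le\Pi\le I$ we then have
\[
\tr(\Pi\Delta)=\tr(\Pi\Delta_+)-\tr(\Pi\Delta_-)\le \tr(\Pi\Delta_+)\le\tr\Delta_+=T(\rho_1,\rho_2).
\]
Here we use two facts: $\tr(\Pi\Delta_-)\ge0$, because it is the trace of a product of two positive semidefinite operators, which can be written as $\tr(\Delta_-^{1/2}\Pi\Delta_-^{1/2})$; and $\tr(\Pi\Delta_+)\le\tr\Delta_+$, because $I-\Pi\ge0$. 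Applying the same argument to $I-\Pi$, or equivalently swapping $\rho_1$ and $\rho_2$, gives the lower bound $-T(\rho_1,\rho_2)$.

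Taking the maximum over $A$ yields~\eqref{eq:contraction}. There is no real obstacle. The only care needed is in the positivity step above and in the tracelessness argument that makes $\tr\Delta_+$ exactly half the trace norm.
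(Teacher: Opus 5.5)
Your proof is correct, but it takes a different route from the paper. The paper models the basis measurement as the dephasing channel $\mathcal{M}(\rho)=\sum_x P(x)\ket{x}\bra{x}$. It notes that the trace distance between two diagonal states equals the total variation distance of their diagonals. It then invokes the monotonicity (data-processing inequality) of the trace distance under CPTP maps, cited as a known theorem.

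You instead open that black box. Via the event form of $\TV$, you reduce to bounding $\tr(\Pi_A(\rho_1-\rho_2))$. You then prove the Helstrom-type bound $\tr(\Pi\Delta)\le\tr\Delta_+=T(\rho_1,\rho_2)$ directly from the Jordan decomposition and the tracelessness of $\Delta$. This is essentially the standard proof of contractivity, specialized to measurements.

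What each buys: your argument is self-contained. Because it holds for every $0\le\Pi\le I$, it also covers arbitrary POVMs, not only orthonormal-basis measurements. The paper's argument is shorter and places the result inside the general channel-monotonicity framework, so it extends at once to any quantum channel followed by measurement. The cost is that it relies on an external theorem.
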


\begin{proof}
This follows from the monotonicity of the trace distance under quantum channels (CPTP maps). A projective measurement $\mathcal{M}$ is a quantum channel mapping $\rho$ to a diagonal density matrix $\mathcal{M}(\rho) = \sum_x P(x)\ket{x}\bra{x}$. The trace distance between two such diagonal states is exactly the total variation distance between their diagonal entries:
\[
    T(\mathcal{M}(\rho_1), \mathcal{M}(\rho_2))
    = \frac{1}{2}\sum_x \abs{P_1(x) - P_2(x)}
    = \TV(P_1, P_2).
\]
Since the trace distance does not increase under quantum channels (data processing inequality \cite{nielsen2010quantum}), we have $T(\mathcal{M}(\rho_1), \mathcal{M}(\rho_2)) \leq T(\rho_1, \rho_2)$, which implies~\eqref{eq:contraction}.
\end{proof}

\subsection{Main Result}

\begin{definition}[Weighted dynamic norm]\label{def:Lambda}
The \emph{weighted dynamic norm} of order~$T$ is:
\begin{equation}\label{eq:Lambda_T}
    \Lambda_T
    \coloneqq \sum_{i=1}^{T} i \cdot
    \norm{\bm{\theta}^{(i)}}_1.
\end{equation}
\end{definition}

\begin{theorem}[Lipschitz continuity: uniform bound]
\label{thm:main}
Let $P_r$ be the Born distribution generated by the circuit
$U(\bm{\theta}(r))$ of Definition~\ref{def:circuit} with the
polynomial parameterization~\eqref{eq:theta_r}. Then for all
$r_1, r_2 \geq 0$:
\begin{equation}\label{eq:main_bound}
    \TV(P_{r_1}, P_{r_2})
    \leq \frac{1}{2}\,\Lambda_T
         \cdot \abs{r_1 - r_2}.
\end{equation}
\end{theorem}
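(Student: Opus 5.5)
Our plan is to pass from the classical distributions to the quantum states with Proposition~\ref{prop:contraction}, so that $\TV(P_{r_1},P_{r_2})\le T(\rho(r_1),\rho(r_2))$. We then bound the pure-state trace distance by the Euclidean distance of the state vectors. For unit vectors, $\sqrt{1-\abs{\braket{\psi_1}{\psi_2}}^2}\le \norm{\psi_1-\psi_2}_2$: write $\braket{\psi_1}{\psi_2}=c$ with $\abs{c}\le 1$; then $\norm{\psi_1-\psi_2}^2=2-2\,\mathrm{Re}\,c\ge 1-\abs{c}^2$, because $(1-\abs{c})^2\ge 0$. This gives
\[
\TV(P_{r_1},P_{r_2})\le \norm{U(\bm\theta(r_1))-U(\bm\theta(r_2))}_{\op}.
\]
This is what lets us telescope over gates.

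Next we use the gate-level representation~\eqref{eq:shorthand}. For two parameter vectors $\bm\theta,\bm\theta'$ we telescope, replacing one gate at a time. Since every $C_j$ and every rotation is unitary,
\[
\norm{U(\bm\theta)-U(\bm\theta')}_{\op}\le\sum_{j=1}^{N_p}\norm{e^{-i\theta_jG_j/2}-e^{-i\theta'_jG_j/2}}_{\op}.
\]
Each $G_j$ is a Pauli with eigenvalues $\pm1$, so each term equals $\abs{e^{-i(\theta_j-\theta_j')/2}-1}=2\abs{\sin((\theta_j-\theta_j')/4)}\le \tfrac12\abs{\theta_j-\theta'_j}$. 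Hence $\TV\le\tfrac12\norm{\bm\theta(r_1)-\bm\theta(r_2)}_1$.

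Finally we bound the parameter displacement using Definition~\ref{def:param}:
\[
\bm\theta(r_1)-\bm\theta(r_2)=\sum_{i=1}^T\bm\theta^{(i)}\bigl(g_i(r_1)-g_i(r_2)\bigr).
\]
The derivative is $g_i'(r)=-i(1+r)^{-i-1}$, with $\abs{g_i'(r)}\le i$ for $r\ge0$, so the mean value theorem gives $\abs{g_i(r_1)-g_i(r_2)}\le i\abs{r_1-r_2}$. The triangle inequality then yields $\norm{\bm\theta(r_1)-\bm\theta(r_2)}_1\le\Lambda_T\abs{r_1-r_2}$, which gives~\eqref{eq:main_bound}.

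The main point requiring care is the per-gate constant, since it decides whether the factor $\tfrac12$ is obtained. We expect the bound $\abs{e^{ix}-1}\le\abs{x}$ with $x=(\theta_j-\theta'_j)/2$ to yield exactly $\tfrac12$, with the pure-state inequality of the first step contributing no extra factor; this is the step we would double-check. Since the resulting constant is in fact $\tfrac12$, the argument would even leave slack relative to the statement.
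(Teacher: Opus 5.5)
Your proof is correct and follows the paper's three-step chain: Lemma~\ref{lem:born}, then the hybrid telescoping of Lemma~\ref{lem:state} with the per-gate bound $2\abs{\sin((\theta_j-\theta_j')/4)}\le\abs{\theta_j-\theta_j'}/2$, then the parameter bound of Corollary~\ref{cor:param_uniform}. The constant $\tfrac12$ therefore comes out exactly, with no slack. The remaining differences are cosmetic: you check the pure-state bound via $\sqrt{1-\abs{c}^2}\le\norm{\psi_1-\psi_2}$ rather than the paper's rank-two decomposition of $\rho_1-\rho_2$, and you use the mean value theorem where the paper uses the factorization $t_1^i-t_2^i=(t_1-t_2)S_i(t_1,t_2)$, which it prefers because the same computation also yields the refined bound of Theorem~\ref{thm:refined}.
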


A tighter, radius-dependent refinement is available:

\begin{definition}[Radius-dependent weighted norm]\label{def:Lambda_r}
For $r_1, r_2 \geq 0$, define:
\begin{equation}\label{eq:Lambda_T_r}
    \Lambda_T(r_1, r_2)
    \coloneqq \sum_{i=1}^{T} \norm{\bm{\theta}^{(i)}}_1
    \cdot \sum_{k=0}^{i-1}
    \frac{1}{(1+r_1)^k\,(1+r_2)^{i-1-k}}.
\end{equation}
\end{definition}

\begin{theorem}[Lipschitz continuity: refined bound]
\label{thm:refined}
Under the same hypotheses as Theorem~\ref{thm:main}:
\begin{equation}\label{eq:refined_bound}
    \TV(P_{r_1}, P_{r_2})
    \leq \frac{1}{2}\,\Lambda_T(r_1, r_2)
         \cdot \frac{\abs{r_1 - r_2}}{(1+r_1)(1+r_2)}.
\end{equation}
This is tighter than~\eqref{eq:main_bound} since $\Lambda_T(r_1,r_2) \le \Lambda_T$ for all
$r_1, r_2 \ge 0$. If $\bm{\theta}^{(i)} \ne \bm{0}$ for some $i \ge 2$, the inequality
is strict unless $r_1 = r_2 = 0$; if $\bm{\theta}^{(i)} = \bm{0}$ for all $i \ge 2$
(in particular when $T = 1$), then $\Lambda_T(r_1,r_2) = \Lambda_T$
identically and the two bounds coincide.
\end{theorem}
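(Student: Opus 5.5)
The plan is to chain three inequalities: Born-level total variation, then trace distance of the output pure states, then operator-norm distance of the circuit unitaries, then the $\ell_1$ distance of the parameter vectors. The last quantity is computed explicitly from the polynomial parameterization in $t=1/(1+r)$.

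\textbf{Step 1 (measurement to trace distance).} By Proposition~\ref{prop:contraction}, $\TV(P_{r_1},P_{r_2})\le T(\rho(r_1),\rho(r_2))$. For pure states, \eqref{eq:pure_state_dist} gives $T=\sqrt{1-|c|^2}$ with $c=\braket{\psi(r_1)}{\psi(r_2)}$. Since
\[
1-|c|^2=(1-|c|)(1+|c|)\le 2(1-|c|)\le 2-2\,\mathrm{Re}\,c=\norm{\ket{\psi(r_1)}-\ket{\psi(r_2)}}^2,
\]
we get $T\le \norm{\ket{\psi(r_1)}-\ket{\psi(r_2)}}\le \norm{U(\bm\theta(r_1))-U(\bm\theta(r_2))}_{\op}$.

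\textbf{Step 2 (hybrid/telescoping argument on the gate-level form \eqref{eq:shorthand}).} Write $U(\bm\theta)-U(\bm\theta')$ as a telescoping sum of $N_p$ terms. In the $j$-th term, gates $1,\dots,j-1$ carry the primed angles, gates $j+1,\dots,N_p$ carry the unprimed angles, and gate $j$ appears as a difference. All other factors, including the fixed $C_j$ and $H^{\otimes n}$, are unitary, so
\[
\norm{U(\bm\theta)-U(\bm\theta')}_{\op}\le\sum_j \norm{e^{-i\theta_jG_j/2}-e^{-i\theta'_jG_j/2}}_{\op}.
\]
Each $G_j$ is a Pauli operator with eigenvalues $\pm1$, so each summand equals $|1-e^{-i(\theta_j-\theta'_j)/2}|=2|\sin((\theta_j-\theta'_j)/4)|\le\tfrac12|\theta_j-\theta'_j|$. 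Hence
\[
\TV(P_{r_1},P_{r_2})\le \tfrac12\norm{\bm\theta(r_1)-\bm\theta(r_2)}_1.
\]

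\textbf{Step 3 (parameter difference).} With $t_k=1/(1+r_k)$, \eqref{eq:theta_t} gives $\bm\theta(r_1)-\bm\theta(r_2)=\sum_{i=1}^T\bm\theta^{(i)}(t_1^i-t_2^i)$. Factor
\[
t_1^i-t_2^i=(t_1-t_2)\sum_{k=0}^{i-1}t_1^k t_2^{i-1-k},
\qquad
|t_1-t_2|=\frac{|r_1-r_2|}{(1+r_1)(1+r_2)}.
\]
The triangle inequality then yields exactly $\norm{\Delta\bm\theta}_1\le \Lambda_T(r_1,r_2)\,|r_1-r_2|/((1+r_1)(1+r_2))$, which proves \eqref{eq:refined_bound}.

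\textbf{Step 4 (comparison and strictness).} Since $t_1,t_2\in(0,1]$, each of the $i$ terms in the inner sum of \eqref{eq:Lambda_T_r} is at most $1$, so $\Lambda_T(r_1,r_2)\le\Lambda_T$. Together with $(1+r_1)(1+r_2)\ge1$, this recovers Theorem~\ref{thm:main}. For the strictness claim, take $i\ge2$ with $\bm\theta^{(i)}\neq\bm0$. The inner sum then contains $t_1^{i-1}$ and $t_2^{i-1}$, both positive powers, so it equals $i$ only if $t_1=t_2=1$, i.e.\ $r_1=r_2=0$. If instead only $i=1$ contributes, the inner sum is identically $1$, so $\Lambda_T(r_1,r_2)=\Lambda_T$ and the two bounds coincide.

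\textbf{Expected main obstacle.} The only delicate point is getting the sharp constant $\tfrac12$. This requires the pure-state bound $T\le\norm{\psi_1-\psi_2}$ without an extra factor, and the single-gate estimate $2|\sin(\cdot/4)|\le|\cdot|/2$. Both should be checked carefully, since a cruder bound at either step would lose a factor of $2$.
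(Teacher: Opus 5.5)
Your proposal is correct and follows the paper's proof essentially step for step. It uses the same chain $\TV(P_{r_1},P_{r_2}) \le \norm{\ket{\psi(r_1)}-\ket{\psi(r_2)}} \le \tfrac12\norm{\bm\theta(r_1)-\bm\theta(r_2)}_1 \le \tfrac12\Lambda_T(r_1,r_2)\,\abs{r_1-r_2}/((1+r_1)(1+r_2))$ via Lemmas~\ref{lem:born}, \ref{lem:state} and \ref{lem:param}, with only cosmetic differences: you get the pure-state bound from the overlap formula rather than the rank-two decomposition of $\rho_1-\rho_2$, and you pass through $\norm{U(\bm\theta)-U(\bm\theta')}_{\op}$ rather than acting on $\ket{0}^{\otimes n}$. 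Your Step~4 also correctly supplies the short argument for the strictness and equality clauses, which the paper states but does not spell out: each term of $S_i$ is at most $1$, and for $i\ge 2$ the terms $t_1^{i-1}$ and $t_2^{i-1}$ equal $1$ only when $r_1=r_2=0$.
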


\subsection{Proof}

The proof proceeds through three lemmas, corresponding to the chain
$r\to\bm{\theta}(r)\to\ket{\psi(\bm{\theta})}\to P_r$.

\begin{lemma}[Parameter distance: exact factorization]
\label{lem:param}
For all $r_1, r_2 \geq 0$, with $t_k = 1/(1+r_k)$:
\begin{equation}\label{eq:param_exact}
    \bm{\theta}(r_1) - \bm{\theta}(r_2)
    = (t_1 - t_2) \sum_{i=1}^{T} \bm{\theta}^{(i)}
      \cdot S_i(t_1, t_2)
\end{equation}
where
\begin{equation}\label{eq:S_i}
    S_i(t_1, t_2) = \sum_{k=0}^{i-1} t_1^k\, t_2^{i-1-k}
\end{equation}
is the standard factorization polynomial satisfying
$t_1^i - t_2^i = (t_1 - t_2)\,S_i(t_1, t_2)$.

Consequently:
\begin{equation}\label{eq:param_bound}
    \norm{\bm{\theta}(r_1) - \bm{\theta}(r_2)}_1
    \leq \Lambda_T(r_1, r_2)
         \cdot \frac{\abs{r_1 - r_2}}{(1+r_1)(1+r_2)}.
\end{equation}
\end{lemma}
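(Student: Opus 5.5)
The plan is to work in the variable $t = 1/(1+r)$ from Definition~\ref{def:param}, where the parameterization~\eqref{eq:theta_t} is a polynomial in $t$, so the identity~\eqref{eq:param_exact} reduces to the elementary factorization $t_1^i - t_2^i = (t_1-t_2)S_i(t_1,t_2)$.

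\emph{Step one: the exact identity.} Write $\bm\theta(r_k) = \bm\theta^{(0)} + \sum_{i=1}^T \bm\theta^{(i)} t_k^i$ for $k=1,2$. Subtracting cancels the constant term $\bm\theta^{(0)}$, leaving $\sum_{i}\bm\theta^{(i)}(t_1^i - t_2^i)$. To justify the factorization, expand $(t_1-t_2)\sum_{k=0}^{i-1} t_1^k t_2^{i-1-k}$. This telescopes to $t_1^i - t_2^i$, since consecutive terms $t_1^{k+1}t_2^{i-1-k}$ and $t_1^{k+1}t_2^{i-1-(k+1)}\cdot t_2$ cancel pairwise. Pulling out the common scalar $(t_1-t_2)$ then gives~\eqref{eq:param_exact}.

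\emph{Step two: the norm bound.} Apply the triangle inequality for $\norm{\cdot}_1$ to the sum. Each coefficient $S_i(t_1,t_2)$ is nonnegative because $t_k\in(0,1]$, so
\[
\norm{\bm\theta(r_1)-\bm\theta(r_2)}_1 \le \abs{t_1-t_2}\sum_{i=1}^T \norm{\bm\theta^{(i)}}_1 S_i(t_1,t_2).
\]
Two computations finish the proof:
\begin{itemize}
\item Compute $t_1 - t_2 = \frac{(1+r_2)-(1+r_1)}{(1+r_1)(1+r_2)}$, so that $\abs{t_1-t_2} = \abs{r_1-r_2}/\bigl((1+r_1)(1+r_2)\bigr)$.
\item Substitute $t_k = (1+r_k)^{-1}$ into $S_i$. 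This gives exactly the inner sum $\sum_{k=0}^{i-1}(1+r_1)^{-k}(1+r_2)^{-(i-1-k)}$ of Definition~\ref{def:Lambda_r}, so the weighted sum equals $\Lambda_T(r_1,r_2)$.
\end{itemize}
Combining the two yields~\eqref{eq:param_bound}.

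There is no genuine obstacle here. The one point to be careful about is index bookkeeping: the exponent pattern in $S_i$ must be matched to the ordering in Definition~\ref{def:Lambda_r}. This is harmless because $S_i$ is symmetric in $(t_1,t_2)$. It is also worth noting, for later use in Theorem~\ref{thm:refined}, that each summand of $S_i$ is at most $1$. Hence $S_i\le i$, which gives $\Lambda_T(r_1,r_2)\le\Lambda_T$, and $(1+r_1)(1+r_2)\ge 1$ recovers the uniform bound of Theorem~\ref{thm:main}.
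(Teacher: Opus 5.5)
Your proposal is correct and follows the paper's proof: subtract in the $t$-variable, factor $t_1^i-t_2^i=(t_1-t_2)S_i(t_1,t_2)$, apply the triangle inequality, use $|t_1-t_2| = |r_1-r_2|/((1+r_1)(1+r_2))$, and recognize the weighted sum as $\Lambda_T(r_1,r_2)$. You also state explicitly that $S_i\ge 0$, which is needed to take $S_i$ outside the norm; the paper leaves this step implicit.
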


\begin{proof}
From~\eqref{eq:theta_t}:
\[
    \bm{\theta}(r_1) - \bm{\theta}(r_2)
    = \sum_{i=1}^{T} \bm{\theta}^{(i)}(t_1^i - t_2^i).
\]
The algebraic identity
$t_1^i - t_2^i = (t_1 - t_2)\sum_{k=0}^{i-1}t_1^k\,t_2^{i-1-k}$
yields~\eqref{eq:param_exact}.

Taking norms and applying the triangle inequality:
\begin{align}
    \norm{\bm{\theta}(r_1) - \bm{\theta}(r_2)}_1
    &= \abs{t_1 - t_2}\cdot
       \Bigl\|\sum_{i=1}^{T}\bm{\theta}^{(i)}
       \cdot S_i(t_1,t_2)\Bigr\|_1
    \notag\\
    &\leq \abs{t_1 - t_2}\cdot
          \sum_{i=1}^{T}\norm{\bm{\theta}^{(i)}}_1
          \cdot S_i(t_1,t_2).
    \label{eq:triangle_step}
\end{align}
Since $\abs{t_1 - t_2}
= \abs{r_1 - r_2}/\bigl((1+r_1)(1+r_2)\bigr)$
and $S_i(t_1,t_2) = \sum_{k=0}^{i-1}t_1^k\,t_2^{i-1-k}$,
substituting into~\eqref{eq:triangle_step}
gives~\eqref{eq:param_bound} with $\Lambda_T(r_1,r_2)$ as
defined in~\eqref{eq:Lambda_T_r}.
\end{proof}

\begin{corollary}[Uniform parameter bound]\label{cor:param_uniform}
Since $r_1, r_2 \in [0,+\infty)$, each term in $S_i$ satisfies
$t_1^k\,t_2^{i-1-k} \leq 1$, giving $S_i(t_1,t_2) \leq i$.
Therefore:
\begin{equation}\label{eq:param_uniform}
    \norm{\bm{\theta}(r_1) - \bm{\theta}(r_2)}_1
    \leq \Lambda_T
         \cdot \frac{\abs{r_1 - r_2}}{(1+r_1)(1+r_2)} \leq \Lambda_T \cdot \abs{r_1 - r_2}.
\end{equation}
\end{corollary}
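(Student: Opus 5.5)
The statement to prove is Corollary~\ref{cor:param_uniform}. The plan is to start from the exact factorization and the bound \eqref{eq:param_bound} of Lemma~\ref{lem:param}, then replace the radius-dependent weight $\Lambda_T(r_1,r_2)$ by the uniform constant $\Lambda_T$ of Definition~\ref{def:Lambda}. Nothing beyond elementary bounds on $t_k=1/(1+r_k)$ is needed.

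\textbf{Step 1 (bound each term of $S_i$).} Since $r_k\ge 0$, we have $t_k\in(0,1]$. Hence every monomial $t_1^k t_2^{i-1-k}$ with $0\le k\le i-1$ is at most $1$. The sum $S_i(t_1,t_2)$ has exactly $i$ such terms, so $S_i(t_1,t_2)\le i$.

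\textbf{Step 2 (compare the weights).} Definition~\ref{def:Lambda_r} gives
\[
\Lambda_T(r_1,r_2)=\sum_{i=1}^T \|\bm\theta^{(i)}\|_1\, S_i(t_1,t_2).
\]
Each coefficient $\|\bm\theta^{(i)}\|_1$ is nonnegative, so Step 1 yields $\Lambda_T(r_1,r_2)\le \sum_i i\,\|\bm\theta^{(i)}\|_1=\Lambda_T$. Inserting this into \eqref{eq:param_bound} gives the first inequality of \eqref{eq:param_uniform}.

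\textbf{Step 3 (remove the denominator).} Because $(1+r_1)(1+r_2)\ge 1$, the factor $\frac{|r_1-r_2|}{(1+r_1)(1+r_2)}$ is at most $|r_1-r_2|$. This gives the second inequality.

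There is no real obstacle. The only points needing care are the nonnegativity of the norms, which allows the termwise bound $S_i\le i$ to pass through the weighted sum, and the domain restriction $r\ge0$, which guarantees $t_k\le 1$.
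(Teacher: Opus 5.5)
Your proposal is correct and follows the paper's argument: you bound each monomial of $S_i$ by $1$ using $t_k\in(0,1]$, which gives $\Lambda_T(r_1,r_2)\le\Lambda_T$, and you insert this into Lemma~\ref{lem:param}. You then make explicit the final step, $(1+r_1)(1+r_2)\ge 1$, which the paper leaves implicit.
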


\begin{lemma}\label{lem:state}
For any two parameter vectors
$\bm{\theta}_1,\bm{\theta}_2\in\Real^{N_p}$:
\begin{equation}\label{eq:state_lip}
    \norm{\ket{\psi(\bm{\theta}_1)}-\ket{\psi(\bm{\theta}_2)}}
    \leq\frac{1}{2}\,
         \norm{\bm{\theta}_1-\bm{\theta}_2}_1
\end{equation}
where $\norm{\cdot}$ denotes the Hilbert space $2$-norm.
\end{lemma}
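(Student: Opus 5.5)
The plan is a hybrid (telescoping) argument over the gate-level representation \eqref{eq:shorthand}, combined with the elementary bound for a single rotation gate. Write $U(\bm{\theta}) = G_{N_p}(\theta_{N_p})\cdots G_1(\theta_1) H^{\otimes n}$, and for $j=0,1,\ldots,N_p$ define the hybrid parameter vector $\bm{\vartheta}^{(j)}$. Its first $j$ entries are taken from $\bm{\theta}_2$ and the remaining entries from $\bm{\theta}_1$. Thus $\bm{\vartheta}^{(0)}=\bm{\theta}_1$ and $\bm{\vartheta}^{(N_p)}=\bm{\theta}_2$, and consecutive hybrids differ only in coordinate $j$. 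By the triangle inequality,
\[
\norm{\ket{\psi(\bm{\theta}_1)}-\ket{\psi(\bm{\theta}_2)}}
\le \sum_{j=1}^{N_p}\norm{\ket{\psi(\bm{\vartheta}^{(j-1)})}-\ket{\psi(\bm{\vartheta}^{(j)})}},
\]
so it suffices to bound each summand by $\tfrac12\abs{\theta_{1,j}-\theta_{2,j}}$.

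For a fixed $j$, the two circuits share the gates $G_{N_p},\dots,G_{j+1}$ (all at $\bm{\theta}_1$ values) and the gates $G_{j-1},\dots,G_1$ composed with $H^{\otimes n}$ (all at $\bm{\theta}_2$ values). Call the suffix unitary $A$ and the prepared intermediate state $\ket{\phi}$. The $j$-th summand then equals
\[
\norm{A\,C_j\bigl(e^{-i\theta_{1,j}G_j/2}-e^{-i\theta_{2,j}G_j/2}\bigr)\ket{\phi}}
\le \norm{e^{-i\theta_{1,j}G_j/2}-e^{-i\theta_{2,j}G_j/2}}_{\op},
\]
using the unitary invariance of the 2-norm (both $A$ and $C_j$ are unitary) and $\norm{\phi}=1$.

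The remaining step is the single-gate estimate. $G_j$ is a Pauli operator, so it is Hermitian with eigenvalues $\pm1$, and the difference of the two exponentials is diagonal in its eigenbasis with entries $e^{\mp i\theta_{1,j}/2}-e^{\mp i\theta_{2,j}/2}$. Each entry has modulus $2\abs{\sin((\theta_{1,j}-\theta_{2,j})/4)}\le \tfrac12\abs{\theta_{1,j}-\theta_{2,j}}$, using $\abs{\sin x}\le\abs{x}$. Alternatively, one can write the difference as $\int_{\theta_2}^{\theta_1}\tfrac{-i}{2}G_j e^{-isG_j/2}\,ds$ and use $\norm{G_j}_{\op}=1$. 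Summing over $j$ gives $\tfrac12\norm{\bm{\theta}_1-\bm{\theta}_2}_1$, which is \eqref{eq:state_lip}.

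The only step needing real care is the hybrid bookkeeping, namely ensuring that the ordering convention in \eqref{eq:shorthand} puts $C_j$ on the correct side of the rotation. This is harmless, because $C_j$ is a fixed unitary either way, so its position does not affect the operator-norm bound. I also note that the Hadamard layer and the CNOT layers $E$ are parameter-independent unitaries, so they drop out of every estimate.
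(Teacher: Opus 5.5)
Your proposal is correct and follows essentially the same route as the paper: a hybrid telescoping argument over the gate-level form \eqref{eq:shorthand}, removal of the outer unitary (with $C_j$ absorbed into it) by unitary invariance, and the eigenbasis computation $\norm{e^{-i\alpha G/2}-e^{-i\beta G/2}}_{\op}=2\abs{\sin\frac{\alpha-\beta}{4}}\le\abs{\alpha-\beta}/2$. The differences are cosmetic and do not affect the argument: your hybrids interpolate in the opposite direction (inner gates at $\bm{\theta}_2$, outer gates at $\bm{\theta}_1$), and you telescope output states rather than unitaries.
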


\begin{proof}
We use the gate-level representation~\eqref{eq:shorthand}. The argument
proceeds in three stages.

\medskip
\noindent\textbf{Stage 1: Hybrid unitaries.}
For $j=1,2,\ldots,N_p+1$, define the \emph{hybrid unitary}
$\widetilde{U}^{(j)}$ that uses $\bm{\theta}_1$ for gates $1$ through
$j-1$ and $\bm{\theta}_2$ for gates $j$ through~$N_p$:
\begin{equation}\label{eq:hybrid}
    \widetilde{U}^{(j)}
    =\underbrace{G_{N_p}(\theta_{2,N_p})\cdots
                 G_{j}(\theta_{2,j})}_{\text{outer: }\bm{\theta}_2}
     \;\cdot\;
     \underbrace{G_{j-1}(\theta_{1,j-1})\cdots
                 G_{1}(\theta_{1,1})}_{\text{inner: }\bm{\theta}_1}
     \;\cdot\;H^{\otimes n}
\end{equation}
where $G_k(\theta)=C_k\cdot e^{-i\theta G_k/2}$ as
in~\eqref{eq:shorthand}. Every hybrid unitary contains all $N_p$
parameterized gates and the initial $H^{\otimes n}$; only the parameter
values assigned to each gate differ. At the two extremes:
\begin{align}
    \widetilde{U}^{(1)}
    &=G_{N_p}(\theta_{2,N_p})\cdots G_{1}(\theta_{2,1})
      \cdot H^{\otimes n}
    =U(\bm{\theta}_2),
    \label{eq:hybrid_left}\\
    \widetilde{U}^{(N_p+1)}
    &=G_{N_p}(\theta_{1,N_p})\cdots G_{1}(\theta_{1,1})
      \cdot H^{\otimes n}
    =U(\bm{\theta}_1).
    \label{eq:hybrid_right}
\end{align}

\medskip
\noindent\textbf{Stage 2: Telescoping and single-gate bound.}
The consecutive hybrids $\widetilde{U}^{(j)}$ and
$\widetilde{U}^{(j+1)}$ differ only at gate~$j$, which switches from
$\theta_{2,j}$ to~$\theta_{1,j}$. Telescoping:
\begin{equation}\label{eq:telescope}
    U(\bm{\theta}_1)-U(\bm{\theta}_2)
    =\widetilde{U}^{(N_p+1)}-\widetilde{U}^{(1)}
    =\sum_{j=1}^{N_p}
     \bigl[\widetilde{U}^{(j+1)}-\widetilde{U}^{(j)}\bigr].
\end{equation}
The difference at position~$j$ factors as:
\begin{equation}\label{eq:single_diff}
    \widetilde{U}^{(j+1)}-\widetilde{U}^{(j)}
    =A_j\cdot\Delta_j\cdot B_j
\end{equation}
where
\begin{equation}\label{eq:Delta_j}
    \Delta_j
    \coloneqq e^{-i\theta_{1,j}G_j/2}-e^{-i\theta_{2,j}G_j/2}
\end{equation}
and the outer and inner blocks are:
\begin{align}
    A_j &= G_{N_p}(\theta_{2,N_p})\cdots
           G_{j+1}(\theta_{2,j+1})\cdot C_j,
    \label{eq:Aj}\\
    B_j &= G_{j-1}(\theta_{1,j-1})\cdots
           G_{1}(\theta_{1,1})\cdot H^{\otimes n}.
    \label{eq:Bj}
\end{align}
Both $A_j$ and $B_j$ are unitary and independent of
$\theta_{1,j}$ and~$\theta_{2,j}$. The fixed unitary $C_j$ has been
absorbed into~$A_j$.

To verify~\eqref{eq:single_diff}: $\widetilde{U}^{(j+1)}$ and
$\widetilde{U}^{(j)}$ share the same $A_j$ and $B_j$; they differ
only in the rotation sandwiched between them.
In~$\widetilde{U}^{(j+1)}$ this rotation uses~$\theta_{1,j}$, while
in~$\widetilde{U}^{(j)}$ it uses~$\theta_{2,j}$. Subtracting
gives~\eqref{eq:single_diff}.

\medskip
\noindent\textbf{Bounding $\norm{\Delta_j}_{\op}$.}
Recall that the generator $G_j$ is a tensor product of a single-qubit Pauli matrix ($Y$ or $Z$) on one qubit and identity operators on the others. Consequently, $G_j$ has eigenvalues $\pm 1$, satisfies $G_j^2=I$, and has unit operator norm $\norm{G_j}_{\op}=1$.
Expanding the exponential via $e^{-i\alpha G/2}=\cos(\alpha/2)\,I-i\sin(\alpha/2)\,G$, we have:
\begin{equation}\label{eq:expand_diff}
    e^{-i\alpha G/2}-e^{-i\beta G/2}
    =\Bigl(\cos\tfrac{\alpha}{2}-\cos\tfrac{\beta}{2}\Bigr)I
     -i\Bigl(\sin\tfrac{\alpha}{2}-\sin\tfrac{\beta}{2}\Bigr)G.
\end{equation}
Since $G$ is Hermitian with $G^2 = I$, it is unitarily diagonalizable with
eigenvalues $\pm 1$. Writing $c = \cos\frac{\alpha}{2}-\cos\frac{\beta}{2}$
and $s = \sin\frac{\alpha}{2}-\sin\frac{\beta}{2}$, the operator $cI - isG$
is diagonal in an eigenbasis of $G$ with diagonal entries $c \mp is$, each
of modulus $\sqrt{c^2+s^2}$. Hence
\begin{equation}\label{eq:single_gate_bound}
\| e^{-i\alpha G/2} - e^{-i\beta G/2} \|_{\op}
= \sqrt{c^2+s^2}
= \sqrt{2-2\cos\tfrac{\alpha-\beta}{2}}
= 2\Big|\sin\tfrac{\alpha-\beta}{4}\Big|
\le \frac{|\alpha-\beta|}{2},
\end{equation}
where the last step uses $\abs{\sin x}\leq\abs{x}$.

\medskip
\noindent\textbf{Eliminating $A_j$ and $B_j$ from the norm.}
Applying both sides of~\eqref{eq:single_diff} to
$\ket{\phi}=\ket{0}^{\otimes n}$:
\begin{align}
    \norm{\bigl[\widetilde{U}^{(j+1)}-\widetilde{U}^{(j)}\bigr]
          \ket{\phi}}
    &=\norm{A_j\cdot\Delta_j\cdot B_j\ket{\phi}}
    \notag\\
    &=\norm{\Delta_j\cdot B_j\ket{\phi}}
    \label{eq:drop_Aj}\\
    &\leq\norm{\Delta_j}_{\op}\cdot
          \underbrace{\norm{B_j\ket{\phi}}}_{=\,1}
    \label{eq:drop_Bj}\\
    &\leq\frac{\abs{\theta_{1,j}-\theta_{2,j}}}{2}.
    \label{eq:term_bound}
\end{align}
Step~\eqref{eq:drop_Aj} uses the fact that $A_j$ is unitary:
$\norm{A_j\ket{w}}^2=\bra{w}A_j^\dagger A_j\ket{w}=\norm{\ket{w}}^2$,
so $A_j$ can be removed as an exact equality.
Step~\eqref{eq:drop_Bj} applies the definition
$\norm{\Delta_j}_{\op}
=\sup_{\norm{\ket{u}}=1}\norm{\Delta_j\ket{u}}$
to the unit vector $B_j\ket{\phi}$ (unit norm because $B_j$ is unitary).
Step~\eqref{eq:term_bound} substitutes the single-gate
bound~\eqref{eq:single_gate_bound}.

\medskip
\noindent\textbf{Stage 3: Summation.}
Applying the triangle inequality to~\eqref{eq:telescope}:
\begin{equation}\label{eq:l1bound}
    \norm{\ket{\psi(\bm{\theta}_1)}-\ket{\psi(\bm{\theta}_2)}}
    \leq\sum_{j=1}^{N_p}\frac{\abs{\theta_{1,j}-\theta_{2,j}}}{2}
    =\frac{1}{2}\norm{\bm{\theta}_1-\bm{\theta}_2}_1.
    \qedhere
\end{equation}
\end{proof}

\begin{corollary}[$\ell_2$ version]\label{cor:state_l2}
By the Cauchy--Schwarz inequality
$\norm{\mathbf{a}}_1\leq\sqrt{N_p}\,\norm{\mathbf{a}}_2$,
Lemma~\ref{lem:state} implies
\begin{equation}\label{eq:l2bound}
    \norm{\ket{\psi(\bm{\theta}_1)}-\ket{\psi(\bm{\theta}_2)}}
    \leq\frac{\sqrt{N_p}}{2}\,
         \norm{\bm{\theta}_1-\bm{\theta}_2}_2.
\end{equation}
This $\ell_2$ form is the one used in the covering-number
arguments of Section~\ref{sec:generic}, where the parameter
cube $[0,2\pi)^{N_p}$ is covered by Euclidean balls.
\end{corollary}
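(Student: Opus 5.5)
The statement to prove is Corollary~\ref{cor:state_l2}. The plan is to chain Lemma~\ref{lem:state} with the standard comparison between the $\ell_1$ and $\ell_2$ norms on $\Real^{N_p}$; nothing else about the circuit structure is needed.

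First, apply Lemma~\ref{lem:state} to $\bm{\theta}_1,\bm{\theta}_2\in\Real^{N_p}$. This gives
\[
\norm{\ket{\psi(\bm{\theta}_1)}-\ket{\psi(\bm{\theta}_2)}}\le \tfrac12\norm{\bm{\theta}_1-\bm{\theta}_2}_1 .
\]

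Second, set $\mathbf a=\bm{\theta}_1-\bm{\theta}_2$ and bound $\norm{\mathbf a}_1$ in terms of $\norm{\mathbf a}_2$. Write $\norm{\mathbf a}_1=\sum_{j=1}^{N_p}1\cdot\abs{a_j}$ and apply the Cauchy--Schwarz inequality to the vectors $(1,\dots,1)$ and $(\abs{a_1},\dots,\abs{a_{N_p}})$. This yields
\[
\norm{\mathbf a}_1\le \Bigl(\sum_{j=1}^{N_p}1\Bigr)^{1/2}\Bigl(\sum_{j=1}^{N_p}a_j^2\Bigr)^{1/2}=\sqrt{N_p}\,\norm{\mathbf a}_2 .
\]

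Substituting this into the first bound gives~\eqref{eq:l2bound} directly. The constant $\sqrt{N_p}$ is sharp for this step, with equality when all $\abs{a_j}$ are equal. I expect no real obstacle: the only substantive input is Lemma~\ref{lem:state}, which is already established.
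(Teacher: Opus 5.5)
Your proposal is correct and is the paper's own argument: apply Lemma~\ref{lem:state} to get the $\ell_1$ bound, then use Cauchy--Schwarz in the form $\norm{\mathbf a}_1\le\sqrt{N_p}\,\norm{\mathbf a}_2$. Your explicit pairing of $(1,\dots,1)$ with $(\abs{a_1},\dots,\abs{a_{N_p}})$ just spells out that inequality.
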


\begin{lemma}\label{lem:born}
For any two pure states
$\ket{\psi_1},\ket{\psi_2}\in(\Complex^2)^{\otimes n}$, let
$P_j(x)=\abs{\braket{x}{\psi_j}}^2$ be the Born distributions
obtained by measurement in the computational basis. Then:
\begin{equation}\label{eq:born_lip}
    \TV(P_1,P_2)\leq\norm{\ket{\psi_1}-\ket{\psi_2}}.
\end{equation}
\end{lemma}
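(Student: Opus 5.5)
The plan is to prove Lemma~\ref{lem:born} by a direct computation on the Born probabilities, factoring $|a|^2-|b|^2$ and then applying Cauchy--Schwarz. Write $a_x=\braket{x}{\psi_1}$ and $b_x=\braket{x}{\psi_2}$. Then
\[
\TV(P_1,P_2)=\tfrac12\sum_x\bigl||a_x|^2-|b_x|^2\bigr|=\tfrac12\sum_x\bigl||a_x|-|b_x|\bigr|\,\bigl(|a_x|+|b_x|\bigr).
\]

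The first step uses the reverse triangle inequality $\bigl||a_x|-|b_x|\bigr|\le|a_x-b_x|$. The second step applies Cauchy--Schwarz to the resulting sum, giving
\[
\sum_x|a_x-b_x|\,(|a_x|+|b_x|)\le\Bigl(\sum_x|a_x-b_x|^2\Bigr)^{1/2}\Bigl(\sum_x(|a_x|+|b_x|)^2\Bigr)^{1/2}.
\]
The first factor is exactly $\norm{\ket{\psi_1}-\ket{\psi_2}}$, since $\{\ket{x}\}$ is an orthonormal basis.

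The second factor is the only step needing care: it must be bounded by $2$. The vector $(|a_x|+|b_x|)_x$ is the sum of two unit vectors $(|a_x|)_x$ and $(|b_x|)_x$, so the Minkowski/triangle inequality in $\ell^2$ bounds its norm by $1+1=2$. Combining the two factors with the prefactor $\tfrac12$ yields $\TV(P_1,P_2)\le\norm{\ket{\psi_1}-\ket{\psi_2}}$, which is \eqref{eq:born_lip}.

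An alternative route would go through Proposition~\ref{prop:contraction} and the pure-state trace distance \eqref{eq:pure_state_dist}. One would then show $\sqrt{1-|\braket{\psi_1}{\psi_2}|^2}\le\norm{\ket{\psi_1}-\ket{\psi_2}}$, using $\norm{\psi_1-\psi_2}^2=2-2\,\mathrm{Re}\braket{\psi_1}{\psi_2}$ and $1-c^2\le 2-2c$ for $c=|\braket{\psi_1}{\psi_2}|\in[0,1]$, together with $\mathrm{Re}\braket{\psi_1}{\psi_2}\le c$. I would present the direct computation as the main proof, since it is self-contained, and note the contraction-based argument as a remark.
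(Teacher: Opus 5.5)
Your proof is correct, and your main argument takes a different route from the paper's.

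The paper argues in two quantum-information steps. First, $\TV(P_1,P_2)\le T(\rho_1,\rho_2)$ by Proposition~\ref{prop:contraction}: measurement is a CPTP map, and trace distance obeys data processing. Second, $T(\rho_1,\rho_2)\le\norm{\ket{\psi_1}-\ket{\psi_2}}$, obtained by writing $\rho_1-\rho_2=\ket{\delta}\!\bra{\psi_1}+\ket{\psi_2}\!\bra{\delta}$ with $\ket{\delta}=\ket{\psi_1}-\ket{\psi_2}$ and bounding the trace norm of each rank-one term by $\norm{\ket{\delta}}$.

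Your direct computation stays entirely with the amplitude vectors: difference of squares, the reverse triangle inequality, Cauchy--Schwarz, and Minkowski plus normalisation. It therefore needs no trace norm and no channel monotonicity. Moreover, if you apply Cauchy--Schwarz before the reverse triangle inequality, the same computation gives the intermediate bound $\TV(P_1,P_2)\le\bigl(\sum_x(\sqrt{P_1(x)}-\sqrt{P_2(x)})^2\bigr)^{1/2}$. This is the classical total-variation/Hellinger inequality, and it depends only on the two Born distributions.

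The paper's route instead factors through the basis-independent quantity $T(\rho_1,\rho_2)$. That single intermediate bound controls the outcome distribution of every measurement at once, and it isolates the quantum-to-classical step in one contraction principle.

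Your alternative remark coincides with the paper's first step but closes differently. It uses the closed form $T=\sqrt{1-\abs{\langle\psi_1|\psi_2\rangle}^2}$ and the scalar inequality $1-c^2\le 2-2c$ rather than the rank-one decomposition; both closings are valid.
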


\begin{proof}

Let $\rho_j=\ket{\psi_j}\!\bra{\psi_j}$. By
Proposition~\ref{prop:contraction},
$\TV(P_1,P_2)\leq T(\rho_1,\rho_2)$. For pure states, the trace
distance satisfies the standard bound:
\begin{equation}\label{eq:trace_hilbert}
    T(\rho_1,\rho_2)\leq\norm{\ket{\psi_1}-\ket{\psi_2}}.
\end{equation}
To verify~\eqref{eq:trace_hilbert}: set
$\ket{\delta}=\ket{\psi_1}-\ket{\psi_2}$. Then
$\rho_1-\rho_2
=\ket{\psi_1}\!\bra{\psi_1}-\ket{\psi_2}\!\bra{\psi_2}
=\ket{\delta}\!\bra{\psi_1}+\ket{\psi_2}\!\bra{\delta}$,
so $\tr\abs{\rho_1-\rho_2}\leq
\norm{\ket{\delta}}\norm{\ket{\psi_1}}
+\norm{\ket{\psi_2}}\norm{\ket{\delta}}
=2\norm{\ket{\delta}}$, giving
$T=\frac{1}{2}\tr\abs{\rho_1-\rho_2}\leq\norm{\ket{\delta}}$.
Combining: $\TV(P_1,P_2)\leq T(\rho_1,\rho_2)
\leq\norm{\ket{\psi_1}-\ket{\psi_2}}$.

\end{proof}

\begin{proof}[Proof of Theorem~\ref{thm:refined}]
Composing the three lemmas:
\begin{align}
    \TV(P_{r_1},P_{r_2})
    &\leq \norm{\ket{\psi(r_1)}-\ket{\psi(r_2)}}
        &&\text{(Lemma~\ref{lem:born})}
    \label{eq:chain1}\\
    &\leq \frac{1}{2}\,
          \norm{\bm{\theta}(r_1)-\bm{\theta}(r_2)}_1
        &&\text{(Lemma~\ref{lem:state})}
    \label{eq:chain2}\\
    &\leq \frac{1}{2}\,\Lambda_T(r_1,r_2)
          \cdot\frac{\abs{r_1-r_2}}{(1+r_1)(1+r_2)}
        &&\text{(Lemma~\ref{lem:param})}
    \label{eq:chain3}
\end{align}
which is~\eqref{eq:refined_bound}.
\end{proof}

\begin{proof}[Proof of Theorem~\ref{thm:main}]
Apply Corollary~\ref{cor:param_uniform} to
replace~\eqref{eq:chain3} with
$\Lambda_T(r_1,r_2) \leq \Lambda_T$.
\end{proof}

\subsection{Consequences}

\begin{corollary}[Convergence to the base distribution]
\label{cor:asymptotic}
Let $P_\infty(x)
= \abs{\bra{x}U(\bm{\theta}^{(0)})\ket{0}^{\otimes n}}^2$
be the distribution at $r = \infty$. Then:
\begin{equation}\label{eq:asymptotic}
    \TV(P_r, P_\infty)
    \leq \frac{1}{2}
         \sum_{i=1}^{T}
         \frac{\norm{\bm{\theta}^{(i)}}_1}{(1+r)^i}
    = \frac{1}{2}\,
      \frac{\norm{\bm{\theta}^{(1)}}_1}{1+r}
      + O\!\left(\frac{1}{(1+r)^2}\right).
\end{equation}
\end{corollary}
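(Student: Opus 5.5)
The plan is to treat Corollary~\ref{cor:asymptotic} as the limiting case $r_2\to\infty$ of the three-lemma chain behind Theorem~\ref{thm:refined}. Since the chain passes through the infinite limit cleanly, it is simpler to redo it directly with $\bm{\theta}^{(0)}$ in place of $\bm{\theta}(r_2)$.

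\textbf{Step 1.} By Lemma~\ref{lem:born}, $\TV(P_r,P_\infty)\le\norm{\ket{\psi(\bm{\theta}(r))}-\ket{\psi(\bm{\theta}^{(0)})}}$, because $P_\infty$ is exactly the Born distribution of the state $U(\bm{\theta}^{(0)})\ket{0}^{\otimes n}$.

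\textbf{Step 2.} Lemma~\ref{lem:state} holds for arbitrary parameter vectors. Applying it to $\bm{\theta}_1=\bm{\theta}(r)$ and $\bm{\theta}_2=\bm{\theta}^{(0)}$ bounds the state distance by $\frac12\norm{\bm{\theta}(r)-\bm{\theta}^{(0)}}_1$.

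\textbf{Step 3.} Definition~\ref{def:param} gives the exact identity $\bm{\theta}(r)-\bm{\theta}^{(0)}=\sum_{i=1}^T\bm{\theta}^{(i)}(1+r)^{-i}$. The triangle inequality then yields $\norm{\bm{\theta}(r)-\bm{\theta}^{(0)}}_1\le\sum_{i=1}^T\norm{\bm{\theta}^{(i)}}_1(1+r)^{-i}$, and combining with Steps 1 and 2 gives the inequality in \eqref{eq:asymptotic}.

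\textbf{Step 4.} For the asymptotic equality, isolate the $i=1$ term. For the remaining terms, use $(1+r)^{-i}\le(1+r)^{-2}$ for $i\ge2$ and $r\ge0$. The tail is then at most $\frac12(1+r)^{-2}\sum_{i\ge2}\norm{\bm{\theta}^{(i)}}_1$, which is $O((1+r)^{-2})$ with a constant depending only on the coefficients.

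There is no real obstacle here. The one point to check is that $P_\infty$ is identified with $\bm{\theta}^{(0)}$ directly, rather than obtained by taking a limit in Theorem~\ref{thm:refined}. This makes the bound exact with no continuity argument needed, although the limit $r_2\to\infty$ of \eqref{eq:refined_bound} would give the same constant, since $S_i(t_1,0)=t_1^{i-1}$.
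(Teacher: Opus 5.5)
Your proposal is correct and matches the paper's proof: it combines Lemma~\ref{lem:born} and Lemma~\ref{lem:state} to get $\TV(P_r,P_\infty)\le\frac12\norm{\bm{\theta}(r)-\bm{\theta}^{(0)}}_1$, then applies the triangle inequality to $\sum_{i=1}^T\bm{\theta}^{(i)}(1+r)^{-i}$. Your Step~4 spells out the $O((1+r)^{-2})$ tail bound that the paper leaves implicit, and your side remark that the $r_2\to\infty$ limit of the refined bound gives the same constant is also correct.
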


\begin{proof}
By Lemmas~\ref{lem:born} and~\ref{lem:state},
$\TV(P_r, P_\infty) \le \frac{1}{2}\,
\|\bm{\theta}(r) - \bm{\theta}^{(0)}\|_1$. From \eqref{eq:theta_t}, with $t = 1/(1+r)$,
\[
\bm{\theta}(r) - \bm{\theta}^{(0)} = \sum_{i=1}^{T} \bm{\theta}^{(i)}\, t^i,
\qquad\text{so}\qquad
\|\bm{\theta}(r) - \bm{\theta}^{(0)}\|_1 \le \sum_{i=1}^{T}
\frac{\|\bm{\theta}^{(i)}\|_1}{(1+r)^i}
\]
by the triangle inequality, which gives~\eqref{eq:asymptotic} directly.
\end{proof}

\begin{corollary}[Fidelity bound]\label{cor:fidelity}
Let $F\big(\rho(r_1),\rho(r_2)\big) = |\langle\psi(r_1)|\psi(r_2)\rangle|^2$
and define
\[
b \;\coloneqq\; 1 - \frac{1}{8}\,\Lambda_T(r_1,r_2)^2 \cdot
\frac{(r_1-r_2)^2}{(1+r_1)^2 (1+r_2)^2}.
\]
Then
\[
F\big(\rho(r_1),\rho(r_2)\big) \;\ge\; \big(\max\{b,\,0\}\big)^2 .
\]
The same holds with $\Lambda_T(r_1,r_2)$ replaced by $\Lambda_T$ (uniform
version, weaker).
\end{corollary}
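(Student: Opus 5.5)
The statement to prove is Corollary~\ref{cor:fidelity}. The plan is to combine the state-level Lipschitz bound already established with the elementary identity relating the Hilbert-space distance of two unit vectors to their overlap. No new estimates are needed; the work is in handling the global phase and the sign of $b$.

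\textbf{Step 1 (state distance).} Chaining Lemma~\ref{lem:state} with Lemma~\ref{lem:param} gives
\[
\delta \coloneqq \norm{\ket{\psi(r_1)}-\ket{\psi(r_2)}} \le \tfrac12\,\Lambda_T(r_1,r_2)\,\frac{\abs{r_1-r_2}}{(1+r_1)(1+r_2)}.
\]
Squaring yields $\delta^2/2 \le \tfrac18\Lambda_T(r_1,r_2)^2 (r_1-r_2)^2/\bigl((1+r_1)^2(1+r_2)^2\bigr)$. Equivalently, $1-\delta^2/2 \ge b$.

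\textbf{Step 2 (from distance to overlap).} For unit vectors,
\[
\norm{\ket{\psi_1}-\ket{\psi_2}}^2 = 2-2\,\mathrm{Re}\braket{\psi_1}{\psi_2}.
\]
Hence $\mathrm{Re}\braket{\psi(r_1)}{\psi(r_2)} = 1-\delta^2/2 \ge b$. Since $\abs{z}\ge \mathrm{Re}\,z$, we get $\abs{\braket{\psi(r_1)}{\psi(r_2)}} \ge b$. Trivially the modulus is also $\ge 0$, so it is at least $\max\{b,0\}$. Both sides are nonnegative, so squaring preserves the inequality and gives
\[
F\bigl(\rho(r_1),\rho(r_2)\bigr)=\abs{\braket{\psi(r_1)}{\psi(r_2)}}^2 \ge (\max\{b,0\})^2.
\]
The clipping at $0$ is exactly what makes the squaring step legitimate when $b<0$.

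\textbf{Step 3 (uniform version).} Corollary~\ref{cor:param_uniform} gives $\Lambda_T(r_1,r_2)\le\Lambda_T$. Replacing $\Lambda_T(r_1,r_2)$ by $\Lambda_T$ therefore decreases $b$, and $x\mapsto(\max\{x,0\})^2$ is nondecreasing, so the bound stays valid.

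\textbf{Main subtlety.} The only delicate point is that the vectors $\ket{\psi(r)}$ are taken with the specific phase produced by the circuit. Because $\delta$ is computed for these fixed vectors rather than for rays, the real part of the overlap is controlled directly. No phase optimization is needed, since $\abs{z}\ge\mathrm{Re}\,z$ already suffices. An alternative route through the Fuchs--van de Graaf inequality~\eqref{eq:fvdg_quantum} would give only a weaker constant, so the direct inner-product identity is preferable.
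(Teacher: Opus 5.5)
Your proof is correct and follows the paper's argument essentially step for step: the unit-vector identity $\|\psi_1-\psi_2\|^2 = 2-2\,\mathrm{Re}\langle\psi_1|\psi_2\rangle$, the bound $|z|\ge\mathrm{Re}\,z$, clipping at zero before squaring, and the bound on the state distance obtained by chaining Lemmas~\ref{lem:state} and~\ref{lem:param}. Your Step~3 also supplies the short justification of the uniform version ($\Lambda_T(r_1,r_2)\le\Lambda_T$ together with monotonicity of $x\mapsto(\max\{x,0\})^2$), which the paper states without proof.
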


\begin{proof}
Write $|\delta\rangle = |\psi(r_1)\rangle - |\psi(r_2)\rangle$. For unit
vectors, $\||\delta\rangle\|^2 = 2 - 2\text{Re}\langle\psi(r_1)|\psi(r_2)\rangle$,
hence
\[
\sqrt{F} \;=\; |\langle\psi(r_1)|\psi(r_2)\rangle|
\;\ge\; \text{Re}\langle\psi(r_1)|\psi(r_2)\rangle
\;=\; 1 - \tfrac{1}{2}\||\delta\rangle\|^2 .
\]
Since $\sqrt{F} \ge 0$ as well, $\sqrt{F} \ge
\max\{\,1 - \tfrac12\||\delta\rangle\|^2,\; 0\,\}$. By Lemmas~\ref{lem:state} and~\ref{lem:param},
$\||\delta\rangle\|^2 \le \frac{1}{4}\Lambda_T(r_1,r_2)^2\,
\frac{(r_1-r_2)^2}{(1+r_1)^2(1+r_2)^2}$, so
$1 - \tfrac12\||\delta\rangle\|^2 \ge b$ and therefore
$\sqrt{F} \ge \max\{b, 0\}$. Since both sides are non-negative, $F \;\ge\; \big(\max\{b,\,0\}\big)^2$.
\end{proof}

\begin{remark}[Unconditional linear bound]\label{rem:linear-fid}
The linear bound
\[
F \;\ge\; 1 - \frac{1}{4}\,\Lambda_T(r_1,r_2)^2\cdot
\frac{(r_1-r_2)^2}{(1+r_1)^2(1+r_2)^2}
\]
holds \emph{unconditionally}: from
$F \ge (\text{Re}\langle\psi_1|\psi_2\rangle)^2 =
\big(1-\tfrac12\||\delta\rangle\|^2\big)^2 \ge 1 - \||\delta\rangle\|^2$,
substitute the bound on $\||\delta\rangle\|^2$; when the right-hand side is
negative the inequality is trivially true. Unlike the quadratic bound of
Corollary~\ref{cor:fidelity}, no truncation is needed here.
\end{remark}

\begin{corollary}[Binning approximation error---relaxed bound]
\label{cor:binning_relaxed}
With bins $[a_k, a_{k+1})$ and representative $r_k^* \in [a_k,a_{k+1})$:
\begin{equation}\label{eq:bin_relaxed}
    \sup_{r \in [a_k,a_{k+1})}\TV(P_r, P_{r_k^*})
    \leq \frac{1}{2}\,\Lambda_T
         \cdot\frac{a_{k+1}-a_k}{(1+a_k)^2}.
\end{equation}
\end{corollary}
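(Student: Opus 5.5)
We reduce the statement to the refined Lipschitz bound of Theorem~\ref{thm:refined}, applied to the pair $(r, r_k^*)$ with $r$ in the bin $[a_k,a_{k+1})$. For every such $r$,
\[
\TV(P_r,P_{r_k^*})\le \tfrac12\,\Lambda_T(r,r_k^*)\,\frac{|r-r_k^*|}{(1+r)(1+r_k^*)} .
\]
We then bound each of the three factors uniformly over the bin and take the supremum over $r$.

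\textbf{Step 1 (weight factor).} Apply Corollary~\ref{cor:param_uniform}: since $t_1^k t_2^{i-1-k}\le 1$ for all $r_1,r_2\ge0$, we have $S_i\le i$, hence $\Lambda_T(r,r_k^*)\le\Lambda_T$.

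\textbf{Step 2 (numerator).} Both $r$ and $r_k^*$ lie in $[a_k,a_{k+1})$, so $|r-r_k^*|< a_{k+1}-a_k$. The weak inequality $|r-r_k^*|\le a_{k+1}-a_k$ is all that is needed.

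\textbf{Step 3 (denominator).} Both points are at least $a_k$, so $(1+r)(1+r_k^*)\ge(1+a_k)^2$. This uses $a_k\ge0$, which holds because radii are nonnegative.

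\textbf{Conclusion.} Combining the three steps gives, for every $r$ in the bin, a bound independent of $r$:
\[
\TV(P_r,P_{r_k^*})\le\tfrac12\Lambda_T\,\frac{a_{k+1}-a_k}{(1+a_k)^2}.
\]
Taking the supremum over $r\in[a_k,a_{k+1})$ yields \eqref{eq:bin_relaxed}.

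There is no real obstacle here. The only point needing care is to use the refined bound with the denominator $(1+r)(1+r_k^*)$ rather than the uniform Theorem~\ref{thm:main}, since the latter would lose the $(1+a_k)^{-2}$ factor. One should also note that the monotonicity used in Step 3 requires both endpoints to be bounded below by $a_k$, which is exactly where the representative's membership in the bin enters.
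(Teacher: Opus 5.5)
Your proposal is correct and matches the paper's proof: it substitutes $|r-r_k^*|\le a_{k+1}-a_k$, $(1+r)(1+r_k^*)\ge(1+a_k)^2$ and $\Lambda_T(r,r_k^*)\le\Lambda_T$ into Theorem~\ref{thm:refined}, then takes the supremum over the bin. Your note that the refined rather than the uniform bound is needed to keep the $(1+a_k)^{-2}$ factor is also right.
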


\begin{proof}
Use $\abs{r-r_k^*} \leq a_{k+1}-a_k$,
$(1+r)(1+r_k^*) \geq (1+a_k)^2$, and
$\Lambda_T(r,r_k^*) \leq \Lambda_T$ in
Theorem~\ref{thm:refined}.
\end{proof}

\begin{remark}
The relaxation $(1+r)(1+r_k^*) \geq (1+a_k)^2$ replaces
the exact product of two distinct terms by the square of the
smaller term. This is standard but discards the algebraic
structure of the exact denominator. The next subsection
shows that retaining the exact denominator leads to a
closed-form optimal binning strategy.
\end{remark}

The uniform bound from Theorem~\ref{thm:main} preserves a
telescoping structure that yields an explicit formula for the
optimal bin boundaries. This structure holds for \emph{arbitrary}
polynomial order~$T$.

\begin{definition}[Per-bin error]\label{def:exact_bin_error}
Let $C = \frac{1}{2}\Lambda_T$ (for any $T \geq 1$).
The worst-case error for bin $[a_k, a_{k+1})$ with
representative at the left endpoint is:
\begin{equation}\label{eq:exact_bin_error}
    E_k = C \cdot \frac{a_{k+1} - a_k}{(1+a_k)(1+a_{k+1})}.
\end{equation}
The per-bin error admits the decomposition:
\begin{equation}\label{eq:telescope_identity}
    E_k = C \cdot \left(\frac{1}{1+a_k}
          - \frac{1}{1+a_{k+1}}\right).
\end{equation}
This identity is purely algebraic and holds for all~$T$.
\end{definition}

\begin{theorem}[Closed-form adaptive binning]\label{thm:binning}
Suppose the radial domain $[0, R_{\max}]$ is partitioned into $K$
bins of equal worst-case error $\varepsilon$, with $a_0 = 0$ and
$a_K = R_{\max}$. Then:

\begin{enumerate}
\item \textbf{Per-bin error:}
\begin{equation}\label{eq:eps_from_K}
    \varepsilon = \frac{C}{K}\cdot\frac{R_{\max}}{1+R_{\max}},
    \qquad C = \frac{1}{2}\Lambda_T.
\end{equation}

\item \textbf{Bin boundaries:}
\begin{equation}\label{eq:bin_formula}
    \boxed{a_k = \frac{1}
    {1 - \dfrac{k}{K}\cdot\dfrac{R_{\max}}{1+R_{\max}}} - 1,
    \qquad k = 0, 1, \ldots, K.}
\end{equation}

\item \textbf{Bin widths:}
\begin{equation}\label{eq:bin_widths}
    \Delta_k = a_{k+1} - a_k
    = \frac{\varepsilon}{C}\cdot(1+a_k)(1+a_{k+1}).
\end{equation}
\end{enumerate}
\end{theorem}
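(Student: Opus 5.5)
The whole argument rests on the telescoping identity \eqref{eq:telescope_identity}, which turns the equal-error condition into a linear recursion in the variable $u_k \coloneqq 1/(1+a_k)$. I take the per-bin error to be the quantity $E_k$ of Definition~\ref{def:exact_bin_error}, i.e.\ the uniform bound of Theorem~\ref{thm:main} with representative at the left endpoint, written with the exact denominator.

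The identity itself is checked by putting the right-hand side over a common denominator:
\[
\frac{1}{1+a_k}-\frac{1}{1+a_{k+1}}=\frac{a_{k+1}-a_k}{(1+a_k)(1+a_{k+1})}.
\]
In the new variable this reads $E_k = C\,(u_k - u_{k+1})$. The map $a\mapsto u$ is a strictly decreasing bijection from $[0,\infty)$ onto $(0,1]$, so it carries the boundary conditions to $u_0 = 1$ and $u_K = 1/(1+R_{\max})$.

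\textbf{Per-bin error.} Impose $E_k=\varepsilon$ for all $k$ and sum over $k=0,\dots,K-1$. The sum telescopes to
\[
K\varepsilon = C\,(u_0-u_K) = C\Bigl(1-\frac{1}{1+R_{\max}}\Bigr)=C\,\frac{R_{\max}}{1+R_{\max}},
\]
which is \eqref{eq:eps_from_K}.

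\textbf{Bin boundaries.} The equal-error condition says the $u_k$ form an arithmetic progression with step $\varepsilon/C$. Hence
\[
u_k = 1 - \frac{k}{K}\cdot\frac{R_{\max}}{1+R_{\max}}.
\]
Inverting $a_k = 1/u_k - 1$ gives \eqref{eq:bin_formula}. The endpoints check out: $k=0$ gives $a_0=0$, and $k=K$ gives $a_K=R_{\max}$.

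\textbf{Well-posedness.} The denominator in \eqref{eq:bin_formula} is at least $1/(1+R_{\max})>0$ for $k\le K$. So every $a_k$ is finite and nonnegative, and the $a_k$ are strictly increasing. This confirms that the boundaries form a genuine partition, and also that the equal-error partition exists and is unique.

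\textbf{Bin widths.} Rewrite $E_k=\varepsilon$ using the original form \eqref{eq:exact_bin_error} and solve for $a_{k+1}-a_k$. This gives $\Delta_k=(\varepsilon/C)(1+a_k)(1+a_{k+1})$, which is \eqref{eq:bin_widths}.

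There is no real obstacle here. The only point needing care is recognising that the equal-error condition is linear in $u=1/(1+a)$; after that everything is bookkeeping. I also expect a brief remark to be needed that $C>0$, so that dividing by $C$ is legitimate. In the degenerate case $\Lambda_T=0$, every $E_k$ vanishes and the boundaries can be taken as \eqref{eq:bin_formula} by convention.
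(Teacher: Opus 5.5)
Your proposal is correct and matches the paper's proof step for step: the substitution $u_k = 1/(1+a_k)$ (the paper's $w_k$) turns the equal-error condition into an arithmetic progression, the telescoping sum fixes $\varepsilon$, and inverting the progression and solving $E_k=\varepsilon$ give the boundaries and widths. Your extra checks that the denominators stay positive and that $C>0$ is needed to divide are harmless refinements the paper leaves implicit.
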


\begin{proof}
\textbf{Step 1: Equal-error condition linearizes in $w$-space.}
Define $w_k = 1/(1+a_k)$, so $w_k \in (0,1]$ with
$w_0 = 1$ and $w_K = 1/(1+R_{\max})$.
By Definition~\ref{def:exact_bin_error}, the equal-error condition
$E_k = \varepsilon$ for all~$k$ becomes:
\begin{equation}\label{eq:arith_prog}
    w_k - w_{k+1} = \frac{\varepsilon}{C}
    \qquad\text{(constant)}.
\end{equation}
Therefore $\{w_k\}_{k=0}^{K}$ is an \emph{arithmetic progression}
with common difference $d = -\varepsilon/C$:
\begin{equation}\label{eq:w_k}
    w_k = 1 - \frac{k\varepsilon}{C}.
\end{equation}

\textbf{Step 2: Determine $\varepsilon$.}
The boundary condition $w_K = 1/(1+R_{\max})$ gives:
\[
    1 - \frac{K\varepsilon}{C} = \frac{1}{1+R_{\max}},
    \qquad
    \frac{K\varepsilon}{C} = \frac{R_{\max}}{1+R_{\max}},
\]
yielding~\eqref{eq:eps_from_K}.

Equivalently, this follows from the telescoping sum:
\begin{equation}\label{eq:telescope_sum}
    K\varepsilon = \sum_{k=0}^{K-1} E_k
    = C \sum_{k=0}^{K-1}\left(\frac{1}{1+a_k}
      - \frac{1}{1+a_{k+1}}\right)
    = C\left(1 - \frac{1}{1+R_{\max}}\right).
\end{equation}

\textbf{Step 3: Invert to obtain $a_k$.}
Substituting~\eqref{eq:eps_from_K} into~\eqref{eq:w_k}:
\[
    w_k = 1 - \frac{k}{K}\cdot\frac{R_{\max}}{1+R_{\max}}.
\]
Note that $\varepsilon/C$ has been replaced by
$R_{\max}/(K(1+R_{\max}))$, which is independent of $C$
(and hence independent of $\Lambda_T$).
Inverting $a_k = 1/w_k - 1$ yields~\eqref{eq:bin_formula}.
The width formula~\eqref{eq:bin_widths} follows from
$\Delta_k = 1/w_{k+1} - 1/w_k
= (w_k - w_{k+1})/(w_k w_{k+1})
= (\varepsilon/C)\cdot(1+a_k)(1+a_{k+1})$.
\end{proof}

\begin{remark}[Independence from $\Lambda_T$]
\label{rem:Lambda_cancellation}
The bin boundary formula~\eqref{eq:bin_formula} depends only
on $k$, $K$, and $R_{\max}$. The constant
$C = \frac{1}{2}\Lambda_T$ enters the per-bin
error~\eqref{eq:eps_from_K} and the bin
widths~\eqref{eq:bin_widths}, but cancels exactly from the
boundary positions. Concretely, the ratio
$\varepsilon/C = R_{\max}/(K(1+R_{\max}))$ that appears in
the arithmetic progression~\eqref{eq:w_k} contains no reference
to~$\Lambda_T$.

This means that the \emph{geometric structure} of the optimal
grid---which fraction of the domain each bin occupies---is
\emph{universal}: it depends only on the domain $[0,R_{\max}]$
and the number of bins~$K$, not on the circuit parameters.
The circuit parameters affect only the \emph{error level}
$\varepsilon$ achieved by a given $(K, R_{\max})$ pair.
\end{remark}

\begin{corollary}[Design procedure]\label{cor:design}
Given a tolerance $\varepsilon > 0$ and circuit parameters:

\emph{Step 1 (tail region).} Compute $R_{\max}$ solving
$\frac{1}{2} \sum_{i=1}^{T}
\|\bm{\theta}^{(i)}\|_1 (1+R_{\max})^{-i} = \varepsilon$, and assign to every
$r \ge R_{\max}$ the representative distribution $P_\infty$ (parameters
$\bm{\theta}^{(0)}$). By Corollary~\ref{cor:asymptotic}, the error on $[R_{\max}, \infty)$ is at
most $\varepsilon$.

\emph{Step 2 (bounded region).} Set
$K = \big\lceil \frac{C}{\varepsilon}\cdot
\frac{R_{\max}}{1+R_{\max}} \big\rceil$ with
$C = \frac{1}{2}\Lambda_T$, and compute the bin boundaries on
$[0, R_{\max})$ from \eqref{eq:bin_formula}. By Theorem~\ref{thm:binning}, each bin incurs error at most
$\varepsilon$.

Since the two regions are disjoint, the overall worst-case error is
$\sup_{r \ge 0} \TV\big(P_r, P_{\mathrm{rep}(r)}\big) \le \varepsilon$
(a maximum over regions, not a sum).
\end{corollary}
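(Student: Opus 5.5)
The statement to prove is Corollary~\ref{cor:design}: the two-region design gives a uniform worst-case error of at most $\varepsilon$. I will split the radial half-line into the tail $[R_{\max},\infty)$ and the bounded region $[0,R_{\max})$, and bound $\TV(P_r,P_{\mathrm{rep}(r)})$ separately on each. Since every $r$ lies in exactly one region, the global supremum is the maximum of the two regional suprema. No errors are summed.

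\emph{Tail region.} First I check that $R_{\max}$ is well defined. The map $h(R)=\frac12\sum_{i=1}^T\|\bm\theta^{(i)}\|_1(1+R)^{-i}$ is continuous and strictly decreasing on $[0,\infty)$, unless all $\bm\theta^{(i)}=\bm 0$, in which case everything is trivial. It tends to $0$ as $R\to\infty$. So if $\varepsilon<h(0)$, a unique root exists. If $\varepsilon\ge h(0)$, I set $R_{\max}=0$, which is harmless. For $r\ge R_{\max}$, Corollary~\ref{cor:asymptotic} gives $\TV(P_r,P_\infty)\le h(r)\le h(R_{\max})=\varepsilon$ by monotonicity.

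\emph{Bounded region.} With $K$ chosen as stated, define the boundaries by \eqref{eq:bin_formula}. Theorem~\ref{thm:binning} together with Definition~\ref{def:exact_bin_error} shows that each bin has worst-case error
\[
E_k=\frac{C}{K}\cdot\frac{R_{\max}}{1+R_{\max}}\le\varepsilon,
\]
because $K\ge \frac{C}{\varepsilon}\frac{R_{\max}}{1+R_{\max}}$.

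The main point to verify is that $E_k$ really bounds $\sup_{r\in[a_k,a_{k+1})}\TV(P_r,P_{a_k})$ when the representative is the left endpoint. For this I apply Theorem~\ref{thm:refined} with $\Lambda_T(r,a_k)\le\Lambda_T$:
\[
\TV(P_r,P_{a_k})\le C\,\frac{r-a_k}{(1+r)(1+a_k)}=C\Bigl(\tfrac1{1+a_k}-\tfrac1{1+r}\Bigr).
\]
The right-hand side is increasing in $r$, so it is at most $E_k$ for $r<a_{k+1}$.

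Taking the maximum over the two regions then gives $\sup_{r\ge0}\TV(P_r,P_{\mathrm{rep}(r)})\le\varepsilon$. The only delicate points are the monotonicity arguments and the degenerate case $R_{\max}=0$. In that case the bounded region is empty and $K=0$ bins are needed.
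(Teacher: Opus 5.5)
Your proof is correct and takes the same route as the paper, whose justification is written inline in the corollary itself. It bounds the tail by Corollary~\ref{cor:asymptotic}, bounds each bin of $[0,R_{\max})$ by Theorem~\ref{thm:binning} with per-bin error $\frac{C}{K}\frac{R_{\max}}{1+R_{\max}}\le\varepsilon$ from the ceiling choice of $K$, and takes the maximum over the two disjoint regions. Your extra steps only make explicit what the paper leaves implicit: the monotonicity of $h$, which gives existence of $R_{\max}$ (or $R_{\max}=0$) and a uniform tail bound, and the check via Theorem~\ref{thm:refined} with $\Lambda_T(r,a_k)\le\Lambda_T$ that $E_k$ bounds the left-endpoint error.
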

\section{Reachable Distribution of the PQC}
\label{sec:reachable}

\subsection{Circuit Architecture}

We consider the PQC architecture defined in our paper:
$n$ qubits, $L$ entangling layers with YZY single-qubit blocks
and circular CNOT entangling layers, giving
$N_p = 3n(L+1)$ variational parameters. With $L = 2n$
(the empirical scaling rule), $N_p = 3n(2n+1) = 6n^2 + 3n$.

The circuit generates a pure state
$\ket{\psi(\bm{\theta})} = U(\bm{\theta})\ket{0}^{\otimes n}$
and the Born distribution
$P_{\bm{\theta}}(x) = |\braket{x}{\psi(\bm{\theta})}|^2$
for $x \in \{0,1\}^n$.

\begin{definition}[Reachable set]\label{def:reachable}
The \emph{reachable set} of the PQC is
\begin{equation}\label{eq:reachable}
    \mathcal{B} = \bigl\{P_{\bm{\theta}}^{\mathrm{Born}}
    : \bm{\theta} \in [0,2\pi)^{N_p}\bigr\}
    \subset \Delta^{M-1}
\end{equation}
where $M = 2^n$ and
$\Delta^{M-1} = \{P \in \Real^M : P_i \geq 0,\;\sum_i P_i = 1\}$
is the probability simplex.
\end{definition}

\subsection{Target Distributions and Grid Bijection}

The target is a probability distribution on the standard
$(D{-}1)$-simplex $\Delta^{D-1}$, discretized via an
augmented Voronoi partition.

\begin{definition}[Grid encoding constraints]\label{def:encoding-constraints}
Let $\mathcal{Q}_N$ be an augmented Voronoi grid
(Definition~\ref{def:augmented_grid}), with symmetry group $G$ acting on
it (in our setting $G = S_D$; Section~\ref{sec:equivariance}). Two
distinct cardinality constraints arise depending on the encoding:
\emph{(a) direct (full-space) QCBM:} the augmentation is chosen so that
$|\mathcal{Q}_N| = 2^n$, and the bijection
$\phi: \mathcal{Q}_N \leftrightarrow \{0,1\}^n$ maps
grid points to basis states;
\emph{(b) orbit-space QCBM (Strategy B):} the augmentation is chosen so
that the number of $G$-orbits satisfies
$\mathcal{O}_G(\mathcal{Q}_N) = 2^{n'}$, and the bijection
$\psi$ maps orbits to basis states of $n'$ qubits. The two constraints are
generally incompatible and are never imposed simultaneously.
\end{definition}

The Born distribution $P_{\bm{\theta}}^{\mathrm{Born}}$ and
the target distribution $P^{\mathrm{target}}$ are both
probability vectors of the same dimension $M = 2^n$, with
the $x$-th component representing the probability mass
assigned to the Voronoi cell $\mathrm{cell}(\phi^{-1}(x))$ of the grid
point $\phi^{-1}(x)$, where $\phi$ is the encoding bijection of
Definition~\ref{def:encoding-constraints} and $\mathrm{cell}(\cdot)$
denotes the Voronoi cell of a grid point:
\begin{equation}\label{eq:target_born_correspondence}
    P^{\mathrm{target}}(x) = \int_{\mathrm{cell}(\phi^{-1}(x))} f(\mathbf{s})\,
    d\lambda_{D-1}(\mathbf{s}), \qquad
    P_{\bm{\theta}}^{\mathrm{Born}}(x)
    = |\braket{x}{\psi(\bm{\theta})}|^2.
\end{equation}
The measure-theoretic consistency of this discretization---monotone
inner--outer approximation of Borel sets, $L^1$/total-variation
convergence of the induced histogram as $N\to\infty$, and
identifiability---is established in Section~\ref{sec:consistency}.

We prove that when $n \to \infty$, most distributions in
$\Delta^{M-1}$ cannot be approximated by $\mathcal{B}$.

\subsection{Metric Entropy of the Reachable Set}

\begin{definition}[Covering number]\label{def:covering}
For a subset $S$ of a metric space $(X,d)$ and $\varepsilon > 0$,
the \emph{$\varepsilon$-covering number} $\mathcal{N}(S,\varepsilon,d)$
is the minimum number of closed $d$-balls of radius $\varepsilon$
needed to cover~$S$. When the metric is clear from context,
we write $\mathcal{N}(S,\varepsilon)$.
\end{definition}

\begin{proposition}[Covering number of $\mathcal{B}$]
\label{prop:covering_B}
The reachable set $\mathcal{B}$ satisfies:
\begin{equation}\label{eq:covering_B}
    \log\mathcal{N}(\mathcal{B},\varepsilon,\TV)
    \leq N_p \cdot \log\left(\frac{\pi N_p}{\varepsilon}+1\right)
\end{equation}
\end{proposition}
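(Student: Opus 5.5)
The plan is to pull back a grid covering of the parameter cube $[0,2\pi)^{N_p}$ to a covering of $\mathcal{B}$, using the Lipschitz property of the map $\bm{\theta}\mapsto P^{\mathrm{Born}}_{\bm{\theta}}$ that is already established. Composing Lemma~\ref{lem:born} with Lemma~\ref{lem:state} gives
\[
\TV\bigl(P_{\bm{\theta}_1},P_{\bm{\theta}_2}\bigr)\le \tfrac12\norm{\bm{\theta}_1-\bm{\theta}_2}_1
\]
for all parameter vectors. So if a finite set $\Theta_\delta\subset[0,2\pi)^{N_p}$ is a $\delta$-net in $\ell_1$ with $\delta=2\varepsilon$, then the images $\{P_{\bm{\vartheta}}:\bm{\vartheta}\in\Theta_\delta\}$ form an $\varepsilon$-net of $\mathcal{B}$ in total variation. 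This yields
\[
\mathcal{N}(\mathcal{B},\varepsilon,\TV)\le|\Theta_{2\varepsilon}|.
\]
Note that the covering balls are centred at points of $\mathcal{B}$, so this is an internal covering and no subtlety arises.

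To build the net, I will use the cubic grid, which is more elementary than Euclidean-ball volume arguments. Split each coordinate interval $[0,2\pi)$ into $m$ equal pieces and take their midpoints. Every $\bm{\theta}$ then lies within $\pi/m$ of a grid point in each coordinate, so the $\ell_1$ distance is at most $N_p\pi/m$. The requirement $N_p\pi/m\le 2\varepsilon$ holds with $m=\lceil \pi N_p/(2\varepsilon)\rceil$. This gives
\[
|\Theta|=m^{N_p}\le\left(\frac{\pi N_p}{2\varepsilon}+1\right)^{N_p}\le\left(\frac{\pi N_p}{\varepsilon}+1\right)^{N_p}.
\]
Taking logarithms gives \eqref{eq:covering_B}, with a factor of $2$ to spare.

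One could also exploit the $2\pi$-periodicity of the rotation angles up to a global phase, since $R(\alpha+2\pi)=-R(\alpha)$ leaves the Born probabilities unchanged, to wrap the cube as a torus. This is not needed, because the parameter domain in Definition~\ref{def:reachable} is already the half-open cube and the midpoint grid covers it directly.

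I expect the only real obstacle to be bookkeeping of constants: matching the stated form $\pi N_p/\varepsilon+1$ and handling the ceiling correctly via $\lceil x\rceil\le x+1$. An alternative route is the $\ell_2$ version, Corollary~\ref{cor:state_l2}, combined with a standard volumetric bound for Euclidean balls. That route introduces a $\sqrt{N_p}$ factor and dimension-dependent volume constants that are messier to reconcile, so I will use the $\ell_1$ grid argument. If a sharper constant is desired, the $\varepsilon$-dependence can be tracked with $m=\lceil \pi N_p/(2\varepsilon)\rceil$, but the stated bound only needs the crude inequality above.
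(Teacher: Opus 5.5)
Your proof is correct, but it takes a different route from the paper's. The paper works in $\ell_2$. It combines Lemma~\ref{lem:born} with Corollary~\ref{cor:state_l2} to get the Lipschitz constant $L_F=\sqrt{N_p}/2$, encloses $[0,2\pi)^{N_p}$ in a Euclidean ball of radius $\pi\sqrt{N_p}$, and applies the standard volumetric bound $(1+2R/\delta)^{N_p}$ with $\delta=\varepsilon/L_F$. The two factors of $\sqrt{N_p}$ multiply to give exactly $\pi N_p/\varepsilon+1$, so the stated constant is precisely what that route produces; your worry that it would be messy to reconcile does not materialize.

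You instead use the sharper $\ell_1$ bound of Lemma~\ref{lem:state} directly, which avoids the Cauchy--Schwarz loss. You pay a factor $N_p$ only when converting the coordinatewise accuracy $\pi/m$ of the midpoint grid into $\ell_1$ distance. What your version buys:
\begin{itemize}
\item It is fully constructive and does not invoke an unproved packing or volumetric lemma.
\item Its centres $P_{\bm{\vartheta}}$ lie in $\mathcal{B}$ by construction. The paper's proof leaves this implicit, which is harmless there because Definition~\ref{def:covering} does not require centres in the set and the Born map is $2\pi$-periodic anyway.
\item It gives the better constant $\pi N_p/(2\varepsilon)+1$, which would allow $C_{\mathrm{Lip}}=\pi N_p/2+1$ in Theorem~\ref{thm:generic} and hence a marginally larger $\varepsilon^*$.
\end{itemize}
The paper's argument, in turn, does not use the product structure of the parameter domain, so it applies verbatim to any bounded parameter set. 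It also uses the $\ell_2$ form of Corollary~\ref{cor:state_l2}, which the paper states for exactly this purpose. The resulting $N_p\log(N_p/\varepsilon)$ scaling, which is all that Theorem~\ref{thm:generic} needs, is the same either way.
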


\begin{proof}
The Born distribution map
$F: \bm{\theta} \mapsto P_{\bm{\theta}}^{\mathrm{Born}}$
satisfies
$\TV(P_{\bm{\theta}_1}, P_{\bm{\theta}_2})
\leq L_F\norm{\bm{\theta}_1 - \bm{\theta}_2}_2$
with $L_F = \sqrt{N_p}/2$
(Lemma~\ref{lem:born} combined with Corollary~\ref{cor:state_l2}:
$\TV(P_{\bm{\theta}_1}, P_{\bm{\theta}_2}) \leq \norm{\ket{\psi_1}-\ket{\psi_2}}
\leq \frac{\sqrt{N_p}}{2}\norm{\bm{\theta}_1-\bm{\theta}_2}_2$).

For every $\varepsilon > 0$ there exist $K \in \mathbb{Z}^+$ and points
$\{\bm{\theta}_1^*, \ldots, \bm{\theta}_K^*\}$ forming an
$(\varepsilon/L_F)$-cover of $[0,2\pi)^{N_p}$ in the
Euclidean metric, i.e., for every
$\bm{\theta} \in [0,2\pi)^{N_p}$ there exists some
$\bm{\theta}_j^*$ with
$\norm{\bm{\theta} - \bm{\theta}_j^*}_2 \leq \varepsilon/L_F$.
The number of such $\norm{\cdot}_2$-balls is
$K = \mathcal{N}\bigl([0,2\pi)^{N_p},\,
\varepsilon/L_F,\,\norm{\cdot}_2\bigr)$.

We claim that the image set
$\{F(\bm{\theta}_1^*), \ldots, F(\bm{\theta}_K^*)\}$
is an $\varepsilon$-cover of $\mathcal{B}$ in the TV metric.
Indeed, for any $P_{\bm{\theta}} \in \mathcal{B}$, the parameter
$\bm{\theta}$ is covered by some $\bm{\theta}_j^*$, and
the Lipschitz condition gives:
\[
    \TV\bigl(P_{\bm{\theta}},\, P_{\bm{\theta}_j^*}\bigr)
    \leq L_F \cdot
    \norm{\bm{\theta} - \bm{\theta}_j^*}_2
    \leq L_F \cdot \frac{\varepsilon}{L_F}
    = \varepsilon.
\]
Therefore $P_{\bm{\theta}_j^*}$ is within TV distance
$\varepsilon$ of $P_{\bm{\theta}}$, so the $K$ image points
cover $\mathcal{B}$. Since the minimum covering number cannot exceed
the size of any valid cover:
\begin{equation}\label{eq:lip_transfer}
    \mathcal{N}(\mathcal{B},\,\varepsilon,\,\TV)
    \leq K
    = \mathcal{N}\bigl([0,2\pi)^{N_p},\,
    \varepsilon/L_F,\,\norm{\cdot}_2\bigr).
\end{equation}

The parameter domain $[0,2\pi)^{N_p}$ is contained in a
Euclidean ball of radius $\pi\sqrt{N_p}$ (since the diameter of the $N_p$-dimensional
hypercube $[0,2\pi)^{N_p}$ is $2\pi\sqrt{N_p}$).
By the standard volumetric bound for covering a bounded set
in $\Real^{N_p}$:
\[
    \mathcal{N}\bigl([0,2\pi)^{N_p},\,
    \varepsilon/L_F,\,\norm{\cdot}_2\bigr)
    \leq \left(\frac{2\pi\sqrt{N_p}\cdot L_F}{\varepsilon}
    + 1\right)^{N_p}.
\]
Substituting $L_F = \sqrt{N_p}/2$:
\[
    \mathcal{N}(\mathcal{B},\,\varepsilon,\,\TV)
    \leq \left(\frac{\pi N_p}{\varepsilon}+1\right)^{N_p}
\]
Taking logarithms yields~\eqref{eq:covering_B}.
\end{proof}

\begin{remark}[Architecture independence]\label{rem:arch-indep}
Lemma~\ref{lem:state} and Proposition~\ref{prop:covering_B} use only two structural facts: (i) each
parameterized gate is generated by a Hermitian involution
($G_j^2 = I$, $\|G_j\|_{\op} = 1$), and (ii) all non-parameterized gates
are fixed unitaries absorbed into the $C_j$. Both hold verbatim for the
circular-CNOT ansatz (Definition~\ref{def:circuit}) and the full-CZ ansatz (Definition~\ref{def:full_entangle})
with independent parameters. For the shared-parameter equivariant ansatz of
Theorem~\ref{thm:full_equiv_lie}, a tied parameter $\theta$ multiplying the
commuting orbit sum $\sum_{q \in O} G_q$ contributes
$|O| \cdot |\theta_{1} - \theta_{2}|/2$ to~\eqref{eq:l1bound} (factor the exponential into
$|O|$ commuting single-qubit rotations), so the Lipschitz constant acquires
orbit-size weights while all qualitative conclusions are unchanged.
\end{remark}

\begin{proposition}[Covering number of the simplex]
\label{prop:covering_simplex}
The probability simplex satisfies: for all
$\varepsilon \in (0,1/4)$,
\begin{equation}\label{eq:covering_simplex}
    \log\mathcal{N}(\Delta^{M-1},\varepsilon,\TV)
    \geq (M-1)\log\frac{1}{4\varepsilon}.
\end{equation}
\end{proposition}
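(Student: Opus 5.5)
I would prove the lower bound by a volumetric (packing-versus-covering) argument carried out in the affine chart of the simplex. The first step is to parametrize $\Delta^{M-1}$ by its first $M-1$ coordinates. Writing $P=(p_1,\dots,p_{M-1},1-\sum_{i<M}p_i)$, the image is the corner set
\[
S=\Bigl\{u\in\Real^{M-1}: u_i\ge 0,\ \textstyle\sum_i u_i\le 1\Bigr\},
\]
which has Lebesgue volume $1/(M-1)!$. In this chart the total variation distance is
\[
\TV(P,Q)=\tfrac12\Bigl(\|u-v\|_1+\bigl|\textstyle\sum_i(u_i-v_i)\bigr|\Bigr).
\]
It is therefore sandwiched between $\tfrac12\|u-v\|_1$ and $\|u-v\|_1$. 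Consequently every closed TV-ball of radius $\varepsilon$ (the covers may be centred anywhere, but a centre can be taken in $\Delta^{M-1}$ at the cost of doubling the radius, or we simply work with the superset) is contained in an $\ell_1$-ball of radius $2\varepsilon$ in $\Real^{M-1}$.

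The second step is a volume comparison. If $\Delta^{M-1}$ is covered by $\mathcal N$ TV-balls of radius $\varepsilon$, then $S$ is covered by $\mathcal N$ $\ell_1$-balls of radius $2\varepsilon$. The $\ell_1$-ball of radius $\rho$ in $\Real^{M-1}$ has volume $(2\rho)^{M-1}/(M-1)!$. Comparing volumes gives
\[
\mathcal N\cdot\frac{(4\varepsilon)^{M-1}}{(M-1)!}\ \ge\ \frac{1}{(M-1)!},
\qquad\text{i.e.}\qquad
\mathcal N\ge (4\varepsilon)^{-(M-1)}.
\]
Taking logarithms yields \eqref{eq:covering_simplex}. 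The hypothesis $\varepsilon<1/4$ is needed only to make the right-hand side positive and nontrivial.

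The one delicate point, which I expect to be the main obstacle, is the inclusion of TV-balls in $\ell_1$-balls when the covering centres are not required to lie in $\Delta^{M-1}$. If the centres are arbitrary probability vectors, they lie in the simplex and the chart argument applies directly with the factor $2$ above. If Definition~\ref{def:covering} is read with centres in the ambient metric space $\Delta^{M-1}$ equipped with TV, the same holds. I would fix this reading explicitly and verify the inequality $\|u-v\|_1\le 2\,\TV(P,Q)$ carefully, since this constant is exactly what produces the $4$ in $1/(4\varepsilon)$. The factorial volumes cancel exactly, so no Stirling estimates are needed.
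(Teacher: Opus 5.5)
Your proposal is correct and is essentially the paper's volume-comparison argument. The paper phrases it with surface measure on the hyperplane (simplex volume $\sqrt{M}/(M-1)!$ against TV-ball volume at most $\sqrt{M}(4\varepsilon)^{M-1}/(M-1)!$), while your reduced-coordinate version, with the inclusion $\|u-v\|_1\le 2\,\TV(P,Q)$ made explicit, is exactly its Lemma~\ref{lem:ball_volume}. The ``delicate point'' about centres is not an obstacle: even for a centre $c\in\Real^M$ off the simplex, deleting the last coordinate gives $\sum_{i<M}|u_i-c_i|\le\|P-c\|_1\le 2\varepsilon$, so the radius never needs doubling and the constant $4$ survives.
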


\begin{proof}
We use a volume comparison argument. The probability simplex
$\Delta^{M-1}$ has $(M{-}1)$-dimensional volume
$Vol_{M-1}(\Delta^{M-1}) = \sqrt{M}/(M-1)!$
with respect to the Lebesgue measure on the affine hyperplane
$\sum_i P_i = 1$.

A TV-ball of radius $\varepsilon$ centered at $P^0 \in \Delta^{M-1}$
is $B_\varepsilon(P^0) = \{P \in \Delta^{M-1}
: \TV(P,P^0) \leq \varepsilon\}
= \{P \in \Delta^{M-1}
: \norm{P-P^0}_1 \leq 2\varepsilon\}$.
The intersection of the $\ell^1$-ball with the simplex has
volume at most that of the full $\ell^1$-ball intersected with
the hyperplane, which satisfies:
\[
    Vol_{M-1}(B_\varepsilon(P^0))
    \leq \frac{(4\varepsilon)^{M-1}}{(M-1)!}\cdot\sqrt{M}
\]
(this is the volume of the $(M{-}1)$-dimensional cross-polytope
of radius $2\varepsilon$).

Since the covering must satisfy
$\mathcal{N}(\Delta^{M-1},\varepsilon,\TV)
\cdot Vol_{M-1}(B_\varepsilon)
\geq Vol_{M-1}(\Delta^{M-1})$,
we obtain:
\[
    \mathcal{N}(\Delta^{M-1},\varepsilon,\TV)
    \geq \frac{Vol_{M-1}(\Delta^{M-1})}
    {Vol_{M-1}(B_\varepsilon)}
    \geq \left(\frac{1}{4\varepsilon}\right)^{M-1}.
\]
Taking logarithms yields~\eqref{eq:covering_simplex}.
\end{proof}

\subsection{Main Hardness Theorem}\label{sec:generic}

\begin{theorem}[Generic inapproximability]
\label{thm:generic}
Let $\mathcal{B} \subset \Delta^{M-1}$ be the reachable set of a
PQC with $N_p$ parameters. Define $\alpha = N_p/(M-1)$ and
$C_{\mathrm{Lip}} = \pi N_p + 1$.
If $\alpha < 1$ (i.e., $N_p < M - 1$), then there exist
distributions in $\Delta^{M-1}$ whose TV distance to $\mathcal{B}$
is at least:
\begin{equation}\label{eq:eps_star}
    \varepsilon^*
    = \left(\frac{1}{8}\right)^{1/(1-\alpha)}
    \cdot\, C_{\mathrm{Lip}}^{-\alpha/(1-\alpha)}.
\end{equation}
\end{theorem}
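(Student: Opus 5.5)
The argument is a metric-entropy comparison built from Propositions~\ref{prop:covering_B} and~\ref{prop:covering_simplex}. Suppose, for contradiction, that every $\mu\in\Delta^{M-1}$ satisfies $\TV(\mu,\mathcal B)<\varepsilon$ for some $\varepsilon<1/8$.

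First, take a minimal $\varepsilon$-cover of $\mathcal B$ in TV. By Proposition~\ref{prop:covering_B} it has at most $(\pi N_p/\varepsilon+1)^{N_p}$ elements. Every $\mu$ lies within $\varepsilon$ of some point of $\mathcal B$, and that point lies within $\varepsilon$ of some centre of the cover. By the triangle inequality the same centres therefore form a $2\varepsilon$-cover of $\Delta^{M-1}$. (The centres may be taken inside $\mathcal B\subset\Delta^{M-1}$, so the balls are balls of the simplex metric space.)

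Second, since $2\varepsilon<1/4$, Proposition~\ref{prop:covering_simplex} applies at radius $2\varepsilon$ and gives
\[
(M-1)\log\frac{1}{8\varepsilon}\le \log\mathcal N(\Delta^{M-1},2\varepsilon,\TV)\le N_p\log\Bigl(\frac{\pi N_p}{\varepsilon}+1\Bigr).
\]

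Third, simplify the right-hand side. For $\varepsilon\le 1$ we have $\pi N_p/\varepsilon+1\le(\pi N_p+1)/\varepsilon=C_{\mathrm{Lip}}/\varepsilon$. Dividing by $M-1$ then gives
\[
\log\frac{1}{8\varepsilon}\le \alpha\log\frac{C_{\mathrm{Lip}}}{\varepsilon},
\qquad\text{that is,}\qquad
\varepsilon^{-(1-\alpha)}\le 8\,C_{\mathrm{Lip}}^{\alpha}.
\]
Since $\alpha<1$, rearranging yields $\varepsilon\ge (1/8)^{1/(1-\alpha)}C_{\mathrm{Lip}}^{-\alpha/(1-\alpha)}=\varepsilon^*$. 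So whenever every $\mu$ is approximable to accuracy $\varepsilon<1/8$, we must have $\varepsilon\ge\varepsilon^*$. Equivalently, for any $\varepsilon<\varepsilon^*$ some $\mu$ has $\TV(\mu,\mathcal B)\ge\varepsilon$.

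Fourth, pass from ``$\ge\varepsilon$ for every $\varepsilon<\varepsilon^*$'' to a single $\mu$ attaining $\ge\varepsilon^*$. The function $\mu\mapsto\TV(\mu,\mathcal B)$ is continuous, and $\Delta^{M-1}$ is compact, so its supremum $\sup_\mu \TV(\mu,\mathcal B)$ is attained. That supremum is at least $\varepsilon^*$, which gives the desired distribution. We also need $\varepsilon^*<1/8$ so that the contradiction hypothesis is nonvacuous. This holds because $(1/8)^{1/(1-\alpha)}\le 1/8$ and $C_{\mathrm{Lip}}\ge1$.

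The main obstacle is the radius bookkeeping. Passing from an $\varepsilon$-cover of $\mathcal B$ to a $2\varepsilon$-cover of the simplex is exactly what turns the constant $1/4$ of Proposition~\ref{prop:covering_simplex} into the $1/8$ appearing in $\varepsilon^*$. The valid range $\varepsilon<1/8$, needed for the simplex bound at radius $2\varepsilon$, also has to be tracked so that the final inequality chain closes. The Lipschitz absorption $\pi N_p/\varepsilon+1\le C_{\mathrm{Lip}}/\varepsilon$ is routine.
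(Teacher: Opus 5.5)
Your proposal is correct and follows the paper's proof essentially step for step. An $\varepsilon$-cover of $\mathcal B$ becomes a $2\varepsilon$-cover of $\Delta^{M-1}$; Propositions~\ref{prop:covering_B} and~\ref{prop:covering_simplex} are compared at radius $2\varepsilon<1/4$; the bound $\pi N_p/\varepsilon+1\le C_{\mathrm{Lip}}/\varepsilon$ is absorbed and the inequality solved for $\varepsilon$; and compactness together with the $1$-Lipschitz map $\mu\mapsto\TV(\mu,\mathcal B)$ gives attainment. Two small remarks: stating the hypothesis with the strict inequality $\TV(\mu,\mathcal B)<\varepsilon$ avoids the paper's unstated use that the infimum over $\mathcal B$ is attained, and your justification gives only $\varepsilon^*\le 1/8$ rather than $<1/8$, which is all the final contrapositive step needs.
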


\begin{proof}
We first verify that $\varepsilon^* < 1/8$. Since $\alpha < 1$,
the exponent $1/(1-\alpha) \geq 1$, so
$(1/8)^{1/(1-\alpha)} \leq 1/8$. Since $C_{\mathrm{Lip}} > 1$
and $\alpha/(1-\alpha) > 0$, we have
$C_{\mathrm{Lip}}^{-\alpha/(1-\alpha)} < 1$. Therefore
$\varepsilon^* < 1/8$.

Now suppose for contradiction that there exists
$\varepsilon < \varepsilon^*$ such that $\mathcal{B}$
$\varepsilon$-approximates every distribution in
$\Delta^{M-1}$, i.e.,
$\sup_{P \in \Delta^{M-1}}\inf_{\bm{\theta}}
\TV(P, P_{\bm{\theta}}) \leq \varepsilon$.
Since $\varepsilon < \varepsilon^* < 1$, for any
$P \in \Delta^{M-1}$ there exists $Q \in \mathcal{B}$ with
$\TV(P,Q) \leq \varepsilon$, and for any $\varepsilon$-cover
$\{b_1,\ldots,b_K\}$ of $\mathcal{B}$, there exists $b_j$ with
$\TV(Q,b_j) \leq \varepsilon$. By the triangle inequality:
\[
    \TV(P, b_j) \leq \TV(P,Q) + \TV(Q,b_j)
    \leq 2\varepsilon.
\]
Therefore every $\varepsilon$-cover of $\mathcal{B}$ is a
$2\varepsilon$-cover of $\Delta^{M-1}$:
\begin{equation}\label{eq:cover_chain}
    \mathcal{N}(\Delta^{M-1}, 2\varepsilon, \TV)
    \leq \mathcal{N}(\mathcal{B}, \varepsilon, \TV).
\end{equation}
Substituting Proposition~\ref{prop:covering_simplex}
(with argument $2\varepsilon$, noting $2\varepsilon
< 2\varepsilon^* < 1/4$)
and Proposition~\ref{prop:covering_B}:
\begin{equation}\label{eq:cover_ineq}
    \left(\frac{1}{8\varepsilon}\right)^{M-1}
    \leq \left(\frac{\pi N_p}{\varepsilon}+1\right)^{N_p}
    \leq \left(\frac{C_{\mathrm{Lip}}}{\varepsilon}
    \right)^{N_p}
\end{equation}
where the second inequality uses
$\pi N_p/\varepsilon + 1
\leq (\pi N_p + 1)/\varepsilon
= C_{\mathrm{Lip}}/\varepsilon$
(valid since $\varepsilon < 1$).
Taking logarithms of the outer inequality:
\[
    (M-1)\log\frac{1}{8\varepsilon}
    \leq N_p \log\frac{C_{\mathrm{Lip}}}{\varepsilon}.
\]
Setting $u = \log(1/\varepsilon) > 0$, $\alpha = N_p/(M-1)$:
\[
    (M-1)(u - \log 8) \leq N_p(\log C_{\mathrm{Lip}} + u)
\]
\[
    (1-\alpha)\,u \leq \alpha\log C_{\mathrm{Lip}} + \log 8
\]
\[
    u \leq \frac{\alpha\log C_{\mathrm{Lip}} + \log 8}
    {1-\alpha}
\]
Therefore:
\[
    \varepsilon = e^{-u}
    \geq \exp\!\left(
    -\frac{\alpha\log C_{\mathrm{Lip}} + \log 8}{1-\alpha}
    \right)
    = \left(\frac{1}{8}\right)^{1/(1-\alpha)}
    \cdot\,C_{\mathrm{Lip}}^{-\alpha/(1-\alpha)}
    = \varepsilon^*.
\]
This contradicts the assumption $\varepsilon < \varepsilon^*$.
Therefore no such $\varepsilon$ exists, i.e.,
$\sup_{P \in \Delta^{M-1}} \inf_{\bm{\theta}}
\TV(P, P_{\bm{\theta}}) \ge \varepsilon^*$.
The map $f(P) = \inf_{\bm{\theta}} \TV(P, P_{\bm{\theta}})$ is
$1$-Lipschitz with respect to $\TV$ (triangle inequality), hence
continuous, and $\Delta^{M-1}$ is compact; the supremum is therefore
attained at some $P^\star$ with $f(P^\star) \ge \varepsilon^*$.
\end{proof}

\begin{corollary}[Explicit lower bound for polynomially many parameters]
\label{cor:explicit}
Suppose the parameter count grows at most polynomially in the number of
qubits, $N_p = O(\mathrm{poly}(n))$. Then $\alpha = N_p/(M-1) < 1$ for all
sufficiently large $n$, and for every such $n$ there exists a distribution in
$\Delta^{M-1}$ whose TV distance to every Born distribution in $\mathcal{B}$
is at least
\[
\varepsilon^* \;=\; \Big(\tfrac{1}{8}\Big)^{1/(1-\alpha)} \,
(\pi N_p + 1)^{-\alpha/(1-\alpha)}
\;\xrightarrow{\;n\to\infty\;}\; \tfrac{1}{8}.
\]
\end{corollary}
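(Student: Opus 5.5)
The plan is to obtain this corollary directly from Theorem~\ref{thm:generic}, so only two things need checking: that the hypothesis $\alpha<1$ holds for all large $n$, and that the explicit $\varepsilon^*$ tends to $1/8$.

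\textbf{Step 1.} Suppose $N_p\le c\,n^k$ for some constants $c,k$. Since $M-1=2^n-1$ grows exponentially, we get $\alpha=N_p/(2^n-1)\le c\,n^k/(2^n-1)\to0$. In particular there is an $n_0$ with $\alpha<1$ for all $n\ge n_0$. For each such $n$, Theorem~\ref{thm:generic} applies verbatim. Its proof already shows that the supremum of $P\mapsto\inf_{\bm\theta}\TV(P,P_{\bm\theta})$ over the compact simplex is attained. This yields a $P^\star\in\Delta^{M-1}$ at TV distance at least $\varepsilon^*$ from every element of $\mathcal B$. That gives the existence statement.

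\textbf{Step 2.} Write
\[
\log\varepsilon^*=-\frac{\log 8+\alpha\log(\pi N_p+1)}{1-\alpha}.
\]
We have $1/(1-\alpha)\to1$, so the first term tends to $-\log 8$. For the second term, $\alpha\log(\pi N_p+1)\le \frac{c n^k}{2^n-1}\log(\pi c n^k+1)$. This is $O(n^k\log n/2^n)\to0$. The case where $N_p$ stays bounded is covered by the same estimate. Hence $\log\varepsilon^*\to-\log 8$, and by continuity of $\exp$, $\varepsilon^*\to1/8$.

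\textbf{Main obstacle.} The only delicate point is the limit of the factor $C_{\mathrm{Lip}}^{-\alpha/(1-\alpha)}$. Here $C_{\mathrm{Lip}}=\pi N_p+1$ may grow polynomially, so we need $\alpha\log C_{\mathrm{Lip}}\to0$. This holds because polynomial growth times a logarithm is dominated by the exponential denominator $2^n$.

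We should also note a minor point: $n$ must be large enough that $\alpha<1$. For small $n$ with $N_p\ge M-1$, the statement is vacuous or inapplicable, and the corollary is stated only for sufficiently large $n$.
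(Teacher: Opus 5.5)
Your proposal is correct and follows the same route as the paper. You show $\alpha = N_p/(2^n-1)\to 0$, so Theorem~\ref{thm:generic} (including its attainment of the supremum) applies for all large $n$, and then you verify $1/(1-\alpha)\to 1$ and $\alpha\log(\pi N_p+1)\to 0$ to get $\varepsilon^*\to 1/8$; your explicit estimate $O(n^k\log n/2^n)$ just makes quantitative the paper's remark that the logarithm grows only logarithmically in $n$ while $\alpha$ decays exponentially.
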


\begin{proof}
Since $M-1 = 2^{n}-1$ grows exponentially while $N_p$ grows polynomially,
$\alpha = N_p/(2^{n}-1) \to 0$; in particular $\alpha < 1$ for all
sufficiently large $n$, so Theorem~\ref{thm:generic} applies. For the limit,
$1/(1-\alpha) \to 1$ gives $(1/8)^{1/(1-\alpha)} \to 1/8$, while
$\tfrac{\alpha}{1-\alpha}\log(\pi N_p+1) \to 0$---the logarithm grows
logarithmically in $n$ and $\alpha$ decays exponentially---gives
$(\pi N_p+1)^{-\alpha/(1-\alpha)} \to 1$. Hence $\varepsilon^* \to 1/8$.
\end{proof}

\begin{lemma}[TV-ball volume ratio]\label{lem:ball_volume}
For any $Q \in \Delta^{M-1}$ and $\delta > 0$:
\begin{equation}\label{eq:ball_volume}
    \frac{Vol_{M-1}\bigl(
    B_\delta^{\TV}(Q)\cap\Delta^{M-1}\bigr)}
    {Vol_{M-1}(\Delta^{M-1})}
    \leq (4\delta)^{M-1}.
\end{equation}
When $4\delta \ge 1$ the bound is trivial (the ratio never exceeds one), so no restriction on $\delta$ is needed.
\end{lemma}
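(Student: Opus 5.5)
The plan is to reuse the volume comparison from the proof of Proposition~\ref{prop:covering_simplex}, now stated as a clean ratio. Throughout, we work with $(M-1)$-dimensional Lebesgue measure on the affine hyperplane $H=\{P:\sum_i P_i=1\}$, with the orientation fixed as in that proposition. Since $\TV(P,Q)=\tfrac12\norm{P-Q}_1$, we have
\[
B_\delta^{\TV}(Q)\cap\Delta^{M-1}\subseteq Q+\bigl\{v\in\Real^M:\textstyle\sum_i v_i=0,\ \norm{v}_1\le 2\delta\bigr\}.
\]
By translation invariance of Lebesgue measure on $H$, it therefore suffices to bound the volume of the set $K_{2\delta}=\{v:\sum_i v_i=0,\ \norm{v}_1\le 2\delta\}$. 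This is the $\ell^1$-ball of radius $2\delta$ intersected with the zero-sum hyperplane $H_0$.

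\textbf{Step 1 (volume of the slice).} We compare $K_{2\delta}$ with a dilate of the simplex. Parametrize $H_0$ by the first $M-1$ coordinates $(v_1,\dots,v_{M-1})$, with $v_M=-\sum_{i<M}v_i$. This projection scales $(M-1)$-volume on $H$ by exactly the factor $\sqrt M$, which is the same factor that appears in $Vol_{M-1}(\Delta^{M-1})=\sqrt M/(M-1)!$. The constraint $\norm{v}_1\le 2\delta$ implies $\sum_{i<M}|v_i|\le 2\delta$, so the projected set lies inside the $(M-1)$-dimensional cross-polytope of radius $2\delta$. That cross-polytope has volume $2^{M-1}(2\delta)^{M-1}/(M-1)!=(4\delta)^{M-1}/(M-1)!$. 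Hence
\[
Vol_{M-1}(K_{2\delta})\le \sqrt M\,\frac{(4\delta)^{M-1}}{(M-1)!}.
\]

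\textbf{Step 2 (ratio).} Dividing by $Vol_{M-1}(\Delta^{M-1})=\sqrt M/(M-1)!$, which is computed with the same projection so that the $\sqrt M$ Jacobians cancel, gives the claimed bound $(4\delta)^{M-1}$. For the final remark: the intersection is a subset of $\Delta^{M-1}$, so the ratio never exceeds $1$, and when $4\delta\ge1$ the bound is trivial. This is why no restriction on $\delta$ is needed.

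\textbf{Main obstacle.} The delicate point is the Jacobian bookkeeping: the same projection must be used for both volumes so that the $\sqrt M$ factors match exactly. The Step 1 bound is crude, since the constraint on $|v_M|$ is discarded and this loses roughly a factor of $2$ per dimension, but it is valid. If one prefers to avoid Jacobians altogether, the alternative is to note that the same linear projection maps $\Delta^{M-1}$ onto the corner simplex $\{x\ge0,\ \sum x_i\le1\}$ of volume $1/(M-1)!$ and maps $K_{2\delta}$ into the cross-polytope above. The ratio is invariant under this common linear map, which gives the result directly.
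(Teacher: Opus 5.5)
Your proof is correct and follows essentially the same route as the paper: project onto the first $M-1$ coordinates, drop the last-coordinate term $|P_M-Q_M|$ to place the TV-ball inside an $\ell^1$ cross-polytope of radius $2\delta$ with volume $(4\delta)^{M-1}/(M-1)!$, and divide by the simplex volume. The paper works directly in the reduced chart (your closing alternative), so no $\sqrt{M}$ Jacobian ever appears there, whereas your main route carries that factor along and cancels it; both versions are valid.
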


\begin{proof}
We work in the reduced coordinate system
$y = (P_1,\ldots,P_{M-1}) \in \Real^{M-1}$ with
$P_M = 1 - \sum_{i=1}^{M-1} P_i$.
In these coordinates, the simplex is
$\Delta = \{y \in \Real^{M-1} : y_i \geq 0,\;
\sum_i y_i \leq 1\}$, with volume
$Vol(\Delta) = 1/(M-1)!$.

The $\ell^1$ distance between two distributions
$P, Q \in \Delta^{M-1}$ decomposes as:
\begin{equation}\label{eq:l1_decomp}
    \norm{P-Q}_1
    = \sum_{i=1}^{M-1} |y_i - q_i|
    + \Bigl|\sum_{i=1}^{M-1}(q_i - y_i)\Bigr|.
\end{equation}
Since $|\sum_i(q_i - y_i)| \leq \sum_i |y_i - q_i|$
(triangle inequality), the two terms satisfy:
\begin{equation}\label{eq:l1_bounds}
    \sum_{i=1}^{M-1}|y_i - q_i|
    \;\leq\; \norm{P-Q}_1
    \;\leq\; 2\sum_{i=1}^{M-1}|y_i - q_i|.
\end{equation}
The left inequality shows that
$\{P : \norm{P-Q}_1 \leq 2\delta\}
\subseteq \{y : \sum_i |y_i - q_i| \leq 2\delta\}$,
i.e., every TV-ball is contained in a cross-polytope
in the reduced coordinates:
\begin{equation}\label{eq:tv_in_cross}
    B_\delta^{\TV}(Q) \cap \Delta
    \;\subseteq\;
    \bigl\{y \in \Real^{M-1} :
    \norm{y - q}_1 \leq 2\delta\bigr\}.
\end{equation}

The cross-polytope
$\{y \in \Real^{M-1} : \norm{y-q}_1 \leq r\}$
has $(M{-}1)$-dimensional volume $(2r)^{M-1}/(M-1)!$
(the standard formula for the $\ell^1$-ball in
$\Real^{M-1}$). Substituting $r = 2\delta$:
\[
    Vol\bigl(\{y : \norm{y-q}_1 \leq 2\delta\}\bigr)
    = \frac{(4\delta)^{M-1}}{(M-1)!}.
\]
Since $B_\delta^{\TV}(Q) \cap \Delta$ is contained in
this cross-polytope by~\eqref{eq:tv_in_cross}:
\[
    \frac{Vol(B_\delta^{\TV}(Q) \cap \Delta)}
    {Vol(\Delta)}
    \leq \frac{(4\delta)^{M-1}/(M-1)!}{1/(M-1)!}
    = (4\delta)^{M-1}.\qedhere
\]
\end{proof}

\begin{theorem}[Most distributions are hard]
\label{thm:levy}
Let $P$ be drawn uniformly from $\Delta^{M-1}$
(i.e., $P \sim \mathrm{Dir}(\mathbf{1}_M)$). Then for any
$\varepsilon > 0$:
\begin{equation}\label{eq:levy_bound}
    \mathbb{P}\Bigl[\inf_{\bm{\theta}}
    \TV(P,P_{\bm{\theta}}^{\mathrm{Born}})
    \leq \varepsilon\Bigr]
    \leq \left(\frac{\pi N_p}{\varepsilon}
    + 1\right)^{N_p}
    \cdot\,(8\varepsilon)^{M-1}.
\end{equation}
In particular, the right-hand side decays exponentially
in $M$ for $\varepsilon < \varepsilon^*$ where
$\varepsilon^*$ is from Theorem~\ref{thm:generic}.
\end{theorem}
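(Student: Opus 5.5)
The natural route is a union bound over a TV-cover of the reachable set $\mathcal{B}$, combined with the volume-ratio estimate of Lemma~\ref{lem:ball_volume}. Fix $\varepsilon>0$ and take a minimal $\varepsilon$-cover $\{b_1,\ldots,b_K\}$ of $\mathcal{B}$ in total variation. By Proposition~\ref{prop:covering_B}, $K\le(\pi N_p/\varepsilon+1)^{N_p}$. We will show that the event in~\eqref{eq:levy_bound} is contained in the union of the $2\varepsilon$-balls around the cover points. Since $P\sim\mathrm{Dir}(\mathbf{1}_M)$ is the normalized Lebesgue measure on $\Delta^{M-1}$, the probability of each ball is exactly its volume ratio.

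\textbf{Containment.} Suppose $\inf_{\bm\theta}\TV(P,P_{\bm\theta})\le\varepsilon$. The infimum need not be attained, so for any $\eta>0$ there is $Q\in\mathcal{B}$ with $\TV(P,Q)\le\varepsilon+\eta$. The cover gives some $b_j$ with $\TV(Q,b_j)\le\varepsilon$. By the triangle inequality, $P\in B^{\TV}_{2\varepsilon+\eta}(b_j)$.

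There are two ways to remove $\eta$. One is to note that $\mathcal{B}$ is compact: it is the continuous image of the closed torus, using continuity from Lemma~\ref{lem:state}. Then the infimum is attained and $\eta=0$. The other is to let $\eta\downarrow0$, since the balls are closed and the union is finite. Either way,
\[
\Bigl\{\inf_{\bm\theta}\TV(P,P_{\bm\theta})\le\varepsilon\Bigr\}\subseteq\bigcup_{j=1}^{K}B^{\TV}_{2\varepsilon}(b_j)\cap\Delta^{M-1}.
\]

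\textbf{Probability bound.} Apply the union bound, then Lemma~\ref{lem:ball_volume} with $\delta=2\varepsilon$. Each ball has probability at most $(8\varepsilon)^{M-1}$, so the probability in~\eqref{eq:levy_bound} is at most $K(8\varepsilon)^{M-1}$. Inserting the covering bound on $K$ gives~\eqref{eq:levy_bound}. No restriction on $\varepsilon$ is needed, because the lemma's bound is trivially valid when $8\varepsilon\ge1$.

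\textbf{Exponential decay.} For the ``in particular'' claim, take $\varepsilon<\varepsilon^*<1$ and use $\pi N_p/\varepsilon+1\le C_{\mathrm{Lip}}/\varepsilon$ as in the proof of Theorem~\ref{thm:generic}. The right-hand side then becomes
\[
\exp\!\Bigl[(M-1)\bigl((1-\alpha)\log(1/\varepsilon)-\log(1/8)-\alpha\log C_{\mathrm{Lip}}\bigr)\cdot(-1)\Bigr].
\]
The computation in Theorem~\ref{thm:generic} shows that the bracketed rate $(1-\alpha)\log(1/\varepsilon)-\alpha\log C_{\mathrm{Lip}}-\log 8$ is strictly positive exactly when $\varepsilon<\varepsilon^*$. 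The bound is therefore $e^{-c(M-1)}$ with $c>0$.

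The step I expect to need the most care is this last one. For decay in $M$ to be meaningful, the rate must stay bounded away from zero as $M$ grows, which requires $\alpha$, $N_p$ and $\varepsilon$ to be controlled uniformly. In the regime $N_p=O(\mathrm{poly}(n))$, Corollary~\ref{cor:explicit} gives $\alpha\log C_{\mathrm{Lip}}\to0$. For any fixed $\varepsilon<1/8$ the rate then tends to $\log(1/(8\varepsilon))>0$, so the decay is genuinely exponential in $M$. I would state the claim in this uniform form rather than only for each fixed $M$.
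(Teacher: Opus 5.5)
Your proof is correct and follows the paper's argument step for step: you cover $\mathcal{B}$ by $K\le(\pi N_p/\varepsilon+1)^{N_p}$ TV-balls, place the $\varepsilon$-neighbourhood of $\mathcal{B}$ inside the union of the concentric $2\varepsilon$-balls, and combine the union bound with Lemma~\ref{lem:ball_volume} at $\delta=2\varepsilon$. Two of your steps are more careful than the paper's: you handle a possibly non-attained infimum (via compactness of $\mathcal{B}$ or $\eta\downarrow 0$), where the paper simply picks a minimizer $Q$, and you make the decay rate explicit and uniform in $M$, where the paper only asserts that the $(M-1)$ term dominates. One small slip: in your displayed exponent, $-\log(1/8)$ should read $+\log(1/8)=-\log 8$, as you correctly write in the next sentence.
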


\begin{proof}
Define $f: \Delta^{M-1} \to [0,1]$ by
$f(P) = \inf_{\bm{\theta}}\TV(P,P_{\bm{\theta}}^{\mathrm{Born}})$
and the nonempty subset
$\mathcal{B}^\varepsilon = \{P \in \Delta^{M-1} :
f(P) \leq \varepsilon\}$
(the $\varepsilon$-neighborhood of $\mathcal{B}$ in the TV metric).
Since $P \sim \mathrm{Dir}(\mathbf{1}_M)$ is the uniform
distribution on $\Delta^{M-1}$, the probability equals
the volume ratio:
\begin{equation}\label{eq:prob_vol}
    \mathbb{P}[f(P) \leq \varepsilon]
    = \frac{Vol_{M-1}(\mathcal{B}^\varepsilon)}
    {Vol_{M-1}(\Delta^{M-1})}.
\end{equation}
Let $\{b_1,\ldots,b_K\}$ be an $\varepsilon$-cover of $\mathcal{B}$
in the TV metric, with
$K = \mathcal{N}(\mathcal{B},\varepsilon,\TV)$.
For any $P \in \mathcal{B}^\varepsilon$, there exists $Q \in \mathcal{B}$
with $\TV(P,Q) \leq \varepsilon$, and there exists $b_j$
with $\TV(Q,b_j) \leq \varepsilon$. By the triangle inequality:
$\TV(P,b_j) \leq 2\varepsilon$.
Therefore:
\begin{equation}\label{eq:neighborhood_cover}
    \mathcal{B}^\varepsilon \subseteq
    \bigcup_{j=1}^{K} B_{2\varepsilon}^{\TV}(b_j).
\end{equation}
By the subadditivity of volume
and Lemma~\ref{lem:ball_volume} with $\delta = 2\varepsilon$:
\begin{align}
    Vol_{M-1}(\mathcal{B}^\varepsilon)
    &\leq \sum_{j=1}^{K}
    Vol_{M-1}\bigl(B_{2\varepsilon}^{\TV}(b_j)
    \cap \Delta^{M-1}\bigr)
    \notag\\
    &\leq K \cdot (4 \cdot 2\varepsilon)^{M-1}
    \cdot Vol_{M-1}(\Delta^{M-1})
    \notag\\
    &= K \cdot (8\varepsilon)^{M-1}
    \cdot Vol_{M-1}(\Delta^{M-1}).
    \label{eq:vol_ub}
\end{align}
Dividing~\eqref{eq:vol_ub} by $Vol_{M-1}(\Delta^{M-1})$
and substituting Proposition~\ref{prop:covering_B}:
\begin{equation}\label{eq:prob_bound}
    \mathbb{P}[f(P) \leq \varepsilon]
    \leq K \cdot (8\varepsilon)^{M-1}
    \leq \left(\frac{\pi N_p}{\varepsilon}
    + 1\right)^{N_p}
    \cdot\,(8\varepsilon)^{M-1}.
\end{equation}
Taking logarithms:
\begin{equation}\label{eq:log_prob}
    \log\mathbb{P}[f(P) \leq \varepsilon]
    \leq N_p\log\!\left(\frac{\pi N_p}{\varepsilon}
    + 1\right)
    - (M{-}1)\log\frac{1}{8\varepsilon}.
\end{equation}
The first term grows as $N_p\log(1/\varepsilon)$ and the
second decreases as $(M{-}1)\log(1/\varepsilon)$. Since
$N_p = O(n^2)$ and $M - 1 = 2^n - 1$, the second term
dominates for all sufficiently small $\varepsilon$, and the
exponent is negative whenever $\varepsilon < \varepsilon^*$
from Theorem~\ref{thm:generic}.
\end{proof}

\section{Measure-Theoretic Consistency of the Discretization}
\label{sec:consistency}

\providecommand{\R}{\mathbb{R}}
\providecommand{\Z}{\mathbb{Z}}
\providecommand{\cB}{\mathcal{B}}
\providecommand{\cP}{\mathcal{P}}
\providecommand{\cI}{\mathcal{I}}
\providecommand{\cO}{\mathcal{O}}
\providecommand{\cW}{\mathcal{W}}
\providecommand{\diam}{\operatorname{diam}}
\providecommand{\dist}{\operatorname{dist}}
\providecommand{\ind}{\mathbf{1}}

\subsection{Setup}\label{sec:measure-setup}

Let $D\geq 2$ be the number of simplex components and $\Delta^{D-1}$
the standard $(D{-}1)$-simplex,
\begin{equation}\label{eq:simplex}
  \Delta^{D-1} = \Bigl\{ x\in\R^{D} : x_i \geq 0,\;
  \textstyle\sum_{i=1}^{D} x_i = 1 \Bigr\},
\end{equation}
equipped with the $(D{-}1)$-dimensional Lebesgue measure
$\lambda_{D-1}$ (the surface measure on the affine hyperplane
$\sum_i x_i = 1$) and the Borel $\sigma$-algebra
$\cB(\Delta^{D-1})$.

Let $\mu$ be a probability measure on
$(\Delta^{D-1},\cB(\Delta^{D-1}))$ with $\mu\ll\lambda_{D-1}$.
By the Radon--Nikodym theorem there is a density
$f = d\mu/d\lambda_{D-1} \geq 0$ with $f\in L^1(\Delta^{D-1})$ and
$\int f\,d\lambda_{D-1} = 1$, so that $\mu(A)=\int_A f\,d\lambda_{D-1}$
for every $A\in\cB(\Delta^{D-1})$. Here $f$ is the target density
of~\eqref{eq:target_born_correspondence}, so that
$P^{\mathrm{target}}(x) = \mu\bigl(\mathrm{cell}(\phi^{-1}(x))\bigr)$.

\subsection{Admissible Convex Partitions}\label{sec:partition}

The convergence theory of Section~\ref{sec:convergence} applies to any
partition satisfying the following axioms.

\subsubsection{Abstract framework}\label{subsec:abstract}

\begin{definition}[Admissible convex partition]\label{def:admissible}
A family $\cP = \{W_1,\ldots,W_M\}$ of subsets of $\Delta^{D-1}$ is an
\emph{admissible convex partition} at resolution $h>0$ if:
\begin{enumerate}[label=(A\arabic*)]
  \item\label{ax:convex}
    Each $W_j$ is a convex compact set with $\lambda_{D-1}(W_j)>0$.
  \item\label{ax:cover}
    $\Delta^{D-1} = \bigcup_{j=1}^M W_j$ and
    $\lambda_{D-1}(W_i\cap W_j)=0$ for $i\neq j$.
  \item\label{ax:diam}
    There is a constant $C>0$, independent of $h$, with
    $\diam(W_j)\leq Ch$ for all $j$.
  \item\label{ax:ecc}
    There is a constant $\alpha>0$, independent of $h$, such that every
    $W_j$ contains a ball of radius $\alpha\,\diam(W_j)$
    (uniform non-degeneracy / bounded eccentricity; balls are taken
    within the affine hull of $\Delta^{D-1}$).
\end{enumerate}
\end{definition}

\begin{remark}
Axiom~\ref{ax:ecc} is the Vitali regularity needed for the Lebesgue
differentiation theorem, i.e.\ for the \emph{pointwise} statement
Theorem~\ref{thm:L1}(i); it rules out cells that degenerate into arbitrarily
thin slivers as $h\to0$. Everything else---the inner--outer
approximation, the $L^1$/TV convergence Theorem~\ref{thm:L1}(ii)--(iii), the
identifiability theorem, and the error decomposition---uses only
\ref{ax:convex}--\ref{ax:diam}. Moreover, convexity in
\ref{ax:convex} enters only through connectedness (in the proof of
Theorem~\ref{thm:inner-outer}(b)), so the framework applies verbatim to
partitions with connected, possibly non-convex cells.
\end{remark}

All results in Section~\ref{sec:convergence} hold for any sequence of
admissible convex partitions $\cP_{h_n}$ with $h_n\downarrow 0$.
Neither the lattice origin of the cells nor any tessellation rule is
used beyond axioms \ref{ax:convex}--\ref{ax:ecc}.

\subsubsection{Concrete realisation: level-set polytope partition}
\label{subsec:concrete}

The augmented grid $\mathcal{Q}_N$ of
Definition~\ref{def:augmented_grid} induces partitions realising
these axioms. A natural first choice is the lattice Voronoi partition:
place the rescaled lattice points
$\tfrac{1}{N}\mathcal{X}_N\subset\Delta^{D-1}$ and assign to each
its nearest-neighbour (Euclidean) cell. Near $\partial\Delta^{D-1}$,
however, the Euclidean Voronoi cells are truncated irregularly by the
simplex boundary, complicating the uniform \emph{eccentricity}
control (the diameter bound survives; cf.\
Remark~\ref{rem:voronoi-applicable} below). We therefore first
develop the \emph{level-set polytope partition}, which satisfies all
four axioms with clean constants.

Each unit hypercube $\bm{z}+[0,1]^{D-1}$
($\bm{z}\in\Z_{\geq 0}^{D-1}$, in the affine chart of $\Delta^{D-1}$)
is sliced by the $D-2$ level-set hyperplanes
$\{\ell(x)=\ell(\bm{z})+m\}_{m=1}^{D-2}$, where $\ell(x)=\sum_i x_i$.
This produces $D-1$ convex \emph{type-$m$ polytopes}
\begin{equation}\label{eq:type-m}
  P_m(\bm{z}) = \bigl\{x\in \bm{z}+[0,1]^{D-1} :
  \ell(\bm{z})+m-1 \leq \ell(x) \leq \ell(\bm{z})+m\bigr\},
  \quad m=1,\ldots,D-1,
\end{equation}
which tile the hypercube,
$\bm{z}+[0,1]^{D-1} = \bigcup_{m=1}^{D-1} P_m(\bm{z})$, with adjacent
slabs sharing a $(D{-}2)$-dimensional face of measure zero. The
distinguished representative of $P_m(\bm{z})$ is its vertex centroid
$G_m(\bm{z}) = \bm{z}+\frac{m}{D}\mathbf{1}$ (the geometry and the
$S_D$-equivariance of $G_m$ are developed in
Lemma~\ref{lem:intrinsic} below; only convexity, the diameter bound,
and the eccentricity bound below are used here).

\begin{definition}[Augmented partition]\label{def:augmented}
At resolution $h=1/N$, the \emph{augmented cell partition} is
\begin{equation}\label{eq:Qpartition}
  \cW_N = \bigl\{P_m(\bm{z})\cap\Delta^{D-1} :
  \bm{z}\in\Z_{\geq 0}^{D-1},\;
  m=1,\ldots,D-1,\;
  \lambda_{D-1}\bigl(P_m(\bm{z})\cap\Delta^{D-1}\bigr)>0\bigr\},
\end{equation}
rescaled by $1/N$ onto $\Delta^{D-1}$.
\end{definition}

\begin{proposition}\label{prop:augmented-admissible}
The augmented partition $\cW_N$ is an admissible convex partition
(Definition~\ref{def:admissible}) with resolution $h=1/N$.
\end{proposition}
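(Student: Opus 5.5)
We verify the four axioms \ref{ax:convex}--\ref{ax:ecc} for the rescaled cells $\tfrac1N\bigl(P_m(\bm z)\cap N\Delta^{D-1}\bigr)$. The first step is to identify the geometry. In the affine chart, $N\Delta^{D-1}$ becomes $\{y\in\R^{D-1}_{\ge0}:\ell(y)\le N\}$, where $\ell$ is the sum of coordinates. The slabs $P_m(\bm z)$ are cut by hyperplanes $\ell=\text{integer}$. Hence for integer $\bm z$ and $m$, each $P_m(\bm z)$ lies either entirely inside $\{\ell\le N\}$ (when $\ell(\bm z)+m\le N$) or meets it only in a null set. So every retained cell is an \emph{untruncated} slab $P_m(\bm z)$, and there are no boundary slivers. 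This is the key structural observation. Everything else reduces to uniform statements about finitely many fixed polytopes, translated by integer vectors and scaled by $1/N$.

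With this, \ref{ax:convex} follows because $P_m(\bm z)$ is an intersection of a cube with two half-spaces, hence convex and compact. It has positive volume since it contains an open neighbourhood of its centroid $G_m(\bm z)$: the point $\bm z+\tfrac{m-1/2}{D-1}\mathbf 1$ lies in the open cube interior and strictly between the two level sets. For \ref{ax:cover}, the cubes $\bm z+[0,1]^{D-1}$ tile $\R^{D-1}_{\ge0}$ with null overlaps. Each cube is tiled by its $D-1$ slabs with null overlaps, as stated after \eqref{eq:type-m}. The region $\{\ell\le N\}$ is exactly the union of the retained slabs up to null sets, by the observation above. Rescaling by $1/N$ and pushing through the affine chart preserves these properties, with volumes multiplied by a fixed constant times $N^{-(D-1)}$.

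For \ref{ax:diam}, each slab lies in a unit cube, so its diameter in the chart is at most $\sqrt{D-1}$. The chart is a fixed linear bijection, so the diameter in the simplex metric is at most $C'\sqrt{D-1}$ for a constant $C'$ depending only on $D$. After rescaling, $\diam\le C h$ with $h=1/N$.

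For \ref{ax:ecc}, the main point is that up to integer translation there are only $D-1$ shapes, $P_m(\bm 0)$ for $m=1,\dots,D-1$, and all are full-dimensional convex bodies by the argument given for \ref{ax:convex}. Each therefore contains a ball of radius $\rho_m>0$, and each has diameter $d_m\le\sqrt{D-1}$. Set $\alpha=\min_m \rho_m/d_m>0$, adjusted by the distortion constants of the affine chart. Translation and the homothety by $1/N$ preserve the ratio of inradius to diameter, so $\alpha$ is independent of $h$.

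The only delicate step is the no-truncation claim used in \ref{ax:cover} and \ref{ax:ecc}. Its hypotheses are that the integer level-set cuts align with the simplex face $\ell=N$, and that the coordinate faces $y_i=0$ are cube faces. I would check both explicitly. An explicit inradius for $P_m(\bm 0)$ is not needed, since a compactness/finiteness argument suffices.
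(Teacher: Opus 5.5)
Your proof is correct and follows the paper's argument essentially step for step. The shared steps are: convexity of each cell as a cube cut by two half-spaces; the tiling of the orthant by cubes and of each cube by its $D-1$ slabs; the bound $\diam\le\sqrt{D-1}/N$ from containment in a cube; and eccentricity from the finitely many prototypes $P_m(\mathbf 0)$, together with the observation that the integer level sets align with the face $\ell=N$, so no retained cell is truncated. The differences are only presentational: you state the no-truncation observation up front, whereas the paper proves it inside Lemma~\ref{lem:eccentricity}, and you track the chart's distortion constants explicitly, whereas the paper defers them to Remark~\ref{rem:chart}.
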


\begin{proof}
\ref{ax:convex}: Each $P_m(\bm{z})$ is a hypercube (convex) intersected
with two half-spaces (convex), hence convex; intersection with the
convex set $\Delta^{D-1}$ keeps it convex and compact. The condition
$\lambda_{D-1}(P_m(\bm{z})\cap\Delta^{D-1})>0$ in
\eqref{eq:Qpartition} discards degenerate boundary pieces, so
\ref{ax:convex} holds.

\ref{ax:cover}: The hypercubes tile the orthant and, within each, the
type-$m$ slabs tile the hypercube, so
$\bigcup_{m}(P_m(\bm{z})\cap\Delta^{D-1})=\Delta^{D-1}$; adjacent cells
overlap only in $(D{-}2)$-faces, which are $\lambda_{D-1}$-null.

\ref{ax:diam}: $P_m(\bm{z})\subseteq\bm{z}+[0,1]^{D-1}$ gives
$\diam(P_m(\bm{z}))\leq\sqrt{D-1}$; after the $1/N$ rescaling,
$\diam\leq\sqrt{D-1}/N$, i.e.\ \ref{ax:diam} with $C=\sqrt{D-1}$.

\ref{ax:ecc}: This is Lemma~\ref{lem:eccentricity} below.
\end{proof}

\begin{lemma}[Uniform eccentricity]\label{lem:eccentricity}
There is a constant $\alpha_D>0$, depending only on $D$, such that every
cell $W\in\cW_N$ contains a ball of radius $\alpha_D\,\diam(W)$, for all
$N$.
\end{lemma}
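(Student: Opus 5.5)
The strategy is to reduce to finitely many shapes. Every cell of $\cW_N$ is, after undoing the $1/N$ rescaling, of the form $P_m(\bm z)\cap N\Delta^{D-1}$. Here $N\Delta^{D-1}$ is written in the affine chart as $\{y\in\R^{D-1}_{\ge0}:\ell(y)\le N\}$. Both the ratio (inradius)/(diameter) and the property of containing a ball are invariant under translation and homothety. So it suffices to show that, up to translation, only finitely many distinct sets $P_m(\bm z)\cap N\Delta^{D-1}$ occur, independently of $N$ and $\bm z$. Each of these is a convex body with nonempty interior, so each has a positive inradius-to-diameter ratio. We then take $\alpha_D$ to be the minimum over this finite list.

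First I will classify the cells. In the chart, the translated cube $\bm z+[0,1]^{D-1}$ meets the orthant constraint $y\ge0$ trivially, since $\bm z\in\Z_{\ge0}^{D-1}$. The only active constraint is therefore $\ell(y)\le N$. The slab $P_m(\bm z)$ lies between the level sets $\ell=\ell(\bm z)+m-1$ and $\ell=\ell(\bm z)+m$, and both $\ell(\bm z)$ and $N$ are integers. Hence either $\ell(\bm z)+m\le N$, and the cell is the full polytope $P_m(\bm z)$, or $\ell(\bm z)+m-1\ge N$, and the intersection lies in $\{\ell=N\}$, which is null and is discarded by \eqref{eq:Qpartition}.

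It follows that every retained cell is an untruncated translate of one of the $D-1$ fixed polytopes
\[
P_m(\bm 0)=\{x\in[0,1]^{D-1}: m-1\le \ell(x)\le m\}.
\]
Because the truncation aligns exactly with the integer level sets, there are no boundary slivers. This is precisely the advantage over the Voronoi cells that the text mentions.

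Second, each $P_m(\bm 0)$ is a compact convex polytope: a hypersimplex slice of the cube. It has nonempty interior in $\R^{D-1}$, since it contains, for example, the point $\frac{m-1/2}{D-1}\mathbf 1$ together with a neighbourhood of it. Let $r_m>0$ denote its inradius and $d_m\le\sqrt{D-1}$ its diameter. Setting $\alpha_D=\min_m r_m/d_m>0$ and rescaling by $1/N$ gives the claim for all $N$. The one subtlety is that balls must be taken within the affine hull of $\Delta^{D-1}$, while the chart map is an affine isomorphism rather than an isometry. This only changes the constants by the fixed distortion factors of the linear chart map, which depend on $D$ alone. So $\alpha_D$ must absorb the ratio of its largest to smallest singular values.

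The main obstacle is the first step. We need to check carefully that the chart and the level-set hyperplanes are compatible with the simplex boundary $\ell=N$, so that no partially truncated cells of positive measure arise. If the chart instead introduces the constraint $x_D=N-\ell(y)\ge0$ in a different form, I will fall back on the weaker finiteness argument. The truncating hyperplane is a lattice hyperplane, so only finitely many combinatorial types of intersection exist. Discarding the measure-zero types, each remaining type is a fixed nondegenerate polytope up to translation, and the minimum of the ratios is again positive.
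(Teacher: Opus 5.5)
Your proof is correct and follows the paper's argument. The paper likewise notes that the orthant constraints hold automatically on cubes with $\bm z\in\Z_{\ge 0}^{D-1}$ and that the facet $\{\ell=N\}$ coincides with a slicing hyperplane, so every retained cell is an untruncated, rescaled integer translate of one of the $D-1$ prototypes $P_m(\mathbf 0)$; it then takes $\alpha_D=\min_m \rho_m/\delta_m$. Your explicit absorption of the chart map's singular-value ratio into $\alpha_D$ is what the paper handles in its separate chart-invariance remark, and your fallback argument is not needed.
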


\begin{proof}
There are only finitely many prototype shapes: the type-$m$ polytopes
$P_m(\mathbf{0})$ for $m=1,\ldots,D-1$. Each is a full-dimensional
convex polytope, so it contains a ball of some radius $\rho_m>0$ and has
diameter $\delta_m>0$; set
$\beta = \min_{1\leq m\leq D-1}\rho_m/\delta_m>0$. Every interior cell
of $\cW_N$ is an integer translate of some $P_m(\mathbf{0})$ rescaled by
$1/N$; translation and uniform scaling preserve the ratio
$(\text{inradius})/\diam$, so each such cell contains a ball of radius
$\beta\,\diam$.

In fact no cell of $\cW_N$ is truncated by $\partial\Delta^{D-1}$.
In the affine chart, the facet constraints $y_i \geq 0$ hold
automatically on every cube $\bm{z}+[0,1]^{D-1}$ with
$\bm{z}\in\Z_{\geq 0}^{D-1}$, and the remaining facet
$\{\ell(y)=N\}$ coincides with one of the slicing hyperplanes, since
$N$ is an integer. Consequently each cell of positive measure in
\eqref{eq:Qpartition} is a \emph{complete} type-$m$ polytope---an
integer translate of one of the $D-1$ prototypes rescaled by
$1/N$---and cells beyond the level $\ell=N$ intersect the simplex in a
$\lambda_{D-1}$-null face and are discarded. Hence $\alpha_D=\beta$
works for all $N$.
\end{proof}

\begin{lemma}[Intrinsic description and $S_D$-equivariance]
\label{lem:intrinsic}
In the full coordinates $u = Nx$ (so $\sum_{i=1}^{D} u_i = N$,
$u_i \geq 0$), the cells of $\cW_N$ are exactly
\begin{equation}\label{eq:intrinsic-cell}
  C(\bm{k}) \;=\; \bigl(\bm{k}+[0,1]^{D}\bigr)
  \cap \Bigl\{u : \textstyle\sum_{i=1}^{D} u_i = N\Bigr\},
  \qquad \bm{k}\in\Z_{\geq 0}^{D},\;
  m \coloneqq N - \textstyle\sum_i k_i \in \{1,\ldots,D-1\},
\end{equation}
and the chart cell $P_m(\bm{z})\cap\Delta^{D-1}$ with
$\bm{z}=(k_1,\ldots,k_{D-1})$ is the image of $C(\bm{k})$ under the
chart projection. Consequently, for every $\sigma\in S_D$,
\[
  \sigma\cdot C(\bm{k}) = C(\sigma\cdot\bm{k}),
\]
i.e.\ the partition is $S_D$-invariant with
$\mathrm{cell}(\sigma\cdot x)=\sigma\cdot\mathrm{cell}(x)$ up to
$\lambda_{D-1}$-null sets, and the representative centroids transform
equivariantly: $G_m(\sigma\cdot\bm{k}) = \sigma\cdot G_m(\bm{k})$.
\end{lemma}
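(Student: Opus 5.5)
The plan is to verify the intrinsic description by passing between the chart and full coordinates, and then to read off equivariance from the manifest symmetry of \eqref{eq:intrinsic-cell}.

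\textbf{Step 1: identify each chart cell with some $C(\bm{k})$.} Work in rescaled coordinates $u=Nx$. The chart projection drops the last coordinate: $y=(u_1,\ldots,u_{D-1})$, with $u_D=N-\ell(y)$ and $\ell(y)=\sum_{i<D}u_i$. Take a chart cell $P_m(\bm{z})$, where $\bm{z}\in\Z_{\geq0}^{D-1}$. It is cut out by the cube constraints $z_i\le y_i\le z_i+1$ for $i<D$, together with the slab constraint $\ell(\bm{z})+m-1\le \ell(y)\le \ell(\bm{z})+m$.

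Set $k_D=N-\ell(\bm{z})-m$ and $\bm{k}=(\bm{z},k_D)$. Substituting $u_D=N-\ell(y)$, the slab constraint becomes $k_D\le u_D\le k_D+1$. Hence the lifted cell is exactly $(\bm{k}+[0,1]^D)\cap\{\sum u_i=N\}$, and $N-\sum_i k_i=m$ by construction.

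\textbf{Step 2: positivity and the range of $m$.} We need $k_D\ge0$ precisely for the cells of positive measure. If $k_D<0$, then $k_D\le-1$, so the cell lies in $u_D\le0$. Intersecting with the simplex then leaves at most the face $u_D=0$, which is null; this matches the discarding argument in Lemma~\ref{lem:eccentricity}.

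Conversely, suppose $\bm{k}\in\Z^D_{\ge0}$ and $1\le m\le D-1$. The hyperplane $\sum u_i=N$ then passes through the interior of the cube $\bm{k}+[0,1]^D$, because on that cube $\sum u_i$ ranges over $(\sum k_i,\sum k_i+D)$ and $N$ lies strictly inside this interval. The section is therefore $(D-1)$-dimensional.

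The cases $m=0$ and $m=D$ give only single vertices, which are null. This yields the bijection between the cells of $\cW_N$ and the admissible $\bm{k}$.

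\textbf{Step 3: equivariance.} A permutation $\sigma$ acts on $\R^D$ by permuting coordinates. It maps the cube $\bm{k}+[0,1]^D$ to $\sigma\bm{k}+[0,1]^D$, since the unit cube is $S_D$-invariant. It also fixes the hyperplane $\sum u_i=N$ and preserves $\sum k_i$, hence preserves $m$. Therefore $\sigma C(\bm{k})=C(\sigma\bm{k})$, and $\sigma$ permutes the cells of $\cW_N$.

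Since $\sigma$ is an isometry preserving $\lambda_{D-1}$, the point-to-cell assignment satisfies $\mathrm{cell}(\sigma x)=\sigma\,\mathrm{cell}(x)$ except on the union of cell boundaries, which is null.

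For the centroids, $\sigma(\bm{k}+\tfrac{m}{D}\mathbf1)=\sigma\bm{k}+\tfrac mD\mathbf1$ because $\mathbf1$ is fixed by $\sigma$. This gives $G_m(\sigma\bm{k})=\sigma G_m(\bm{k})$. I would also check that $\bm{k}+\tfrac mD\mathbf1$ lies on the hyperplane $\sum u_i=N$: its coordinate sum is $\sum k_i+m=N$. It is the vertex centroid of $C(\bm{k})$, since the vertices of that section are $\bm{k}+\mathbf1_S$ with $|S|=m$, and averaging the indicators over these $\binom Dm$ sets gives $\tfrac mD\mathbf1$.

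\textbf{Expected obstacle.} The main obstacle is bookkeeping in Step 2. We must reconcile the chart indexing, where $\bm{z}$ and $m$ are free over $\Z_{\ge0}^{D-1}\times\{1,\ldots,D-1\}$, with the full-coordinate indexing, where $\bm{k}\ge0$ is constrained. In particular, we must confirm that discarding the null cells removes exactly the cases $k_D<0$, so that the two index sets coincide.
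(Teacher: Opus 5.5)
Your proposal is correct and is essentially the paper's proof. The key step is the same substitution $u_D=N-\ell(y)$, which turns the slab constraint of $P_m(\bm{z})$ into $k_D\le u_D\le k_D+1$ with $\ell(\bm{z})+m=N-k_D$, and equivariance follows from the $S_D$-invariance of $[0,1]^D$, of $\{\sum_i u_i=N\}$, and of $\mathbf{1}$. The differences are minor: the paper also uses the floor map $k_i=\lfloor u_i\rfloor$ to show intrinsically that the $C(\bm{k})$ tile the section up to null sets, and it phrases equivariance as $\lfloor(\sigma u)_i\rfloor=k_{\sigma^{-1}(i)}$. You instead do the index bookkeeping from the chart side by discarding $k_D<0$; there you could state explicitly that $u_D\ge k_D\ge 0$ rules out truncation by the simplex when $k_D\ge0$. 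You also go further than the paper by verifying that $G_m(\bm{k})$ is the vertex centroid of $C(\bm{k})$.
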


\begin{proof}
For $u$ with $\sum u_i = N$, set $k_i=\lfloor u_i\rfloor$ and
$t=u-\bm{k}\in[0,1)^D$. Then $\sum_i t_i = N-\sum_i k_i$ is an
integer in $\{0,\ldots,D-1\}$, and the value $0$ occurs only on the
$\lambda_{D-1}$-null set $u\in\Z^D$; so a.e.\ point determines a
unique pair $(\bm{k},m)$ with $m\in\{1,\ldots,D-1\}$ and lies in
$C(\bm{k})$. Each $C(\bm{k})$ is convex and compact, and $k_i\geq0$
ensures $C(\bm{k})\subseteq N\Delta^{D-1}$. For the chart
correspondence, dropping the $D$-th coordinate maps $C(\bm{k})$
affinely onto
$\{y\in\bm{z}+[0,1]^{D-1} : \ell(\bm{z})+m-1\leq\ell(y)\leq
\ell(\bm{z})+m\}$, since $u_D\in[k_D,k_D+1]$ is equivalent to
$\ell(y)=N-u_D\in[N-k_D-1,\,N-k_D]$ and
$\ell(\bm{z})+m = N-k_D$. Equivariance is immediate from
$\lfloor(\sigma u)_i\rfloor = k_{\sigma^{-1}(i)}$, and the centroid
identity is the $S_D$-equivariance of
$G_m(\bm{k})=\bm{k}+\frac{m}{D}\mathbf{1}$.
\end{proof}

\begin{remark}[Chart invariance]\label{rem:chart}
The chart projection is an affine bijection whose condition number
depends only on $D$; hence axioms
\ref{ax:convex}--\ref{ax:ecc} hold in the chart iff they hold on the
hyperplane $\{\sum_i x_i=1\}$ (with $D$-dependent constants), and the
chart Lebesgue measure differs from the surface measure
$\lambda_{D-1}$ by the constant Jacobian $\sqrt{D}$. All statements
below are therefore chart-invariant.
\end{remark}

\begin{remark}[Achieving $2^n$ cells]\label{rem:2n}
The cell count of the full partition is fixed by $(N,D)$, whereas the
encoding requires exactly $2^n$ cells. Three mechanisms reconcile the
two, and the theory covers each. \emph{(a) Merging:} unselected cells
are merged with adjacent selected ones; merged cells are connected but
non-convex, with diameter inflated by at most a $D$-dependent factor
and inradius no smaller than that of any constituent, so
\ref{ax:diam}--\ref{ax:ecc} persist and convexity is not needed
(cf.\ the remark after Definition~\ref{def:admissible}).
\emph{(b) Voronoi of the selected representatives:} since the
$L^1$/TV theory of Theorem~\ref{thm:L1}(ii)--(iii) requires neither convexity
nor the regularity \ref{ax:ecc}, boundary-truncated Voronoi cells are
admissible for all distributional conclusions. \emph{(c) Masking:}
retain all cells and accept
$n'=\lceil\log_2 |\cW_N|\rceil$ with masked states. The arguments
below depend only on \ref{ax:cover}--\ref{ax:diam} (plus
\ref{ax:ecc} where stated), not on the particular mechanism.
\end{remark}

We now make the main text's point--cell correspondence explicit.

\begin{definition}[Voronoi partition of the simplex]\label{def:voronoi}
Let $\mathcal{Q}_N\subset\Delta^{D-1}$ be the augmented grid, whose
elements we call \emph{sites}, and equip $\Delta^{D-1}$ with the
Euclidean metric inherited from $\mathbb{R}^{D}$. For a site
$\mathbf{x}\in\mathcal{Q}_N$, the associated \emph{Voronoi cell} is the
set of points of the simplex no farther from $\mathbf{x}$ than from any
other site,
\begin{equation}\label{eq:voronoi_cell}
    V(\mathbf{x})
    \;=\;
    \bigl\{\, \mathbf{s}\in\Delta^{D-1}\;:\;
    \lVert \mathbf{s}-\mathbf{x}\rVert_2
    \le
    \lVert \mathbf{s}-\mathbf{y}\rVert_2
    \quad\text{for all } \mathbf{y}\in\mathcal{Q}_N \,\bigr\}.
\end{equation}
\end{definition}

Expanding the squared norms in \eqref{eq:voronoi_cell}, the condition
$\lVert\mathbf{s}-\mathbf{x}\rVert_2\le\lVert\mathbf{s}-\mathbf{y}\rVert_2$
is equivalent to the affine inequality
$\langle \mathbf{s},\,\mathbf{y}-\mathbf{x}\rangle
\le \tfrac12\bigl(\lVert\mathbf{y}\rVert_2^{2}
-\lVert\mathbf{x}\rVert_2^{2}\bigr)$,
so each cell is the intersection of $\Delta^{D-1}$ with finitely many
closed half-spaces.

\begin{lemma}[Admissibility of the Voronoi partition]
\label{lem:voronoi_props}
The cells \eqref{eq:voronoi_cell} satisfy:
\begin{enumerate}\itemsep2pt
\item[(i)] \emph{(Convexity and compactness.)} Each $V(\mathbf{x})$ is
a convex polytope, being the intersection of the compact convex set
$\Delta^{D-1}$ with $|\mathcal{Q}_N|-1$ closed half-spaces.
\item[(ii)] \emph{(Covering.)}
$\bigcup_{\mathbf{x}\in\mathcal{Q}_N}V(\mathbf{x})=\Delta^{D-1}$, since
every $\mathbf{s}\in\Delta^{D-1}$ lies in $V(\mathbf{x})$ for any
nearest site $\mathbf{x}$, which exists because $\mathcal{Q}_N$ is
finite and non-empty.
\item[(iii)] \emph{(Essential disjointness and positive measure.)} For
$\mathbf{x}\neq\mathbf{y}$ the intersection
$V(\mathbf{x})\cap V(\mathbf{y})$ lies in the bisector hyperplane
$\{\mathbf{s}:\langle\mathbf{s},\mathbf{y}-\mathbf{x}\rangle
=\tfrac12(\lVert\mathbf{y}\rVert_2^2-\lVert\mathbf{x}\rVert_2^2)\}$.
Since $\mathbf{x},\mathbf{y}\in\Delta^{D-1}$ implies
$\langle\mathbf{1},\mathbf{y}-\mathbf{x}\rangle=0$ with
$\mathbf{y}-\mathbf{x}\neq 0$, the bisector is transversal to the
affine hull $\{\sum_i s_i=1\}$ and meets it in a $(D-2)$-dimensional
affine subspace, which is $\lambda_{D-1}$-null. Moreover
$\lambda_{D-1}\bigl(V(\mathbf{x})\bigr)>0$, since
$B(\mathbf{x},\rho)\cap\Delta^{D-1}\subseteq V(\mathbf{x})$ for
$\rho<\tfrac12\min_{\mathbf{y}\neq\mathbf{x}}
\lVert\mathbf{x}-\mathbf{y}\rVert_2$.
\item[(iv)] \emph{(Mesh size.)}
$\operatorname{diam}V(\mathbf{x})\le 2\sqrt{D}/N$ for every site, so
all cells have diameter $O(1/N)$ at fixed $D$.
\end{enumerate}
Consequently the family $\{V(\mathbf{x})\}$ satisfies
\ref{ax:convex}--\ref{ax:diam} of Definition~\ref{def:admissible} with
$C=2\sqrt{D}$ and $h=1/N$---all that the $L^1$/TV theory of
Theorem~\ref{thm:L1}(ii)--(iii) requires.
\end{lemma}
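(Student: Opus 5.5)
The plan is to take parts (i)--(iii) essentially as they are already argued inside the statement, tidying up only the measure-zero claims, and to put the real work into part (iv). For (i), the bisector computation before the lemma shows that each condition $\lVert\mathbf{s}-\mathbf{x}\rVert_2\le\lVert\mathbf{s}-\mathbf{y}\rVert_2$ defines a closed affine half-space. Intersecting finitely many of these with the compact convex polytope $\Delta^{D-1}$ gives a compact convex polytope. For (ii), finiteness of $\mathcal{Q}_N$ guarantees that every $\mathbf{s}$ has a nearest site. For (iii), two facts suffice. First, $\mathbf{y}-\mathbf{x}$ is a nonzero vector orthogonal to $\mathbf{1}$, so it is not normal to the affine hull of $\Delta^{D-1}$. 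The bisector therefore cuts that hull in a $(D-2)$-dimensional affine set, which is $\lambda_{D-1}$-null. Second, the small ball $B(\mathbf{x},\rho)\cap\Delta^{D-1}$ has positive $\lambda_{D-1}$-measure, because $\mathbf{x}\in\Delta^{D-1}$ and any ball centred at a point of a full-dimensional convex set meets it in positive measure.

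For (iv), I will bound the covering radius of the site set instead of working with cells directly. The key observation is that if $\mathbf{s}\in V(\mathbf{x})$, then $\lVert\mathbf{s}-\mathbf{x}\rVert_2=\min_{\mathbf{y}\in\mathcal{Q}_N}\lVert\mathbf{s}-\mathbf{y}\rVert_2=\operatorname{dist}(\mathbf{s},\mathcal{Q}_N)$. Hence if $\operatorname{dist}(\mathbf{s},\mathcal{Q}_N)\le\sqrt{D}/N$ for every $\mathbf{s}\in\Delta^{D-1}$, then $V(\mathbf{x})\subseteq B(\mathbf{x},\sqrt{D}/N)$, and the triangle inequality gives $\operatorname{diam}V(\mathbf{x})\le 2\sqrt{D}/N$.

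To get the covering bound, I use only the natural-lattice sites: the augmented grid contains $\frac1N\mathcal{X}_N$, by \eqref{eq:augmented_grid} and Definition~\ref{def:augmented_grid}. Given $\mathbf{s}\in\Delta^{D-1}$, put $\mathbf{u}=N\mathbf{s}$, so $\sum_i u_i=N$, and let $k_i=\lfloor u_i\rfloor$. The deficit $m=N-\sum_i k_i$ is an integer in $\{0,\dots,D-1\}$. Adding $1$ to $m$ distinct coordinates (say those with the largest fractional parts) produces $\mathbf{k}'\in\mathcal{X}_N$ with $|u_i-k'_i|<1$ for every $i$. Then $\lVert\mathbf{u}-\mathbf{k}'\rVert_2\le\sqrt{D}$, and therefore $\lVert\mathbf{s}-\mathbf{k}'/N\rVert_2\le\sqrt{D}/N$.

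The main obstacle is making sure this rounding really lands in $\mathcal{X}_N$, with nonnegative entries summing to $N$. Nonnegativity holds because the floors are already nonnegative and we only add, and the sum is exactly $N$ by construction. A second point to check is that the augmentation, or any merging mechanism from Remark~\ref{rem:2n}, does not remove natural-lattice sites, which is what allows the covering argument to use them. With (i)--(iv) in hand, axioms \ref{ax:convex}--\ref{ax:diam} hold with $C=2\sqrt{D}$ and $h=1/N$.
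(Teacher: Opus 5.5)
Your proposal is correct and follows essentially the same route as the paper: (i)--(iii) are read off from the statement, and (iv) bounds the covering radius of $\frac1N\mathcal{X}_N\subseteq\mathcal{Q}_N$ by flooring $N\mathbf{s}$ and spreading the deficit over the coordinates with the largest fractional parts, then uses $V(\mathbf{x})\subseteq B(\mathbf{x},\sqrt{D}/N)$ and the triangle inequality. One cosmetic point: the strict bound $|u_i-k'_i|<1$ holds only with the largest-fractional-parts choice, since an arbitrary choice can give equality when a fractional part is $0$; the non-strict bound $\le 1$, which the paper uses, is all you need.
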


\begin{proof}[Proof of (iv)]
Since $\mathcal{Q}_N\supseteq\tfrac1N\mathcal{X}_N$, it suffices to
bound the covering radius of the natural lattice. Given
$\mathbf{s}\in\Delta^{D-1}$, set $k_i=\lfloor N s_i\rfloor$ and
distribute the deficit $N-\sum_i k_i\in\{0,\dots,D-1\}$ among the
coordinates with the largest fractional parts, one unit each; the
resulting $\mathbf{k}$ lies in $\mathcal{X}_N$ and satisfies
$|s_i-k_i/N|\le 1/N$ coordinatewise, hence
$\lVert\mathbf{s}-\mathbf{k}/N\rVert_2\le\sqrt{D}/N$. Every point of
the simplex is therefore within $\sqrt{D}/N$ of a site; a Voronoi cell
is contained in the ball of the covering radius about its site, giving
the stated diameter bound. Augmenting the site set with centroids
(Definition~\ref{def:augmented_grid}) can only decrease the covering
radius, and the containment applies to every site of the augmented
set.
\end{proof}

By (ii) and (iii) the cells cover the simplex and overlap only on a
$\lambda_{D-1}$-null set, so they form a partition up to null sets; a
genuine partition is obtained by assigning each tie to a single site
under any fixed measurable rule, and all statements below are
insensitive to this choice. We write
$\mathrm{cell}(\mathbf{x})=V(\mathbf{x})$, matching the correspondence
of Eq.~\eqref{eq:target_born_correspondence}. The partition also
inherits the symmetry of the grid: coordinate permutations are
Euclidean isometries preserving $\Delta^{D-1}$, and $\mathcal{Q}_N$ is
closed under them (Definition~\ref{def:augmented_grid}), so
$\mathrm{cell}(\sigma\cdot\mathbf{x})=\sigma\cdot\mathrm{cell}(\mathbf{x})$
up to null sets (Lemma~\ref{lem:voronoi-equiv}); since Lebesgue measure
is isometry-invariant, all cells within a common orbit have equal
volume. This is what makes the uniform spreading in the orbit map the
measure-theoretically correct lift: a density that is constant on an
orbit induces equal mass on each of its cells, so distributing an orbit
weight $q_j$ evenly over its $s_j$ members reproduces the underlying
measure exactly rather than merely approximately
(Theorem~\ref{thm:orbit_symmetric}). The same argument applies
verbatim to any subgroup $G\le S_D$.

\begin{remark}[The Voronoi partition of the augmented grid]
\label{rem:voronoi-applicable}
The main-text discretization is the Euclidean Voronoi partition of the
augmented grid $\mathcal{Q}_N$ (Definition~\ref{def:augmented_grid};
cf.\ Lemma~\ref{lem:voronoi-equiv}), whose cell count equals
$|\mathcal{Q}_N| = 2^n$ by construction
(Definition~\ref{def:encoding-constraints}). These Voronoi cells
satisfy \ref{ax:convex}--\ref{ax:diam} with $C=2\sqrt{D}$
(Definition~\ref{def:voronoi}, Lemma~\ref{lem:voronoi_props}); cells
truncated by $\partial\Delta^{D-1}$ may, however, fail the uniform
eccentricity \ref{ax:ecc}. Since
Theorem~\ref{thm:L1}(ii)--(iii), Theorem~\ref{thm:uniqueness}, and
Theorem~\ref{thm:qcbm} use only \ref{ax:convex}--\ref{ax:diam}, all
distributional conclusions of this section apply verbatim to the
Voronoi discretization of the main text; the level-set partition
$\cW_N$ additionally satisfies \ref{ax:ecc} and therefore also yields
the pointwise statement of Theorem~\ref{thm:L1}(i).
\end{remark}

\subsection{QCBM Histogram and the Bijection}\label{sec:qcbm}

\subsubsection{Bijection from qubits to partition cells}
\label{subsec:bijection}

A QCBM on $n$ qubits produces
\begin{equation}\label{eq:qcbm-output}
  q_k(\bm\theta) = |\langle k|U(\bm\theta)|0\rangle|^2,
  \quad k=0,\ldots,2^n-1,
  \quad \textstyle\sum_{k} q_k = 1.
\end{equation}
When the cell count is $2^n$ (Remark~\ref{rem:2n}) we relabel the
cells by basis indices, writing the partition as
$\{W_0,\ldots,W_{2^n-1}\}$, so that basis state $|k\rangle$
corresponds to exactly one cell $W_k$---no surplus states, no penalty
term. For the Voronoi realisation of
Remark~\ref{rem:voronoi-applicable}, this labelling is the one induced
by the encoding bijection $\phi$ of
Definition~\ref{def:encoding-constraints}, namely
$W_k = \mathrm{cell}(\phi^{-1}(k))$, matching the correspondence
in~\eqref{eq:target_born_correspondence}. The objective becomes
\begin{equation}\label{eq:new-loss}
  \mathcal{L}(\bm\theta)
  = \mathsf{D}\!\bigl(\{q_k(\bm\theta)\},\{\mu(W_k)\}\bigr),
\end{equation}
with $\mathsf{D}$ any divergence (MMD, KL, total variation).

\subsubsection{Histograms as simple functions}\label{subsec:simple}

The QCBM output and the target induce the piecewise-constant densities
\begin{equation}\label{eq:qcbm-density}
  f_h^{\mathrm{QCBM}}(x)
  = \sum_{k}
    \frac{q_k(\bm\theta)}{\lambda_{D-1}(W_k)}\,
    \ind_{W_k}(x),
  \qquad
  f_h(x)
  = \sum_{k}
    \frac{\mu(W_k)}{\lambda_{D-1}(W_k)}\,
    \ind_{W_k}(x),
\end{equation}
where $f_h=\mathbb{E}[f\mid\cW_N]$ is the conditional expectation of $f$
on the partition $\sigma$-algebra, with associated measures
$\tilde\mu_h = f_h\,d\lambda_{D-1}$ and
$\tilde\mu_h^{\mathrm{QCBM}} = f_h^{\mathrm{QCBM}}\,d\lambda_{D-1}$.
Both densities are simple functions; when the
QCBM matches the histogram ($q_k=\mu(W_k)$ for all $k$),
$f_h^{\mathrm{QCBM}}=f_h$.

\subsection{Core Convergence Theorems}\label{sec:convergence}

Throughout, $\{W_j^{(h)}\}_{j=1}^{M(h)}$ is a sequence of admissible
convex partitions (Definition~\ref{def:admissible}) with $h=h_n\downarrow 0$; in
particular $\cW_N$ with $h=1/N$ qualifies.

\subsubsection{Inner--outer measure approximation}\label{subsec:inner-outer}

\begin{definition}\label{def:three-classes}
For a Borel set $A\subseteq\Delta^{D-1}$ and resolution $h$, split
$\{1,\ldots,M(h)\}$ into interior
$\cI_h(A)=\{j:W_j^{(h)}\subseteq A\}$, exterior
$\cO_h(A)=\{j:W_j^{(h)}\cap A=\emptyset\}$, and boundary
$\cB_h(A)=\{j:W_j^{(h)}\cap A\neq\emptyset,\ W_j^{(h)}\not\subseteq A\}$
cells, with inner/outer approximations
$\mu_h^{-}(A)=\sum_{j\in\cI_h}\mu(W_j^{(h)})$ and
$\mu_h^{+}(A)=\sum_{j\in\cI_h\cup\cB_h}\mu(W_j^{(h)})$.
\end{definition}

\begin{theorem}[Inner--outer convergence]\label{thm:inner-outer}
\leavevmode
\begin{enumerate}[label=(\alph*)]
  \item $\mu_h^{-}(A)\leq\mu(A)\leq\mu_h^{+}(A)$ for every Borel $A$.
  \item If $\mu(\partial A)=0$, then $\mu_h^{\pm}(A)\to\mu(A)$ as
    $h\to0$.
\end{enumerate}
(Uses only \ref{ax:convex}--\ref{ax:diam}.)
\end{theorem}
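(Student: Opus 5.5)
Part (a) is a set-inclusion argument. By \ref{ax:cover}, $A\subseteq\bigcup_{j\in\cI_h\cup\cB_h}W_j^{(h)}$, since every point of $A$ lies in some cell and that cell meets $A$, so it is interior or boundary. Also $\bigcup_{j\in\cI_h}W_j^{(h)}\subseteq A$. Monotonicity of $\mu$ gives $\mu\bigl(\bigcup_{\cI_h}W_j\bigr)\le\mu(A)\le\mu\bigl(\bigcup_{\cI_h\cup\cB_h}W_j\bigr)$. The pairwise overlaps are $\lambda_{D-1}$-null and $\mu\ll\lambda_{D-1}$, so the measure of each union equals the corresponding sum of cell masses. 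This yields $\mu_h^-(A)\le\mu(A)\le\mu_h^+(A)$.

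For (b), since $\mu_h^-\le\mu(A)\le\mu_h^+$, it suffices to prove $\mu_h^+(A)-\mu_h^-(A)=\sum_{j\in\cB_h}\mu(W_j^{(h)})\to0$. The key geometric claim is that every boundary cell lies inside a thin neighbourhood of $\partial A$. A boundary cell $W_j$ contains a point of $A$ and a point of $\Delta^{D-1}\setminus A$. Because $W_j$ is convex, hence connected (this is the only place convexity is used), $W_j$ must meet $\partial A$, where the boundary is taken relative to $\Delta^{D-1}$. Otherwise $W_j=(W_j\cap\operatorname{int}A)\cup(W_j\cap\operatorname{int}(A^c))$ would be a separation of a connected set into two nonempty relatively open pieces. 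Combined with \ref{ax:diam}, this gives
\[
\bigcup_{j\in\cB_h}W_j^{(h)}\subseteq(\partial A)^{Ch}:=\{x\in\Delta^{D-1}:\dist(x,\partial A)\le Ch\}.
\]
Using the null-overlap argument from (a) again, $\sum_{\cB_h}\mu(W_j)\le\mu\bigl((\partial A)^{Ch}\bigr)$.

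Finally, the sets $(\partial A)^{Ch}$ decrease as $h\downarrow0$. Since $\partial A$ is closed, their intersection is $\partial A$, because $\dist(x,\partial A)=0$ forces $x\in\partial A$. Continuity from above of the finite measure $\mu$ then gives $\mu\bigl((\partial A)^{Ch}\bigr)\to\mu(\partial A)=0$.

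The main obstacle is the connectedness step. Care is needed to work with the boundary of $A$ relative to $\Delta^{D-1}$, and to handle cells that touch $A$ only at isolated or lower-dimensional points. I would handle this by the clean topological dichotomy on the connected set $W_j$, which requires no regularity of $A$. I would also note that the whole argument uses only \ref{ax:convex}--\ref{ax:diam}, as the theorem asserts.
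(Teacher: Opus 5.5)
Your proposal is correct and follows the paper's proof essentially step for step: for (a), the two set inclusions together with the fact that cells overlap only on null sets, so sums of cell masses equal measures of unions; for (b), connectedness forces each boundary cell to meet $\partial A$, the diameter bound places it in the closed $Ch$-tube around $\partial A$, and continuity from above of $\mu$ on these shrinking tubes finishes the argument. Your explicit separation argument, and your choice to take $\partial A$ relative to $\Delta^{D-1}$, spell out details that the paper leaves implicit.
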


\begin{proof}
(a) $A_h^{-}=\bigcup_{j\in\cI_h}W_j^{(h)}\subseteq A$ gives
$\mu_h^{-}(A)=\mu(A_h^{-})\leq\mu(A)$ (cell overlaps are
$\lambda_{D-1}$-null, hence $\mu$-null by $\mu\ll\lambda_{D-1}$, so
the sum equals the measure of the union); every $x\in A$ lies in some cell
meeting $A$, so $A\subseteq A_h^{+}:=\bigcup_{j\in\cI_h\cup\cB_h}W_j^{(h)}$
and $\mu(A)\leq\mu_h^{+}(A)$.

(b) Each boundary cell meets $A$ and $A^c$; being convex (connected) it
meets $\partial A$, so it lies in the tube
$(\partial A)_{Ch}=\{x:\dist(x,\partial A)\leq Ch\}$ (by \ref{ax:diam}).
Hence
$0\leq\mu_h^{+}(A)-\mu_h^{-}(A)\leq\mu((\partial A)_{Ch})$. As
$h\to0$, $(\partial A)_{Ch}\downarrow\overline{\partial A}=\partial A$,
so by continuity of measure from above the right side tends to
$\mu(\partial A)=0$. The squeeze theorem gives (b).
\end{proof}

\subsubsection{Simple-function approximation of densities}\label{subsec:L1}

\begin{theorem}[$L^1$ convergence]\label{thm:L1}
Let $\mu=f\,d\lambda_{D-1}$, $f\geq 0$, $\int f=1$, and let
$h_n\downarrow0$. With $f_{h_n}=\mathbb{E}[f\mid\cP_{h_n}]$ as in
\eqref{eq:qcbm-density},
\begin{enumerate}[label=(\roman*)]
  \item under \ref{ax:convex}--\ref{ax:ecc},
    $f_{h_n}\to f$ $\lambda_{D-1}$-a.e.;
  \item under \ref{ax:convex}--\ref{ax:diam} alone,
    $\|f_{h_n}-f\|_{L^1}\to0$;
  \item $d_{\TV}(\tilde\mu_{h_n},\mu)=\tfrac12\|f_{h_n}-f\|_{L^1}\to0$,
    where $\tilde\mu_{h_n}=f_{h_n}\,d\lambda_{D-1}$.
\end{enumerate}
Statement (ii) holds for every $g\in L^1(\Delta^{D-1})$ in place of
$f$, with no positivity or normalisation assumption.
\end{theorem}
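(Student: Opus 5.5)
The plan is to prove (i) by Lebesgue differentiation, and to prove (ii) by a density argument built on the contraction property of the conditional expectation $g\mapsto g_h=\mathbb{E}[g\mid\cP_h]$; (iii) then follows at once.

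\textbf{Part (i).} For a.e.\ $x$, let $W^{(h)}(x)$ be the cell containing $x$. It has diameter at most $Ch$, so $W^{(h)}(x)\subseteq B(x,Ch)$. By \ref{ax:ecc} it contains a ball of radius $\alpha\,\diam W^{(h)}(x)$. Hence $\lambda_{D-1}(W^{(h)}(x))\ge c_D\alpha^{D-1}(\diam W^{(h)}(x))^{D-1}\ge c'\lambda_{D-1}(B(x,\diam W^{(h)}(x)))$. The family $\{W^{(h)}(x)\}$ therefore shrinks regularly to $x$ in the sense of Stein--Rudin, and the differentiation theorem gives $f_h(x)=\lambda(W)^{-1}\int_W f\to f(x)$ at every Lebesgue point. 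One subtlety: the containing ball need not be centred at $x$. This is handled by using $B(x,\diam W)\supseteq W$ together with the volume comparison above, which is exactly the regular-family condition.

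\textbf{Part (ii).} I will not use \ref{ax:ecc} here. The first step is that $g\mapsto g_h$ is linear with $\|g_h\|_{L^1}\le\|g\|_{L^1}$: on each cell, $\int_{W_j}|g_h|=|\int_{W_j}g|\le\int_{W_j}|g|$, and the overlaps are null by \ref{ax:cover}. Given $\varepsilon>0$, I choose a continuous $\varphi$ on the compact set $\Delta^{D-1}$ with $\|g-\varphi\|_{L^1}<\varepsilon$, using density of $C(\Delta^{D-1})$ in $L^1$. Since $\varphi$ is uniformly continuous with modulus $\omega$, and every cell has diameter at most $Ch$ by \ref{ax:diam}, we get $|\varphi_h(x)-\varphi(x)|\le\omega(Ch)$ for a.e.\ $x$, so $\|\varphi_h-\varphi\|_{L^1}\le\omega(Ch)\lambda_{D-1}(\Delta^{D-1})\to0$. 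Combining,
\[
\|g_h-g\|_{L^1}\le\|(g-\varphi)_h\|_{L^1}+\|\varphi_h-\varphi\|_{L^1}+\|\varphi-g\|_{L^1}\le2\varepsilon+o(1).
\]
This argument uses neither positivity nor normalisation, which gives the closing statement of the theorem.

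\textbf{Part (iii).} Both $\tilde\mu_h$ and $\mu$ have densities with respect to $\lambda_{D-1}$, so Definition~\ref{def:classical_dist} gives $d_{\TV}=\tfrac12\|f_h-f\|_{L^1}$, and (ii) finishes the proof.

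The main obstacle is the regularity bookkeeping in (i): making the comparison between cell volume and the volume of a ball centred at $x$ fully rigorous. The only source of this comparability is the inscribed-ball axiom.
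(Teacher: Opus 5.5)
Your proposal is correct and follows the paper's proof essentially step for step: (ii) via the $L^1$-contraction of cell averaging plus approximation by a uniformly continuous function, using only the covering and diameter axioms; (i) via the inscribed ball from the eccentricity axiom and the circumscribed ball $B(x,\operatorname{diam} W)$, followed by Lebesgue differentiation; and (iii) from the half-$L^1$ formula for total variation. The one step the paper makes explicit that you leave implicit is integral preservation, $\int_{W_j} f_h\,d\lambda_{D-1}=\mu(W_j)$ and hence $\int f_h\,d\lambda_{D-1}=1$, which certifies that $\tilde\mu_h$ is a probability measure so the total-variation definition applies in (iii).
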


\begin{proof}
\emph{Step 1 (integral preservation).} By construction
$\int_{W_j}f_{h_n}=\int_{W_j}f$, so $\int f_{h_n}=1$.

\emph{Step 2 ($L^1$ convergence; only
\ref{ax:convex}--\ref{ax:diam}).}
The map $g\mapsto\mathbb{E}[g\mid\cP_h]$ is linear and an
$L^1$-contraction: on each cell,
$|\mathbb{E}[g\mid\cP_h]|\leq\mathbb{E}[|g|\mid\cP_h]$ by the
triangle inequality for integrals, and integrating over the cell and
summing gives
$\|\mathbb{E}[g\mid\cP_h]\|_{L^1}\leq\|g\|_{L^1}$.
If $g$ is continuous (hence uniformly continuous on the compact
$\Delta^{D-1}$, with modulus $\omega_g$), then for a.e.\ $x$ in a
cell $W$,
\[
  \bigl|\mathbb{E}[g\mid\cP_h](x)-g(x)\bigr|
  = \Bigl|\frac{1}{\lambda_{D-1}(W)}\int_W (g-g(x))\Bigr|
  \leq \omega_g(\diam W) \leq \omega_g(Ch)
  \;\xrightarrow[h\to0]{}\;0
\]
uniformly, hence in $L^1$ ($\lambda_{D-1}(\Delta^{D-1})<\infty$).
For general $g\in L^1$ and $\varepsilon>0$, choose continuous
$g_\varepsilon$ with $\|g-g_\varepsilon\|_{L^1}<\varepsilon$
(density of $C(\Delta^{D-1})$ in $L^1$); then
\[
  \|\mathbb{E}[g\mid\cP_{h_n}]-g\|_{L^1}
  \leq 2\|g-g_\varepsilon\|_{L^1}
  + \|\mathbb{E}[g_\varepsilon\mid\cP_{h_n}]-g_\varepsilon\|_{L^1}
  \leq 2\varepsilon + o(1).
\]
Since $\varepsilon$ is arbitrary, (ii) follows---for signed $g$ as
well, proving the final claim of the statement. Note that neither
convexity nor \ref{ax:ecc} is used: only the null-overlap covering
\ref{ax:cover} and the diameter bound \ref{ax:diam}.

\emph{Step 3 (a.e.\ convergence; uses \ref{ax:ecc}).} Fix
$x\in\operatorname{int}(\Delta^{D-1})$ a Lebesgue point of $f$ (a.e.\
$x$ qualifies). Let $W^{(n)}(x)$ be the cell of $\cP_{h_n}$ containing
$x$. For large $n$, $W^{(n)}(x)$ is an interior cell, and
\[
  |f_{h_n}(x)-f(x)|
  = \Bigl|\frac{1}{\lambda_{D-1}(W^{(n)})}
    \int_{W^{(n)}}(f-f(x))\,d\lambda_{D-1}\Bigr|
  \leq \frac{1}{\lambda_{D-1}(W^{(n)})}
    \int_{W^{(n)}}|f-f(x)|\,d\lambda_{D-1}.
\]
By \ref{ax:ecc}, $W^{(n)}(x)$ contains a ball
$B(x',\alpha_D\diam W^{(n)})$ and is contained in $B(x,\diam W^{(n)})$,
so
$\lambda_{D-1}(W^{(n)})\geq\alpha_D^{\,D-1}\,\omega_{D-1}
(\diam W^{(n)})^{D-1}\geq c_D\,\lambda_{D-1}(B(x,\diam W^{(n)}))$
for a constant $c_D>0$, where $\omega_{D-1}$ denotes the
$\lambda_{D-1}$-volume of the unit ball of the affine hull. Hence
\[
  |f_{h_n}(x)-f(x)|
  \leq \frac{1}{c_D}\cdot
    \frac{1}{\lambda_{D-1}(B(x,\diam W^{(n)}))}
    \int_{B(x,\diam W^{(n)})}|f-f(x)|\,d\lambda_{D-1}.
\]
Since $\diam W^{(n)}\leq Ch_n\to0$, the right side $\to0$ at every
Lebesgue point of $f$ (Lebesgue differentiation theorem). Thus
$f_{h_n}\to f$ a.e.\ on the interior, and
$\lambda_{D-1}(\partial\Delta^{D-1})=0$.

\emph{Step 4 (total variation).} For absolutely continuous measures
$d_{\TV}(\tilde\mu_{h_n},\mu)=\tfrac12\|f_{h_n}-f\|_{L^1}$, giving
(iii).
\end{proof}

\subsubsection{Uniqueness via cell histograms}\label{subsec:uniqueness}

\begin{theorem}[Identifiability]\label{thm:uniqueness}
Let $\mu,\nu$ be Borel probability measures on $\Delta^{D-1}$ with
$\mu,\nu\ll\lambda_{D-1}$. If there is $h_n\downarrow0$ with
$\mu(W_j^{(h_n)})=\nu(W_j^{(h_n)})$ for all $n$ and all $j$, then
$\mu=\nu$.
\end{theorem}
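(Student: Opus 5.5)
The plan is to combine the $L^1$ convergence of Theorem~\ref{thm:L1}(ii) with the observation that the conditional-expectation histogram depends on the underlying measure only through its cell masses. Write $f=d\mu/d\lambda_{D-1}$ and $g=d\nu/d\lambda_{D-1}$; both exist and lie in $L^1(\Delta^{D-1})$ by the Radon--Nikodym theorem, since $\mu,\nu\ll\lambda_{D-1}$.

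For each $n$, form the simple functions $f_{h_n}=\mathbb{E}[f\mid\mathcal{P}_{h_n}]$ and $g_{h_n}=\mathbb{E}[g\mid\mathcal{P}_{h_n}]$ as in \eqref{eq:qcbm-density}. On the cell $W_j^{(h_n)}$ they take the values $\mu(W_j^{(h_n)})/\lambda_{D-1}(W_j^{(h_n)})$ and $\nu(W_j^{(h_n)})/\lambda_{D-1}(W_j^{(h_n)})$. These are well defined because $\lambda_{D-1}(W_j)>0$ by \ref{ax:convex}. By hypothesis the two values agree on every cell. Cell overlaps are $\lambda_{D-1}$-null by \ref{ax:cover}, so the tie-breaking convention is irrelevant, and hence $f_{h_n}=g_{h_n}$ $\lambda_{D-1}$-a.e. for every $n$.

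Next apply Theorem~\ref{thm:L1}(ii) to both $f$ and $g$; this step uses only \ref{ax:convex}--\ref{ax:diam}, so it also covers the Voronoi realisation. The triangle inequality then gives
\[
\|f-g\|_{L^1}\le \|f-f_{h_n}\|_{L^1}+\|f_{h_n}-g_{h_n}\|_{L^1}+\|g_{h_n}-g\|_{L^1}.
\]
The middle term is $0$ and the outer terms tend to $0$, so $\|f-g\|_{L^1}=0$. Equivalently, one can apply the signed version of Theorem~\ref{thm:L1}(ii) directly to $f-g$, whose conditional expectations vanish identically. It follows that $f=g$ a.e., and hence $\mu(A)=\int_A f=\int_A g=\nu(A)$ for every Borel $A$.

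I expect no step to be a serious obstacle. The only point needing care is that the conditional expectation is determined by the cell masses alone, which requires positive cell measure and null overlaps. As an alternative route, Theorem~\ref{thm:inner-outer} shows that $\mu$ and $\nu$ agree on all sets $A$ with $\mu(\partial A)=\nu(\partial A)=0$. Such sets include all open balls except for countably many radii, and they form a $\pi$-system generating the Borel $\sigma$-algebra, so Dynkin's $\pi$--$\lambda$ theorem would also give $\mu=\nu$. The $L^1$ argument is shorter, so I would use it.
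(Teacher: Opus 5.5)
Your proof is correct and is essentially the paper's argument: the paper sets $g=d\mu/d\lambda_{D-1}-d\nu/d\lambda_{D-1}$, observes that $\mathbb{E}[g\mid\mathcal{P}_{h_n}]=0$ identically, and applies the signed form of Theorem~\ref{thm:L1}(ii) to get $\|g\|_{L^1}=0$, which is exactly the variant you mention. Your main triangle-inequality route differs only cosmetically, since it needs Theorem~\ref{thm:L1}(ii) only for nonnegative normalised densities rather than its signed extension.
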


\begin{proof}
Let $g=\frac{d\mu}{d\lambda_{D-1}}-\frac{d\nu}{d\lambda_{D-1}}\in
L^1(\Delta^{D-1})$. The hypothesis says $\int_{W}g\,d\lambda_{D-1}=0$
for every cell $W$ of every $\cP_{h_n}$, hence
$\mathbb{E}[g\mid\cP_{h_n}]=0$ identically. By the signed-$L^1$ form
of Theorem~\ref{thm:L1}(ii) (Step~2 of its proof, which uses neither
positivity nor normalisation),
$\mathbb{E}[g\mid\cP_{h_n}]\to g$ in $L^1$. The left side is $0$ for
all $n$, so $\|g\|_{L^1}=0$, i.e.\ $g=0$ a.e.\ and $\mu=\nu$.
\end{proof}

\begin{remark}\label{rem:ac-needed}
Absolute continuity is used: the conditional-expectation argument
converges to $g$ only in $L^1(\lambda_{D-1})$. For mutually singular
parts the cell histograms cannot separate $\mu$ from $\nu$ in general,
so $\mu,\nu\ll\lambda_{D-1}$ cannot be dropped (without it, the cell
masses may even depend on the null-boundary convention, since singular
parts can charge cell interfaces). The partitions need not be nested;
only \ref{ax:cover} and $\max_j\diam(W_j^{(h_n)})\to0$ are
required---\ref{ax:ecc} is \emph{not} needed here, by Step~2 of
Theorem~\ref{thm:L1}.
\end{remark}

\subsection{Main Error Decomposition}\label{sec:decomp}

\begin{theorem}[QCBM convergence]\label{thm:qcbm}
Let $N_n\to\infty$ with $|\cW_{N_n}|=2^{n_n}$ (the matching qubit
counts $n_n\to\infty$). Then
(writing $\tilde\mu_{N}:=\tilde\mu_{1/N}$ and
$\tilde\mu_{N}^{\mathrm{QCBM}}:=\tilde\mu_{1/N}^{\mathrm{QCBM}}$)
\begin{equation}\label{eq:error-split}
  d_{\TV}\bigl(\tilde\mu_{N_n}^{\mathrm{QCBM}},\mu\bigr)
  \;\leq\;
  \underbrace{d_{\TV}\bigl(\tilde\mu_{N_n}^{\mathrm{QCBM}},
    \tilde\mu_{N_n}\bigr)}_{\text{expressivity error}}
  \;+\;
  \underbrace{d_{\TV}\bigl(\tilde\mu_{N_n},\mu\bigr)}_{
    \text{discretization error}},
\end{equation}
and the discretization error tends to $0$ as $N_n\to\infty$. If, in
addition, the QCBM matches the histogram at each $N_n$
(expressivity error $=0$), then
$d_{\TV}(\tilde\mu_{N_n}^{\mathrm{QCBM}},\mu)\to0$.
\end{theorem}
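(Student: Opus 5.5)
The inequality \eqref{eq:error-split} is the triangle inequality for the total variation metric. All three measures $\tilde\mu_{N_n}^{\mathrm{QCBM}}$, $\tilde\mu_{N_n}$ and $\mu$ are probability measures on $(\Delta^{D-1},\cB(\Delta^{D-1}))$. The first two have densities $f_h^{\mathrm{QCBM}}$ and $f_h$ from \eqref{eq:qcbm-density}, and these integrate to one: each cell $W_k$ has positive measure by \ref{ax:convex}, and both $\sum_k q_k=1$ and $\sum_k\mu(W_k)=1$ hold, the latter because \ref{ax:cover} makes overlaps $\mu$-null. Since $d_{\TV}$ is a metric on probability measures (the sup formula in Definition~\ref{def:classical_dist}), inserting $\tilde\mu_{N_n}$ gives the split directly. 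Equivalently, one can write $d_{\TV}=\tfrac12\|\cdot\|_{L^1}$ on densities and apply Minkowski's inequality.

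For the discretization term, I will check that $\{\cW_{N_n}\}$ is a sequence of admissible partitions with $h_n=1/N_n\downarrow0$. Proposition~\ref{prop:augmented-admissible} gives this for the level-set partition. If the $2^{n_n}$ cell count is obtained by merging or by the Voronoi realisation, Remark~\ref{rem:2n} and Lemma~\ref{lem:voronoi_props} show that \ref{ax:convex}--\ref{ax:diam} still hold, with connectedness in place of convexity and a diameter bound $O(1/N_n)$. This is the point that needs care: the cell count constraint $|\cW_{N_n}|=2^{n_n}$ must not destroy the axioms used. Only \ref{ax:cover} and the diameter bound enter Step~2 of Theorem~\ref{thm:L1}, so Theorem~\ref{thm:L1}(iii) applies and gives $d_{\TV}(\tilde\mu_{N_n},\mu)=\tfrac12\|f_{h_n}-f\|_{L^1}\to0$. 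Since $N_n\to\infty$, $h_n\to0$ along the sequence, and nestedness is not required.

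For the final claim, suppose the QCBM matches the histogram, i.e.\ $q_k=\mu(W_k)$ for all $k$. Then $f_h^{\mathrm{QCBM}}=f_h$ pointwise, as noted after \eqref{eq:qcbm-density}. So the expressivity term is exactly $0$, and the bound reduces to the discretization error, which tends to $0$.
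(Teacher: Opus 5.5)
Your proof is correct and follows the paper's own argument: the split is the triangle inequality for $d_{\TV}$, the discretization term vanishes by Theorem~\ref{thm:L1}(ii)--(iii) because $\cW_{N_n}$ is admissible with $h_n=1/N_n\to0$ (only \ref{ax:cover}--\ref{ax:diam} are used), and exact matching gives $f_h^{\mathrm{QCBM}}=f_h$, so the expressivity term is zero. You also check two points the paper leaves implicit: that the $2^{n_n}$ cell-count mechanisms of Remark~\ref{rem:2n} preserve the axioms Step~2 needs, and that monotonicity of $h_n$ is not required.
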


\begin{proof}
\eqref{eq:error-split} is the triangle inequality for $d_{\TV}$. The
discretization term is
$d_{\TV}(\tilde\mu_{N_n},\mu)=\tfrac12\|f_{1/N_n}-f\|_{L^1}\to0$ by
Theorem~\ref{thm:L1}(ii)--(iii), since $\cW_{N_n}$ is admissible
(Proposition~\ref{prop:augmented-admissible}) with $h=1/N_n\to0$; note that only
\ref{ax:cover}--\ref{ax:diam} are needed for this step. Under exact
matching $f^{\mathrm{QCBM}}_{1/N_n}=f_{1/N_n}$, so the expressivity term
vanishes and the bound gives the claim.
\end{proof}

\begin{remark}\label{rem:two-errors}
The two errors have different origins. The discretization error is an
approximation-theoretic quantity, controlled unconditionally by
resolution $N_n\to\infty$ (Theorem~\ref{thm:L1}). The expressivity error
reflects how well the $n_n$-qubit circuit represents the target histogram
$\{\mu(W_k)\}$; it vanishes under exact matching and is
otherwise governed by the reachable set of the ansatz. Total
convergence requires both $N_n\to\infty$ \emph{and} an expressive (or
exactly matching) circuit. The expressivity error is precisely the
quantity analysed in the main text: for generic targets it is bounded
away from zero by the covering-number argument
(Theorem~\ref{thm:generic}), while for $S_D$-symmetric targets the
orbit-space construction (Theorem~\ref{thm:orbit_symmetric};
cf.\ Corollary~\ref{cor:dominance}) renders the reachable set dense
in the symmetric simplex.
\end{remark}

\section{The Equivariance Escape}
\label{sec:equivariance}

Section~\ref{sec:generic} shows that generic distributions
are hard. We now show that distributions with symmetry can
escape this hardness---provided the circuit respects the symmetry.

\subsection{Group Actions on the Grid}

\begin{definition}[Augmented Voronoi grid]\label{def:augmented_grid}
Let $D \geq 2$ denote the number of simplex components
(so the target simplex is $\Delta^{D-1}$) and $N \geq 1$ the
grid resolution. The \emph{natural lattice} is:
\[
    \mathcal{X}_N
    = \Bigl\{\mathbf{k} \in \mathbb{Z}^D_{\geq0} :
    \textstyle\sum_{i=1}^D k_i = N\Bigr\},
    \qquad
    |\mathcal{X}_N| = \tbinom{N+D-1}{D-1}.
\]
For each \emph{polytope type} $m \in \{1,\ldots,D{-}1\}$,
the Translation--Centroid Correspondence defines a centroid map
$G_m: \mathbb{Z}_{\geq 0}^D \to \Real^D$ by:
\begin{equation}\label{eq:centroid_map}
    G_m(\mathbf{k}) = \mathbf{k} + \frac{m}{D}\mathbf{1},
    \qquad \mathbf{1} = (1,\ldots,1).
\end{equation}
\end{definition}

\begin{definition}[Admissible base points]\label{def:base_points}
For each polytope type $m \in \{1, \ldots, D{-}1\}$,
the \emph{type-$m$ admissible base point set} is
\[
    \mathcal{A}_N^{(m)}
    = \bigl\{\mathbf{k} \in \mathbb{Z}_{\geq 0}^D :
    \textstyle\sum_{i=1}^D k_i = N - m\bigr\},
    \qquad
    |\mathcal{A}_N^{(m)}| = \tbinom{N - m + D - 1}{D - 1}.
\]
Each $\mathbf{k} \in \mathcal{A}_N^{(m)}$ is a lattice point of
the $(N{-}m)$-th level set within the solid simplex
$\{\mathbf{k} \in \mathbb{Z}_{\geq 0}^D : \sum k_i \leq N\}$.
The centroid map $G_m$ sends $\mathcal{A}_N^{(m)}$
bijectively onto the type-$m$ centroid set:
\[
    \mathcal{T}_N^{(m)}
    = G_m\bigl(\mathcal{A}_N^{(m)}\bigr)=\Bigl\{G_m(\mathbf{k}) :
    \mathbf{k} \in \mathbb{Z}_{\geq 0}^D,\;
    \textstyle\sum_{i=1}^D k_i = N - m\Bigr\},
    \quad
    |\mathcal{T}_N^{(m)}| = \tbinom{N-m+D-1}{D-1}.
\]
Given a subset of selected types
$\mathcal{M} \subseteq \{1,\ldots,D{-}1\}$ and for each
$m \in \mathcal{M}$, define a subset of centroids
$\mathcal{C}^{(m)} \subseteq \mathcal{T}_N^{(m)}$ closed
under the symmetric group $S_D$ action (acts on $\mathcal{X}_N$ by permuting components:
$\sigma \cdot (k_1,\ldots,k_D) = (k_{\sigma^{-1}(1)},\ldots,
k_{\sigma^{-1}(D)})$), the \emph{augmented grid} is:
\begin{equation}\label{eq:augmented_grid}
    \mathcal{Q}_N
    = \mathcal{X}_N
    \;\cup\; \bigcup_{m \in \mathcal{M}} \mathcal{C}^{(m)}.
\end{equation}
\end{definition}

\begin{definition}[Orbits and stabilizers]
\label{def:orbits}
The \emph{orbit} of a grid point
$\mathbf{x} \in \mathcal{Q}_N$ under $S_D$ is
$\text{Orb}(\mathbf{x}) = \{\sigma\cdot\mathbf{x} :
\sigma \in S_D\}$.
The \emph{stabilizer} is
$\text{Stab}(\mathbf{x}) = \{\sigma \in S_D :
\sigma\cdot\mathbf{x} = \mathbf{x}\}$.
\end{definition}

\begin{lemma}[$S_D$-invariance of each component]
\label{lem:components}
The $S_D$ action on $\mathcal{Q}_N$ preserves each of the
$1 + |\mathcal{M}|$ components. In particular, no $S_D$-orbit
contains points from two distinct components.
\end{lemma}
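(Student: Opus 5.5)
The plan is to show that the coordinate sum of a point in each component is an $S_D$-invariant that separates the components. The components of $\mathcal{Q}_N$ are the natural lattice $\mathcal{X}_N$ and the selected centroid sets $\mathcal{C}^{(m)}$, $m\in\mathcal{M}$. Each of these is closed under $S_D$: for $\mathcal{C}^{(m)}$ this is assumed in Definition~\ref{def:base_points}, and for $\mathcal{X}_N$ it holds because permuting coordinates preserves nonnegativity, integrality and the sum $\sum_i k_i=N$. Hence for $\mathbf{x}$ in a given component, $\sigma\cdot\mathbf{x}$ lies in the same component, which is the first claim.

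For the disjointness statement, I will first check that the components are pairwise disjoint, so that "the component containing $\mathbf{x}$" is well defined. Every point of $\mathcal{X}_N$ has integer coordinates. A point $G_m(\mathbf{k})=\mathbf{k}+\frac{m}{D}\mathbf{1}$ with $1\le m\le D-1$ has every coordinate with fractional part exactly $m/D\in(0,1)$. So no centroid is a lattice point, and centroids of distinct types $m\neq m'$ differ in fractional part. Therefore the fractional part of any coordinate, equivalently the residue class of the coordinate sum, which is $N$ in every case but decomposes as integer part $N-m$ plus $m$, identifies the component uniquely.

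The fractional part of a coordinate is unchanged by a permutation, since all coordinates share it. So the invariant $\tau(\mathbf{x})\in\{0,\tfrac1D,\dots,\tfrac{D-1}{D}\}$, the common fractional part, is constant along orbits. Two points in one orbit therefore have the same $\tau$ and lie in the same component.

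There is no real obstacle here. The only point needing care is the convention that the centroids are expressed in the unscaled coordinates of Definition~\ref{def:augmented_grid}, before the $1/N$ rescaling. After rescaling, the same argument works with $N\mathbf{x}$ in place of $\mathbf{x}$.
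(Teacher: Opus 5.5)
Your proof is correct and follows the paper's argument: $\mathcal{X}_N$ is closed under permutations, each $\mathcal{C}^{(m)}$ is $S_D$-closed by assumption, the components are pairwise disjoint because every coordinate has the same fractional part $m/D$ (with $m=0$ for the lattice), and orbit separation follows from these facts. One slip to fix: your opening sentence, and the phrase ``residue class of the coordinate sum'', make the coordinate sum the separating invariant, but $\sum_i x_i = N$ for every point of $\mathcal{Q}_N$, so it separates nothing; the invariant you actually use, the common fractional part (equivalently $\sum_i \lfloor x_i\rfloor = N-m$), is the right one.
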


\begin{proof}
The conclusion follows from two independent facts:
the components are pairwise disjoint,
and each component is $S_D$-invariant.
Together, these imply the orbit separation: if
$A \cap B = \emptyset$ and both $\sigma(A) = A$,
$\sigma(B) = B$ for all $\sigma \in S_D$, then
every orbit $\text{Orb}(x) = \{\sigma \cdot x : \sigma \in S_D\}$
is contained entirely within a single component
(since $x \in A$ implies
$\text{Orb}(x) \subseteq \sigma(A) = A$).

Every component of a lattice point
$\mathbf{k} \in \mathcal{X}_N$ is a non-negative integer.
Every component of a type-$m$ centroid
$G_m(\mathbf{k}) = \mathbf{k} + \frac{m}{D}\mathbf{1}$ has the form
$k_i + \frac{m}{D}$ with $k_i \in \mathbb{Z}_{\geq 0}$
and $1 \leq m \leq D - 1$, so its fractional part
(with respect to $\mathbb{Z}$) is $\frac{m}{D} \in (0,1)$.
Therefore:
\begin{itemize}
    \item Lattice points have integer coordinates;
    centroids do not. Hence
    $\mathcal{X}_N \cap \mathcal{T}_N^{(m)} = \emptyset$
    for all~$m$.
    \item For distinct types $m \neq m'$ with
    $1 \leq m, m' \leq D{-}1$, the fractional parts
    $\frac{m}{D} \neq \frac{m'}{D}$. Hence
    $\mathcal{T}_N^{(m)} \cap \mathcal{T}_N^{(m')} = \emptyset$.
\end{itemize}

Lattice: for $\mathbf{k} \in \mathcal{X}_N$ and $\sigma \in S_D$,
$\sigma \cdot \mathbf{k}$ is a permutation of the components
of~$\mathbf{k}$, hence has the same non-negative integer
entries with the same sum~$N$.
Therefore $\sigma \cdot \mathbf{k} \in \mathcal{X}_N$.

Centroids: the centroid map $G_m$ is $S_D$-equivariant:
\[
    \sigma \cdot G_m(\mathbf{k})
    = \sigma \cdot \mathbf{k} + \frac{m}{D}\,
    \underbrace{(\sigma \cdot \mathbf{1})}_{=\,\mathbf{1}}
    = G_m(\sigma \cdot \mathbf{k}).
\]
If $\mathbf{k} \in \mathbb{Z}_{\geq 0}^D$ (i.e.,
$k_i \geq 0$ and $\sum k_i = N - m$), then
$\sigma \cdot \mathbf{k} \in \mathbb{Z}_{\geq 0}^D$ and $\sum k_i = N-m$.
Therefore
$\sigma \cdot G_m(\mathbf{k}) = G_m(\sigma \cdot \mathbf{k})
\in \mathcal{T}_N^{(m)}$, so $\mathcal{T}_N^{(m)}$ is
$S_D$-invariant.
Since $\mathcal{C}^{(m)} \subseteq \mathcal{T}_N^{(m)}$ is assumed
to be a union of complete $S_D$-orbits,
$\sigma(\mathcal{C}^{(m)}) = \mathcal{C}^{(m)}$.
\end{proof}

\begin{proposition}[Orbit correspondence under $G_m$]
\label{prop:orbit_correspondence}
For each type $m \in \{1, \ldots, D{-}1\}$, two centroids
$G_m(\mathbf{k}_1)$ and $G_m(\mathbf{k}_2)$ lie in the same
$S_D$-orbit if and only if their base points
$\mathbf{k}_1$ and $\mathbf{k}_2$ do:
\begin{equation}\label{eq:orbit_iff}
    G_m(\mathbf{k}_1) \sim_{S_D} G_m(\mathbf{k}_2)
    \quad\Longleftrightarrow\quad
    \mathbf{k}_1 \sim_{S_D} \mathbf{k}_2.
\end{equation}
Consequently, $G_m$ induces a bijection between
the $S_D$-orbits of $\mathcal{A}_N^{(m)}$ and the
$S_D$-orbits of $\mathcal{T}_N^{(m)}$, preserving orbit
sizes and stabilizer subgroups.
\end{proposition}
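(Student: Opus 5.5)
The plan is to use the equivariance identity $\sigma\cdot G_m(\mathbf{k}) = G_m(\sigma\cdot\mathbf{k})$, already derived in the proof of Lemma~\ref{lem:components}, together with the injectivity of $G_m$. Injectivity holds because $G_m$ is translation by the fixed vector $\frac{m}{D}\mathbf{1}$. Since $G_m$ is an equivariant bijection $\mathcal{A}_N^{(m)}\to\mathcal{T}_N^{(m)}$, both the orbit correspondence and the stabilizer statement follow by transporting the group action. No counting or Burnside argument is needed.

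For the forward direction of \eqref{eq:orbit_iff}, suppose $\mathbf{k}_2=\sigma\cdot\mathbf{k}_1$. Then $G_m(\mathbf{k}_2)=G_m(\sigma\cdot\mathbf{k}_1)=\sigma\cdot G_m(\mathbf{k}_1)$, so the two centroids lie in the same orbit.

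For the converse, suppose $G_m(\mathbf{k}_2)=\sigma\cdot G_m(\mathbf{k}_1)=G_m(\sigma\cdot\mathbf{k}_1)$. Subtracting $\frac{m}{D}\mathbf{1}$ from both sides gives $\mathbf{k}_2=\sigma\cdot\mathbf{k}_1$. Here I will check that $\sigma\cdot\mathbf{k}_1\in\mathcal{A}_N^{(m)}$, since permutation preserves nonnegativity, integrality and the coordinate sum $N-m$. This makes the cancellation legitimate inside the domain of $G_m$.

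Consequently the map $\mathrm{Orb}(\mathbf{k})\mapsto\mathrm{Orb}(G_m(\mathbf{k}))$ is well defined, by the forward direction, and injective, by the converse. It is surjective because $G_m$ maps $\mathcal{A}_N^{(m)}$ onto $\mathcal{T}_N^{(m)}$ by Definition~\ref{def:base_points}.

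For stabilizers, apply the same cancellation with $\mathbf{k}_1=\mathbf{k}_2=\mathbf{k}$. This gives $\sigma\cdot G_m(\mathbf{k})=G_m(\mathbf{k})$ if and only if $\sigma\cdot\mathbf{k}=\mathbf{k}$, so $\mathrm{Stab}(G_m(\mathbf{k}))=\mathrm{Stab}(\mathbf{k})$ as subgroups of $S_D$.

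Orbit sizes then agree in two ways. By the orbit--stabilizer theorem, both equal $|S_D|/|\mathrm{Stab}|$. Alternatively, $G_m$ restricted to $\mathrm{Orb}(\mathbf{k})$ is a bijection onto $\mathrm{Orb}(G_m(\mathbf{k}))$, because $G_m(\mathrm{Orb}(\mathbf{k}))=\{\sigma\cdot G_m(\mathbf{k})\}$.

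There is no real obstacle. The only point needing care is the converse direction, which uses that the translation vector $\frac{m}{D}\mathbf{1}$ is permutation-invariant. This invariance is what lets us cancel the shift and pull $\sigma$ back through $G_m$. I will state explicitly that $\sigma\cdot\mathbf{1}=\mathbf{1}$ is the crux, and note that the same argument works verbatim for any subgroup $G\le S_D$.
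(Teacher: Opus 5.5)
Your proposal is correct and follows the paper's own proof. Both directions come from the equivariance $\sigma\cdot G_m(\mathbf{k}) = G_m(\sigma\cdot\mathbf{k})$ (via $\sigma\cdot\mathbf{1}=\mathbf{1}$) together with injectivity of the translation $G_m$, and the stabilizer equality comes from the same cancellation with $\mathbf{k}_1=\mathbf{k}_2$. Your ``forward'' and ``converse'' are labelled opposite to the displayed equivalence, and the domain check is unnecessary because translation is injective on all of $\mathbb{R}^D$, but neither affects correctness.
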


\begin{proof}\leavevmode

$(\Rightarrow)$
Suppose $\sigma \cdot G_m(\mathbf{k}_1) = G_m(\mathbf{k}_2)$
for some $\sigma \in S_D$.
By the $S_D$-equivariance of $G_m$
(Lemma~\ref{lem:components}):
\[
    G_m(\sigma \cdot \mathbf{k}_1) = G_m(\mathbf{k}_2).
\]
Since $G_m(\mathbf{k}) = \mathbf{k} + \frac{m}{D}\mathbf{1}$ is injective
(adding a constant vector preserves differences:
$G_m(\mathbf{k}) = G_m(\mathbf{k}')$ implies $\mathbf{k} = \mathbf{k}'$),
we conclude $\sigma \cdot \mathbf{k}_1 = \mathbf{k}_2$.

$(\Leftarrow)$
Suppose $\sigma \cdot \mathbf{k}_1 = \mathbf{k}_2$
for some $\sigma \in S_D$.
By equivariance:
\[
    \sigma \cdot G_m(\mathbf{k}_1)
    = G_m(\sigma \cdot \mathbf{k}_1)
    = G_m(\mathbf{k}_2).
\]

The two directions together show that $G_m$ maps
each orbit $\mathcal{O} \subseteq \mathcal{A}_N^{(m)}$
bijectively onto an orbit
$G_m(\mathcal{O}) \subseteq \mathcal{T}_N^{(m)}$,
with $|G_m(\mathcal{O})| = |\mathcal{O}|$
and $\text{Stab}(G_m(\mathbf{k})) = \text{Stab}(\mathbf{k})$
(since $\sigma \cdot G_m(\mathbf{k}) = G_m(\mathbf{k})
\Leftrightarrow \sigma \cdot \mathbf{k} = \mathbf{k}$).
\end{proof}

\begin{theorem}[Orbit count of the augmented grid]
\label{thm:augmented_orbits}
Let $\mathcal{O}_D(\mathcal{Q}_N)$ denote the number of
$S_D$-orbits on the augmented grid~\eqref{eq:augmented_grid}.
Then:
\begin{equation}\label{eq:orbit_bounds}
    p_D(N)
    \;\leq\;
    \mathcal{O}_D(\mathcal{Q}_N)
    \;\leq\;
    p_D(N)
    + \sum_{m \in \mathcal{M}} p_D(N{-}m),
\end{equation}
where $p_D(n)$ is the number of partitions of the integer $n$
into at most $D$ non-negative parts.

If all centroid types are selected and all centroids within
each type are included
($\mathcal{M} = \{1,\ldots,D{-}1\}$ and
$\mathcal{C}^{(m)} = \mathcal{T}_N^{(m)}$), then:
\begin{equation}\label{eq:full_augmented_orbits}
    \mathcal{O}_D(\mathcal{Q}_N)
    = \sum_{m=0}^{D-1} p_D(N{-}m).
\end{equation}
\end{theorem}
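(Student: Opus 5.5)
The count rests on splitting $\mathcal{Q}_N$ into $S_D$-invariant components and counting orbits on each. By Lemma~\ref{lem:components}, $\mathcal{Q}_N$ is the disjoint union of the $1+|\mathcal{M}|$ components $\mathcal{X}_N$ and $\mathcal{C}^{(m)}$, $m\in\mathcal{M}$, each $S_D$-invariant, and no orbit meets two of them. Hence
\[
\mathcal{O}_D(\mathcal{Q}_N)=\mathcal{O}_D(\mathcal{X}_N)+\sum_{m\in\mathcal{M}}\mathcal{O}_D(\mathcal{C}^{(m)}),
\]
and the whole statement reduces to counting orbits on each piece.

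\textbf{Lattice component.} Each $S_D$-orbit in $\mathcal{X}_N$ contains exactly one representative with nonincreasing coordinates, namely the sorted vector. Sorting therefore gives a bijection between $\mathcal{X}_N/S_D$ and weakly decreasing $D$-tuples of nonnegative integers summing to $N$. These tuples are exactly the partitions of $N$ into at most $D$ parts, with zeros padding the tuple. So $\mathcal{O}_D(\mathcal{X}_N)=p_D(N)$. Since every other summand is nonnegative, this gives the lower bound in~\eqref{eq:orbit_bounds}.

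\textbf{Centroid components.} Proposition~\ref{prop:orbit_correspondence} says that $G_m$ induces a bijection between the orbits of $\mathcal{A}_N^{(m)}$ and those of $\mathcal{T}_N^{(m)}$. The sorting argument applied to $\mathcal{A}_N^{(m)}$, the vectors in $\mathbb{Z}_{\ge0}^D$ with coordinate sum $N-m$, gives $|\mathcal{T}_N^{(m)}/S_D|=p_D(N-m)$. This uses the convention $p_D(j)=0$ for $j<0$; since $N\ge m$ is needed for the set to be nonempty, that convention also covers small $N$. The set $\mathcal{C}^{(m)}\subseteq\mathcal{T}_N^{(m)}$ is a union of complete orbits, so $\mathcal{O}_D(\mathcal{C}^{(m)})\le p_D(N-m)$, with equality when $\mathcal{C}^{(m)}=\mathcal{T}_N^{(m)}$. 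Summing over $m\in\mathcal{M}$ gives the upper bound. Taking $\mathcal{M}=\{1,\dots,D-1\}$ with full centroid sets, and writing the lattice term as the $m=0$ term, gives~\eqref{eq:full_augmented_orbits}.

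The only step needing care is the orbit-to-partition bijection. We must check that the sorted representative is unique, which holds because two sorted vectors in the same orbit have the same multiset of entries and hence coincide. We must also check that sorting respects the sum constraint, which is immediate. The rest is bookkeeping on top of Lemma~\ref{lem:components} and Proposition~\ref{prop:orbit_correspondence}.
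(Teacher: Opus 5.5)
Your proof is correct and follows the paper's argument essentially step for step. Both use the component decomposition from Lemma~\ref{lem:components}, the lattice count $p_D(N)$, the transfer of orbit counts through $G_m$ via Proposition~\ref{prop:orbit_correspondence}, and the union-of-complete-orbits bound with equality under full inclusion; your explicit sorted-representative bijection and the convention $p_D(j)=0$ for $j<0$ usefully fill in details the paper only asserts.
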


\begin{proof}
By Lemma~\ref{lem:components}, the $S_D$-orbits on
$\mathcal{Q}_N$ decompose as a disjoint union of orbits
within each component:
\begin{equation}\label{eq:orbit_decomp}
    \mathcal{O}_D(\mathcal{Q}_N)
    = \mathcal{O}_D(\mathcal{X}_N)
    + \sum_{m \in \mathcal{M}}
    \mathcal{O}_D(\mathcal{C}^{(m)}).
\end{equation}
We bound each term separately.

For Lattice orbits, two lattice points $\mathbf{k}$ and $\mathbf{k}'$ are in the same
$S_D$-orbit if and only if $\exists \sigma \in S_D, \mathbf{k}'=\sigma \cdot \mathbf{k}$ Therefore:
\begin{equation}\label{eq:lattice_orbits}
    \mathcal{O}_D(\mathcal{X}_N) = p_D(N).
\end{equation}

For centroid orbits for a fixed type $m \in \{1, \ldots, D-1\}$, by the commutation $G_m \circ \sigma = \sigma \circ G_m$ (Lemma~\ref{lem:components}), two centroids
$G_m(\mathbf{k}_1), G_m(\mathbf{k}_2) \in \mathcal{C}^{(m)}$ lie in the
same orbit if and only if $\mathbf{k}_1$ and $\mathbf{k}_2$ do. The type-$m$
centroid set is $\mathcal{T}_N^{(m)} = G_m(\mathcal{A}_N^{(m)})$.
By Proposition~\ref{prop:orbit_correspondence},
$G_m$ induces a bijection between the $S_D$-orbits of
$\mathcal{A}_N^{(m)}$ and the $S_D$-orbits of $\mathcal{T}_N^{(m)}$. The base
points of $\mathcal{A}_N^{(m)}$ satisfy
$\sum k_i = N - m$ with $k_i \geq 0$, whose
$S_D$-orbits correspond to partitions of $N{-}m$. Since
$\mathcal{C}^{(m)} \subseteq \mathcal{T}_N^{(m)}$ is
assumed to be a union of complete $S_D$-orbits:
\begin{equation}\label{eq:centroid_orbit_bound}
    \mathcal{O}_D(\mathcal{C}^{(m)})
    \leq \mathcal{O}_D(\mathcal{T}_N^{(m)})
    =\mathcal{O}_D(\mathcal{A}_N^{(m)})= p_D(N{-}m).
\end{equation}
Equality holds when $\mathcal{C}^{(m)} = \mathcal{T}_N^{(m)}$.

Substituting~\eqref{eq:lattice_orbits}
and~\eqref{eq:centroid_orbit_bound}
into~\eqref{eq:orbit_decomp}:
\[
    \mathcal{O}_D(\mathcal{Q}_N)
    = p_D(N) + \sum_{m \in \mathcal{M}}
    \mathcal{O}_D(\mathcal{C}^{(m)})
    \leq p_D(N) + \sum_{m \in \mathcal{M}} p_D(N{-}m).
\]
The lower bound $\mathcal{O}_D(\mathcal{Q}_N) \geq p_D(N)$
follows from $\mathcal{O}_D(\mathcal{C}^{(m)}) \geq 0$.
\end{proof}

\begin{example}

$D = 3$, $N = 7$:
There are $D - 1 = 2$ centroid types. The restricted
partition numbers $p_3(k)$ for $k = 5, 6, 7$ are computed
by enumerating sorted triples
$(k_{(1)} \geq k_{(2)} \geq k_{(3)} \geq 0)$:

\smallskip
\begin{center}
\begin{tabular}{@{}cl@{}}
\toprule
$p_3(7) = 8$ &
$(7,0,0),\,(6,1,0),\,(5,2,0),\,(5,1,1),\,(4,3,0),
\,(4,2,1),\,(3,3,1),\,(3,2,2)$ \\[3pt]
$p_3(6) = 7$ &
$(6,0,0),\,(5,1,0),\,(4,2,0),\,(4,1,1),\,(3,3,0),
\,(3,2,1),\,(2,2,2)$ \\[3pt]
$p_3(5) = 5$ &
$(5,0,0),\,(4,1,0),\,(3,2,0),\,(3,1,1),\,(2,2,1)$ \\
\bottomrule
\end{tabular}
\end{center}
\smallskip

Total: $\mathcal{O}_{3}(\mathcal{Q}_7) = 8 + 7 + 5 = 20$ orbits
over $36 + 28 + 21 = 85$ grid points.

\vspace{1em}

$D = 13$, $N = 7$:
\[
    \mathcal{O}_{13}(\mathcal{Q}_7)
    = \sum_{k=0}^{7} p(k)
    = 1 + 1 + 2 + 3 + 5 + 7 + 11 + 15 = 45.
\]
The full augmented grid has
$\sum_{m=0}^{7}\binom{19-m}{12}
= 77520$ points, giving a compression ratio of
$77520 / 45 = 1722.67$.
\end{example}

\subsection{Symmetric Distributions and Effective Dimension}

\begin{definition}[$S_D$-symmetric distribution]
\label{def:symmetric_dist}
A distribution $P$ on $\mathcal{Q}_N$ is \emph{$S_D$-symmetric} if
$P(\sigma\cdot\mathbf{x}) = P(\mathbf{x})$ for all
$\sigma \in S_D$ and $\mathbf{x} \in \mathcal{Q}_N$.
The set of $S_D$-symmetric distributions on $\mathcal{Q}_N$ is denoted
$\Delta_{\mathrm{sym}}$.
\end{definition}

\begin{proposition}[Dimension of the symmetric subspace]
\label{prop:sym_dim}
\begin{equation}\label{eq:sym_dim}
    \dim( \mathrm{aff} \ \Delta_{\mathrm{sym}}) = \mathcal{O}_D(\mathcal{Q}_N) - 1.
\end{equation}
\end{proposition}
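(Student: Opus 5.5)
The plan is to identify $\Delta_{\mathrm{sym}}$ with a standard simplex of dimension $K-1$, where $K=\mathcal{O}_D(\mathcal{Q}_N)$, by means of an injective affine map.

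\emph{Step 1: restate the symmetry condition.} Let $\mathrm{Orb}_1,\ldots,\mathrm{Orb}_K$ be the distinct $S_D$-orbits, which partition $\mathcal{Q}_N$ (Definition~\ref{def:orbits}, Lemma~\ref{lem:components}). The condition $P(\sigma\cdot\mathbf{x})=P(\mathbf{x})$ for all $\sigma$ says exactly that $P$ is constant on each orbit. So
\[
V_{\mathrm{sym}}=\{v\in\mathbb{R}^{\mathcal{Q}_N}: v(\sigma\cdot\mathbf{x})=v(\mathbf{x})\ \forall\sigma,\mathbf{x}\}
\]
is the linear span of the orbit indicators $\mathbf{1}_{\mathrm{Orb}_j}$. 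These $K$ vectors have pairwise disjoint nonempty supports, so they are linearly independent and $\dim V_{\mathrm{sym}}=K$.

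\emph{Step 2: cut out the symmetric distributions.} By definition,
\[
\Delta_{\mathrm{sym}}=V_{\mathrm{sym}}\cap\{v\ge 0\}\cap\{\textstyle\sum_{\mathbf{x}} v(\mathbf{x})=1\}.
\]
Write $v=\sum_j c_j\mathbf{1}_{\mathrm{Orb}_j}$ and set $q_j=c_js_j$, where $s_j=|\mathrm{Orb}_j|$. Under this change of coordinates the constraints become $q_j\ge0$ and $\sum_j q_j=1$. Hence the map $\mathbf{q}\mapsto\sum_j (q_j/s_j)\mathbf{1}_{\mathrm{Orb}_j}$, which is the orbit map $\Phi$, is an affine bijection from the standard simplex $\Delta^{K-1}$ onto $\Delta_{\mathrm{sym}}$. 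It is injective because the indicators are independent and $s_j>0$. It is surjective because every symmetric distribution has this form, with $q_j$ equal to its orbit mass.

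\emph{Step 3: conclude the dimension.} Affine dimension is preserved by injective affine maps, and the affine hull of $\Delta^{K-1}$ is $\{\sum_j q_j=1\}\subset\mathbb{R}^K$, of dimension $K-1$. Therefore $\dim(\mathrm{aff}\,\Delta_{\mathrm{sym}})=K-1$. To check directly that the hull is not smaller, one can exhibit $K$ affinely independent points of $\Delta_{\mathrm{sym}}$: the uniform distributions $\mathbf{1}_{\mathrm{Orb}_j}/s_j$ on the individual orbits. Their differences $\mathbf{1}_{\mathrm{Orb}_j}/s_j-\mathbf{1}_{\mathrm{Orb}_1}/s_1$ are linearly independent by Step~1.

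\emph{Main obstacle.} The argument is essentially routine. The only point needing care is that the orbits genuinely partition $\mathcal{Q}_N$, which requires the augmented grid to be closed under the action. This closure is guaranteed by the standing assumption that each $\mathcal{C}^{(m)}$ is a union of complete $S_D$-orbits (Lemma~\ref{lem:components}). Once that holds, the dimension count is linear algebra.
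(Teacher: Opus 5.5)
Your proof is correct and follows the same route as the paper, which simply notes that a symmetric distribution is constant on each orbit and is therefore determined by $K=\mathcal{O}_D(\mathcal{Q}_N)$ nonnegative orbit masses summing to one. Your version makes this parameter count rigorous: you show the orbit map $\Phi$ is an injective affine bijection from $\Delta^{K-1}$ onto $\Delta_{\mathrm{sym}}$ and exhibit $K$ affinely independent points. This is the same identification the paper later proves in Theorem~\ref{thm:orbit_symmetric}.
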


\begin{proof}
An $S_D$-symmetric distribution is constant on each orbit.
With $\mathcal{O}_D(\mathcal{Q}_N)$ orbits, the distribution is specified by
$\mathcal{O}_D(\mathcal{Q}_N)$ non-negative numbers summing to~1, giving
$\mathcal{O}_D(\mathcal{Q}_N) - 1$ free parameters.
\end{proof}

\begin{lemma}[Equivariance of the Voronoi discretization]\label{lem:voronoi-equiv}
Let $Q_N \subset \Delta^{D-1}$ be $S_D$-invariant and let
$\{\mathrm{cell}(x)\}_{x \in Q_N}$ be the Voronoi partition of
$\Delta^{D-1}$ induced by $Q_N$ under the Euclidean metric. Then
$\sigma\big(\mathrm{cell}(x)\big) = \mathrm{cell}(\sigma \cdot x)$ up to
$\lambda_{D-1}$-null boundary sets, for every $\sigma \in S_D$.
Consequently, if the target density $f$ is $S_D$-invariant, the discretized
distribution $P^{\mathrm{target}}(x) = \int_{\mathrm{cell}(x)} f \,
d\lambda_{D-1}$ is $S_D$-symmetric on $Q_N$.
\end{lemma}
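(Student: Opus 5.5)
The plan is to exploit the fact that each $\sigma\in S_D$ acts on $\mathbb{R}^D$ as a coordinate permutation, hence as a linear Euclidean isometry that maps $\Delta^{D-1}$ onto itself. We then apply this isometry directly to the defining inequalities of the Voronoi cells. The second claim (symmetry of the discretized target) will follow from the first by a change of variables.

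\emph{Step 1 (cell equivariance).} Fix $x\in Q_N$ and $\mathbf{s}\in\mathrm{cell}(x)$. By definition, $\lVert\mathbf{s}-x\rVert_2\le\lVert\mathbf{s}-y\rVert_2$ for all $y\in Q_N$. Applying $\sigma$ and using $\lVert\sigma u\rVert_2=\lVert u\rVert_2$ together with linearity, we get $\lVert\sigma\mathbf{s}-\sigma x\rVert_2\le\lVert\sigma\mathbf{s}-\sigma y\rVert_2$ for all $y\in Q_N$. Since $Q_N$ is $S_D$-invariant, the map $y\mapsto\sigma y$ is a bijection of $Q_N$. Hence $\lVert\sigma\mathbf{s}-\sigma x\rVert_2\le\lVert\sigma\mathbf{s}-z\rVert_2$ for every $z\in Q_N$. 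Also $\sigma\mathbf{s}\in\Delta^{D-1}$, because permuting coordinates preserves nonnegativity and the coordinate sum. Therefore $\sigma(\mathrm{cell}(x))\subseteq\mathrm{cell}(\sigma\cdot x)$. Applying the same argument to $\sigma^{-1}$ and the point $\sigma\cdot x$ gives the reverse inclusion. So with closed cells as in Definition~\ref{def:voronoi} the equality is exact. The qualifier ``up to $\lambda_{D-1}$-null sets'' is needed only when ties are broken by a fixed measurable rule to obtain a genuine partition. Tie sets lie in finitely many bisector hyperplanes, which are null by Lemma~\ref{lem:voronoi_props}(iii). A permutation maps such a finite union of $(D-2)$-dimensional affine pieces to another such union, so the two conventions differ only on a null set.

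\emph{Step 2 (measure invariance).} The restriction of $\sigma$ to the affine hull $\{\sum_i s_i=1\}$ is an affine isometry of that hyperplane. Hence it preserves $\lambda_{D-1}$, i.e.\ it has Jacobian of absolute value one, consistent with the chart remark (Remark~\ref{rem:chart}). In particular it maps null sets to null sets, which also justifies the Step 1 statement for tie-broken cells.

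\emph{Step 3 (symmetric target).} Suppose $f(\sigma\mathbf{s})=f(\mathbf{s})$ for a.e.\ $\mathbf{s}$. Then, by Step 1 and the change of variables $\mathbf{s}=\sigma\mathbf{t}$ from Step 2,
\[
P^{\mathrm{target}}(\sigma\cdot x)=\int_{\sigma(\mathrm{cell}(x))}f\,d\lambda_{D-1}=\int_{\mathrm{cell}(x)}f(\sigma\mathbf{t})\,d\lambda_{D-1}(\mathbf{t})=P^{\mathrm{target}}(x).
\]
The null discrepancy in Step 1 does not affect the integral, since $\mu\ll\lambda_{D-1}$. This is exactly Definition~\ref{def:symmetric_dist}.

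The main obstacle is the bookkeeping in Step 1. It has two delicate points. First, the argument needs $S_D$-invariance of the site set itself; this holds because the augmented grid is a union of complete orbits by Lemma~\ref{lem:components}, applied after the $1/N$ rescaling, which commutes with permutations. Second, one must handle the null-set convention for the tie-broken partition correctly. The remaining steps are routine isometry arguments.
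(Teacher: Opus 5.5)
Your proposal is correct and follows essentially the same route as the paper. Coordinate permutations are Euclidean isometries that preserve $\Delta^{D-1}$ and the site set, so they carry Voronoi cells onto Voronoi cells; symmetry of $P^{\mathrm{target}}$ then follows by a change of variables using the permutation invariance of $f$ and of $\lambda_{D-1}$. The only difference is cosmetic: you work with closed cells, get exact equality, and write out the reverse inclusion via $\sigma^{-1}$, whereas the paper argues on the strict-inequality interiors and discards the null boundaries.
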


\begin{proof}
Coordinate permutations are Euclidean isometries of $\Delta^{D-1}$:
$\|\sigma \cdot s - \sigma \cdot y\| = \|s - y\|$. Hence for
$s \in \mathrm{int}\,\mathrm{cell}(x)$,
$\|s - x\| < \|s - y\|$ for all $y \in Q_N \setminus \{x\}$, which is
equivalent to $\|\sigma \cdot s - \sigma \cdot x\| <
\|\sigma \cdot s - \sigma \cdot y\|$ for all $\sigma \cdot y \in
Q_N \setminus \{\sigma \cdot x\}$ (using $\sigma(Q_N) = Q_N$), i.e.\
$\sigma \cdot s \in \mathrm{int}\,\mathrm{cell}(\sigma \cdot x)$. Then, $\mathrm{cell}(\sigma\cdot\mathbf{x})
    = \sigma\cdot\mathrm{cell}(\mathbf{x}),
    \lambda_{D-1}\bigl(\mathrm{cell}(\sigma\cdot\mathbf{x})\bigr)
    = \lambda_{D-1}\bigl(\mathrm{cell}(\mathbf{x})\bigr)$. Since cell
boundaries lie in finitely many hyperplanes and are $\lambda_{D-1}$-null, by invariance of $f$ and of the Lebesgue measure under permutations,
\[
P^{\mathrm{target}}(\sigma \cdot x)
= \int_{\mathrm{cell}(\sigma \cdot x)} f \, d\lambda
= \int_{\sigma(\mathrm{cell}(x))} f \, d\lambda
= \int_{\mathrm{cell}(x)} f \circ \sigma \, d\lambda
= P^{\mathrm{target}}(x). \qedhere
\]
\end{proof}

\begin{corollary}[Dimensional reduction for Dirichlet]
\label{cor:dirichlet_dim}
The symmetric Dirichlet distribution
$\mathrm{Dir}(\alpha\mathbf{1}_D)$ is $S_D$-symmetric, since its
density $f(\mathbf{x}) \propto \prod_{i=1}^D x_i^{\alpha-1}$
is invariant under coordinate permutation.
On the augmented grid $\mathcal{Q}_N$, it is fully determined by
$\mathcal{O}_D(\mathcal{Q}_N) - 1$ independent parameters (one
probability weight per orbit, minus normalization).
\end{corollary}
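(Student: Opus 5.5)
The corollary has two claims. I would prove the symmetry claim by direct computation on the density, and then get the parameter count by combining Lemma~\ref{lem:voronoi-equiv} with Proposition~\ref{prop:sym_dim}.

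\emph{Invariance of the density.} For $\sigma\in S_D$ and $\mathbf{x}\in\Delta^{D-1}$, the coordinates of $\sigma\cdot\mathbf{x}$ are $x_{\sigma^{-1}(i)}$. Hence
\[
\prod_{i=1}^D (\sigma\cdot\mathbf{x})_i^{\alpha-1}=\prod_{i=1}^D x_{\sigma^{-1}(i)}^{\alpha-1}=\prod_{j=1}^D x_j^{\alpha-1},
\]
which is just a reindexing of a finite commutative product. The normalising constant $\Gamma(D\alpha)/\Gamma(\alpha)^D$ does not depend on $\mathbf{x}$, so $f\circ\sigma=f$ on $\Delta^{D-1}$. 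The density may blow up on the boundary when $\alpha<1$, but that set is $\lambda_{D-1}$-null and never enters the cell integrals. I would also note that $\mathrm{Dir}(\alpha\mathbf 1_D)\ll\lambda_{D-1}$, since $f\in L^1$.

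\emph{Passing to the grid.} The augmented grid $\mathcal Q_N$ is $S_D$-invariant: this is Lemma~\ref{lem:components}, using that each $\mathcal C^{(m)}$ is a union of full orbits. Lemma~\ref{lem:voronoi-equiv} therefore applies and shows that the discretised masses $P^{\mathrm{target}}(\mathbf{x})=\int_{\mathrm{cell}(\mathbf{x})}f\,d\lambda_{D-1}$ satisfy $P^{\mathrm{target}}(\sigma\cdot\mathbf{x})=P^{\mathrm{target}}(\mathbf{x})$. In other words, $P^{\mathrm{target}}\in\Delta_{\mathrm{sym}}$. This step silently identifies the continuous law with its cell histogram, so I would make that identification explicit, as in \eqref{eq:target_born_correspondence}.

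\emph{Counting parameters.} A member of $\Delta_{\mathrm{sym}}$ is constant on each of the $K=\mathcal O_D(\mathcal Q_N)$ orbits. It is therefore determined by the orbit weights $q_j=s_jP(\mathbf{x}_j)$, which are nonnegative and sum to one. Proposition~\ref{prop:sym_dim}, or equivalently the bijection $\Phi$ of Theorem~\ref{thm:phi_isometry}, then gives exactly $K-1$ free parameters.

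The only point needing care is what ``independent'' means. The claim is about the dimension of the model $\Delta_{\mathrm{sym}}$ containing the Dirichlet histogram. It is not a claim that the Dirichlet family itself has $K-1$ parameters; that family has a single parameter $\alpha$. I would state this explicitly in the proof, and otherwise expect no real obstacle.
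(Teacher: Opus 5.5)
Your proposal is correct and follows the paper's own route. The paper gives no separate proof; its inline justification is exactly yours: permutation invariance of the density, transfer to the cell masses via Lemma~\ref{lem:voronoi-equiv}, then the orbit count of Proposition~\ref{prop:sym_dim}. Your remark that $\mathcal{O}_D(\mathcal{Q}_N)-1$ is the dimension of $\Delta_{\mathrm{sym}}$, not of the one-parameter Dirichlet family, is a worthwhile clarification the paper leaves implicit.
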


\subsection{Equivariant Circuits and the Symmetry Escape}
\label{subsec:equivariant}

The generic hardness results of Section~\ref{sec:generic}
show that approximating arbitrary distributions requires
$N_p \gtrsim M$. We now show that $S_D$-symmetric targets
can \emph{escape} this barrier---provided the circuit
respects the symmetry.

\begin{definition}[Induced representation]
\label{def:induced_rep}
Let $\phi: \mathcal{Q}_N \longleftrightarrow \{0,1\}^n$ be
the encoding bijection ($|\mathcal{Q}_N| = 2^n$). The $S_D$ action
on $\mathcal{Q}_N$ induces:
\begin{enumerate}
    \item A \emph{permutation representation} on the
    computational basis:
    \begin{equation}\label{eq:induced_perm}
        \pi: S_D \to S_{2^n}, \qquad
        \pi(\sigma)(x) = \phi(\sigma \cdot \phi^{-1}(x)).
    \end{equation}
    \item A \emph{unitary representation} on the $n$-qubit
    Hilbert space $\mathcal{H} = (\Complex^2)^{\otimes n}$:
    \begin{equation}\label{eq:unitary_rep}
        \Pi: S_D \to U(2^n), \qquad
        \Pi(\sigma)\ket{x} = \ket{\pi(\sigma)(x)}.
    \end{equation}
\end{enumerate}
Both depend on the choice of encoding~$\phi$.
\end{definition}

\begin{remark}[Exact grid construction in practice]
\label{rem:exact_grid}
The theoretical analysis requires $\mathcal{Q}_N$ to consist of
complete $S_D$-orbits. In implementation, one must also
ensure $|\mathcal{Q}_N| = 2^n$ to fully utilize the Hilbert space
(dummy states with forced zero probability introduce
numerical difficulties in training).


\end{remark}

\begin{definition}[$S_D$-equivariant PQC]
\label{def:equivariant}
A PQC $U(\bm{\theta})$ is \emph{$S_D$-equivariant}
(with respect to the encoding $\phi$) if:
\begin{equation}\label{eq:equivariant}
    U(\bm{\theta})\,\Pi(\sigma)
    = \Pi(\sigma)\,U(\bm{\theta})
    \qquad \forall\, \sigma \in S_D,\;
    \forall\, \bm{\theta}.
\end{equation}
\end{definition}

\begin{proposition}[Equivariance implies Born invariance]
\label{prop:equiv_implies_inv}
If $U(\bm{\theta})$ is $S_D$-equivariant and the initial
state $\ket{0}^{\otimes n}$ is $\Pi$-invariant
(i.e., $\Pi(\sigma)\ket{0}^{\otimes n}
= \ket{0}^{\otimes n}$ for all $\sigma \in S_D$),
then the Born distribution is $S_D$-symmetric:
\begin{equation}\label{eq:born_invariant}
    P_{\bm{\theta}}^{\mathrm{Born}}(\pi(\sigma)(x))
    = P_{\bm{\theta}}^{\mathrm{Born}}(x)
    \qquad \forall\, \sigma \in S_D,\;
    \forall\, \bm{\theta},\;
    \forall\, x \in \{0,1\}^n.
\end{equation}
\end{proposition}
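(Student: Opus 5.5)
The plan is a direct computation of the Born probability at the permuted basis state $\pi(\sigma)(x)$, moving $\Pi(\sigma)$ through the circuit in three steps.

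First, write
\[
P_{\bm{\theta}}^{\mathrm{Born}}(\pi(\sigma)(x)) = \bigl|\bra{\pi(\sigma)(x)}U(\bm{\theta})\ket{0}^{\otimes n}\bigr|^2 .
\]
By Definition~\ref{def:induced_rep}, $\ket{\pi(\sigma)(x)}=\Pi(\sigma)\ket{x}$. Since $\Pi(\sigma)$ permutes an orthonormal basis, it is unitary, so $\bra{\pi(\sigma)(x)}=\bra{x}\Pi(\sigma)^{\dagger}$.

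Second, insert the invariant initial state: replace $\ket{0}^{\otimes n}$ by $\Pi(\sigma)\ket{0}^{\otimes n}$, which is allowed by hypothesis. The equivariance relation of Definition~\ref{def:equivariant} then moves $\Pi(\sigma)$ past the circuit, $U(\bm{\theta})\Pi(\sigma)=\Pi(\sigma)U(\bm{\theta})$. The amplitude becomes
\[
\bra{x}\Pi(\sigma)^{\dagger}\Pi(\sigma)U(\bm{\theta})\ket{0}^{\otimes n} = \bra{x}U(\bm{\theta})\ket{0}^{\otimes n}.
\]
Taking the squared modulus gives $P_{\bm{\theta}}^{\mathrm{Born}}(x)$.

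Third, make sure the representation is well defined. This only requires that $\pi$ is a homomorphism into $S_{2^n}$. That follows because $\phi$ is a bijection and $\mathcal{Q}_N$ is $S_D$-closed, by Lemma~\ref{lem:components} and the construction of the augmented grid. Consequently, each $\Pi(\sigma)$ is a permutation matrix, hence unitary.

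There is no real obstacle here; the statement is essentially a two-line identity. The one point needing care is the direction of the adjoint. One must use $\Pi(\sigma)^\dagger\Pi(\sigma)=I$, and not merely commutation, to cancel the permutation on the bra side. One should also note that the result holds for every $\bm{\theta}$, because equivariance is assumed for all parameters.

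Finally, translating back through $\phi$, the identity says that the induced distribution on $\mathcal{Q}_N$ lies in $\Delta_{\mathrm{sym}}$ in the sense of Definition~\ref{def:symmetric_dist}.
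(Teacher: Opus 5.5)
Your proposal is correct and is essentially the paper's argument. The paper rewrites the bra as $\bra{x}\Pi(\sigma)^{-1}=\bra{x}\Pi(\sigma^{-1})$, commutes it through $U(\bm{\theta})$, and absorbs it into the invariant initial state; you instead insert $\Pi(\sigma)$ at the ket and cancel it against $\Pi(\sigma)^{\dagger}$, which is the same two-line computation in the opposite order and needs only unitarity of each $\Pi(\sigma)$, not the homomorphism identity (so your third step asks for more than your argument uses).
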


\begin{proof}
$\forall \sigma \in S_D$:
\begin{align}
    P_{\bm{\theta}}(\pi(\sigma) (x))
    &= \bigl|\bra{\pi(\sigma) (x)}\,
    U(\bm{\theta})\ket{0}^{\otimes n}\bigr|^2
    \notag\\
    &= \bigl|\bra{x}\,\Pi(\sigma)^\dagger\,
    U(\bm{\theta})\ket{0}^{\otimes n}\bigr|^2
    \notag\\
    &= \bigl|\bra{x}\,\Pi(\sigma)^{-1}\,
    U(\bm{\theta})\ket{0}^{\otimes n}\bigr|^2
    \notag\\
    &= \bigl|\bra{x}\,\Pi(\sigma^{-1})\,
    U(\bm{\theta})\ket{0}^{\otimes n}\bigr|^2
    \notag\\
    &= \bigl|\bra{x}\,
    U(\bm{\theta})\,\Pi(\sigma^{-1})
    \ket{0}^{\otimes n}\bigr|^2
    \notag\\
    &= \bigl|\bra{x}\,
    U(\bm{\theta})\ket{0}^{\otimes n}\bigr|^2
    \notag\\
    &= P_{\bm{\theta}}(x). \notag
\end{align}
\end{proof}

\begin{remark}[Initial state invariance as an encoding constraint]
\label{rem:init-state}
The condition $\Pi(\sigma)|0\rangle^{\otimes n} = |0\rangle^{\otimes n}$
requires the grid point $x_0 \coloneqq \phi^{-1}(0\cdots 0)$ to be fixed by
every $\sigma \in S_D$, i.e.\ all coordinates of $x_0$ equal. Such a point
exists in $\mathcal{Q}_N$ iff $D \mid N$ (a lattice barycenter) or $D \mid (N-m)$ for
some selected centroid type $m \in \mathcal{M}$.

When no fixed point exists, one may instead prepare the orbit-uniform
superposition
\[
|\psi_0\rangle \;=\; \frac{1}{\sqrt{|\text{Orb}(x_0)|}}
\sum_{x \in \text{Orb}(x_0)} |\phi(x)\rangle
\;\;\propto\;\; \text{Sym}\,|0\rangle^{\otimes n},
\qquad
\text{Sym} \coloneqq \frac{1}{|S_D|}\sum_{\sigma \in S_D}\Pi(\sigma),
\]
which lies in the trivial isotypic component of $\Pi$ and satisfies
$\Pi(\sigma)|\psi_0\rangle = |\psi_0\rangle$ by construction.
\end{remark}

\begin{theorem}[Equivariant escape from generic hardness]
\label{thm:escape}
If the PQC is $S_D$-equivariant (with $\Pi$-invariant
initial state), then for $S_D$-symmetric targets, the
approximation problem reduces from $\Delta^{M-1}$
(dimension $M - 1$) to $\Delta_{\mathrm{sym}}$
(dimension $\mathcal{O}_D(\mathcal{Q}_N) - 1$). In particular, a necessary condition
for the circuit to be able to cover the target space is:
\begin{equation}\label{eq:escape}
 N_p \geq \mathcal{O}_D(\mathcal{Q}_N) - 1
 \quad\text{(in place of } N_p \geq M - 1\text{)}.
\end{equation}
Since $\mathcal{O}_D(\mathcal{Q}_N) \le \sum_{m=0}^{D-1} p_D(N-m)$, two regimes arise:
(i) for fixed $D$, $p_D(N) = \Theta(N^{D-1})$ grows polynomially in the
resolution $N$; (ii) for $D > N$, $p_D(N) = p(N)$ is independent of $D$ but grows superpolynomially, $p(N) \sim \frac{1}{4N\sqrt{3}}
\exp\big(\pi\sqrt{2N/3}\big)$ (Hardy-Ramanujan formula). In both regimes $\mathcal{O}_D(\mathcal{Q}_N)$ is exponentially smaller than $|\mathcal{Q}_N| = 2^n$ for large $D$, so the minimal parameter
requirement $N_p \ge \mathcal{O}_D(\mathcal{Q}_N) - 1$ is satisfiable with $N_p = O(n^2)$ for
moderate $N$.
\end{theorem}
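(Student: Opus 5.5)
The theorem has three parts: a reduction of the target space, a necessary parameter count, and a growth estimate for the orbit count. I would prove them in that order.

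\emph{Reduction of the target space.} By Proposition~\ref{prop:equiv_implies_inv}, every Born distribution of an $S_D$-equivariant circuit with $\Pi$-invariant initial state is invariant under $\pi(\sigma)$. Through the bijection $\phi$, this is exactly $S_D$-symmetry on $\mathcal{Q}_N$. Hence the reachable set $\mathcal{B}$ is contained in $\Delta_{\mathrm{sym}}$. For an $S_D$-symmetric target, the target also lies in $\Delta_{\mathrm{sym}}$ (Lemma~\ref{lem:voronoi-equiv}), so the relevant approximation problem lives in $\Delta_{\mathrm{sym}}$. Proposition~\ref{prop:sym_dim} shows that this set has affine dimension $K-1$, where $K=\mathcal{O}_D(\mathcal{Q}_N)$. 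To make the reduction concrete, I would identify $\Delta_{\mathrm{sym}}$ with the orbit simplex $\Delta^{K-1}$ via $\Phi$ of Theorem~\ref{thm:phi_isometry}, which is a TV isometry. Covering numbers and distances are therefore the same whether computed in $\Delta_{\mathrm{sym}}$ or in $\Delta^{K-1}$.

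\emph{Necessary parameter count.} I would rerun the argument of Theorem~\ref{thm:generic} with $M$ replaced by $K$. Proposition~\ref{prop:covering_B}, together with Remark~\ref{rem:arch-indep} to handle tied parameters, bounds $\log\mathcal{N}(\mathcal{B},\varepsilon)$ by $N_p\log(C/\varepsilon)$. Proposition~\ref{prop:covering_simplex}, transported through $\Phi$, gives $\log\mathcal{N}(\Delta_{\mathrm{sym}},2\varepsilon)\ge (K-1)\log(1/(8\varepsilon))$. If $N_p<K-1$, the cover-chain inequality \eqref{eq:cover_chain} yields a uniform positive lower bound $\varepsilon^*$ on the TV distance from some symmetric target to $\mathcal{B}$. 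So $N_p\ge K-1$ is necessary for $\mathcal{B}$ to be dense in $\Delta_{\mathrm{sym}}$. An alternative route is smooth-image dimension: the image of a Lipschitz map from an $N_p$-dimensional cube has Hausdorff dimension at most $N_p$, so it cannot cover a set of dimension $K-1>N_p$. I would mention this as a one-line remark.

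\emph{Growth of the orbit count.} The upper bound $\mathcal{O}_D(\mathcal{Q}_N)\le\sum_{m=0}^{D-1}p_D(N-m)$ is Theorem~\ref{thm:augmented_orbits}. For fixed $D$, I would use the standard fact that $p_D(N)\sim N^{D-1}/(D!(D-1)!)$, or more crudely sandwich $p_D(N)$ between $\binom{N+D-1}{D-1}/D!$ and $\binom{N+D-1}{D-1}$; either gives $p_D(N)=\Theta(N^{D-1})$. For $D>N$, every partition of $N$ has at most $N<D$ parts, so $p_D(N)=p(N)$, and Hardy--Ramanujan supplies the asymptotic stated in the theorem. For the comparison with $|\mathcal{Q}_N|\ge\binom{N+D-1}{D-1}$, note that the orbit count is at most $D\,p_D(N)$, since $p_D$ is monotone in its argument. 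In regime (ii) this is $O(D\,p(N))$, polynomial in $D$, whereas $|\mathcal{Q}_N|$ grows like $D^{N}/N!$. Thus $n=\log_2|\mathcal{Q}_N|\approx N\log D$, and with $N$ moderate, $K-1$ is bounded by a polynomial in $n$. This is what makes $N_p=O(n^2)$ satisfiable.

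The main obstacle is making ``exponentially smaller for large $D$'' precise, since it depends on how $N$ scales with $D$. I would state it as the ratio $\mathcal{O}_D/|\mathcal{Q}_N|\le D\,p(N)\,N!/D^{N}\to0$ for fixed $N\ge 2$. For $N_p=O(n^2)$ the claim is literal only when $p(N)\lesssim (N\log D)^2$, i.e.\ for moderate $N$, and I would write that qualification explicitly.
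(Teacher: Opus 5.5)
Your first two parts follow the paper's proof. Proposition~\ref{prop:equiv_implies_inv} places the reachable set inside $\Delta_{\mathrm{sym}}$, and the covering argument of Theorem~\ref{thm:generic} is rerun with $M-1$ replaced by $\mathcal{O}_D(\mathcal{Q}_N)-1$. Two of your additions tighten steps the paper only asserts:
\begin{itemize}
\item You transport the simplex lower bound through the TV isometry $\Phi$; the paper applies the standard-simplex bound to $\Delta_{\mathrm{sym}}$ without comment.
\item You invoke Remark~\ref{rem:arch-indep}, because tied equivariant parameters change the Lipschitz constant. The paper instead says the upper bound ``remains'' $N_p\log(\pi N_p/\varepsilon+1)$.
\end{itemize}
Your Hausdorff-dimension remark is a valid qualitative alternative, once you note that the reachable set is closed. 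By $2\pi$-periodicity of the Born map it is the continuous image of a compact cube, so failing to equal $\Delta_{\mathrm{sym}}$ means failing to be dense. The paper gives no argument for the growth claims, so that part of your proposal goes beyond it.

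That extra part contains one invalid step. From $K\le D\,p_D(N)$ you conclude, in regime (ii), that $K-1$ is polynomial in $n$. But for fixed $N$ one has $2^n\approx|\mathcal{Q}_N|\approx D^N/N!$, so $D\approx(N!\,2^n)^{1/N}$ is exponential in $n$. A bound that is polynomial in $D$ is therefore exponential in $n$, and it does not yield $N_p=O(n^2)$. Your closing condition $p(N)\lesssim(N\log D)^2$ silently drops that factor $D$.

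The fix is to note that centroid types with $m>N$ contribute nothing: $\mathcal{A}_N^{(m)}=\emptyset$, so $p_D(N-m)=0$. Hence, for $D>N$,
\[
\mathcal{O}_D(\mathcal{Q}_N)\;\le\;\sum_{j=0}^{N}p(j)\;\le\;(N+1)\,p(N),
\]
which is independent of $D$. The paper's example $D=13$, $N=7$ gives exactly $\sum_{k\le 7}p(k)=45$. With this bound:
\begin{itemize}
\item $K=O(1)$ as $D\to\infty$ at fixed $N$, while $n\to\infty$;
\item your ratio bound becomes $(N+1)\,p(N)\,N!/D^N$;
\item the $N_p=O(n^2)$ clause follows for moderate $N$.
\end{itemize}

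The orbit-size argument behind your sandwich also gives $K\ge|\mathcal{Q}_N|/D!$ in general, since every orbit has at most $D!$ elements. So in regime (i) the compression is at most a factor $D!$, and the ``exponentially smaller than $2^n$'' clause can only be read with $D$ large. This supports the qualification you propose to state.
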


\begin{proof}
By Proposition~\ref{prop:equiv_implies_inv}, an
$S_D$-equivariant circuit with $\Pi$-invariant initial
state automatically produces $S_D$-symmetric Born
distributions. Its reachable set
$\mathcal{B}_{\mathrm{sym}} = \mathcal{B} \cap \Delta_{\mathrm{sym}}$
lies entirely within $\Delta_{\mathrm{sym}}$.

Applying the covering number framework of
Section~\ref{sec:generic} to the restricted problem
in $\Delta_{\mathrm{sym}}$: the simplex
$\Delta_{\mathrm{sym}}$ has dimension $\mathcal{O}_D(\mathcal{Q}_N) - 1$
(Proposition~\ref{prop:sym_dim}), so the covering number
lower bound (Proposition~\ref{prop:covering_simplex})
becomes:
\[
    \log\mathcal{N}(\Delta_{\mathrm{sym}},
    \varepsilon, \TV)
    \geq (\mathcal{O}_D(\mathcal{Q}_N) - 1)\log\frac{1}{4\varepsilon}.
\]
The covering number upper bound for $\mathcal{B}_{\mathrm{sym}}$
remains $\log\mathcal{N}(\mathcal{B}_{\mathrm{sym}},
\varepsilon, \TV)
\leq N_p\log(\pi N_p/\varepsilon + 1)$
(the circuit still has $N_p$ parameters).

The inapproximability threshold
(Theorem~\ref{thm:generic}) with $M - 1$ replaced by
$\mathcal{O}_D(\mathcal{Q}_N) - 1$ gives:
\[
    \alpha_{\mathrm{sym}}
    = \frac{N_p}{\mathcal{O}_D(\mathcal{Q}_N) - 1}.
\]
The condition $\alpha_{\mathrm{sym}} \geq 1$ (i.e.,
$N_p \geq \mathcal{O}_D(\mathcal{Q}_N) - 1$) is necessary for the reachable set
to cover $\Delta_{\mathrm{sym}}$.
\end{proof}

\subsection{Constructive Approaches to Symmetry-Respecting
QCBM}
\label{subsec:constructive}

Theorem~\ref{thm:escape} establishes that
$N_p \geq \mathcal{O}_D(\mathcal{Q}_N) - 1$ is necessary when the circuit
respects the $S_D$ symmetry. Two routes achieve this: Strategy~A is
the established equivariant-circuit approach of geometric quantum
machine learning \cite{larocca2022group,schatzki2024theoretical},
instantiated here for our encoding; Strategy~B is the orbit-space
construction introduced in this work. We analyse both rigorously and
prove (Proposition~\ref{prop:dominance}) that the second dominates
the first.

\subsubsection{Strategy A (Baseline): $S_D$-Equivariant Circuit via
Full Entanglement}\label{sec:strategyA}

\begin{assumption}[Encoding compatibility for Strategy A]\label{ass:compat}
There exist a group homomorphism $\tau : S_D \to S_n$ and an encoding
bijection $\phi : \mathcal{Q}_N \leftrightarrow \{0,1\}^n$ that intertwine the two
actions:
\begin{equation}\label{eq:intertwine}
\phi(\sigma \cdot \mathbf{x}) \;=\; \tau(\sigma)\cdot \phi(\mathbf{x})
\qquad \forall\, \sigma \in S_D,\ \mathbf{x} \in \mathcal{Q}_N,
\end{equation}
where $\tau(\sigma) \in S_n$ acts on bit strings by permuting bit positions.
Under \eqref{eq:intertwine}, the induced representation of Definition~\ref{def:induced_rep} is
realized by qubit permutations: $\Pi(\sigma) = \Pi_{\tau(\sigma)}$ for all
$\sigma \in S_D$.
\end{assumption}

\begin{remark}[Obstructions to encoding compatibility]\label{rem:obstruction}
Assumption~\ref{ass:compat} is a nontrivial combinatorial constraint and can
fail for generic $(D, N)$ with $|\mathcal{Q}_N| = 2^n$. An intertwining bijection
forces a bijection between the $S_D$-orbits on $\mathcal{Q}_N$ and the
$\tau(S_D)$-orbits on $\{0,1\}^n$ that preserves orbit sizes and stabilizer
conjugacy classes; in particular,
\[
\dim V_{\mathrm{sym}} \;=\;
|\{\tau(S_D)\text{-orbits on } \{0,1\}^n\}|
\;=\; \mathcal{O}_D(\mathcal{Q}_N) \;=\; K.
\]
Qubit-permutation orbits on $\{0,1\}^n$ are refinements of Hamming-weight
classes, which severely constrains the attainable orbit profiles. For
example, if $\tau(S_D) = S_n$, the invariant subspace is spanned by the
Dicke states and $\dim V_{\mathrm{sym}} = n+1$, so compatibility would
require $K = n+1$, simultaneously a power of two---rarely satisfiable.
Two immediate consequences:
(i) Theorem~\ref{thm:full_equiv_lie} and the DLA analysis below
(Proposition~\ref{prop:ceiling}) apply to those encodings for which Assumption~\ref{ass:compat} holds;
(ii) Strategy~B requires \emph{no} such assumption, since the induced action
on orbit labels is trivial (Theorem~\ref{thm:orbit_symmetric}). This is a structural advantage of
Strategy~B beyond the qubit and parameter counts of Proposition~\ref{prop:dominance}: it
removes the encoding-compatibility obstruction altogether.

We also note that under Assumption~\ref{ass:compat} the all-zeros basis state
is automatically $\Pi$-invariant ($\tau(\sigma)$ fixes the all-zeros string),
so the initial-state condition of Proposition~\ref{prop:equiv_implies_inv} is satisfied for free; the
constructions of Remark~\ref{rem:init-state} are needed only in the general
(encoding-induced) picture.
\end{remark}

\begin{definition}[Full entanglement layer]
\label{def:full_entangle}
The \emph{full CZ entanglement layer} on $n$ qubits applies
a controlled-$Z$ gate to every pair of qubits:
\begin{equation}\label{eq:full_cz}
   E_{\mathrm{full}} = \prod_{0 \leq i < j \leq n-1}
\mathrm{CZ}_{i,j}
\end{equation}
where the product runs over all $\binom{n}{2}$ unordered
pairs. The \emph{full-entanglement PQC} is:
\begin{equation}\label{eq:full_pqc}
    U(\bm{\theta}) = \prod_{l=0}^{L}
    \Bigl(E_{\mathrm{full}} \cdot
    \bigotimes_{q=0}^{n-1} W_q^{(l)}(\bm{\theta})\Bigr)
    \cdot H^{\otimes n},
\end{equation}
where $W_q^{(l)} = R_Y(\theta_1)R_Z(\theta_2)R_Y(\theta_3)$
is the YZY single-qubit block on qubit $q$ at layer $l$.
\end{definition}

\begin{definition}[Structural symmetry group]
\label{def:structural_sym}
A PQC $U(\bm{\theta})$ on $n$ qubits has \emph{structural
symmetry group} $G_{\mathrm{circuit}} \leq S_n$ if for
every qubit permutation $\tau \in G_{\mathrm{circuit}}$
there exists a parameter permutation
$\tau_*: \Real^{N_p} \to \Real^{N_p}$ such that:
\begin{equation}\label{eq:structural_sym_def}
    \Pi_\tau\, U(\bm{\theta})\,
    \Pi_\tau^\dagger
    = U(\tau_*(\bm{\theta}))
    \qquad \forall\, \bm{\theta},
\end{equation}
where $\Pi_\tau$ is the unitary that permutes qubits
according to $\tau$:
$\Pi_\tau\ket{b_0 \cdots b_{n-1}}
= \ket{b_{\tau^{-1}(0)} \cdots b_{\tau^{-1}(n-1)}}$.
\end{definition}

\begin{proposition}[Structural symmetry of full CZ]
\label{prop:full_cz_sym}
The full CZ entanglement layer is invariant under every
qubit permutation:
\begin{equation}\label{eq:full_cz_inv}
    \Pi_\tau\, E_{\mathrm{full}}\, \Pi_\tau^\dagger
    = E_{\mathrm{full}}
    \qquad \forall\, \tau \in S_n,
\end{equation}
Consequently, the structural symmetry group of the
full-entanglement PQC is $G_{\mathrm{circuit}} = S_n$.
\end{proposition}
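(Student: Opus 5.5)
The plan is to verify the conjugation identity gate by gate. Since $\Pi_\tau$ is unitary, conjugation by it is multiplicative, so it suffices to show $\Pi_\tau\,\mathrm{CZ}_{i,j}\,\Pi_\tau^\dagger = \mathrm{CZ}_{\tau(i),\tau(j)}$ for every pair $i<j$. The key fact is that $\mathrm{CZ}_{i,j}$ is diagonal in the computational basis:
\[
\mathrm{CZ}_{i,j}\ket{b_0\cdots b_{n-1}} = (-1)^{b_ib_j}\ket{b_0\cdots b_{n-1}}.
\]
Consequently $E_{\mathrm{full}}\ket{b} = (-1)^{\sum_{i<j}b_ib_j}\ket{b}$, and all the CZ factors commute, so the ordering of the product in \eqref{eq:full_cz} is immaterial.

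I will compute directly on basis states. Write $b' = \tau^{-1}\cdot b$, so that $\Pi_\tau^\dagger\ket{b} = \ket{b'}$ with $b'_k = b_{\tau(k)}$. The indexing convention must be checked against Definition~\ref{def:structural_sym}; either convention works, since the argument only needs $\Pi_\tau^\dagger$ to permute bit positions. Then
\[
\Pi_\tau E_{\mathrm{full}} \Pi_\tau^\dagger\ket{b} = (-1)^{\sum_{i<j} b'_i b'_j}\ket{b},
\]
and the exponent equals $\sum_{\{i,j\}} b_{\tau(i)}b_{\tau(j)}$. This is a sum over all unordered pairs, and $\tau$ induces a bijection on unordered pairs of distinct qubits. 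Hence the exponent equals $\sum_{i<j} b_ib_j$, which gives \eqref{eq:full_cz_inv}. The phase $(-1)^{\sum_{i<j}b_ib_j} = (-1)^{\binom{|b|}{2}}$ depends only on the Hamming weight, which makes the invariance transparent.

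For the consequence $G_{\mathrm{circuit}} = S_n$, I will use that each layer is $E_{\mathrm{full}}\cdot\bigotimes_q W_q^{(l)}$. Conjugating a tensor product of single-qubit gates by $\Pi_\tau$ relabels the qubits, giving $\bigotimes_q W_{\tau^{-1}(q)}^{(l)}$ placed on qubit $q$. Defining $\tau_*$ as the permutation of parameters that moves the triple $(\theta_{3q},\theta_{3q+1},\theta_{3q+2})^{(l)}$ to qubit slot $\tau(q)$ in every layer yields a product of layers of the same form. The initial $H^{\otimes n}$ is also invariant under qubit permutations. Inserting $\Pi_\tau^\dagger\Pi_\tau$ between consecutive factors then gives \eqref{eq:structural_sym_def}.

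The only delicate step is bookkeeping: keeping the direction of $\tau$ versus $\tau^{-1}$ consistent between $\Pi_\tau$ and $\tau_*$. I would fix this by checking the identity on a single-qubit rotation, for example $\Pi_\tau R_Y^{(q)}\Pi_\tau^\dagger = R_Y^{(\tau(q))}$, and defining $\tau_*$ accordingly.
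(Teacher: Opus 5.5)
Your proposal is correct and follows essentially the same route as the paper: both rest on the fact that $\tau$ induces a bijection on unordered pairs of qubits. You apply that fact to the diagonal phase $(-1)^{\sum_{i<j}b_ib_j}$, whereas the paper applies it to the gate identity $\Pi_\tau\,\mathrm{CZ}_{i,j}\,\Pi_\tau^\dagger=\mathrm{CZ}_{\tau(i),\tau(j)}$; both then obtain $G_{\mathrm{circuit}}=S_n$ by relabelling the YZY blocks with the same parameter permutation (the triple on qubit $q$ moved to qubit $\tau(q)$), and your explicit treatment of $H^{\otimes n}$ and the Hamming-weight form of the phase are small additions the paper leaves implicit.
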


\begin{proof}
The CZ gate satisfies $\mathrm{CZ}_{i,j} = \mathrm{CZ}_{j,i}$
(it is symmetric in its two qubits) and acts as the diagonal
unitary $\mathrm{diag}(1,1,1,-1)$ in the computational basis
of qubits $i$ and $j$.

For any qubit permutation $\tau \in S_n$:
\begin{equation}\label{eq:cz_conjugation}
    \Pi_\tau\, \mathrm{CZ}_{i,j}\, \Pi_\tau^\dagger
    = \mathrm{CZ}_{\tau(i),\tau(j)}.
\end{equation}
This holds because conjugation by $\Pi_\tau$ relabels qubit
$i$ as qubit $\tau(i)$.

Therefore:
\[
    \Pi_\tau\, E_{\mathrm{full}}\, \Pi_\tau^\dagger
    = \prod_{0 \leq i < j \leq n-1}
    \mathrm{CZ}_{\tau(i),\tau(j)}
    = \prod_{0 \leq i' < j' \leq n-1}
    \mathrm{CZ}_{i',j'}
    = E_{\mathrm{full}},
\]
where the second equality uses the fact that
$\tau$ is a bijection on $\{0,\ldots,n{-}1\}$,
so $\{\{\tau(i),\tau(j)\} : 0 \leq i < j \leq n-1\}
= \{\{i,j\} : 0 \leq i < j \leq n-1\}$
(the set of all unordered pairs is invariant under relabeling).

The YZY block on qubit $q$ at layer $l$ is
$W_q^{(l)}(\bm{\theta})
= R_Y(\theta_{q,1}^{(l)})R_Z(\theta_{q,2}^{(l)})
R_Y(\theta_{q,3}^{(l)})$.
Conjugation by $\Pi_\tau$ moves the block from
qubit $q$ to qubit $\tau(q)$:
\[
    \Pi_\tau\left(\bigotimes_{q=0}^{n-1}
    W_q^{(l)}(\bm{\theta})\right)
    \Pi_\tau^\dagger
    = \bigotimes_{q=0}^{n-1}
    W_{\tau^{-1}(q)}^{(l)}(\bm{\theta})
    = \bigotimes_{q=0}^{n-1}
    W_q^{(l)}(\tau_*(\bm{\theta})),
\]
where the parameter permutation $\tau_*$ is induced by
$(\tau_*(\bm{\theta}))_{q,k}^{(l)}
= \theta_{\tau^{-1}(q),k}^{(l)}$
(the parameters of qubit $q$ in the permuted circuit
are the original parameters of qubit $\tau^{-1}(q)$).

Therefore,
\[
    \Pi_\tau\, U(\bm{\theta})\,
    \Pi_\tau^\dagger
    = U(\tau_*(\bm{\theta}))
    \qquad \forall\, \tau \in S_n,
\]
confirming $G_{\mathrm{circuit}} = S_n$ in the sense
of Definition~\ref{def:structural_sym}.
\end{proof}
\begin{definition}[Equivariant Lie subalgebra]
\label{def:equiv_lie}
For a subgroup $H \le S_n$ acting on $n$ qubits via the unitary
representation $\tau \mapsto \Pi_\tau$, define the
\emph{$H$-equivariant Lie subalgebra} of the Hermitian operators
$\mathcal{H}_n := i\,\mathfrak{u}(2^n)$:
\begin{equation}\label{eq:equiv_lie_def}
    \mathcal{H}_n^{H}
    = \bigl\{X \in \mathcal{H}_n :
    \Pi_\tau\, X\, \Pi_\tau^\dagger = X
    \text{ for all } \tau \in H\bigr\}.
\end{equation}
A Hermitian operator $X$ generates an $H$-equivariant unitary
$e^{-i\theta X}$ (i.e., $\Pi_\tau e^{-i\theta X} \Pi_\tau^\dagger
= e^{-i\theta X}$ for all $\theta$ and all $\tau \in H$) if and
only if $X \in \mathcal{H}_n^{H}$.
\end{definition}

\begin{lemma}[Pauli conjugation under qubit permutation]
\label{lem:pauli_conjugation}
For any single-qubit Pauli operator $P \in \{X, Y, Z\}$ acting on
qubit $q \in \{0, \ldots, n{-}1\}$, and any qubit permutation
$\tau \in S_n$:
\begin{equation}\label{eq:pauli_conj}
    \Pi_\tau\, P_q\, \Pi_\tau^\dagger = P_{\tau(q)}.
\end{equation}
Consequently, $P_{q_1} = P_{q_2}$ as operators on $(\Complex^2)^{\otimes n}$
if and only if $q_1 = q_2$.
\end{lemma}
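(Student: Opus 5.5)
The first claim, $\Pi_\tau P_q \Pi_\tau^\dagger = P_{\tau(q)}$, I would check directly on the computational basis using the definition $\Pi_\tau\ket{b_0\cdots b_{n-1}} = \ket{b_{\tau^{-1}(0)}\cdots b_{\tau^{-1}(n-1)}}$.

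Write $P_q = I^{\otimes q}\otimes P\otimes I^{\otimes(n-q-1)}$. On a product state $\ket{v_0}\otimes\cdots\otimes\ket{v_{n-1}}$, the permutation $\Pi_\tau$ sends the factor in slot $j$ to slot $\tau(j)$. By linearity this extends from basis strings to arbitrary product vectors. Hence $\Pi_\tau^\dagger = \Pi_{\tau^{-1}}$ moves slot $\tau(q)$ back to slot $q$, then $P_q$ acts on that factor, and $\Pi_\tau$ returns it to slot $\tau(q)$. The net effect is that $P$ is applied in slot $\tau(q)$ and every other factor is unchanged, so $\Pi_\tau P_q\Pi_\tau^\dagger = P_{\tau(q)}$. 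Product vectors span $(\Complex^2)^{\otimes n}$, so this identity of operators holds everywhere.

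For the consequence, the direction $q_1=q_2 \Rightarrow P_{q_1}=P_{q_2}$ is trivial. For the converse, suppose $q_1\ne q_2$ and exhibit a vector on which $P_{q_1}$ and $P_{q_2}$ disagree. A clean way is to use the Pauli basis: $\{\sigma_{a_0}\otimes\cdots\otimes\sigma_{a_{n-1}}\}$ is orthogonal under the Hilbert--Schmidt inner product. Here $P_{q_1}$ and $P_{q_2}$ are distinct elements of this basis, since they have $P$ in different slots, so $\tr(P_{q_1}P_{q_2})=0\neq 2^n=\tr(P_{q_1}^2)$. Therefore they are not equal.

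A concrete alternative is to take the state with an eigenvector $\ket{+_P}$ of $P$ (eigenvalue $+1$) in slot $q_1$, an eigenvector $\ket{-_P}$ (eigenvalue $-1$) in slot $q_2$, and anything elsewhere. On this state $P_{q_1}$ gives $+1$ while $P_{q_2}$ gives $-1$.

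The only point needing care is the index convention in $\Pi_\tau$, i.e.\ that it moves slot $j$ to $\tau(j)$ rather than to $\tau^{-1}(j)$. I would verify this on a single basis string before writing the general argument.
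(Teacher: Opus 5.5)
Your proof is correct and takes essentially the same approach as the paper. The conjugation identity is read off from the definition of $\Pi_\tau$; your slot-tracking on product vectors, using $\Pi_\tau^\dagger=\Pi_{\tau^{-1}}$, just makes explicit what the paper states in one line. Distinctness comes from the same trace computation, since $\tr\bigl[(P_{q_1}-P_{q_2})^2\bigr]=\tr(P_{q_1}^2)+\tr(P_{q_2}^2)-2\tr(P_{q_1}P_{q_2})=2\cdot 2^n$ is exactly the paper's formal check.
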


\begin{proof}
By definition,
$\Pi_\tau \ket{b_0 \cdots b_{n-1}}
= \ket{b_{\tau^{-1}(0)} \cdots b_{\tau^{-1}(n-1)}}$.
For any Pauli operator $P$ acting on qubit $q$, the operator
$\Pi_\tau P_q \Pi_\tau^\dagger$ acts on the qubit at position
$\tau(q)$ in the relabeled basis, with the same Pauli action $P$.
This is precisely $P_{\tau(q)}$.

For the second statement: $P_{q_1}$ and $P_{q_2}$ are distinct
Hermitian operators on $(\Complex^2)^{\otimes n}$ when $q_1 \neq q_2$,
since they act non-trivially on different qubits. (Formally,
$\mathrm{Tr}[(P_{q_1} - P_{q_2})^2] = 2 \cdot 2^n \neq 0$ when
$q_1 \neq q_2$.)
\end{proof}

\begin{theorem}[Equivariance of the shared-parameter full-entanglement PQC]\label{thm:full_equiv_lie}
Suppose Assumption~\ref{ass:compat} holds; set $H = \tau(S_D) \le S_n$ and
let $r_H$ be the number of $H$-orbits on $\{0, \dots, n-1\}$. Restrict the
parameters of the full-entanglement PQC~\eqref{eq:full_pqc} to the $H$-invariant
subspace
\[
\Theta_H = \big\{ \bm{\theta} : \theta^{(l)}_{q,k} = \theta^{(l)}_{\tau(q),k}
\;\; \forall \tau \in H,\ q,\ k,\ l \big\},
\qquad \dim \Theta_H = 3\, r_H (L+1).
\]
Then $\Pi_\tau U(\bm{\theta})\Pi_\tau^\dagger = U(\bm{\theta})$ for all
$\tau \in H$ and $\bm{\theta} \in \Theta_H$. Consequently
$U(\bm{\theta})\,\Pi(\sigma) = \Pi(\sigma)\, U(\bm{\theta})$ for all
$\sigma \in S_D$, i.e.\ $U(\bm{\theta})$ is
$S_D$-equivariant in the sense of Definition~\ref{def:equivariant}, and (the initial state
$|0\rangle^{\otimes n}$ being $\Pi$-invariant by
Remark~\ref{rem:obstruction}) the Born distribution is $S_D$-symmetric on
$\mathcal{Q}_N$ by Proposition~\ref{prop:equiv_implies_inv}.
\end{theorem}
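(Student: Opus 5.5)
The plan is to show that each factor of the full-entanglement PQC~\eqref{eq:full_pqc} commutes with $\Pi_\tau$ for every $\tau\in H$ when $\bm\theta\in\Theta_H$. Invariance of the product then follows, and the $S_D$ statements follow from Assumption~\ref{ass:compat}. The factors are $H^{\otimes n}$, the entangler $E_{\mathrm{full}}$, and the single-qubit layers $\bigotimes_q W^{(l)}_q(\bm\theta)$. We handle them in that order.

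For the fixed gates: $H^{\otimes n}$ is a tensor product of identical single-qubit gates, so $\Pi_\tau H^{\otimes n}\Pi_\tau^\dagger = H^{\otimes n}$ for every $\tau\in S_n$. For $E_{\mathrm{full}}$ this is exactly Proposition~\ref{prop:full_cz_sym}.

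For the parameterized layers, the computation in the proof of Proposition~\ref{prop:full_cz_sym} gives
\[
\Pi_\tau\Bigl(\textstyle\bigotimes_q W^{(l)}_q(\bm\theta)\Bigr)\Pi_\tau^\dagger=\bigotimes_q W^{(l)}_q(\tau_*(\bm\theta)),
\qquad (\tau_*\bm\theta)^{(l)}_{q,k}=\theta^{(l)}_{\tau^{-1}(q),k}.
\]
This conjugation is the consequence of Lemma~\ref{lem:pauli_conjugation} for the generators $Y_q,Z_q$. On $\Theta_H$ we have $\theta^{(l)}_{q,k}=\theta^{(l)}_{\tau(q),k}$ for all $\tau\in H$. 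Since $H$ is a group, applying this with $\tau^{-1}\in H$ gives $\tau_*(\bm\theta)=\bm\theta$, so each layer is fixed by conjugation. Multiplying the conjugated factors, and inserting $\Pi_\tau^\dagger\Pi_\tau=I$ between them, yields $\Pi_\tau U(\bm\theta)\Pi_\tau^\dagger=U(\bm\theta)$. The dimension claim $\dim\Theta_H=3r_H(L+1)$ is immediate: the constraint says that for each layer $l$ and rotation index $k$ the angle is constant on each $H$-orbit of qubits, which leaves one free angle per orbit, per $k\in\{1,2,3\}$, per $l\in\{0,\dots,L\}$.

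Next comes the transfer to $S_D$. By Assumption~\ref{ass:compat}, $\Pi(\sigma)=\Pi_{\tau(\sigma)}$ with $\tau(\sigma)\in H$, so the commutation just proved is exactly Definition~\ref{def:equivariant}. The all-zeros string is fixed by every bit permutation, so $\Pi(\sigma)\ket{0}^{\otimes n}=\ket{0}^{\otimes n}$, as noted in Remark~\ref{rem:obstruction}. Proposition~\ref{prop:equiv_implies_inv} then gives $P^{\mathrm{Born}}_{\bm\theta}(\pi(\sigma)x)=P^{\mathrm{Born}}_{\bm\theta}(x)$. Pulling this back through the bijection $\phi$, where $\pi(\sigma)(x)=\phi(\sigma\cdot\phi^{-1}(x))$, shows that the induced distribution on $\mathcal Q_N$ is $S_D$-symmetric in the sense of Definition~\ref{def:symmetric_dist}.

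The main obstacle is the bookkeeping of the permutation conventions. One must check that the direction $\tau$ versus $\tau^{-1}$ in $\tau_*$ matches the invariance condition defining $\Theta_H$, which is why closure of $H$ under inverses is used. One must also check that conjugating the full ordered product, including the placement of $E_{\mathrm{full}}$ after the final layer in~\eqref{eq:full_pqc}, introduces no mismatch; it does not, because every factor is individually invariant. Everything else is a direct citation of earlier results.
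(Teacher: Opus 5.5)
Your proof is correct and follows the paper's own argument: you show that $H^{\otimes n}$, $E_{\mathrm{full}}$ (via Proposition~\ref{prop:full_cz_sym}) and each YZY layer are separately invariant under conjugation by $\Pi_\tau$, using the tensor-rearrangement identity and orbit-constancy of $\bm\theta\in\Theta_H$, and then pass to $S_D$-equivariance through $\Pi(\sigma)=\Pi_{\tau(\sigma)}$. Beyond the paper, you also justify the count $\dim\Theta_H=3r_H(L+1)$ and explicitly pull the Born symmetry back through $\phi$, both of which the paper leaves to the statement's citations.
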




\begin{proof}\leavevmode

Fix $\sigma \in S_D$ and let $\tau = \tau(\sigma) \in H$.
Under the hypothesis, $\Pi(\sigma) = \Pi_\tau$.

\emph{Entanglement layer.}
By Proposition~\ref{prop:full_cz_sym},
$\Pi_\tau\, E_{\mathrm{full}}\, \Pi_\tau^\dagger
= E_{\mathrm{full}}$, so the entanglement layer is invariant
under conjugation by $\Pi_\tau$.

\emph{Hadamard layer.}
The Hadamard tensor $H^{\otimes n}$ acts identically on every
qubit, hence is invariant under any qubit permutation:
$\Pi_\tau\, H^{\otimes n}\, \Pi_\tau^\dagger = H^{\otimes n}$.

\emph{Single-qubit YZY layer.}
By the tensor rearrangement identity (established in
Proposition~\ref{prop:full_cz_sym}):
\[
    \Pi_\tau\left(\bigotimes_{q=0}^{n-1} W_q^{(l)}(\bm{\theta})\right)
    \Pi_\tau^\dagger
    = \bigotimes_{q=0}^{n-1} W_{\tau^{-1}(q)}^{(l)}(\bm{\theta}).
\]
For $\bm{\theta} \in \Theta_H$, the $H$-invariance condition gives
$\theta_{\tau^{-1}(q),k}^{(l)} = \theta_{q,k}^{(l)}$ for every
slot~$k$ (since $q$ and $\tau^{-1}(q)$ lie in the same $H$-orbit).
Each YZY block is determined entirely by its three parameters
$(\theta_{q,1}^{(l)}, \theta_{q,2}^{(l)}, \theta_{q,3}^{(l)})$, so
\[
    W_{\tau^{-1}(q)}^{(l)}(\bm{\theta})
    = R_Y(\theta_{\tau^{-1}(q),1}^{(l)})\,
      R_Z(\theta_{\tau^{-1}(q),2}^{(l)})\,
      R_Y(\theta_{\tau^{-1}(q),3}^{(l)})
    = W_q^{(l)}(\bm{\theta}).
\]
Therefore:
\[
    \Pi_\tau\left(\bigotimes_{q=0}^{n-1} W_q^{(l)}(\bm{\theta})\right)
    \Pi_\tau^\dagger
    = \bigotimes_{q=0}^{n-1} W_q^{(l)}(\bm{\theta}).
\]

Each layer in the decomposition~\eqref{eq:full_pqc} of
$U(\bm{\theta})$ is invariant under conjugation by $\Pi_\tau$,
hence so is their product. This establishes
$\Pi_\tau\, U(\bm{\theta})\, \Pi_\tau^\dagger = U(\bm{\theta})$
for all $\tau \in H$ and all $\bm{\theta} \in \Theta_H$.
\end{proof}

\subsubsection{Strategy B (Route Proposed): Orbit-Space QCBM}\label{sec:strategyB}

\begin{definition}[Orbit-space encoding]
\label{def:orbit_encoding}
Let $\mathcal{O}_1, \ldots, \mathcal{O}_K$ be the $S_D$-orbits of
the augmented grid $\mathcal{Q}_N$, constructed by selecting
complete $S_D$-orbits from the lattice (mandatory) and
centroids (optional) as in
Theorem~\ref{thm:augmented_orbits}, with $K$ chosen to be a
power of~$2$. Let $s_j = |\mathcal{O}_j|$ denote the size of
orbit~$j$. Define the \emph{orbit-index function}
$j: \mathcal{Q}_N \to \{1, \ldots, K\}$ by assigning to each
$\mathbf{x} \in \mathcal{Q}_N$ the unique index
$j(\mathbf{x}) \in \{1, \ldots, K\}$ such that
$\mathbf{x} \in \mathcal{O}_{j(\mathbf{x})}$.

The \emph{orbit-space QCBM} uses $n' = \log_2 K$ qubits with a
bijection:
\[
    \psi: \{1, \ldots, K\}
    \longleftrightarrow \{0,1\}^{n'}.
\]
The Born distribution on orbit indices is:
\begin{equation}\label{eq:orbit_born}
    q_{\bm{\theta}}(j) =
    \bigl|\langle \psi(j) | U(\bm{\theta})
    |0\rangle^{\otimes n'}\bigr|^2,
    \qquad j = 1, \ldots, K,
\end{equation}
which automatically satisfies $\sum_{j=1}^K q_{\bm{\theta}}(j)
= 1$ (no inactive states, since $K = 2^{n'}$). The induced
distribution on the full grid is:
\begin{equation}\label{eq:orbit_to_grid}
    P_{\bm{\theta}}(\mathbf{x})
    = \frac{q_{\bm{\theta}}(j(\mathbf{x}))}{s_{j(\mathbf{x})}}
    \qquad \text{for } \mathbf{x} \in \mathcal{O}_{j(\mathbf{x})}.
\end{equation}
\end{definition}

\begin{theorem}[Orbit-space QCBM produces $S_D$-symmetric
distributions]\label{thm:orbit_symmetric}
For every parameter vector $\bm{\theta}$, the distribution
$P_{\bm{\theta}}$ defined by~\eqref{eq:orbit_to_grid} is a valid
$S_D$-symmetric probability distribution on $\mathcal{Q}_N$,
i.e., $P_{\bm{\theta}} \in \Delta_{\mathrm{sym}}$. Moreover, the
map $\Phi: \Delta^{K-1} \to \Delta_{\mathrm{sym}}$ defined by
$[\Phi(\mathbf{q})](\mathbf{x})
= q_{j(\mathbf{x})} / s_{j(\mathbf{x})}$ is a bijective isometry
under the total variation metric, and the circuit
$U(\bm{\theta})$ requires no symmetry properties.
\end{theorem}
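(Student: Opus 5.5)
The plan is to verify the four claims in order: validity, symmetry, bijectivity, and isometry. The only structural input is that the orbits $\mathcal{O}_1,\ldots,\mathcal{O}_K$ partition $\mathcal{Q}_N$, with $s_j=|\mathcal{O}_j|\ge 1$. This holds because the $S_D$-orbits are the equivalence classes of the action (Definition~\ref{def:orbits}), and $\mathcal{Q}_N$ is a union of complete orbits by Lemma~\ref{lem:components}.

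\textbf{Validity and symmetry.} For validity, nonnegativity of $P_{\bm\theta}$ is inherited from $q_{\bm\theta}(j)=|\langle\psi(j)|U(\bm\theta)|0\rangle^{\otimes n'}|^2\ge 0$. Normalization follows by grouping the sum over $\mathcal{Q}_N$ by orbits:
\[
\sum_{\mathbf{x}}P_{\bm\theta}(\mathbf{x})=\sum_{j=1}^K s_j\cdot q_{\bm\theta}(j)/s_j=\sum_j q_{\bm\theta}(j)=1,
\]
where the last equality uses unitarity and $K=2^{n'}$. For symmetry, note that $\sigma\cdot\mathbf{x}\in\mathrm{Orb}(\mathbf{x})$, so $j(\sigma\cdot\mathbf{x})=j(\mathbf{x})$ and hence $P_{\bm\theta}(\sigma\cdot\mathbf{x})=P_{\bm\theta}(\mathbf{x})$. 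This step uses nothing about $U(\bm\theta)$, which establishes the final clause that the circuit requires no symmetry properties. The same computation, applied to an arbitrary $\mathbf{q}\in\Delta^{K-1}$, shows that $\Phi$ indeed maps into $\Delta_{\mathrm{sym}}$.

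\textbf{Bijectivity.} I will construct the inverse $\Psi:\Delta_{\mathrm{sym}}\to\Delta^{K-1}$ defined by $\Psi(P)_j=P(\mathcal{O}_j)=\sum_{\mathbf{x}\in\mathcal{O}_j}P(\mathbf{x})$. It is clear that $\Psi(P)\in\Delta^{K-1}$, and $\Psi\circ\Phi=\mathrm{id}$ follows immediately from the definitions. For $\Phi\circ\Psi=\mathrm{id}$, I use that a symmetric $P$ is constant on each orbit: any two points of $\mathcal{O}_j$ are related by some $\sigma$. Therefore $P(\mathbf{x})=P(\mathcal{O}_j)/s_j$ for every $\mathbf{x}\in\mathcal{O}_j$, which is exactly $[\Phi(\Psi(P))](\mathbf{x})$.

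\textbf{Isometry.} Again I group by orbits:
\[
\TV(\Phi(\mathbf{q}),\Phi(\mathbf{q}'))=\tfrac12\sum_j\sum_{\mathbf{x}\in\mathcal{O}_j}\frac{|q_j-q'_j|}{s_j}=\tfrac12\sum_j|q_j-q'_j|=\TV(\mathbf{q},\mathbf{q}').
\]
The claimed $f$-divergence invariance from Theorem~\ref{thm:phi_isometry} follows identically, because the summand $\frac{q'_j}{s_j}f\!\big(\frac{q_j/s_j}{q'_j/s_j}\big)$ has the factors $s_j$ cancel inside $f$, and the $s_j$ copies cancel the prefactor $1/s_j$.

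There is no serious obstacle here. The only point needing care is the surjectivity step, namely using transitivity of $S_D$ on each orbit to conclude that symmetric distributions are orbit-constant. A secondary point is confirming that the orbit-index function $j(\cdot)$ is well defined, which rests on the disjointness of orbits.
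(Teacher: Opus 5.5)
Your proposal is correct and follows essentially the same route as the paper. The paper also groups by orbits for normalization and for the TV identity, uses invariance of the orbit index for symmetry, and uses orbit-constancy of symmetric distributions for surjectivity; your two-sided inverse $\Psi(P)_j = P(\mathcal{O}_j)$ is exactly the paper's preimage $q_j = s_j P_j$, and it gives injectivity for free. The one difference is cosmetic: for the clause that $U(\bm{\theta})$ needs no symmetry, the paper also notes that $S_D$ acts trivially on the orbit-label register, so the equivariance condition holds automatically, whereas you simply observe that the symmetry argument never uses $U$.
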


\begin{proof}\leavevmode

\textbf{$P_{\bm{\theta}}$ is $S_D$-symmetric.}
By Definition~\ref{def:augmented_grid}, $\mathcal{Q}_N$ is a
union of complete $S_D$-orbits, so for every
$\mathbf{x} \in \mathcal{Q}_N$ and every $\sigma \in S_D$,
$\sigma \cdot \mathbf{x} \in \mathcal{Q}_N$ and lies in the
same orbit as $\mathbf{x}$. Hence $j(\sigma \cdot \mathbf{x})
= j(\mathbf{x})$ and $s_{j(\sigma \cdot \mathbf{x})}
= s_{j(\mathbf{x})}$. Therefore:
\[
    P_{\bm{\theta}}(\sigma \cdot \mathbf{x})
    = \frac{q_{\bm{\theta}}(j(\sigma \cdot \mathbf{x}))}
    {s_{j(\sigma \cdot \mathbf{x})}}
    = \frac{q_{\bm{\theta}}(j(\mathbf{x}))}
    {s_{j(\mathbf{x})}}
    = P_{\bm{\theta}}(\mathbf{x}).
\]
This holds for every $\bm{\theta}$, with no constraint on the
circuit architecture.

\textbf{$P_{\bm{\theta}}$ is a valid probability
distribution.}
Non-negativity is immediate from $q_{\bm{\theta}}(j) \geq 0$ and
$s_j > 0$. For normalization:
\[
    \sum_{\mathbf{x} \in \mathcal{Q}_N}
    P_{\bm{\theta}}(\mathbf{x})
    = \sum_{j=1}^{K} \sum_{\mathbf{x} \in \mathcal{O}_j}
    \frac{q_{\bm{\theta}}(j)}{s_j}
    = \sum_{j=1}^{K} s_j \cdot
    \frac{q_{\bm{\theta}}(j)}{s_j}
    = \sum_{j=1}^{K} q_{\bm{\theta}}(j) = 1,
\]
where the second equality uses $|\mathcal{O}_j| = s_j$, and the
last uses $\sum_j q_{\bm{\theta}}(j) = 1$ (Born rule with
$K = 2^{n'}$), this establishes
$P_{\bm{\theta}} \in \Delta_{\mathrm{sym}}$.

\textbf{$\Phi$ is well-defined.}
The same arguments as (replacing $q_{\bm{\theta}}$ by an arbitrary
$\mathbf{q} \in \Delta^{K-1}$) show that $\Phi(\mathbf{q})$ is
$S_D$-symmetric and properly normalized, hence
$\Phi(\mathbf{q}) \in \Delta_{\mathrm{sym}}$. Note that $\Delta^{K-1}$ has dimension $K-1$, and
$\Delta_{\mathrm{sym}}$ has dimension
$\mathcal{O}_D(\mathcal{Q}_N) - 1 = K - 1$
(Proposition~\ref{prop:sym_dim}), so a bijection between them is
dimensionally consistent. We construct it explicitly via $\Phi$.

\textbf{$\Phi$ is injective.}
Suppose $\Phi(\mathbf{q}) = \Phi(\mathbf{q}')$. For each
$j \in \{1, \ldots, K\}$, pick any representative
$\mathbf{x}_j \in \mathcal{O}_j$. Then:
\[
    \frac{q_j}{s_j}
    = [\Phi(\mathbf{q})](\mathbf{x}_j)
    = [\Phi(\mathbf{q}')](\mathbf{x}_j)
    = \frac{q'_j}{s_j},
\]
hence $q_j = q'_j$ for all $j$, i.e., $\mathbf{q} = \mathbf{q}'$.

\textbf{$\Phi$ is surjective.}
Let $P \in \Delta_{\mathrm{sym}}$ be arbitrary. Since $P$ is
$S_D$-symmetric, $P$ is constant on each orbit
$\mathcal{O}_j$; denote this common value by
$P_j := P(\mathbf{x}_j)$ for any representative
$\mathbf{x}_j \in \mathcal{O}_j$. Define
$\mathbf{q} \in \Real^K$ by:
\[
    q_j := s_j\, P_j, \qquad j = 1, \ldots, K.
\]
We verify $\mathbf{q} \in \Delta^{K-1}$:
\begin{itemize}
    \item Non-negativity: $q_j = s_j P_j \geq 0$ since
    $s_j > 0$ and $P_j \geq 0$.
    \item Normalization:
    \[
        \sum_{j=1}^{K} q_j
        = \sum_{j=1}^{K} s_j P_j
        = \sum_{j=1}^{K} \sum_{\mathbf{x} \in \mathcal{O}_j}
        P(\mathbf{x})
        = \sum_{\mathbf{x} \in \mathcal{Q}_N} P(\mathbf{x})
        = 1,
    \]
    where the second equality uses orbit-constancy of $P$.
\end{itemize}
Then for any $\mathbf{x} \in \mathcal{Q}_N$:
\[
    [\Phi(\mathbf{q})](\mathbf{x})
    = \frac{q_{j(\mathbf{x})}}{s_{j(\mathbf{x})}}
    = \frac{s_{j(\mathbf{x})} P_{j(\mathbf{x})}}
    {s_{j(\mathbf{x})}}
    = P_{j(\mathbf{x})}
    = P(\mathbf{x}),
\]
so $\Phi(\mathbf{q}) = P$.

\textbf{$\Phi$ is an isometry under total variation.}
For $\mathbf{q}, \mathbf{q}' \in \Delta^{K-1}$:
\begin{align*}
    \TV\bigl(\Phi(\mathbf{q}), \Phi(\mathbf{q}')\bigr)
    &= \frac{1}{2}
    \sum_{\mathbf{x} \in \mathcal{Q}_N}
    \bigl|[\Phi(\mathbf{q})](\mathbf{x})
    - [\Phi(\mathbf{q}')](\mathbf{x})\bigr| \\
    &= \frac{1}{2}
    \sum_{j=1}^{K} \sum_{\mathbf{x} \in \mathcal{O}_j}
    \frac{|q_j - q'_j|}{s_j}
    = \frac{1}{2}
    \sum_{j=1}^{K} s_j \cdot \frac{|q_j - q'_j|}{s_j}
    = \frac{1}{2} \sum_{j=1}^{K} |q_j - q'_j| \\
    &= \TV(\mathbf{q}, \mathbf{q}').
\end{align*}

Since $\Phi$ is a bijective isometry,
$\Delta_{\mathrm{sym}}$ and $\Delta^{K-1}$ are isometrically
isomorphic under TV. The orbit-space QCBM with reachable set
$\mathcal{B}_{n'} \subseteq \Delta^{K-1}$ produces exactly the
distribution family $\Phi(\mathcal{B}_{n'}) \subseteq
\Delta_{\mathrm{sym}}$.

In Strategy~A (the standard equivariant framework), the circuit
$U(\bm{\theta})$ acts on a Hilbert space $\mathcal{H}_n$ carrying
a non-trivial representation $\Pi$ of $S_D$, necessitating the
structural constraint $[U(\bm{\theta}), \Pi(\sigma)] = 0$. In
Strategy~B (the orbit-space formulation), the encoding $\psi$
identifies the computational basis $\{|j\rangle\}_{j=1}^K$ of
$\mathcal{H}_{n'} \cong (\Complex^2)^{\otimes n'}$ with the orbit
set $\{\mathcal{O}_1, \ldots, \mathcal{O}_K\}$.

The natural action of $S_D$ on this quotient sends each orbit to
itself: $\sigma \cdot \mathcal{O}_j
= \{\sigma \cdot \mathbf{x} : \mathbf{x} \in \mathcal{O}_j\}
= \mathcal{O}_j$ (by definition of orbit). Therefore the induced
unitary representation $\Pi'$ on $\mathcal{H}_{n'}$ permutes the
basis states $\{|j\rangle\}$ according to this trivial action,
i.e., $\Pi'(\sigma)|j\rangle = \ket{\pi'(\sigma)(j)}=\ket{\psi(\sigma\cdot\psi^{-1}(j))}=\ket{\psi(\sigma\cdot\mathcal{O}_j)}=\ket{\psi(\mathcal{O}_j)}=|j\rangle$ for all
$\sigma \in S_D$ and all $j$. By linearity, $\Pi'(\sigma) = I$ on
$\mathcal{H}_{n'}$.

Since $[U(\bm{\theta}), I] = 0$ for any unitary $U(\bm{\theta})$,
the equivariance condition is satisfied trivially, without any
constraint on the circuit architecture.
\end{proof}

\begin{proposition}[Strategy~B dominates Strategy~A as an
abstract ansatz]\label{prop:dominance}
At the level of \emph{abstract parameterized ansatz families}
(i.e., comparing reachable sets without reference to a specific
native gate decomposition), and assuming that $\mathcal{Q}_N$ contains
an $S_D$-fixed grid point $\mathbf{x}_0$ (a point with all coordinates
equal; equivalently $D \mid N$ or $D \mid (N-m)$ for some selected type
$m \in \mathcal{M}$---automatic under Assumption~\ref{ass:compat}, with
$\mathbf{x}_0 = \phi^{-1}(0\cdots 0)$), the following holds: for any
$S_D$-equivariant circuit ansatz of depth $L$ in Strategy~A
with $N_p^{(\mathrm{A})}$ independent parameters, there exists
an orbit-space ansatz in Strategy~B with
$N_p^{(\mathrm{B})} \leq N_p^{(\mathrm{A})}$ parameters and at
most $L$ abstract layers such that
\begin{equation}\label{eq:dominance}
    \mathcal{B}_L^{(\mathrm{A})}
    \;\subseteq\;
    \Phi\bigl(\mathcal{B}_{n',\, L'}^{(\mathrm{B})}\bigr),
    \qquad L' \leq L.
\end{equation}
Consequently:
\begin{enumerate}
    \item Every symmetric distribution reachable by Strategy~A
    at abstract depth $L$ is reachable by Strategy~B at
    abstract depth at most~$L$.
    \item Strategy~B requires fewer qubits ($n' = \log_2 K
    \leq n$, with strict inequality whenever
    $K = \mathcal{O}_D(\mathcal{Q}_N) < 2^n$) and at most as
    many parameters to reach the same target family.
\end{enumerate}
\end{proposition}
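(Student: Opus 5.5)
The plan is to push each Strategy~A circuit through the orbit-sum map and to realise the resulting orbit-space state with an abstract Strategy~B ansatz that reuses the same parameters. Fix an $S_D$-equivariant ansatz $U_A(\bm\theta)$ on $n$ qubits with $N_p^{(\mathrm A)}$ parameters and $\Pi$-invariant initial state $\ket{\phi(\mathbf x_0)}$. By Proposition~\ref{prop:equiv_implies_inv}, each Born distribution $P^{(\mathrm A)}_{\bm\theta}$ is $S_D$-symmetric, so $P^{(\mathrm A)}_{\bm\theta}\in\Delta_{\mathrm{sym}}$. By the surjectivity part of Theorem~\ref{thm:orbit_symmetric}, there is a unique $\mathbf q(\bm\theta)\in\Delta^{K-1}$ with $\Phi(\mathbf q(\bm\theta))=P^{(\mathrm A)}_{\bm\theta}$, namely
\[
q_j(\bm\theta)=\sum_{\mathbf x\in\mathcal O_j}P^{(\mathrm A)}_{\bm\theta}(\mathbf x).
\]
The inclusion \eqref{eq:dominance} therefore reduces to constructing an $n'$-qubit parameterized family $U_B(\bm\theta)$, using the same parameter vector (so $N_p^{(\mathrm B)}\le N_p^{(\mathrm A)}$), whose Born distribution on orbit indices is exactly $\mathbf q(\bm\theta)$.

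For the construction I would work on the invariant subspace. Equivariance makes $U_A(\bm\theta)$ commute with $\mathrm{Sym}=\frac{1}{|S_D|}\sum_\sigma\Pi(\sigma)$, so $U_A$ preserves the trivial isotypic subspace $V_{\mathrm{sym}}$. This subspace has the orthonormal basis
\[
\ket{e_j}=s_j^{-1/2}\sum_{\mathbf x\in\mathcal O_j}\ket{\phi(\mathbf x)},\qquad \dim V_{\mathrm{sym}}=K=2^{n'}.
\]
The initial state lies in $V_{\mathrm{sym}}$ because $\mathbf x_0$ is a fixed point, and so does the output $U_A\ket{\phi(\mathbf x_0)}$. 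Next I define the isometry $J:\ket{e_j}\mapsto\ket{\psi(j)}$ and set $U_B(\bm\theta)=J\,U_A(\bm\theta)|_{V_{\mathrm{sym}}}J^\dagger$, which is a unitary on $\mathcal H_{n'}$; the initial state is arranged so that $\psi(j_0)=0\cdots0$, where $j_0$ is the orbit of $\mathbf x_0$.

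The next step is to check that the Born rule commutes with orbit summation. Since the output $\ket{\Psi}$ lies in $V_{\mathrm{sym}}$, its amplitudes are constant on orbits, say $\langle\phi(\mathbf x)|\Psi\rangle=c_{j(\mathbf x)}$. Hence $\langle e_j|\Psi\rangle=\sqrt{s_j}\,c_j$, and so $|\langle\psi(j)|U_B\ket{0}\rangle|^2=s_j|c_j|^2=q_j(\bm\theta)$. Consequently $\Phi(q_{\bm\theta}^{(\mathrm B)})=P^{(\mathrm A)}_{\bm\theta}$, which gives \eqref{eq:dominance}. The qubit count follows because $K=\dim V_{\mathrm{sym}}\le 2^n$, with strict inequality exactly when $K<2^n$.

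The expected obstacle is the layer count $L'\le L$. I would argue at the abstract level, as the statement permits: each layer $\ell$ of $U_A$ is individually equivariant, because each factor commutes with $\Pi$, so it restricts to $V_{\mathrm{sym}}$. Compressing layer by layer, $J\,U_A^{(\ell)}|_{V_{\mathrm{sym}}}J^\dagger$, then yields an $L$-layer Strategy~B ansatz with parameters inherited layer by layer, and therefore no more of them. The care needed here is to say explicitly that these compressed layers are abstract $n'$-qubit unitaries rather than native gates, which is why the proposition is phrased for abstract ansatz families.
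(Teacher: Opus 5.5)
Your proposal is correct and follows essentially the same route as the paper's proof. Both restrict the equivariant $U^{(\mathrm{A})}(\bm\theta)$ to $V_{\mathrm{sym}}$ and transport it to $n'$ qubits by the unitary $|\Omega_j\rangle\mapsto|\psi(j)\rangle$, with the singleton orbit of $\mathbf{x}_0$ sent to $|0\rangle^{\otimes n'}$. Both then read off $q_j=|\alpha_j|^2=s_jP(\mathbf{x}_j)$ using the same parameter vector.

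One caveat concerns your layerwise count $L'\le L$: per-layer equivariance does not follow from Definition~\ref{def:equivariant}, which constrains only the full $U(\bm\theta)$. It is an extra hypothesis, and the paper itself concedes this in Remark~\ref{rem:compile}.
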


\begin{proof}
By Theorem~\ref{thm:orbit_symmetric}, the bijective isometry
$\Phi: \Delta^{K-1} \to \Delta_{\mathrm{sym}}$ identifies
$\Delta_{\mathrm{sym}}$ with $\Delta^{K-1}$ as metric spaces.
Combining with
Proposition~\ref{prop:equiv_implies_inv},
$\mathcal{B}_L^{(\mathrm{A})}
\subseteq \Delta_{\mathrm{sym}}$, so each
$P \in \mathcal{B}_L^{(\mathrm{A})}$ corresponds to a unique
$\mathbf{q}_P := \Phi^{-1}(P) \in \Delta^{K-1}$.

It suffices to exhibit an orbit-space ansatz of abstract depth
$L' \leq L$ whose reachable set
$\mathcal{B}_{n',\,L'}^{(\mathrm{B})}$ contains every such
$\mathbf{q}_P$. We construct it explicitly.

The orbit-space encoding $\psi: \{1, \ldots, K\}
\leftrightarrow \{0,1\}^{n'}$ in
Definition~\ref{def:orbit_encoding} is an arbitrary bijection.
Without loss of generality, we choose $\psi$ such that the
singleton orbit $\{\mathbf{x}_0\}$ of the $S_D$-fixed grid point
guaranteed by the hypothesis is mapped
to $|0\rangle^{\otimes n'}$. This is always possible since (a)
the fixed point satisfies $\sigma \cdot \mathbf{x}_0 = \mathbf{x}_0$
for every $\sigma \in S_D$, so its orbit is the singleton
$\{\mathbf{x}_0\}$, and (b) any such
relabeling of $\psi$ amounts to a fixed permutation of basis
states, which can be absorbed into the orbit-space circuit
without adding parameters or depth.

This choice ensures
\begin{equation}\label{eq:initial_state_alignment}
    \iota\bigl(|0\rangle^{\otimes n}\bigr)
    = |\psi(j_0)\rangle = |0\rangle^{\otimes n'},
\end{equation}
where $j_0 \in \{1, \ldots, K\}$ is the index of the singleton orbit and 
$\iota$ is defined below.

We construct $V_{\mathrm{sym}} \subseteq \mathcal{H}_n$ explicitly via 
the symmetrization operator
\[
    \mathrm{Sym} := \frac{1}{|S_D|} \sum_{\sigma \in S_D} \Pi(\sigma).
\]
Standard re-indexing arguments show that
$\mathrm{Sym}^2 = \mathrm{Sym}$, $\mathrm{Sym}^\dagger = \mathrm{Sym}$,
and $\mathrm{Im}(\mathrm{Sym}) = V_{\mathrm{sym}}$; hence $\mathrm{Sym}$
is the orthogonal projection $\mathcal{H}_n \to V_{\mathrm{sym}}$.

Applied to a computational basis vector $|\phi(\mathbf{x})\rangle$ with
$\mathbf{x} \in \mathcal{O}_j$,
\[
    \mathrm{Sym}\,|\phi(\mathbf{x})\rangle
    \;=\; \frac{1}{|S_D|} \sum_{\sigma \in S_D}
    |\phi(\sigma \cdot \mathbf{x})\rangle
    \;\stackrel{(\ast)}{=}\;
    \frac{|\mathrm{Stab}(\mathbf{x})|}{|S_D|}
    \sum_{\mathbf{x}' \in \mathcal{O}_j} |\phi(\mathbf{x}')\rangle
    \;\stackrel{(\dagger)}{=}\;
    \frac{1}{s_j} \sum_{\mathbf{x}' \in \mathcal{O}_j}
    |\phi(\mathbf{x}')\rangle,
\]
where $(\ast)$ groups the sum by orbit element---by the
orbit-stabilizer theorem each $\mathbf{x}' \in \mathcal{O}_j$ is the
image of exactly $|\mathrm{Stab}(\mathbf{x})|$ group elements---and
$(\dagger)$ uses $|S_D| = s_j \cdot |\mathrm{Stab}(\mathbf{x})|$.
Define the \emph{orbit-uniform superposition} states
\[
    |\Omega_j\rangle 
    := \frac{1}{\sqrt{s_j}} \sum_{\mathbf{x} \in \mathcal{O}_j}
    |\phi(\mathbf{x})\rangle, 
    \qquad j = 1, \ldots, K.
\]
We claim that $\{|\Omega_j\rangle\}_{j=1}^K$ is an orthonormal basis of
$V_{\mathrm{sym}}$:
\begin{itemize}
    \item \emph{Invariance.} For any $\sigma$,
    $\Pi(\sigma)|\Omega_j\rangle = \frac{1}{\sqrt{s_j}}
    \sum_{\mathbf{x} \in \mathcal{O}_j} |\phi(\sigma \cdot \mathbf{x})\rangle
    = |\Omega_j\rangle$, since $\sigma$ permutes $\mathcal{O}_j$ within
    itself; hence $|\Omega_j\rangle \in V_{\mathrm{sym}}$.
    \item \emph{Orthonormality.}
    $\langle \Omega_j | \Omega_{j'} \rangle 
    = \frac{1}{\sqrt{s_j s_{j'}}}
    \sum_{\mathbf{x} \in \mathcal{O}_j, \mathbf{x}' \in \mathcal{O}_{j'}}
    \delta_{\mathbf{x}\mathbf{x}'} = \delta_{jj'}$,
    since orbits are disjoint for $j \neq j'$ and the diagonal terms
    contribute $s_j$ ones when $j = j'$.
    \item \emph{Spanning.} Any $|\xi\rangle = \sum_{\mathbf{x}}
    c_{\mathbf{x}}|\phi(\mathbf{x})\rangle \in V_{\mathrm{sym}}$ satisfies
    $c_{\sigma \cdot \mathbf{x}} = c_{\mathbf{x}}$ for every $\sigma$,
    so $c_{\mathbf{x}}$ is constant on each orbit; writing
    $c_{\mathbf{x}} = c_j / \sqrt{s_j}$ for $\mathbf{x} \in \mathcal{O}_j$
    gives $|\xi\rangle = \sum_j c_j |\Omega_j\rangle$.
\end{itemize}
Therefore $\dim V_{\mathrm{sym}} = K = 2^{n'}$.

Since $\{|\Omega_j\rangle\}_{j=1}^K$ and
$\{|\psi(j)\rangle\}_{j=1}^K$ are orthonormal bases of two $K$-dimensional
Hilbert spaces, the linear map
\[
    \iota: V_{\mathrm{sym}} \xrightarrow{\;\cong\;} \mathcal{H}_{n'},
    \qquad
    \iota(|\Omega_j\rangle) := |\psi(j)\rangle,
\]
extends linearly to a unitary isomorphism (an orthonormal-basis-to-
orthonormal-basis bijection preserves all inner products by linearity).

The augmented grid contains a singleton orbit
$\mathcal{O}_{j_0} = \{\mathbf{x}_0\}$ with $s_{j_0} = 1$
(guaranteed by the fixed-point hypothesis).
For this orbit, $|\Omega_{j_0}\rangle = |\phi(\mathbf{x}_0)\rangle$ is
itself a computational basis vector. Choosing the encodings WLOG so that
$\phi(\mathbf{x}_0) = 0\cdots 0 \in \{0,1\}^n$ and
$\psi(j_0) = 0\cdots 0 \in \{0,1\}^{n'}$ yields
\[
    \iota(|0\rangle^{\otimes n}) 
    = \iota(|\Omega_{j_0}\rangle) 
    = |\psi(j_0)\rangle 
    = |0\rangle^{\otimes n'},
\]
establishing~\eqref{eq:initial_state_alignment}. In particular,
$|0\rangle^{\otimes n} \in V_{\mathrm{sym}}$, a fact used below.

\paragraph{$U^{(\mathrm{A})}$ preserves $V_{\mathrm{sym}}$.}
For any $|\xi\rangle \in V_{\mathrm{sym}}$ and any $\sigma \in S_D$,
\[
    \Pi(\sigma)\bigl[U^{(\mathrm{A})}(\bm\theta)|\xi\rangle\bigr]
    \;\stackrel{\text{equivariance}}{=}\;
    U^{(\mathrm{A})}(\bm\theta)\,\Pi(\sigma)|\xi\rangle
    \;\stackrel{|\xi\rangle \in V_{\mathrm{sym}}}{=}\;
    U^{(\mathrm{A})}(\bm\theta)|\xi\rangle,
\]
so $U^{(\mathrm{A})}(\bm\theta)|\xi\rangle \in V_{\mathrm{sym}}$. The
restriction $U^{(\mathrm{A})}(\bm\theta)|_{V_{\mathrm{sym}}}$ is therefore
a unitary on $V_{\mathrm{sym}}$.

Define
\[
    U^{(\mathrm{B})}(\bm\theta) 
    := \iota \cdot U^{(\mathrm{A})}(\bm\theta)\big|_{V_{\mathrm{sym}}}
       \cdot \iota^{-1} \;\in\; U(2^{n'}),
\]
unitary as the composition of three unitaries.

Let $P \in \mathcal{B}_L^{(\mathrm{A})}$ be realized by parameters
$\bm\theta^*$, i.e.,
$P(\mathbf{x}) = \bigl|\langle\phi(\mathbf{x})|
U^{(\mathrm{A})}(\bm\theta^*)|0\rangle^{\otimes n}\bigr|^2$.
Since $|0\rangle^{\otimes n} \in V_{\mathrm{sym}}$ and
$U^{(\mathrm{A})}$ preserves $V_{\mathrm{sym}}$, the output state
expands in the orbit basis:
\begin{equation}\label{eq:output_expansion}
    U^{(\mathrm{A})}(\bm\theta^*)|0\rangle^{\otimes n}
    \;=\; \sum_{j'=1}^K \alpha_{j'}\,|\Omega_{j'}\rangle
    \;\in\; V_{\mathrm{sym}},
\end{equation}
for complex amplitudes $\alpha_{j'} \in \mathbb{C}$.
For any representative $\mathbf{x}_j \in \mathcal{O}_j$, the overlap
\[
    \langle\phi(\mathbf{x}_j) | 
    U^{(\mathrm{A})}(\bm\theta^*) |0\rangle^{\otimes n}
    \;=\; \sum_{j'} \alpha_{j'} \langle\phi(\mathbf{x}_j) | \Omega_{j'}\rangle
    \;=\; \frac{\alpha_j}{\sqrt{s_j}},
\]
using $\langle\phi(\mathbf{x}_j) | \Omega_{j'}\rangle 
= \delta_{jj'} / \sqrt{s_j}$ (only the term with 
$\mathbf{x}_j \in \mathcal{O}_{j'}$ contributes). Therefore
\begin{equation}\label{eq:amplitude_to_probability}
    P(\mathbf{x}_j) 
    \;=\; \biggl|\frac{\alpha_j}{\sqrt{s_j}}\biggr|^2
    \;=\; \frac{|\alpha_j|^2}{s_j}
    \quad \Longleftrightarrow \quad
    |\alpha_j|^2 = s_j \cdot P(\mathbf{x}_j),
\end{equation}
where $P(\mathbf{x}_j)$ is well-defined by orbit-constancy of $P$
(Step~5 of Theorem~\ref{thm:orbit_symmetric}).

Now compute $q_{\bm\theta^*}(j)$. Applying $\iota$ to the expansion
\eqref{eq:output_expansion} and using $\iota(|\Omega_{j'}\rangle) =
|\psi(j')\rangle$,
\[
    \iota\, U^{(\mathrm{A})}(\bm\theta^*) |0\rangle^{\otimes n}
    \;=\; \sum_{j'} \alpha_{j'}\,|\psi(j')\rangle.
\]
Combining with $|0\rangle^{\otimes n'} = \iota(|0\rangle^{\otimes n})$
and $\iota^{-1}\iota |0\rangle^{\otimes n} = |0\rangle^{\otimes n}$
(since $|0\rangle^{\otimes n} \in V_{\mathrm{sym}}$):
\begin{align*}
    q_{\bm\theta^*}(j)
    &= \bigl|\langle\psi(j)|
        U^{(\mathrm{B})}(\bm\theta^*) |0\rangle^{\otimes n'}\bigr|^2 \\
    &= \bigl|\langle\psi(j)|
        \iota U^{(\mathrm{A})}(\bm\theta^*) \iota^{-1}\,
        \iota(|0\rangle^{\otimes n})\bigr|^2 \\
    &= \bigl|\langle\psi(j)|
        \iota U^{(\mathrm{A})}(\bm\theta^*) |0\rangle^{\otimes n}\bigr|^2 \\
    &= \biggl| \sum_{j'} \alpha_{j'} 
        \langle\psi(j)|\psi(j')\rangle\biggr|^2 \\
    &= |\alpha_j|^2 \\
    &\stackrel{\eqref{eq:amplitude_to_probability}}{=}\;
       s_j \cdot P(\mathbf{x}_j)
    \;=\; (\mathbf{q}_P)_j,
\end{align*}
where the final equality is the surjectivity construction
$(\mathbf{q}_P)_j := s_j \cdot P(\mathbf{x}_j)$ from Step~5 of
Theorem~\ref{thm:orbit_symmetric}.

Hence $\mathbf{q}_{\bm\theta^*} = \mathbf{q}_P$, so 
$\mathbf{q}_P \in \mathcal{B}_{n',L}^{(\mathrm{B})}$, and 
$\Phi(\mathbf{q}_P) = P$ confirms~\eqref{eq:dominance} with $L' = L$.

The constructed Strategy~B ansatz uses the \emph{same}
parameter set $\bm\theta$ as Strategy~A, hence
$N_p^{(\mathrm{B})} \leq N_p^{(\mathrm{A})}$. In practice,
Strategy~B may use \emph{additional} parameters in
$\mathcal{H}_{n'}$ (which carries no symmetry constraint) to access
distributions outside $\iota(\mathcal{B}_L^{(\mathrm{A})})$,
making the inclusion in~\eqref{eq:dominance} typically strict.
\end{proof}

\begin{remark}[When no fixed grid point exists]\label{rem:no_fixed_point}
If $\mathcal{Q}_N$ contains no $S_D$-fixed point (no singleton orbit),
the alignment step above fails only in that $|0\rangle^{\otimes n}$ need
not lie in $V_{\mathrm{sym}}$. In that case, replace the initial state of
the Strategy~A circuit by the orbit-uniform superposition
$|\Omega_{j_0}\rangle$ of any chosen orbit $\mathcal{O}_{j_0}$
(cf.\ Remark~\ref{rem:init-state}); on the Strategy~B side, prepend the
fixed (parameter-free) unitary mapping $|0\rangle^{\otimes n'}$ to
$|\psi(j_0)\rangle$. All subsequent steps of the proof are unchanged,
since they only use $\iota(|\Omega_{j_0}\rangle) = |\psi(j_0)\rangle$.
\end{remark}

\begin{remark}[Abstract depth vs.\ compiled depth]\label{rem:compile}
Proposition~\ref{prop:dominance} compares ansatz families at the representation-theoretic
level: the conjugated layers $\iota\, U_l\, \iota^{-1}$ are abstract
unitaries on $\mathcal{H}_{n'}$, and compiling them into a native gate set
may inflate the circuit depth substantially. The proposition asserts
dominance of reachable sets, not of hardware resources. The layerwise
construction also presupposes that each layer of the Strategy~A circuit is
individually equivariant (as in Theorem~\ref{thm:full_equiv_lie}), which
should be stated as a hypothesis when citing the result.
\end{remark}

\begin{remark}[Optimization-quality preservation]
\label{rem:opt_preservation}
The bijective isometry $\Phi$ ensures that approximation
quality is preserved exactly: for any target
$P^* \in \Delta_{\mathrm{sym}}$ and any
$\mathbf{q}_{\bm{\theta}} \in \Delta^{K-1}$,
\[
    \TV\bigl(\Phi(\mathbf{q}_{\bm{\theta}}),\, P^*\bigr)
    = \TV\bigl(\mathbf{q}_{\bm{\theta}},\,
    \Phi^{-1}(P^*)\bigr).
\]
Consequently, minimizing TV in orbit space (Strategy~B) is
metrically identical to minimizing TV in the full grid space
(Strategy~A restricted to $\Delta_{\mathrm{sym}}$). This holds
without any expressivity assumption and validates the use of
orbit-space loss functions during training.
\end{remark}

\begin{definition}[$S_D$-equivariant CZ entangler class]\label{def:ESD}
Under Assumption~\ref{ass:compat}, with $H = \tau(S_D) \le S_n$, define
\[
\mathcal{E}_{S_D} \;\coloneqq\;
\Big\{\, E_S = \prod_{\{i,j\} \in S} \mathrm{CZ}_{i,j} \;:\;
S \subseteq \tbinom{\{0,\dots,n-1\}}{2},\;\;
\{\tau(i), \tau(j)\} \in S \;,\ \tau \in \mathrm{Hom}(S_D, S_n) \Big\}.
\]
Every $E \in \mathcal{E}_{S_D}$ is a product of commuting involutive
Clifford gates, so $E^2 = I$ and $\Pi_\tau E \Pi_\tau^\dagger = E$ for all
$\tau \in H$. The complete edge set gives $E_{\mathrm{full}}$, which lies in
$\mathcal{E}_{S_D}$ for every $H$ (Proposition~\ref{prop:full_cz_sym}).
\end{definition}

\begin{definition}[Equivariant dynamical Lie algebra]\label{def:dla}
For the shared-parameter ansatz of Theorem~\ref{thm:full_equiv_lie} with
entangler $E \in \mathcal{E}_{S_D}$, let
\[
\mathfrak{h} \;\coloneqq\; \mathrm{span}_{\mathbb{R}}
\Big\{ \textstyle\sum_{q \in O} Y_q,\;\; \sum_{q \in O} Z_q \;:\;
O \text{ an $H$-orbit on qubits} \Big\},
\]
and define the equivariant DLA as the real Lie closure
$\mathfrak{g}_{\mathrm{DLA}}(E) \coloneqq
\big\langle\, i\,\mathfrak{h} \,\cup\, i\,E\,\mathfrak{h}\,E
\,\big\rangle_{\mathrm{Lie}} \subseteq \mathfrak{u}(2^n)$.
\end{definition}

\begin{proposition}[Unconditional orbit-space ceiling]\label{prop:ceiling}
Let $V_{\mathrm{sym}} \subseteq \mathcal{H}_n$ be the trivial isotypic
component of $\Pi$, with $\dim V_{\mathrm{sym}} = K = 2^{n'}$. Then for every
$E \in \mathcal{E}_{S_D}$,
\[
\mathfrak{g}_{\mathrm{DLA}}(E)\big|_{V_{\mathrm{sym}}}
\;\subseteq\; \mathfrak{su}(V_{\mathrm{sym}})
\;\cong\; \mathfrak{su}(2^{n'}).
\]
\end{proposition}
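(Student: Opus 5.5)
The plan is to argue at the level of generators and then propagate to the Lie closure. There are two separate things to show: every element of $\mathfrak{g}_{\mathrm{DLA}}(E)$ leaves $V_{\mathrm{sym}}$ invariant, and its restriction there is traceless. Throughout I work under Assumption~\ref{ass:compat}, so that $\Pi(\sigma)=\Pi_{\tau(\sigma)}$ and $H=\tau(S_D)$ acts by qubit permutations.

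\emph{Step 1 (invariance of the generators).} Let $O$ be an $H$-orbit on qubits and $\tau\in H$. By Lemma~\ref{lem:pauli_conjugation}, $\Pi_\tau\bigl(\sum_{q\in O}Y_q\bigr)\Pi_\tau^\dagger=\sum_{q\in O}Y_{\tau(q)}$. Since $\tau$ permutes $O$, this equals $\sum_{q\in O}Y_q$, and the same holds with $Z$ in place of $Y$. Hence every element of $\mathfrak{h}$ commutes with all $\Pi_\tau$. Every $E\in\mathcal{E}_{S_D}$ also satisfies $\Pi_\tau E\Pi_\tau^\dagger=E$ (Definition~\ref{def:ESD}), so for $X\in\mathfrak{h}$ the operator $EXE$ commutes with every $\Pi_\tau$ as well.

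\emph{Step 2 (closure).} The commutant $\{A:\ [A,\Pi_\tau]=0\ \forall\tau\in H\}$ is closed under real linear combinations and under commutators; this follows from the Jacobi identity, or directly from $[A,B]\Pi_\tau=\Pi_\tau[A,B]$. It therefore contains the whole Lie closure $\mathfrak{g}_{\mathrm{DLA}}(E)$. Take $A$ in this commutant and $v\in V_{\mathrm{sym}}$, the common fixed space of the $\Pi(\sigma)$. Then $\Pi(\sigma)Av=A\Pi(\sigma)v=Av$, so $A$ preserves $V_{\mathrm{sym}}$; this is the same computation used for $U^{(\mathrm{A})}$ in Proposition~\ref{prop:dominance}. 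Restriction $A\mapsto A|_{V_{\mathrm{sym}}}$ is thus a real Lie algebra homomorphism from $\mathfrak{g}_{\mathrm{DLA}}(E)$ into $\mathfrak{u}(V_{\mathrm{sym}})$. Skew-Hermiticity survives because $V_{\mathrm{sym}}$ is invariant under both $A$ and $A^\dagger=-A$.

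\emph{Step 3 (tracelessness), which I expect to be the only real obstacle.} Commutators restrict to commutators, and commutators are traceless. So it suffices to check that each generator $iX$ and $iEXE$ with $X\in\mathfrak{h}$ has zero trace on $V_{\mathrm{sym}}$. Using the orthogonal projection $\mathrm{Sym}=|S_D|^{-1}\sum_\sigma\Pi(\sigma)$ from the proof of Proposition~\ref{prop:dominance}, we have $\mathrm{tr}(A|_{V_{\mathrm{sym}}})=\mathrm{tr}(\mathrm{Sym}\,A)$. By linearity this reduces to showing $\mathrm{tr}(\Pi_\tau P_q)=0$ for $P\in\{Y,Z\}$, every $\tau\in H$ and every qubit $q$.

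To prove this I factor $\Pi_\tau$ over the cycles of $\tau$. For a cyclic qubit permutation $C$ on a set of qubits and single-qubit operators $A_1,\dots,A_k$ placed on those qubits, one has $\mathrm{tr}\bigl(C\,(A_1\otimes\cdots\otimes A_k)\bigr)=\mathrm{tr}(A_{1}A_{2}\cdots A_k)$ in a suitable order. The cycle containing $q$ then contributes $\mathrm{tr}(P)=0$, and this holds whether $q$ is fixed by $\tau$ or not. For the conjugated generators I use $E^2=I$ and $[E,\mathrm{Sym}]=0$ to get $\mathrm{tr}(\mathrm{Sym}\,EXE)=\mathrm{tr}(E\,\mathrm{Sym}\,E\,X)=\mathrm{tr}(\mathrm{Sym}\,X)=0$. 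Combining Steps 2 and 3 gives $\mathfrak{g}_{\mathrm{DLA}}(E)|_{V_{\mathrm{sym}}}\subseteq\mathfrak{su}(V_{\mathrm{sym}})$. The identification $\mathfrak{su}(V_{\mathrm{sym}})\cong\mathfrak{su}(2^{n'})$ then follows from $\dim V_{\mathrm{sym}}=K=2^{n'}$, established in the proof of Proposition~\ref{prop:dominance}, via the unitary $\iota$.
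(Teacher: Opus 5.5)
Your proof is correct. The first half matches the paper: the generators commute with every $\Pi_\tau$, the commutant is closed under the Lie operations, and restriction gives a Lie homomorphism into $\mathfrak{u}(V_{\mathrm{sym}})$. You depart from the paper only in the tracelessness step, and both routes work.

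The paper uses a $\mathbb{Z}_2$-grading argument:
\begin{itemize}
\item For each qubit orbit $O$ it introduces the parity unitaries $\mathcal{W}^X_O=\bigotimes_{q\in O}X_q$ and $\mathcal{W}^Z_O=\bigotimes_{q\in O}Z_q$. These commute with every $\Pi_\tau$ and so preserve $V_{\mathrm{sym}}$.
\item It checks termwise that each generator anticommutes with one of them. For the conjugated generators this relies on the CZ propagation rules $EZ_qE=Z_q$ and $EY_qE=Y_q\prod_{j:\{q,j\}\in S}Z_j$.
\item It concludes from $\mathrm{tr}_{V}(A)=\mathrm{tr}_{V}(\mathcal{W}A\mathcal{W}^\dagger)=-\mathrm{tr}_{V}(A)$.
\end{itemize}

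You instead compute the restricted trace directly as $\mathrm{tr}(\mathrm{Sym}\,A)$. The cycle-trace identity for permutation operators shows that each $\mathrm{tr}(\Pi_\tau P_q)$ contains the factor $\mathrm{tr}(P)=0$, which handles the plain generators. Cyclicity of the trace together with $[E,\mathrm{Sym}]=0$ then handles the conjugated ones.

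What each route buys:
\begin{itemize}
\item Your argument never uses the internal structure of $E$. It works verbatim for any unitary entangler commuting with $\Pi$, with $EXE^\dagger$ in place of $EXE$.
\item It also gives a slightly stronger by-product: the compression of any traceless single-qubit operator onto $V_{\mathrm{sym}}$ is traceless.
\item The paper's route avoids both the permutation-trace identity and the projector formula. It needs only a symmetry-preserving unitary that anticommutes with each generator.
\item The paper pays for this with a CZ-specific Pauli-propagation check. That check could itself be removed by using the conjugated parity $E\mathcal{W}E^\dagger$, which still commutes with $\Pi$ and anticommutes with $EAE^\dagger$.
\end{itemize}
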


\begin{proof}
\emph{Invariance.} Each generator in $\mathfrak{h}$ is an orbit sum, hence
commutes with $\Pi_\tau$ for all $\tau \in H$; and since
$[E, \Pi_\tau] = 0$, so does every element of $E\,\mathfrak{h}\,E$.
Therefore every generator preserves $V_{\mathrm{sym}}$, and restriction to
$V_{\mathrm{sym}}$ is a Lie algebra homomorphism into
$\mathfrak{u}(V_{\mathrm{sym}})$.

\emph{Tracelessness on $V_{\mathrm{sym}}$.} For each qubit orbit $O$ define
the parity operators
$\mathcal{W}^X_O \coloneqq \bigotimes_{q \in O} X_q$ and
$\mathcal{W}^Z_O \coloneqq \bigotimes_{q \in O} Z_q$ (identity elsewhere).
Both commute with every $\Pi_\tau$, $\tau \in H$ (conjugation permutes the
tensor factors within the $H$-invariant set $O$), hence both restrict to
unitaries on $V_{\mathrm{sym}}$. Now:
(i) $\sum_{q \in O} Z_q$ anticommutes with $\mathcal{W}^X_O$
(termwise: $Z_q$ anticommutes with the $X_q$ factor and commutes with the
rest);
(ii) $\sum_{q \in O} Y_q$ anticommutes with $\mathcal{W}^Z_O$;
(iii) using $E\, Z_q\, E = Z_q$ and
$E\, Y_q\, E = Y_q \prod_{j : \{q,j\} \in S} Z_j$, each conjugated generator
$E \big(\sum_{q \in O} P_q\big) E$, $P \in \{Y, Z\}$, is a sum of Pauli
strings containing exactly one $Y$ (at some $q \in O$) or only $Z$'s, and
anticommutes with $\mathcal{W}^Z_O$ resp.\ $\mathcal{W}^X_O$ by the same
termwise check.
If $A$ anticommutes with a unitary $\mathcal{W}$ that preserves
$V_{\mathrm{sym}}$, then
$\mathrm{tr}_{V_{\mathrm{sym}}}(A) =
\mathrm{tr}_{V_{\mathrm{sym}}}(\mathcal{W} A \mathcal{W}^\dagger) =
-\mathrm{tr}_{V_{\mathrm{sym}}}(A) = 0$.
Hence all generators restrict tracelessly; commutators are traceless
automatically, and real spans preserve tracelessness, so the full Lie
closure restricts into $\mathfrak{su}(V_{\mathrm{sym}})$.
\end{proof}

\begin{corollary}[Strategy~B dominates the equivariant class]
\label{cor:dominance}
Under Assumption~\ref{ass:compat}, in the deep-circuit limit,
\[
\mathcal{B}^{(\mathrm{A})}_\infty(E) \;\subseteq\; \Delta_{\mathrm{sym}}
\;=\; \Phi\big(\mathcal{B}^{(\mathrm{B})}_{n', \infty}\big)
\qquad \forall\, E \in \mathcal{E}_{S_D}.
\]
Every distribution reachable by an equivariant Strategy~A circuit---with
\emph{any} admissible entangler, including the maximally entangling
$E_{\mathrm{full}}$---is reachable by Strategy~B on $n' \le n$ qubits.
\end{corollary}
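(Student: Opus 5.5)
The statement has two parts, and we treat them separately. The equality $\Delta_{\mathrm{sym}}=\Phi(\mathcal{B}^{(\mathrm{B})}_{n',\infty})$ concerns only Strategy~B. The inclusion $\mathcal{B}^{(\mathrm{A})}_\infty(E)\subseteq\Delta_{\mathrm{sym}}$ concerns only Strategy~A.

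For the inclusion, fix $E\in\mathcal{E}_{S_D}$ and any depth. By Assumption~\ref{ass:compat}, $\Pi(\sigma)=\Pi_{\tau(\sigma)}$. Every layer of the shared-parameter ansatz commutes with $\Pi_\tau$ for $\tau\in H$. This is argued exactly as in Theorem~\ref{thm:full_equiv_lie}, with Proposition~\ref{prop:full_cz_sym} replaced by the defining invariance $\Pi_\tau E\Pi_\tau^\dagger=E$ of $\mathcal{E}_{S_D}$. The initial state $|0\rangle^{\otimes n}$ is $\Pi$-invariant (Remark~\ref{rem:obstruction}). Proposition~\ref{prop:equiv_implies_inv} then gives $S_D$-symmetric Born distributions for every parameter value and every depth, so every element of $\mathcal{B}^{(\mathrm{A})}_L(E)$ lies in $\Delta_{\mathrm{sym}}$. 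If the deep-circuit limit $\mathcal{B}_\infty$ is read as the union over $L$ or its closure, the inclusion persists. The union case is immediate. For the closure, $\Delta_{\mathrm{sym}}$ is closed, being cut out by finitely many linear equalities inside the compact simplex. Proposition~\ref{prop:ceiling} is consistent with this (it confines the dynamics to $\mathfrak{su}(V_{\mathrm{sym}})$), but the argument above does not need it.

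For the equality, Theorem~\ref{thm:orbit_symmetric} shows that $\Phi:\Delta^{K-1}\to\Delta_{\mathrm{sym}}$ is a bijection. It therefore suffices to show $\mathcal{B}^{(\mathrm{B})}_{n',\infty}=\Delta^{K-1}$, i.e. the unconstrained ansatz on $n'$ qubits reaches every distribution on $2^{n'}$ outcomes in the deep limit. The Strategy~B circuit is the unconstrained YZY/circular-CNOT ansatz of Definition~\ref{def:circuit}, with arbitrary fixed basis relabelings allowed as in the proof of Proposition~\ref{prop:dominance}. Our route is to show that its dynamical Lie algebra is all of $\mathfrak{su}(2^{n'})$. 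Single-qubit $Y,Z$ rotations generate $\mathfrak{su}(2)$ on each qubit. The CNOT conjugations produce two-qubit ZZ-type generators, and local $\mathfrak{su}(2)$ plus one entangling generator on a connected coupling graph yields universality by the standard result. With enough layers every unitary is then reachable, or at least a dense set (in which case we interpret $\mathcal{B}_\infty$ as a closure). Any $\mathbf q\in\Delta^{K-1}$ is the Born distribution of the state $\sum_j\sqrt{q_j}|\psi(j)\rangle$, which some unitary maps $|0\rangle^{\otimes n'}$ to. Compactness of the unitary group and continuity of the Born map (Lemma~\ref{lem:born}) then give the whole simplex.

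The main obstacle is this universality step for the specific fixed-entangler layered ansatz. The entangler is applied as a discrete gate rather than a parameterized generator, so the usual Lie-algebraic controllability argument must be adapted: conjugated generators $E\,G\,E$ must be included, as in Definition~\ref{def:dla}, and then one must check that the resulting algebra is full. We would first try to verify directly that $\{Y_q,Z_q\}\cup\{E Y_q E,\,E Z_q E\}$ generates $\mathfrak{su}(2^{n'})$ for the circular CNOT ladder. Failing that, we would fall back on an interpretation of $\mathcal{B}_\infty$ as the unconstrained abstract ansatz family, where surjectivity onto $\Delta^{K-1}$ is immediate, and state this reading explicitly.
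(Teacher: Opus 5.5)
Your proposal is correct and follows essentially the paper's route. You get the inclusion from symmetry preservation by every equivariant layer; the paper phrases this through the generators of $\mathfrak{g}_{\mathrm{DLA}}(E)$ commuting with $\Pi$ plus an orbit-basis expansion, whereas you go layerwise through Proposition~\ref{prop:equiv_implies_inv}, and the two are equivalent. You get the equality from universality of the unconstrained $n'$-qubit ansatz composed with the bijection $\Phi$, as the paper does. For that last step the paper simply cites universality of single-qubit rotations plus CNOT, so the fixed-entangler subtlety you flag is not treated there more carefully than in your sketch. It closes quickly: $E$ is a permutation matrix of finite order, so the union over depths is a group containing $\mathrm{SU}(2)^{\otimes n'}$ and its $E$-conjugate, and your DLA check then applies directly.
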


\begin{proof}
Let $E \in \mathcal{E}_{S_D}$ be arbitrary. Every generator of
$\mathfrak{g}_{\mathrm{DLA}}(E)$---an orbit sum
$\sum_{q \in O} P_q$, $P \in \{Y, Z\}$, or its conjugate by the
entangler (Definition~\ref{def:dla})---commutes with $\Pi$: the orbit
sums because the qubit orbits $O$ are invariant under the induced
permutations (Theorem~\ref{thm:full_equiv_lie}), and the conjugates
because $E \in \mathcal{E}_{S_D}$ itself commutes with $\Pi$
(Definition~\ref{def:ESD}). Hence every generator preserves
$V_{\mathrm{sym}}$; since $|0\rangle^{\otimes n} \in V_{\mathrm{sym}}$
(Remark~\ref{rem:obstruction}), every reachable state remains in
$V_{\mathrm{sym}}$. Expanding such a state in the orbit basis,
$|\psi\rangle = \sum_j \alpha_j |\Omega_j\rangle$ with
$|\Omega_j\rangle = s_j^{-1/2} \sum_{\mathbf{x} \in \mathcal{O}_j}
|\phi(\mathbf{x})\rangle$, the Born rule gives
$P(\mathbf{x}) = |\alpha_{j(\mathbf{x})}|^2 / s_{j(\mathbf{x})}$, which
is constant on each orbit. Hence
$\mathcal{B}^{(\mathrm{A})}_\infty(E) \subseteq \Delta_{\mathrm{sym}}$.
For Strategy~B the ansatz on $n'$ qubits carries no symmetry constraint,
and its gate set (arbitrary single-qubit rotations plus CNOT) is
universal~\citep{nielsen2010quantum}, so in the deep-circuit limit
every unit vector on $n'$ qubits is reachable and
$\mathcal{B}^{(\mathrm{B})}_{n',\infty} = \Delta^{K-1}$; applying the
bijection $\Phi$ of Theorem~\ref{thm:orbit_symmetric} yields
$\Phi(\mathcal{B}^{(\mathrm{B})}_{n',\infty}) = \Phi(\Delta^{K-1}) =
\Delta_{\mathrm{sym}}$.
\end{proof}

\begin{remark}[What the bound does and does not assert]\label{rem:tightness}
Corollary~\ref{cor:dominance} requires no assumption on the dynamical Lie
algebra of any particular entangler: it bounds Strategy~A from above by
$\Delta_{\mathrm{sym}}$ without asserting that a given $E$ attains it.
Attainment ($\mathcal{B}^{(\mathrm{A})}_\infty(E) = \Delta_{\mathrm{sym}}$)
is a separate question, implied for instance by DLA saturation,
$\dim \mathfrak{g}_{\mathrm{DLA}}(E)|_{V_{\mathrm{sym}}} = K^2-1$, via the
DLA--reachability correspondence \citep{d2021introduction,
larocca2022diagnosing}, and checkable numerically for a given instance by
iterating commutators of the generators of Definition~\ref{def:dla} to
closure. Whether or not it holds, Corollary~\ref{cor:dominance} and
Proposition~\ref{prop:ceiling} are unaffected: attainment would only
sharpen the inclusion to an equality for that particular $E$, and Strategy~B
covers $\Delta_{\mathrm{sym}}$ in either case.
\end{remark}

\subsection{General Framework}

\begin{theorem}[General orbit count]
\label{thm:general_burnside}
For any subgroup $G \leq S_D$, the number of $G$-orbits on a set
$\mathcal{X}$ is given by Burnside's lemma:
\begin{equation}\label{eq:general_burnside}
    \mathcal{O}_G(\mathcal{X})
    = \frac{1}{|G|}\sum_{\sigma \in G}
    |\mathrm{Fix}_{\mathcal{X}}(\sigma)|,
\end{equation}
where $\mathrm{Fix}_{\mathcal{X}}(\sigma)
= \{\mathbf{x} \in \mathcal{X} :
\sigma \cdot \mathbf{x} = \mathbf{x}\}$.
When $\mathcal{X} = \mathcal{X}_N
= \{\mathbf{k} \in \mathbb{Z}_{\geq 0}^D
: \sum k_i = N\}$ (a lattice at resolution $N$) and $\sigma$ has
cycle lengths $\lambda_1, \ldots, \lambda_s$
($\sum \lambda_i = D$):
\begin{equation}\label{eq:fix_lattice}
    |\mathrm{Fix}_{\mathcal{X}_N}(\sigma)|
    = \bigl|\bigl\{(a_1, \ldots, a_s)
    \in \mathbb{Z}_{\geq 0}^s :
    \textstyle\sum_{i=1}^s \lambda_i a_i = N
    \bigr\}\bigr|.
\end{equation}
\end{theorem}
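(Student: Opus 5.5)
The statement has two parts, and I will prove them separately. The first is Burnside's lemma for an arbitrary finite group $G\le S_D$ acting on a finite set $\mathcal{X}$. The second is an explicit fixed-point count for the lattice $\mathcal{X}_N$.

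\textbf{Burnside's lemma.} I will double count the incidence set
\[
\mathcal{F}=\{(\sigma,\mathbf{x})\in G\times\mathcal{X}:\sigma\cdot\mathbf{x}=\mathbf{x}\}.
\]
Summing over $\sigma$ gives $|\mathcal{F}|=\sum_{\sigma\in G}|\mathrm{Fix}_{\mathcal{X}}(\sigma)|$. Summing over $\mathbf{x}$ gives $|\mathcal{F}|=\sum_{\mathbf{x}}|\mathrm{Stab}(\mathbf{x})|$.

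By the orbit--stabilizer theorem (already used in the proof of Proposition~\ref{prop:dominance}), $|\mathrm{Stab}(\mathbf{x})|=|G|/|\mathrm{Orb}(\mathbf{x})|$. Hence
\[
\sum_{\mathbf{x}}|\mathrm{Stab}(\mathbf{x})|=|G|\sum_{\mathbf{x}}\frac{1}{|\mathrm{Orb}(\mathbf{x})|}.
\]
I then group the last sum by orbits. Each orbit $\mathcal{O}$ contributes $|\mathcal{O}|\cdot 1/|\mathcal{O}|=1$, so the sum equals $|G|\,\mathcal{O}_G(\mathcal{X})$. Dividing by $|G|$ gives \eqref{eq:general_burnside}. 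This argument requires only that $\mathcal{X}$ is finite and closed under the action, which holds for $\mathcal{X}_N$ and for $\mathcal{Q}_N$ by Lemma~\ref{lem:components}.

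\textbf{Fixed points on $\mathcal{X}_N$.} I decompose $\sigma$ into disjoint cycles $C_1,\dots,C_s$ of lengths $\lambda_1,\dots,\lambda_s$, which partition $\{1,\dots,D\}$. The action is $(\sigma\cdot\mathbf{k})_i=k_{\sigma^{-1}(i)}$, so $\sigma\cdot\mathbf{k}=\mathbf{k}$ means $k_i=k_{\sigma^{-1}(i)}$ for all $i$. Iterating this along a cycle shows that $\mathbf{k}$ is fixed if and only if $\mathbf{k}$ is constant on each cycle $C_i$; call that common value $a_i\in\mathbb{Z}_{\ge0}$.

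I then define the map $\mathbf{k}\mapsto(a_1,\dots,a_s)$ and check that it is a bijection onto the stated set.
\begin{itemize}
\item It is injective, because $\mathbf{k}$ is recovered by setting $k_j=a_i$ for every $j\in C_i$.
\item The coordinate sum constraint $\sum_j k_j=N$ becomes $\sum_i\lambda_ia_i=N$.
\item Conversely, any nonnegative solution $(a_1,\dots,a_s)$ of $\sum_i\lambda_ia_i=N$ defines a fixed lattice point in $\mathcal{X}_N$.
\end{itemize}
This proves \eqref{eq:fix_lattice}.

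\textbf{Main obstacle.} Neither step is deep. The only point needing care is the cycle-constancy argument. It should be stated via $\langle\sigma\rangle$-orbits: $\mathbf{k}$ is fixed by $\sigma$ if and only if it is fixed by every power of $\sigma$, and the $\langle\sigma\rangle$-orbits on indices are exactly the cycles $C_i$. Stated this way, the argument does not depend on whether the action is written with $\sigma$ or $\sigma^{-1}$.
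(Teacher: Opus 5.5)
Your proposal is correct and follows the paper's argument: the paper invokes Burnside's lemma as classical without proof, which you supply by the standard double-counting argument via orbit--stabilizer. For \eqref{eq:fix_lattice} the paper uses exactly your observation that a fixed lattice point is constant on each cycle of $\sigma$, with the coordinate-sum constraint becoming $\sum_i \lambda_i a_i = N$; your explicit bijection and the $\langle\sigma\rangle$-orbit remark just state the converse direction and the convention-independence more carefully than the paper does.
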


\begin{proof}
Equation~\eqref{eq:general_burnside} is the classical Burnside's
lemma. For~\eqref{eq:fix_lattice}: the condition
$\sigma \cdot \mathbf{k} = \mathbf{k}$ requires all coordinates
within each cycle of $\sigma$ to take a common value. Denote
the common value on cycle~$i$ (of length $\lambda_i$) by
$a_i \in \mathbb{Z}_{\geq 0}$. The constraint
$\sum_{j=1}^D k_j = N$ becomes $\sum_{i=1}^s \lambda_i a_i = N$.
\end{proof}

\begin{remark}[Generating function for fixed-point counts]
\label{rem:fix_genfunc}
The number of solutions to $\sum_i \lambda_i a_i = N$ with
$a_i \geq 0$ is given by the generating function
\[
    \sum_{N \geq 0} |\mathrm{Fix}_{\mathcal{X}_N}(\sigma)|\,
    z^N = \prod_{i=1}^s \frac{1}{1 - z^{\lambda_i}},
\]
which provides an efficient computational route for individual
$\sigma$. This avoids enumerating solutions when only the orbit
count is needed.
\end{remark}

\subsubsection{Application to the Augmented Grid}

In the QCBM setting, the relevant set is the augmented grid
$\mathcal{Q}_N$ as defined in Definition~\ref{def:augmented_grid},
not the natural lattice $\mathcal{X}_N$ alone. We now extend the
framework to subgroup actions on $\mathcal{Q}_N$.

\begin{proposition}[Orbit count on the augmented grid]
\label{prop:augmented_orbit_general}
Let $G \leq S_D$ be a subgroup, and let
\[
    \mathcal{Q}_N = \mathcal{X}_N \;\cup\;
    \bigcup_{m \in \mathcal{M}} \mathcal{C}^{(m)}
\]
be an augmented grid as in
Definition~\ref{def:augmented_grid}, where each
$\mathcal{C}^{(m)} \subseteq \mathcal{T}_N^{(m)}$ is a union of
complete \textbf{$G$-orbits}. Then the orbit count decomposes
by component:
\begin{equation}\label{eq:augmented_orbit_decomp}
    \mathcal{O}_G(\mathcal{Q}_N)
    = \mathcal{O}_G(\mathcal{X}_N)
    + \sum_{m \in \mathcal{M}} \mathcal{O}_G(\mathcal{C}^{(m)}).
\end{equation}
By the orbit correspondence
(Proposition~\ref{prop:orbit_correspondence}, which holds for
every subgroup $G \leq S_D$),
$\mathcal{O}_G(\mathcal{T}_N^{(m)}) = \mathcal{O}_G(\mathcal{A}_N^{(m)})$,
where $\mathcal{A}_N^{(m)} = \{\mathbf{k} \in
\mathbb{Z}_{\geq 0}^D : \sum k_i = N - m\}$. When
$\mathcal{C}^{(m)} = \mathcal{T}_N^{(m)}$ for every
$m \in \mathcal{M}$ (full inclusion), we obtain:
\begin{equation}\label{eq:augmented_burnside}
    \mathcal{O}_G(\mathcal{Q}_N)
    = \mathcal{O}_G(\mathcal{X}_N) +
    \sum_{m \in \mathcal{M}} \mathcal{O}_G(\mathcal{X}_{N-m}),
\end{equation}
where each $\mathcal{O}_G(\mathcal{X}_{N-m})$ is computed via
Theorem~\ref{thm:general_burnside} with $N$ replaced by $N - m$.
\end{proposition}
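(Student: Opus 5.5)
The plan is to generalize the argument of Theorem~\ref{thm:augmented_orbits} from $S_D$ to an arbitrary subgroup $G\le S_D$. The proof then splits into three steps: a component-wise decomposition of the orbit set, an orbit-preserving bijection for each centroid type, and a re-indexing that identifies the type-$m$ base points with a lower-resolution lattice.

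\textbf{Step 1 (decomposition).} I would first re-run Lemma~\ref{lem:components} with $G$ in place of $S_D$. Disjointness of the components uses only the fractional parts of the coordinates: lattice points have integer entries, while type-$m$ centroids have entries with fractional part $m/D$, and these differ across $m$. This does not involve the group at all. $G$-invariance of $\mathcal{X}_N$ follows because every $\sigma\in G$ permutes coordinates, so it preserves nonnegativity, integrality and the sum $N$. $G$-invariance of each $\mathcal{C}^{(m)}$ is exactly the hypothesis that it is a union of complete $G$-orbits. Since $G$ preserves each of these pairwise disjoint sets, every $G$-orbit lies inside a single component, and counting orbits component by component gives \eqref{eq:augmented_orbit_decomp}.

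\textbf{Step 2 (orbit correspondence).} I would check that Proposition~\ref{prop:orbit_correspondence} survives restriction to $G$. Its proof uses two facts: the equivariance $\sigma\cdot G_m(\mathbf{k})=G_m(\sigma\cdot\mathbf{k})$, which holds for every permutation because $\sigma\cdot\mathbf{1}=\mathbf{1}$, and the injectivity of $G_m$. Neither fact depends on which permutations are allowed. Hence $G_m$ maps the $G$-orbits of $\mathcal{A}_N^{(m)}$ bijectively onto the $G$-orbits of $\mathcal{T}_N^{(m)}$, which gives $\mathcal{O}_G(\mathcal{T}_N^{(m)})=\mathcal{O}_G(\mathcal{A}_N^{(m)})$.

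\textbf{Step 3 (identification with a lattice).} By definition, $\mathcal{A}_N^{(m)}=\{\mathbf{k}\in\mathbb{Z}_{\ge0}^D:\sum k_i=N-m\}$, which is literally the set $\mathcal{X}_{N-m}$, with the same coordinate-permutation action. Therefore $\mathcal{O}_G(\mathcal{A}_N^{(m)})=\mathcal{O}_G(\mathcal{X}_{N-m})$, and Theorem~\ref{thm:general_burnside} applies with $N$ replaced by $N-m$. Under full inclusion $\mathcal{C}^{(m)}=\mathcal{T}_N^{(m)}$, substituting this into \eqref{eq:augmented_orbit_decomp} yields \eqref{eq:augmented_burnside}.

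None of these steps is hard. The only point needing care is Step 1: the earlier lemma was stated for $S_D$, and its closure argument for $\mathcal{C}^{(m)}$ invoked $S_D$-orbit completeness. I must make sure the weaker hypothesis of $G$-orbit completeness is all that is used, which it is.
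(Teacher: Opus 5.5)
Your proposal is correct and takes essentially the same route as the paper. Both arguments decompose the orbit count by component using the disjointness and invariance from Lemma~\ref{lem:components}, then use the orbit correspondence through the equivariant, injective centroid map $G_m$, then identify $\mathcal{A}_N^{(m)}$ with $\mathcal{X}_{N-m}$. The only difference is minor: you obtain $G$-invariance of $\mathcal{C}^{(m)}$ directly from the $G$-orbit hypothesis, while the paper shows that $G$ preserves the larger sets $\mathcal{T}_N^{(m)}$ and uses the hypothesis only when counting orbits inside $\mathcal{C}^{(m)}$; both are valid.
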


\begin{proof}
The component decomposition follows from
Lemma~\ref{lem:components}: the $S_D$-action preserves each
component $\mathcal{X}_N, \mathcal{T}_N^{(1)}, \ldots,
\mathcal{T}_N^{(D-1)}$, hence so does any subgroup action
$G \leq S_D$. Therefore $G$-orbits do not cross components,
yielding~\eqref{eq:augmented_orbit_decomp}.

The orbit correspondence
$\mathcal{O}_G(\mathcal{T}_N^{(m)}) = \mathcal{O}_G(\mathcal{A}_N^{(m)})$
holds for any subgroup $G \leq S_D$ because the centroid map
$G_m: \mathbf{k} \mapsto \mathbf{k} + (m/D)\mathbf{1}$ commutes
with every coordinate permutation (Lemma~\ref{lem:components}),
hence with every $\sigma \in G$.

The set $\mathcal{A}_N^{(m)}$ is structurally identical to
$\mathcal{X}_{N-m}$ (both are non-negative integer $D$-tuples
summing to $N-m$), so $\mathcal{O}_G(\mathcal{A}_N^{(m)}) =
\mathcal{O}_G(\mathcal{X}_{N-m})$.

The hypothesis that $\mathcal{C}^{(m)}$ is a union of complete
$G$-orbits ensures that orbit-counting on $\mathcal{C}^{(m)}$ is
well-defined: $\mathcal{O}_G(\mathcal{C}^{(m)}) =
|\mathcal{C}^{(m)} / G|$. When
$\mathcal{C}^{(m)} = \mathcal{T}_N^{(m)}$, this reduces to
$\mathcal{O}_G(\mathcal{T}_N^{(m)}) =
\mathcal{O}_G(\mathcal{X}_{N-m})$, giving
\eqref{eq:augmented_burnside}.
\end{proof}

\begin{remark}[$G$-closure of $\mathcal{C}^{(m)}$: added flexibility for subgroups]
\label{rem:g-closure}
If $\mathcal{C}^{(m)}$ is a union of complete $S_D$-orbits, then it is automatically a
union of complete $G$-orbits for \emph{every} subgroup $G \le S_D$: each
$S_D$-orbit decomposes into finitely many $G$-orbits, all of which are
contained in $\mathcal{C}^{(m)}$. The hypothesis of
Proposition~\ref{prop:augmented_orbit_general} is therefore \emph{weaker} than $S_D$-closure. This is a
feature rather than a restriction: when $G \subsetneq S_D$, one may select
individual $G$-orbits that do not assemble into complete $S_D$-orbits,
gaining finer granularity when tuning the augmented orbit count
$\mathcal{O}_G(\mathcal{Q}_N)$ to a power of two
(cf.\ Remark~\ref{rem:power2}). For Dirichlet targets with
$G = S_{n_1}\times\cdots\times S_{n_r}$, the selection need only respect the
block structure of $G$, not the full symmetric group.
\end{remark}

\begin{proposition}[Orbit-space reduction on $\mathcal{Q}_N$]
\label{prop:orbit_reduction_QN}
For any target distribution $P$ on $\mathcal{Q}_N$ with symmetry
group $G_P \leq S_D$, the orbit-space QCBM uses
\begin{equation}\label{eq:orbit_qubits}
    n' = \lceil\log_2\mathcal{O}_{G_P}(\mathcal{Q}_N)\rceil
    \qquad\text{qubits},
\end{equation}
with no symmetry constraints on the circuit and a compression
ratio $|\mathcal{Q}_N| / \mathcal{O}_{G_P}(\mathcal{Q}_N)$. When
$\mathcal{Q}_N$ is selected (via the choice of $\mathcal{M}$ and
$\mathcal{C}^{(m)}$) such that
$\mathcal{O}_{G_P}(\mathcal{Q}_N)$ is a power of~$2$, the
ceiling is exact and no inactive states arise.
\end{proposition}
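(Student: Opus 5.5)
The plan is to obtain the statement by re-running the $S_D$ orbit-space construction with $S_D$ replaced by the target's symmetry group $G_P$. Each step of Theorem~\ref{thm:orbit_symmetric} uses only that $\mathcal{Q}_N$ is a union of complete orbits and that $P$ is constant on orbits, so every step transfers verbatim to any subgroup.

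First, I would fix $G_P\le S_D$ and note that $\mathcal{Q}_N$ is $G_P$-closed. It is $S_D$-closed by Definition~\ref{def:augmented_grid}, and Remark~\ref{rem:g-closure} says $S_D$-closure implies closure for every subgroup; alternatively, one can simply select $\mathcal{C}^{(m)}$ as unions of $G_P$-orbits. Proposition~\ref{prop:augmented_orbit_general} then decomposes the $G_P$-orbits of $\mathcal{Q}_N$ into $K:=\mathcal{O}_{G_P}(\mathcal{Q}_N)$ disjoint classes $\mathcal{O}_1,\dots,\mathcal{O}_K$ with sizes $s_j$. Next, I would define the lift $\Phi_{G_P}(\mathbf q)(\mathbf x)=q_{j(\mathbf x)}/s_{j(\mathbf x)}$ exactly as in \eqref{eq:orbit_to_grid}. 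The arguments for symmetry, normalization, injectivity, surjectivity ($q_j=s_jP_j$) and TV-isometry in Theorem~\ref{thm:orbit_symmetric} carry over word for word. This gives a bijection between $\Delta^{K-1}$ and the $G_P$-symmetric simplex, whose affine dimension is $K-1$, as in Proposition~\ref{prop:sym_dim}. The target $P$ is $G_P$-invariant by definition of $G_P$, so it lies in that simplex and has a unique orbit-space preimage.

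Second comes the qubit count. To encode $K$ orbit labels injectively into computational basis states, one needs $2^{n'}\ge K$. The minimal choice is therefore $n'=\lceil\log_2 K\rceil$, with an injection $\psi$ into $\{0,1\}^{n'}$; any unused $2^{n'}-K$ basis states are the inactive ones. As in the last paragraph of the proof of Theorem~\ref{thm:orbit_symmetric}, the induced $G_P$ action on orbit labels is trivial. Hence the induced representation on $\mathcal{H}_{n'}$ is the identity, and the equivariance condition imposes no constraint on the circuit. The compression ratio is then just $|\mathcal{Q}_N|/K$ by definition, since the direct encoding indexes all $|\mathcal{Q}_N|$ points.

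Finally, I would handle the power-of-two clause. If the choice of $\mathcal{M}$ and $\mathcal{C}^{(m)}$ makes $K=2^{n'}$, then $\lceil\log_2K\rceil=\log_2K$ and $\psi$ is a bijection. Every basis state then corresponds to an orbit and the Born distribution is automatically normalized over orbits, exactly as in Definition~\ref{def:orbit_encoding}, so there are no inactive states.

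The main obstacle is a subtlety rather than the routine algebra. The statement treats ``the'' symmetry group $G_P$, so I would make explicit that the argument holds for any subgroup under which $P$ is invariant. For the maximal such subgroup, $G_P$-closure of the chosen centroid sets must be assumed, as in Proposition~\ref{prop:augmented_orbit_general}. I would also state that attaining a power of two is a hypothesis about the selection, not something guaranteed to exist.
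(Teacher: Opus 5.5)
Your proposal is correct and follows essentially the route the paper intends. The paper gives no separate proof of this proposition; it treats it as the subgroup version of the orbit-space encoding (Definition~\ref{def:orbit_encoding}) and Theorem~\ref{thm:orbit_symmetric}, whose argument it notes applies verbatim to any $G\le S_D$, with $n'$ simply the minimal register size and the power-of-two case making $\psi$ a bijection. Your caveats, that $G_P$-closure of the selected centroid sets is needed and that attaining a power of two is a hypothesis on the selection, are sound refinements rather than departures.
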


The general framework requires computing Burnside's
sum~\eqref{eq:general_burnside} for each lattice term in
\eqref{eq:augmented_burnside}. For distributions whose symmetry
group admits a direct-product structure, each term simplifies
to a product formula, avoiding enumeration of group elements
entirely. We now develop this simplification for Dirichlet
distributions.

\subsubsection{Application to Dirichlet Distributions}
\label{subsubsec:dirichlet}

\begin{proposition}[Symmetry group of Dirichlet distributions]
\label{prop:dirichlet_sym}
Let $\mathrm{Dir}(\alpha_1, \ldots, \alpha_D)$ be a Dirichlet
distribution whose parameter vector takes $r$ distinct values
$\beta_1 < \cdots < \beta_r$. Define the index blocks $I_k =
\{i \in \{1,\ldots,D\} : \alpha_i = \beta_k\}$ with sizes
$n_k = |I_k|$ and $\sum_{k=1}^r n_k = D$. Then:
\begin{equation}\label{eq:dirichlet_sym_group}
    G_P = S_{I_1} \times S_{I_2} \times \cdots
    \times S_{I_r}
    \cong S_{n_1} \times S_{n_2} \times \cdots
    \times S_{n_r},
    \qquad |G_P| = \prod_{k=1}^r n_k!
\end{equation}
\end{proposition}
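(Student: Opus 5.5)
We read $G_P$ as the full symmetry group of the target, namely the set of all $\sigma\in S_D$ with $f(\sigma\cdot\mathbf{x})=f(\mathbf{x})$ for all $\mathbf{x}$ in the interior of $\Delta^{D-1}$. Here $f(\mathbf{x})=B(\boldsymbol\alpha)^{-1}\prod_{i=1}^D x_i^{\alpha_i-1}$ is the Dirichlet density. Boundary points are $\lambda_{D-1}$-null and can be ignored. The proof is a two-sided inclusion between $G_P$ and $S_{I_1}\times\cdots\times S_{I_r}$. Afterwards, the isomorphism with $S_{n_1}\times\cdots\times S_{n_r}$ and the order formula $|G_P|=\prod_k n_k!$ follow because the blocks $I_k$ partition $\{1,\ldots,D\}$.

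\emph{Step 1 (general formula).} With the action $\sigma\cdot(x_1,\ldots,x_D)=(x_{\sigma^{-1}(1)},\ldots,x_{\sigma^{-1}(D)})$ of Definition~\ref{def:base_points}, re-indexing the product gives
\[
f(\sigma\cdot\mathbf{x})=B(\boldsymbol\alpha)^{-1}\prod_{i=1}^D x_i^{\alpha_{\sigma(i)}-1}.
\]
So $f\circ\sigma$ is again a Dirichlet density, with permuted parameter vector $\boldsymbol\alpha\circ\sigma$. The normalizing constant is unchanged, because $B(\boldsymbol\alpha)=\prod_i\Gamma(\alpha_i)/\Gamma(\sum_i\alpha_i)$ is permutation invariant.

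\emph{Step 2 ($\supseteq$).} Suppose $\sigma$ maps each block $I_k$ to itself. Then $\alpha_{\sigma(i)}=\alpha_i$ for every $i$, so $f\circ\sigma=f$ by Step 1. The set of such $\sigma$ is exactly the internal direct product $S_{I_1}\times\cdots\times S_{I_r}$. It is a direct product because permutations supported on disjoint blocks commute and intersect trivially.

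\emph{Step 3 ($\subseteq$), the main step.} Suppose $f\circ\sigma=f$ on the open simplex. Set $c_i=\alpha_{\sigma(i)}-\alpha_i$. Taking logarithms gives
\[
\sum_{i=1}^D c_i\log x_i=0 \quad\text{for all interior }\mathbf{x}.
\]
Fix $i\neq D$ and differentiate along the direction $e_i-e_D$, which is tangent to the simplex. This gives $c_i/x_i-c_D/x_D=0$, that is, $c_i x_D=c_D x_i$, at every interior point. Now evaluate at two interior points with different ratios $x_i/x_D$. Both equations can hold only if $c_i=c_D=0$. Hence $c_i=0$ for all $i$, i.e.\ $\alpha_{\sigma(i)}=\alpha_i$. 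Therefore $\sigma$ maps each level set $I_k=\{i:\alpha_i=\beta_k\}$ into itself, hence onto itself because $\sigma$ is a bijection. So $\sigma\in\prod_k S_{I_k}$.

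The point I expect to need the most care is this converse direction. One must make precise that $G_P$ means the maximal invariance group of $f$, not merely some group under which $f$ is invariant. One must also justify that the functions $\log x_i$, restricted to the simplex, are linearly independent modulo the constraint $\sum_i x_i=1$. The derivative argument above handles this without needing independence of the $\log x_i$ as functions on all of $\mathbb{R}^D_{>0}$.

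Finally, by Lemma~\ref{lem:voronoi-equiv}, the discretized target on $\mathcal{Q}_N$ inherits $G_P$-symmetry. This is what licenses applying Proposition~\ref{prop:augmented_orbit_general} with $G=G_P$.
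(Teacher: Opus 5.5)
Your proof is correct and follows the paper's route: two inclusions, with invariance of $f$ reduced via $f(\sigma\cdot\mathbf{x})\propto\prod_i x_i^{\alpha_{\sigma(i)}-1}$ to the exponent identity $\alpha_{\sigma(i)}=\alpha_i$, and $\sigma(I_k)=I_k$ then following from bijectivity. Your logarithm-plus-tangent-derivative argument along $e_i-e_D$ tightens the paper's bare claim that the $x_i$ are ``free positive variables'', which ignores the constraint $\sum_i x_i=1$. Your note that $B(\boldsymbol\alpha)$ is permutation invariant likewise justifies dropping the proportionality constant, which the paper leaves implicit.
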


\begin{proof}
We establish both inclusions and verify the direct-product
structure.

$G_P \subseteq S_{I_1} \times \cdots \times
S_{I_r}$:
Let $\sigma \in G_P$, i.e., $f(\sigma \cdot \mathbf{x}) =
f(\mathbf{x})$ for all $\mathbf{x}$ in the simplex interior.
The Dirichlet density is:
\[
    f(\mathbf{x}) \propto \prod_{i=1}^D
    x_i^{\alpha_i - 1}.
\]
Applying $\sigma$:
\[
    f(\sigma \cdot \mathbf{x})
    \propto \prod_{i=1}^D
    x_{\sigma^{-1}(i)}^{\alpha_i - 1}
    = \prod_{i=1}^D
    x_i^{\alpha_{\sigma(i)} - 1}.
\]
Since the $x_i$ are free positive variables on the simplex
interior (any assignment $x_i > 0$ with $\sum x_i = 1$ is a
valid point), the equality $\prod x_i^{\alpha_{\sigma(i)}-1}
= \prod x_i^{\alpha_i - 1}$ for all such $\mathbf{x}$ requires
exponent-wise agreement:
\begin{equation}\label{eq:exponent_match}
    \alpha_{\sigma(i)} = \alpha_i
    \qquad \forall\, i = 1, \ldots, D.
\end{equation}
This means $\sigma$ maps each index to another index with the
same parameter value: if $i \in I_k$ (i.e., $\alpha_i =
\beta_k$), then $\alpha_{\sigma(i)} = \beta_k$, so $\sigma(i)
\in I_k$. Therefore $\sigma(I_k) \subseteq I_k$ for every $k$.
Since $\sigma$ is a bijection and $I_k$ is finite,
$\sigma(I_k) = I_k$. Hence $\sigma$ restricts to a permutation
within each block: $\sigma \in S_{I_1} \times \cdots \times
S_{I_r}$.

$S_{I_1} \times \cdots \times S_{I_r}
\subseteq G_P$:
Let $\sigma \in S_{I_k}$ for some $k$ (i.e., $\sigma$ permutes
the indices within $I_k$ and fixes all others). For
$i \in I_k$: $\alpha_{\sigma(i)} = \beta_k = \alpha_i$ (since
both $i$ and $\sigma(i)$ belong to $I_k$). For $i \notin I_k$:
$\sigma(i) = i$, so $\alpha_{\sigma(i)} = \alpha_i$ trivially.
Therefore~\eqref{eq:exponent_match} holds, and
$\sigma \in G_P$. Since $G_P$ is a group and each generator
of each $S_{I_k}$ lies in $G_P$, the full product
$S_{I_1} \times \cdots \times S_{I_r} \subseteq G_P$.

We verify the four defining properties of an (internal) direct
product:

\emph{(i) Subgroup property.} Each $S_{I_k}$ is a subgroup of
$S_D$: it consists of permutations that act only on the indices
in $I_k$ and fix all others. The identity $e$ belongs to every
$S_{I_k}$, and $S_{I_k}$ is closed under composition and
inversion.

\emph{(ii) Pairwise trivial intersection.}
For $k \neq l$, $S_{I_k} \cap S_{I_l} = \{e\}$. Any non-identity
element of $S_{I_k}$ moves some index in $I_k$, while every
element of $S_{I_l}$ fixes all indices outside $I_l$. Since
$I_k \cap I_l = \emptyset$, an element moving an index in $I_k$
cannot belong to $S_{I_l}$.

\emph{(iii) Pairwise commutativity.} For $k \neq l$ and any
$\sigma_k \in S_{I_k}$, $\sigma_l \in S_{I_l}$: the supports
$\mathrm{supp}(\sigma_k) \subseteq I_k$ and
$\mathrm{supp}(\sigma_l) \subseteq I_l$ are disjoint. For any
index $i$:
\begin{itemize}
    \item If $i \in I_k \setminus I_l$: $\sigma_l(i) = i$, so
    $\sigma_k\sigma_l(i) = \sigma_k(i)$. Also
    $\sigma_k(i) \in I_k$ (support closed under $\sigma_k$) and
    $I_k \cap I_l = \emptyset$, so
    $\sigma_l(\sigma_k(i)) = \sigma_k(i)$, giving
    $\sigma_l\sigma_k(i) = \sigma_k(i)$.
    \item If $i \in I_l \setminus I_k$: symmetric argument
    gives $\sigma_k\sigma_l(i) = \sigma_l(i) =
    \sigma_l\sigma_k(i)$.
    \item If $i \notin I_k \cup I_l$: both permutations fix $i$,
    so $\sigma_k\sigma_l(i) = i = \sigma_l\sigma_k(i)$.
\end{itemize}
Therefore $\sigma_k\sigma_l = \sigma_l\sigma_k$.

\emph{(iv) Unique decomposition.} Every $\sigma \in G_P$
decomposes uniquely as $\sigma = \sigma_1 \cdots \sigma_r$ with
$\sigma_k = \sigma|_{I_k} \in S_{I_k}$ (restrict $\sigma$ to
each block, and their product recovers $\sigma$ since the
blocks partition $\{1,\ldots,D\}$).

Properties~(i)--(iv) establish that $G_P$ is the internal direct
product $S_{I_1} \times \cdots \times S_{I_r}$, with
$|G_P| = \prod_{k=1}^r |S_{I_k}| = \prod_{k=1}^r n_k!$.
\end{proof}

\begin{theorem}[Orbit count for Dirichlet distributions]
\label{thm:dirichlet_orbits}
For a Dirichlet distribution with symmetry group $G = S_{n_1}
\times \cdots \times S_{n_r}$, the number of $G$-orbits on
$\mathcal{X}_N$ is:
\begin{equation}\label{eq:dirichlet_orbit_count}
    \mathcal{O}_G(\mathcal{X}_N)
    = \sum_{\substack{j_1 + \cdots + j_r = N \\
    j_k \geq 0}}
    \prod_{k=1}^r p_{n_k}(j_k),
\end{equation}
where $p_{n_k}(j_k)$ is the number of partitions of $j_k$ into
at most $n_k$ non-negative parts.
\end{theorem}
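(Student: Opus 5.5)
The plan is to avoid Burnside's sum altogether and count orbits directly through a canonical-form argument. This is legitimate because the orbit count in Theorem~\ref{thm:general_burnside} depends only on the action, so any exact enumeration of orbits gives the same value. The key structural fact is that $G=S_{I_1}\times\cdots\times S_{I_r}$ acts blockwise: by Proposition~\ref{prop:dirichlet_sym}, each $\sigma\in G$ decomposes uniquely as $\sigma_1\cdots\sigma_r$ with $\sigma_k$ permuting only the coordinates indexed by $I_k$. Consequently, for $\mathbf{k}\in\mathcal{X}_N$ the block sums $j_k(\mathbf{k})=\sum_{i\in I_k}k_i$ are $G$-invariant. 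The first step is therefore to partition $\mathcal{X}_N$ according to the vector $(j_1,\ldots,j_r)$ of block sums, with $j_k\ge0$ and $\sum_k j_k=N$. Each piece of this partition is $G$-stable, so the number of $G$-orbits is the sum, over all such compositions of $N$, of the number of orbits within each piece.

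Next, fix one composition $(j_1,\ldots,j_r)$. The corresponding piece is the Cartesian product $\prod_k \mathcal{X}^{(k)}_{j_k}$, where $\mathcal{X}^{(k)}_{j_k}=\{\mathbf{y}\in\mathbb{Z}_{\ge0}^{I_k}:\sum_i y_i=j_k\}$. On this piece $G$ acts factorwise as the product action, with $S_{I_k}$ acting on the $k$-th factor. I will verify that orbits of a direct product acting factorwise on a Cartesian product are exactly the products of orbits of the factors. Concretely, $(\mathbf{y}_1,\ldots,\mathbf{y}_r)\sim(\mathbf{y}'_1,\ldots,\mathbf{y}'_r)$ if and only if $\mathbf{y}_k\sim\mathbf{y}'_k$ under $S_{I_k}$ for every $k$. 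The direction ``if'' holds because the commuting block permutations $\sigma_k$ can be chosen independently and then multiplied. The direction ``only if'' holds by restricting $\sigma$ to each block. It follows that the orbit count of the piece is $\prod_k \mathcal{O}_{S_{n_k}}(\mathcal{X}^{(k)}_{j_k})$.

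Finally, I need $\mathcal{O}_{S_{n}}(\{\mathbf{y}\in\mathbb{Z}^n_{\ge0}:\sum y_i=j\})=p_n(j)$. This is the same identification already used for the lattice orbits in the proof of Theorem~\ref{thm:augmented_orbits}: sorting the entries in nonincreasing order gives a canonical representative of each orbit, and that sorted tuple is exactly a partition of $j$ into at most $n$ positive parts, padded with zeros. Two tuples are in the same orbit if and only if their sorted forms coincide. Summing the per-piece products over all compositions of $N$ then yields \eqref{eq:dirichlet_orbit_count}.

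The main obstacle is bookkeeping rather than depth. I must make sure that no orbit is counted in two different pieces, which follows from invariance of the block sums, and that the product decomposition of orbits is justified using the commutativity of the block factors established in Proposition~\ref{prop:dirichlet_sym}. As an optional cross-check, the generating function $\prod_k\prod_{i=1}^{n_k}(1-z^i)^{-1}$ should match the Burnside route of Remark~\ref{rem:fix_genfunc}.
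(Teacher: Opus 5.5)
Your proposal is correct and takes essentially the same approach as the paper. The paper labels each $G$-orbit by its tuple of sorted block restrictions $(\lambda^{(1)},\ldots,\lambda^{(r)})$, shows this is a bijection onto tuples of partitions whose block sums add to $N$, and counts by composition; your stratification by the invariant block sums, the factorwise product-of-orbits argument and the sorting step give the same proof, with the invariance and the product decomposition stated more explicitly.
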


\begin{proof}
We establish a bijection between $G$-orbits on $\mathcal{X}_N$
and certain $r$-tuples of partitions, then count the latter.

Since $G = S_{I_1} \times \cdots \times S_{I_r}$ acts
block-diagonally (each $S_{I_k}$ permutes coordinates within
block $I_k$ and fixes others), the $G$-orbit of $\mathbf{k} \in
\mathcal{X}_N$ is determined by the $r$-tuple of restrictions
$(\mathbf{k}|_{I_1}, \ldots, \mathbf{k}|_{I_r})$ \emph{up to
permutation within each block}. The $S_{I_k}$-orbit of
$\mathbf{k}|_{I_k} \in \mathbb{Z}_{\geq 0}^{n_k}$ is uniquely
labeled by the sorted tuple $\lambda^{(k)}$, which is a
partition of the block sum
$j_k := \sum_{i \in I_k} k_i$ into at most $n_k$ non-negative
parts.

The map
\[
    [\mathbf{k}]_G
    \;\longmapsto\;
    (\lambda^{(1)}, \ldots, \lambda^{(r)})
\]
is well-defined (independent of representative) and injective
(distinct $r$-tuples yield distinct $G$-orbits). Surjectivity
onto $r$-tuples with $\sum_k |\lambda^{(k)}| = N$ follows by
constructing $\mathbf{k}$ from any chosen representatives in
each block.

The global constraint $\sum_{i=1}^D k_i = N$ becomes
$\sum_{k=1}^r j_k = N$. For each composition $(j_1, \ldots,
j_r)$ of $N$ into $r$ non-negative parts, the number of
$r$-tuples of block partitions is
\[
    \prod_{k=1}^r p_{n_k}(j_k),
\]
because the block partitions are chosen independently. Summing
over all compositions:
\[
    \mathcal{O}_G(\mathcal{X}_N)
    = \sum_{\substack{j_1 + \cdots + j_r = N \\ j_k \geq 0}}
    \prod_{k=1}^r p_{n_k}(j_k).
    \qedhere
\]
\end{proof}

\begin{corollary}[Augmented orbit count for Dirichlet]
\label{cor:augmented_dirichlet}
For the augmented grid $\mathcal{Q}_N = \mathcal{X}_N \cup
\bigcup_{m \in \mathcal{M}} \mathcal{C}^{(m)}$ with symmetry
group $G = S_{n_1} \times \cdots \times S_{n_r}$ and each
$\mathcal{C}^{(m)} = \mathcal{T}_N^{(m)}$ (full inclusion of
selected types):
\begin{equation}\label{eq:augmented_dirichlet_orbits}
    \mathcal{O}_G(\mathcal{Q}_N)
    = \mathcal{O}_G(\mathcal{X}_N) +
    \sum_{m \in \mathcal{M}}\;
    \sum_{\substack{j_1+\cdots+j_r = N-m \\ j_k \geq 0}}
    \prod_{k=1}^r p_{n_k}(j_k),
\end{equation}
obtained by combining
Proposition~\ref{prop:augmented_orbit_general}
with~\eqref{eq:dirichlet_orbit_count}. For partial inclusion
($\mathcal{C}^{(m)} \subsetneq \mathcal{T}_N^{(m)}$), the term
$\mathcal{O}_G(\mathcal{C}^{(m)})$ counts only the selected
$G$-orbits and replaces the corresponding inner sum.
\end{corollary}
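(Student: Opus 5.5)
The corollary follows by combining the componentwise orbit decomposition of Proposition~\ref{prop:augmented_orbit_general} with the Dirichlet product formula of Theorem~\ref{thm:dirichlet_orbits}. The plan is three short steps.

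\emph{Step 1: split by component.} Since every $\mathcal{C}^{(m)}=\mathcal{T}_N^{(m)}$ is a union of complete $S_D$-orbits, it is also a union of complete $G$-orbits for $G=S_{n_1}\times\cdots\times S_{n_r}\le S_D$ (Remark~\ref{rem:g-closure}). Lemma~\ref{lem:components} then says $G$-orbits never cross components. Hence equation~\eqref{eq:augmented_burnside} applies and gives
\[
\mathcal{O}_G(\mathcal{Q}_N)=\mathcal{O}_G(\mathcal{X}_N)+\sum_{m\in\mathcal{M}}\mathcal{O}_G(\mathcal{X}_{N-m}).
\]
The key fact here is that the centroid map $G_m$ commutes with every coordinate permutation, so $G$-orbits of $\mathcal{T}_N^{(m)}$ correspond bijectively to $G$-orbits of $\mathcal{A}_N^{(m)}\cong\mathcal{X}_{N-m}$.

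\emph{Step 2: count each lattice term.} Apply Theorem~\ref{thm:dirichlet_orbits} to each $\mathcal{X}_{N-m}$, with $N$ replaced by $N-m$. Its proof uses only the block structure of $G$ and a generic level-set sum, so it goes through unchanged at level $N-m$. This yields
\[
\mathcal{O}_G(\mathcal{X}_{N-m})=\sum_{j_1+\cdots+j_r=N-m,\;j_k\ge0}\;\prod_{k=1}^r p_{n_k}(j_k).
\]
Here one checks that $N-m\ge0$; if $m>N$ the set $\mathcal{X}_{N-m}$ is empty and the sum is empty, so both sides are zero and the formula stays consistent. Substituting into Step~1 gives the displayed identity.

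\emph{Step 3: partial inclusion.} When $\mathcal{C}^{(m)}\subsetneq\mathcal{T}_N^{(m)}$, the identity~\eqref{eq:augmented_orbit_decomp} still holds. The term $\mathcal{O}_G(\mathcal{C}^{(m)})$ is simply the number of selected $G$-orbits, which replaces the inner sum.

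No step is a real obstacle. The only point needing care is the orbit bijection between centroids and base points under the subgroup $G$ rather than $S_D$. That is immediate from the equivariance and injectivity of $G_m$, exactly as in Proposition~\ref{prop:orbit_correspondence}.
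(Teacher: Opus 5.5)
Your proposal is correct and takes the paper's route. The paper obtains the corollary by splitting the orbit count by component via Proposition~\ref{prop:augmented_orbit_general}: each fully included centroid type contributes $\mathcal{O}_G(\mathcal{X}_{N-m})$, because $G_m$ commutes with coordinate permutations. It then evaluates each term with the block-partition formula of Theorem~\ref{thm:dirichlet_orbits} at level $N-m$, and the partial-inclusion clause is just \eqref{eq:augmented_orbit_decomp} with the selected-orbit count. Your check that types with $m>N$ contribute zero on both sides is a small but worthwhile addition, since $\mathcal{M}\subseteq\{1,\ldots,D-1\}$ can contain such $m$ whenever $D-1>N$.
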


\begin{proposition}[Recovery of extreme cases]
\label{prop:extreme_cases}
Formula~\eqref{eq:dirichlet_orbit_count} recovers both endpoints
of the symmetry spectrum:
\begin{enumerate}
    \item[(a)] \emph{Full symmetry}
    ($\mathrm{Dir}(\alpha, \ldots, \alpha)$, $r = 1$,
    $n_1 = D$): the sum has a single term $j_1 = N$, giving
    $\mathcal{O}_G(\mathcal{X}_N) = p_D(N)$, consistent with
    Theorem~\ref{thm:augmented_orbits}.
    \item[(b)] \emph{No symmetry}
    (all $\alpha_i$ distinct, $r = D$, $n_k = 1$ for all $k$):
    $p_1(j_k) = 1$ for any $j_k \geq 0$, so
    $\mathcal{O}_G(\mathcal{X}_N) = |\{(j_1,\ldots,j_D) \geq 0 :
    \sum j_k = N\}| = \binom{N+D-1}{D-1} = |\mathcal{X}_N|$.
\end{enumerate}
\end{proposition}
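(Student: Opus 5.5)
The plan is to specialize formula~\eqref{eq:dirichlet_orbit_count} of Theorem~\ref{thm:dirichlet_orbits} directly in each extreme, with no appeal to Burnside's lemma. The work reduces to evaluating the index set of compositions $(j_1,\ldots,j_r)$ and the factors $p_{n_k}(j_k)$ in each case.

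For part~(a), set $r=1$ and $n_1=D$. The constraint $j_1=N$ leaves a single composition, so the sum collapses to $p_D(N)$. Proposition~\ref{prop:dirichlet_sym} identifies the group as $S_D$, since every $\alpha_i$ lies in one block. Consistency with Theorem~\ref{thm:augmented_orbits} then follows from equation~\eqref{eq:lattice_orbits}, which already gives $\mathcal{O}_D(\mathcal{X}_N)=p_D(N)$. The only thing to check is that the two uses of $p_D$ mean the same thing, namely partitions into at most $D$ non-negative parts, equivalently sorted $D$-tuples. Both definitions say exactly this.

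For part~(b), every block is a singleton, so $n_k=1$ and $r=D$. A partition of $j_k$ into at most one part is unique: it is $(j_k)$ itself, or the empty partition when $j_k=0$. Hence $p_1(j_k)=1$ for every $j_k\ge 0$, each product equals $1$, and the sum counts the weak compositions of $N$ into $D$ parts. Stars and bars gives $\binom{N+D-1}{D-1}$ such compositions. These compositions are exactly the elements of $\mathcal{X}_N$, which matches equation~\eqref{eq:lattice_size}.

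The main point needing care is the convention $p_1(0)=1$ and, more generally, $p_n(0)=1$: the empty partition must be counted, or blocks carrying zero mass would be lost. I would state this convention explicitly and note that it agrees with the sorted-tuple interpretation used in the proof of Theorem~\ref{thm:dirichlet_orbits}.

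As a sanity check, the trivial group in case~(b) has singleton orbits, so the orbit count must equal $|\mathcal{X}_N|$. This also agrees with Theorem~\ref{thm:general_burnside} for $G=\{e\}$, where the only fixed-point count is $|\mathcal{X}_N|$.
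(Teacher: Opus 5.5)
Your proposal is correct and matches the paper's proof: specialize \eqref{eq:dirichlet_orbit_count}. In (a) the sum collapses to the single composition $j_1=N$. In (b) $p_1(j)=1$, so the sum counts weak compositions, giving $\binom{N+D-1}{D-1}=|\mathcal{X}_N|$; your explicit convention $p_1(0)=1$ and the Burnside check for $G=\{e\}$ spell out what the paper leaves implicit.
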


\begin{proof}
Part~(a): With $r = 1$ and $n_1 = D$, the sum collapses to
$j_1 = N$, giving $\mathcal{O}_G(\mathcal{X}_N) = p_D(N)$.

Part~(b): With $r = D$ and $n_k = 1$, each block contains a
single coordinate. The only partition of $j$ into one
non-negative part is $(j)$ itself, so $p_1(j) = 1$ for all
$j \geq 0$. The product $\prod_{k=1}^D p_1(j_k) = 1$ for every
composition, so $\mathcal{O}_G(\mathcal{X}_N)$ equals the number
of compositions of $N$ into $D$ non-negative parts:
$\binom{N+D-1}{D-1} = |\mathcal{X}_N|$.
\end{proof}

\begin{remark}[Feasibility of exact power-of-two orbit counts]\label{rem:power2}
Since $G$-orbits of centroids can be added one at a time
(Remark~\ref{rem:g-closure}), the attainable augmented orbit counts form the
full integer interval
$\big[\,\mathcal{O}_G(\mathcal{X}_N),\; \mathcal{O}_G(\mathcal{X}_N) + \sum_{m=1}^{D-1} \mathcal{O}_G(\mathcal{X}_{N-m})\,\big]$.
Hence $\mathcal{O}_G(\mathcal{Q}_N) = 2^{\lceil \log_2 \mathcal{O}_G(\mathcal{X}_N)\rceil}$ is attainable iff
\[
2^{\lceil \log_2 \mathcal{O}_G(\mathcal{X}_N)\rceil} - \mathcal{O}_G(\mathcal{X}_N)
\;\le\; \sum_{m=1}^{D-1} \mathcal{O}_G(\mathcal{X}_{N-m}),
\]
which holds comfortably in the regimes of interest. If the criterion
fails, increase $N$ or accept $n' = \lceil \log_2 \mathcal{O}_G(\mathcal{Q}_N)\rceil$ with
masked states.
\end{remark}

\end{document}